\def\BibTeX{{\rm B\kern-.05em{\sc i\kern-.025em b}\kern-.08em
    T\kern-.1667em\lower.7ex\hbox{E}\kern-.125emX}}
\newif\iffullversion\fullversiontrue
\newcommand{\fullversion}[2]{#2}
\newcommand{\fullversion}[2]{#1}
\newtheorem{theorem}{Theorem}
\newtheorem{lemma}{Lemma}
\newtheorem{corollary}[theorem]{Corollary}
\pgfplotsset{compat=1.5}   % Needed for 'ylabel shift' to work
\newcommand{\ignore}[1]{}         %% We shouldn't use this if possible. E.g. it breaks syntax highlighting. Also its hard to tell what latex code matters.
\newcommand{\bpdf}{\mathtt{bpdf}} 
\newcommand{\namedref}[2]{\hyperref[#2]{#1~\ref*{#2}}}
\newcommand{\figlab}[1]{\label{fig:#1}}
\newenvironment{proofof}[1]{\noindent {\em Proof of #1.  }}{}%\QED}
\newenvironment{remindertheorem}[1]{\medskip \noindent {\bf Reminder of  #1.  }\em}{}
\newenvironment{reminderlemma}[1]{\medskip \noindent {\bf Reminder of  #1.  }\em}{}
\newcommand{\plotcyclelist}[0]{
    cycle list = { { gray, dashed}, {black}, {violet}, {teal},  {gray}, {red}, {blue}, {cyan}, {magenta}, {green}, {orange}, {blue, dashed}, {cyan, dashed}, {magenta, dashed}, {green, dashed}, {orange, dashed}, %{yellow} 
    }
}
\newcommand{\plotlegendpos}[0]{
    legend pos = outer north east
}
\newcommand{\plotlegendentries}[0]{
    legend entries = {FrequencyUB, LPUB,  LPLB, PriorLB, SamplingLB, ExtendedLB:Markov, ExtendedLB:JtR, ExtendedLB:Hashcat, ExtendedLB:PCFG, ExtendedLB:NN, Markov, JtR, Hashcat, PCFG, NN,
    %MinGuessingNumber, 
    Distinct(S)}
}
\newcommand{\plotlegends}[0]{
    \addlegendimage{no markers, black}
    \addlegendimage{no markers, violet}
      %\addlegendimage{no markers, purple}
      \addlegendimage{no markers, teal}
      \addlegendimage{no markers, gray}
      \addlegendimage{no markers, red}
      
      %\addlegendimage{no markers, dashed, yellow}
       \addlegendimage{no markers, dashed, blue}
       \addlegendimage{no markers, dashed, cyan}
      \addlegendimage{no markers, dashed, magenta}
      \addlegendimage{no markers, dashed, green}
      \addlegendimage{no markers, dashed, orange}
      
      \addlegendimage{no markers, blue}
      %\addlegendimage{no markers, olive}
      \addlegendimage{no markers, cyan}
      \addlegendimage{no markers, magenta}
      
      %\addlegendimage{no markers, lime}
      
      \addlegendimage{no markers, green}
      \addlegendimage{no markers, orange}
      %\addlegendimage{no markers, pink}
      %\addlegendimage{no markers, yellow}
      
      %\addlegendimage{no markers, yellow}
      
      \addlegendimage{no markers, dashed, gray}
     
      %\addlegendimage{no markers, dashed, olive}
      
      %\addlegendimage{no markers, dashed, brown}
      %\addlegendimage{no markers, dashed, lime}
      
      %\addlegendimage{no markers, dashed, violet}
      
      %\addlegendimage{no markers, dashed, pink}
      %\addlegendimage{no markers, dashed, yellow}

      %\addlegendimage{no markers, dashed, gray}
      %\addlegendimage{no markers, dotted, gray}
      %\addlegendimage{no markers, dashdotted, gray}

      %\addlegendimage{dashed, red}
      %\addlegendimage{dashed, blue}
      %\addlegendimage{dashed, olive}
      %\addlegendimage{only marks, mark=triangle*}
      %\addlegendimage{only marks, mark=square*}
      %\addlegendimage{only marks, mark=diamond*}
       % \addlegendimage{no markers, black}
        %\addlegendimage{no markers, dashed, black}
        
      %  \addlegendimage{dashed, red, mark=triangle*}
        
      %  \addlegendimage{dashed, blue, mark=square*}
        %\addlegendimage{only marks, green, mark=diamond*}
      %  \addlegendimage{dashed, green, mark=diamond*}

     %   \addlegendimage{only marks, cyan, mark=square*}
     %  \addlegendimage{only marks, yellow, mark=square*}
     %    \addlegendimage{no markers, purple, mark=square*}
      %  \addlegendimage{no markers, red, mark=square*}
         % \addlegendimage{no markers}
        % \addlegendimage{no markers, dashed}
}
\begin{document}

\title{Towards a Rigorous Statistical Analysis of Empirical Password Datasets
\thanks{This is the full version of IEEE S\&P 2023 conference paper with missing proofs and complementary figures.}
}

\author{\IEEEauthorblockN{Jeremiah Blocki\IEEEauthorrefmark{1}}
\IEEEcompsocitemizethanks{\IEEEcompsocthanksitem\IEEEauthorrefmark{1}The authors are listed in alphabetical order.}
\IEEEauthorblockA{\textit{Purdue University}  \\
jblocki@purdue.edu}
\and
\IEEEauthorblockN{Peiyuan Liu\IEEEauthorrefmark{1}}
\IEEEauthorblockA{\textit{Purdue University} \\
liu2039@purdue.edu}
}

\maketitle

\begin{abstract}
A central challenge in password security is to characterize the attacker's guessing curve i.e., what is the probability that the attacker will crack a random user's password within the first $G$ guesses. A key challenge is that the guessing curve depends on the attacker's guessing strategy and the distribution of user passwords, both of which are unknown to us. In this work we aim to follow Kerckhoffs's principle and analyze the performance of an optimal attacker who knows the password distribution. Let $\lambda_G$ denote the probability that such an attacker can crack a random user's password within $G$ guesses. We develop several statistically rigorous techniques to upper and lower bound $\lambda_G$ given  $N$ independent samples from the unknown password distribution  $\mathcal{P}$. We show that our upper/lower bounds on $\lambda_G$ hold with high confidence and we apply our techniques to analyze eight large password datasets. Our empirical analysis shows that even state-of-the-art password cracking models are often significantly less guess efficient than an attacker who can optimize its attack based on its (partial) knowledge of the password distribution. We also apply our statistical tools to re-examine different models of the password distribution i.e., the empirical password distribution and Zipf's Law. We find that the empirical distribution closely matches our upper/lower bounds on $\lambda_G$ when  the guessing number $G$ is not too large i.e., $G \ll N$. However, for larger values of $G$ our empirical analysis rigorously demonstrates that the empirical distribution (resp. Zipf's Law) overestimates the attacker's success rate. We apply our statistical techniques to upper/lower bound the effectiveness of password throttling mechanisms (key-stretching) which are used to reduce the number of attacker guesses $G$. Finally, if we are willing to make an additional assumption about the way users respond to password restrictions, we can use our statistical techniques to evaluate the effectiveness of various password composition policies which restrict the passwords that users may select. 
\end{abstract}

\begin{IEEEkeywords}
password distribution, password cracking, password dataset, theoretical bounds, password policies
\end{IEEEkeywords}

\section{Introduction}
Understanding and characterizing the distribution over user chosen passwords is a key challenge in the field of cybersecurity with important implications towards the development of robust security policies.  How expensive does a password hashing algorithm need to be to deter an offline brute-force attacker? Can we strengthen the password distribution by imposing restrictions, e.g., requiring passwords to include numbers and/or capital letters, on the passwords that users can pick? If so are the security gains significant enough to justify the usability costs? Characterizing the attacker's guessing curve is a fundamental challenge which is central to addressing the above questions. In particular, we would like to know the probability that an attacker will crack a random user's password after $G$ attempts as $G$ ranges from small (online password spraying attacker) to large (offline attack). As a concrete motivating application suppose that an organization's current authentication policy locks a user account if there are more than $G=100$ incorrect login attempts within a 30 day period. The organization is considering adopting a stricter lockout policy which locks down the account after $G=50$ incorrect logins within 30 days, but will only adopt such a change if the policy change substantially reduces the risk posed by an online attacker. An immediate challenge is that the answer to these questions depends on the distribution $\mathcal{P}$ over user passwords as well as the attacker's guessing strategy both of which are unknown.  

One natural attempt to address the above questions is to fix a state-of-the art password cracking model $M$ and assume that the attacker uses this model to generate password guesses. Now our organization might estimate that adopting the stricter lockout policy (resp. doubling the cost of the password hash function) reduces the success rate of an online (resp. offline) password attacker by $\lambda_{G,M}-\lambda_{G/2,M}$ where $\lambda_{G,M}$ denotes the probability that an attacker will crack a randomly sampled password $pwd \leftarrow \mathcal{P}$ within $G$ guesses when using model $M$. Since the distribution $\mathcal{P}$ is unknown the organization cannot compute $\lambda_{G,M}$ directly. However, it is easy to approximate $\lambda_{G,M}$ empirically given a few samples $S=(s_1,\ldots, s_N)$ from the unknown password distribution $\mathcal{P}$.

One downside to the above approach is that our policy decision is dependent on the specific password cracking model $M$. Password cracking models can be developed using a variety of different techniques including: Neural Networks~\cite{USENIX:MUSKBCC16}, Probabilistic Context Free Grammars~\cite{SP:WAMG09,NDSS:VerColTho14}, Markov Models \cite{CCS:NarShm05b,NDSS:CasDurPer12,SP:MYLL14,durmuth2015omen} and sophisticated rule lists~\cite{SP:LNGCU19} used in password cracking software such as John the Ripper (JtR)~\cite{johntheripperlink} and Hashcat \cite{hashcatlink}. Basing policy decisions on a model specific guessing curve $\lambda_{G,M}$ may be unwise as this curve can vary greatly from model to model. In particular, we do not know which model $M$ the attacker will use and it is plausible that the attacker's model will be more sophisticated than any publicly available model. Furthermore, the attacker might continue to improve the cracking model $M$ over time. 

In an attempt to address this challenge prior work (e.g., \mbox{\cite{USENIX:USBCCKKMMS15,USENIX:MUSKBCC16}}) proposed using the minimum guessing number heuristic to approximate the performance of an attacker who may have a more advanced password cracking model with more training data and/or more sophisticated rule lists/dictionaries. Under this heuristic we use multiple different password cracking models (Neural Networks\cite{USENIX:MUSKBCC16}, Probabilistic Context Free Grammars \mbox{\cite{SP:WAMG09,NDSS:VerColTho14}}, Markov Models \mbox{\cite{CCS:NarShm05b,NDSS:CasDurPer12,SP:MYLL14,durmuth2015omen}}) to estimate the attacker's guessing curve. In particular, if a password $pwd$ is cracked within $G$ guesses under {\em any} of these models then, under the minimum guessing number heuristic, we assume that the real world attacker would also crack this password within $G$ guesses. Our analysis indicates that, even we apply the minimum guessing number heuristic, we can still significantly underestimate the performance of a real world attacker.

Kerckhoffs's principle states one should design systems under the assumption that the attacker will eventually gain full familiarity with them. Applying this principal within the context of password security our organization should assume that the attacker knows the password distribution and consider the guessing curve $\lambda_G$ of such an attacker when making policy decisions. Here, $\lambda_G$ denotes the cumulative probability mass of the top $G$ passwords in the (unknown) password distribution $\mathcal{P}$ i.e., the probability that an attacker who knows the distribution can guess a random password $pwd \leftarrow \mathcal{P}$ within $G$ guesses. While a real world attacker may not have perfect knowledge of the password distribution, we should expect that password cracking models will improve over time and  performance of the real world attacker may even approach the ideal guessing curve $\lambda_G$ in the limit. Thus, when making password policy decisions the conservative option is to follow Kerckhoffs's principle and consider the guessing curve of the ideal attacker who knows the password distribution instead of using heuristics to guesstimate the performance of an unknown attacker.
An additional advantage of following Kerckhoffs's principle is that the policy recommendations will be stable i.e., they would not need to be reevaluated every time a more sophisticated password cracking model is developed. However, an immediate challenge is that the distribution $\mathcal{P}$ is unknown to us making it impossible to directly compute $\lambda_G$.    

One common heuristic to estimate $\lambda_G$ is to use the empirical distribution derived from a breached password dataset e.g., as in \cite{SP:CAAJR16,CCS:CWPCR17,SP:BloHarZho18,NDSS:BloDatBon16,SP:Bonneau12,CSF:BloDat16,bai2021dahash}.
In particular, given a dataset $S $ of $N$ user passwords we can let $\hat{f}_i$ denote the frequency of the $i$th most common password $pwd$ in the sample $S$. Assuming that the samples $S$ were drawn iid from our password distribution $\mathcal{P}$ we can then estimate that an attacker making $G$ guesses per account will crack a random user's password with probability $\hat{\lambda}_G = \sum_{i=1}^G \hat{f}_i/N$. A downside to this approach is that the empirical estimate $\hat{\lambda}_G$ can substantially overestimate the true value $\lambda_G$. Indeed we will always have $\hat{\lambda}_G = 1$ whenever $G \geq N$ even when the password distribution $\mathcal{P}$ is very strong. For example, suppose that $\mathcal{P}$ is the uniform distribution over strong 56-bit passwords so that an attacker will crack a random user's password with probability at most $\lambda_G = 2^{-56}G$ within $G$ guesses. However, if we use the empirical distribution with a very large sample size $N=2^{33}$ (larger than the global population) we would still have $\hat{\lambda}_N = 1 \gg 2^{-23} = \lambda_N$. Thus, it may also be undesirable to base policy decisions on the empirical distribution $\hat{\lambda}_G$ --- especially in offline attack settings where $G$ can be large.

In this paper we address the following questions: Can we confidently derive accurate upper and lower bounds on $\lambda_G$? When (if ever) can we use the empirical distribution $\hat{\lambda}_G$ to accurately model the real distribution? How guess efficient are state-of-the-art password cracking models? 

Despite their shortcomings and many attempts to replace them passwords remain entrenched as the dominant form of authentication on the internet and are likely to play a critical role in the foreseeable future because they are easy to use, easy to deploy and users are already familiar with them  \cite{SP:BHVS12}. Thus, characterizing the distribution $\lambda_G$ of user chosen passwords will continue to be a important challenge in the field of cybersecurity.

\subsection{Our Contributions}
We develop a statistical framework to upper and lower bound the guessing curve $(\lambda_G)$ of an ideal attacker who knows the password distribution. We stress that we make no a priori assumptions about the shape of the password distribution. Instead, to apply our statistical framework we only require independent samples $S=(s_1,\ldots,s_n)$ from the unknown password distribution $\mathcal{P}$. All of our bounds can be shown to hold with high probability over the randomly selected password samples. In practice, the password samples could either be obtained from prior breached password datasets or (ideally) an organization could adapt prior work \cite{SP:Bonneau12,NDSS:BloDatBon16} to obtain samples from its own users in a secure and privacy preserving manner --- see discussion in Section \ref{subsec:securesamples}.  
We apply our techniques to analyze several empirical password datasets and illustrate how the upper/lower bounds can be used to guide policy decisions. 
\vspace{0.05cm}

\noindent{\bf Upper Bounds.} We develop two techniques to obtain high confidence upper bounds on the guessing curve of an ideal attacker using the empirical password distribution and linear programming. We first show that $\lambda_G$ is (with high probability) upper bounded by the empirical estimate $\hat{\lambda}_G$ i.e., with high probability we have $\lambda_G \leq \hat{\lambda}_G+\epsilon$ for a very small constant $\epsilon > 0$. Empirical analysis shows that this upper bound is often tight when $G$ ranges from small to moderately large. However, the upper bound becomes less and less tight as $G$ increases and devolves into the trivial upper bound $\lambda_G \leq 1$ once $G \geq \mathbf{Distinct}(S)$ exceeds the number of distinct passwords in our sample $S$. Thus, we develop a second approach to upper bound $\lambda_G$ when $G$ is large using linear programming (LP). Our LP approach adapts techniques of Valiant and Valiant~\cite{valiant2017estimating} for estimating properties of a distribution when the number of samples is smaller than the support of the distribution. Intuitively, our LP searches for a distribution which maximizes $\lambda_G$ subject to the constraint that the distribution is ``sufficiently consistent" with our sample $S$. With high probability, the real distribution $\mathcal{P}$ will also be consistent with all of our linear constraints so we will obtain a valid upper bound by maximizing over all consistent distributions. Empirical analysis shows that our LP upper bounds are superior for sufficiently large $G$ and that we often obtain non-trivial upper bounds even when $G \geq \mathbf{Distinct}(S)$.
\vspace{0.05cm}

\noindent{\bf Lower Bounds.} We develop three techniques to obtain high confidence lower bounds on the guessing curve $(\lambda_G)$ of an ideal attacker: sampling, model-sampling hybrid and linear programming. Additionally, we show how to adapt prior analysis of Blocki et al.\cite{SP:BloHarZho18} to obtain another high confidence lower bound. Our first sampling technique is a simple algorithm inspired by Good-Turing frequency estimation. The algorithm randomly partitions our sample $S$ into two components $D_1$ and $D_2$ of size $N-d$ and $d$ respectively, builds a dictionary $T(D_1,G)$ containing the $G$ most common passwords in $D_1$, and computes the fraction %$\tilde{\lambda}_{G,S_1,S_2}$ 
of passwords in $D_2$ which appear in this dictionary --- we remark that a real world attacker who obtains the samples $D_1$ can also build the dictionary $T(D_1,G)$ and then use this dictionary to crack other passwords. Empirical analysis demonstrates that the resulting lower bounds are nearly tight for smaller guessing numbers (e.g., $G \leq 10^6$), but the lower bound will plateau at $\approx 1-\frac{\mathbf{Unique}(S)}{N}$ once $G > \mathbf{Distinct}(S)$, where  $\mathbf{Unique}(S)$ counts the number of passwords which appear exactly once in our sample $S$. We provide two additional techniques to push past this barrier and derive stronger lower bounds when $G$ is large. First, we can adapt the linear program described earlier to instead search for a distribution that {\em minimizes}  $\lambda_G$ subject to the constraint that the distribution is ``sufficiently consistent" with our sample $S$.  Empirical analysis confirms that this approach generates tighter lower bounds when $G > \mathbf{Distinct(S)} $ allowing us to push past the $1-\frac{\mathbf{Unique(S)}}{N}$ barrier. Finally, we show how a password cracking model $M$ can be used to extend the first lower bound when  $G > \mathbf{Distinct}(D_1)$ using a hybrid approach: as before we partition our sample $S$ into two components $D_1$ and $D_2$ and then we output the fraction of passwords in $D_2$ which {\em either} appear in $D_1$ or that appear in the top $G-\mathbf{Distinct}(D_1)$ guesses generated by our model $M$. Empirical analysis shows that this combined bound can improve on our prior lower bounds when $G$ is very large. 
\vspace{0.05cm}

 \noindent{\bf Empirical Analysis.} We apply our theoretical upper and lower bounds to analyze eight password datasets (i.e. Yahoo!, RockYou, 000webhost, Neopets, Battlefield Heroes, Brazzers, Clixsense, CSDN), and we compare our bounds with state-of-the-art password cracking techniques such as Markov models~\cite{CCS:NarShm05b,NDSS:CasDurPer12,SP:MYLL14,durmuth2015omen}, Probabilistic Context-free Grammars (PCFG) ~ \cite{SP:WAMG09,NDSS:VerColTho14}, neural networks ~\cite{USENIX:MUSKBCC16}, John the Ripper (JtR) and Hashcat. We find that our new techniques for lower bounding $\lambda_G$ significantly improve upon prior work of Blocki et al.~\cite{SP:BloHarZho18} e.g., for RockYou dataset with guessing budget $G=1.3\times 10^8$  our lower bounds are $\lambda_G \geq 62.64\%$ (sampling) and $\lambda_G \geq 72.70\%$ (linear programming) in comparison to the weaker lower bound $\lambda_G \geq 53.95\%$ obtained from \cite{SP:BloHarZho18}. Our empirical analysis also shows that, for smaller values of $G$, our upper and lower bounds on $\lambda_G$ are very close and that the empirical estimate $\hat{\lambda}_G$ is sandwiched between these two values. This provides an answer to our second question i.e., the empirical guessing curve $\hat{\lambda}_G$ closely approximates the real guessing curve as long as the guessing number $G$ is not too large. By contrast, for some larger values of $G$ we can demonstrate with high confidence that the empirical guessing curve $\hat{\lambda}_G$ significantly overestimates $\lambda_G$ indicating that the empirical distribution should not be used to approximate the real guessing curve in these settings. We also compare our upper/lower bounds to CDF-Zipf curves fit to the empirical dataset \cite{wang2017zipf} and, we identify cases where the CDF-Zipf curve overestimates $\lambda_G$ by {\em at least} $12\%$.  
 
 We find that our lower bounds on $\lambda_G$ are often significantly higher than the guessing curves obtained using state-of-the-art password cracking models~\cite{USENIX:MUSKBCC16,CCS:DelFil15,SP:WAMG09,SP:LNGCU19} indicating that there is still room to develop improved password cracking models which are more guess efficient. For example, an attacker making $G \approx 8.4$ million guesses per account would crack {\em at most} $14\%$ of passwords in the 000webhost using any of the state-of-the-art password cracking models we analyzed. By contrast, our high confidence\footnote{The probability of an errant lower (or upper) bound can be upper bounded by a small constant $\delta$. In our experiments we tuned our parameters such that $\delta \leq 0.01$.} lower bounds show that an attacker who knows the password distribution would crack {\em at least} $39.16\%$ of 000webhost passwords. Similar observations held for other datasets. This provides compelling statistical evidence even the most sophisticated password cracking models still have a large room for improvement in guess efficiency. There has been a push towards designing moderately expensive password hashing algorithms to discourage offline attacks e.g., see \cite{PHC,biryukov2016argon2,CCS:AlwBloHAr17,percival2009stronger,SP:BloHarZho18}. If password guessing becomes more expensive then attacker will have additional incentive to develop guess efficient models. 

\noindent{\bf Implications for Password Policies.} We apply our statistical techniques to quantify the security benefits of adopting a stricter lockout policy and/or increasing the cost of the password hash function by a multiplicative factor $b$ i.e., such that cost of making $G/b$ password guesses after this cost increase is identical to the cost of making $G$ guesses beforehand. In particular, we  derive upper and lower bounds on the security benefit $(\lambda_G - \lambda_{G/b})$ of such policy changes for various values of $b$. We also apply our statistical techniques to analyze several prominent password composition policies. For example, the Yahoo! password frequency corpus $S$ can be divided into two sets $S_{0}$ and $S_{1}$ representing passwords picked before/after Yahoo! adopted six character minimum policy for user passwords. We can then apply our statistical techniques to compare the distributions of user passwords with and without this restriction. If we are willing to adopt the normalized probabilities model \cite{EC:BKPS13},  a heuristic assumption about the way user's react to password composition policies, then we can apply our statistical techniques to quantify the performance of arbitrary password composition policies using other password datasets such as 000webhost and RockYou --- see discussion and analysis in Section \ref{sec:applicationtopasswordpolicies}. 
\subsection{Related Work}

{\bf Password Hashing and Memory Hard Functions:} Key-stretching was proposed as early as 1979~\cite{morris1979password} as a way to protect lower-entropy passwords against offline brute force attacks by making the password hash function moderately expensive to compute. Password hashing algorithms such as BCRYPT~\cite{provos1999bcrypt} and PBKDF2~\cite{kaliski2000pkcs} use hash iteration to control guessing costs, but are potentially vulnerable to an attacker who uses FPGAs or Application Specific Integrated Circuits (ASICs) to dramatically reduce guessing costs. Blocki et al.\cite{SP:BloHarZho18} argued that hash iteration alone cannot provide sufficient protection for user password without introducing an unacceptably long delay during the authentication process i.e., minutes. Memory hard functions such as such as scrypt~\cite{percival2009stronger}, Argon2~\cite{biryukov2016argon2} or DRSample~\cite{CCS:AlwBloHAr17} are designed to force an attacker to allocate large amounts of memory for the duration of computation and are believed to be ASIC resistant. When attempting to tune the cost parameters of our key-stretching algorithm it will be important to understand the password distribution $\lambda_G$ i.e., a defender might want to compute $\lambda_{G}-\lambda_{G/2}$ to decide whether or not doubling the cost parameter would substantially reduce the $\%$ of passwords cracked by an offline attacker. We also note that as organizations start to use moderately expensive memory hard password hash functions like scrypt~\cite{percival2009stronger}, Argon2~\cite{biryukov2016argon2} or DRSample~\cite{CCS:AlwBloHAr17} offline attackers will have stronger incentives to develop guess efficient password cracking models. Thus, it will be desirable to base security decisions on $\lambda_G$ instead of using a password cracking model.   

{\bf Offline Password Cracking Models and Defenses: } Offline password cracking has been studied for decades. Researchers have proposed many password cracking algorithms based on probabilistic password models such as Probabilistic Context-free Grammars (PCFG)~\cite{SP:WAMG09,NDSS:VerColTho14}, Markov models~\cite{CCS:NarShm05b,NDSS:CasDurPer12,SP:MYLL14,durmuth2015omen}, and neural networks~\cite{USENIX:MUSKBCC16}. Monte-Carlo strength estimation \cite{CCS:DelFil15} is a tool which allows us to efficiently approximate the guessing number of a given without requiring the defender to simulate the full attack. Liu et al.~\cite{SP:LNGCU19} developed tools to estimate guessing numbers for software tools such as John the Ripper (JtR)~\cite{johntheripperlink} and Hashcat~\cite{hashcatlink} which are used more frequently by real world attackers. Juels and Rivest~\cite{CCS:JueRiv13} suggested the use of honeywords (fake passwords) to help detect offline attacks. Compromised Credential Checking services such as HaveIBeenPwned and Google Password Checkup can be used to help alert users when one of their passwords have been breached~\cite{USENIX:TPYRKIBPPBB19,CCS:LPASCR19}. Distributed password hashing e.g., \cite{USENIX:ECSJR15} ensures that the information needed to evaluate the password hash function is distributed across multiple servers so that a hacker who breaks into any individual server will not be able to mount an offline attack. Multifactor authentication provides another defense against password cracking attacks~\cite{CCS:BJRSY06,MFA}. 

{\bf Strengthening the Password Distribution: } A large body of research has focused on encouraging (or forcing) users to pick stronger passwords. Password composition policies~\cite{Shay2010,campbell2011impact,Komanduri2011} require users to pick passwords that comply with particular requirements e.g., passwords must at least one number and one upper case letter or passwords must be at least 8 characters long. Password composition policies often induce a substantial  usability burden \cite{Adams1999,Shay2010} and can often be counter-productive~\cite{Komanduri2011,EC:BKPS13}. Telepathwords~\cite{USENIX:KSCHS14} is a password meter which encourages users to select stronger passwords by displaying realtime predictions for the next character that the user will type. Bonneau and Schecter~\cite{USENIX:BonSch14} introduced the notion of incremental password strengthening where users are continually nudged to memorize one more character of a strong $56$-bit password. Another line of work has focused on developing strategies for users to generate stronger passwords e.g., see \cite{yan2004password,CCS:YLCXP16,AC:BloBluDat13,NDSS:BKCD15}. To determine whether a particular intervention (e.g., password generation strategy/composition policy/strength meter/ etc..) strengthened the password distribution we need to characterize the attacker's guessing curve before/after the intervention.

{\bf Empirical Distribution:} The empirical password distribution has been used to evaluate many password research ideas. Harsha et al.~\cite{bicycleAttacks} used empirical distributions derived from the LinkedIn and RockYou datasets to quantify the advantage of an attacker after learning the length of the user's password. The empirical password distribution has also been used to tune and evaluate distribution-aware password throttling mechanisms \cite{blocki2020dalock,tian2019stopguessing} to defend against online attacks, distribution-aware mechanisms to tune relevant cost parameters for password hashing~\cite{bai2021dahash,bai2020information,CSF:BloDat16}, achieve (personalized) password typo correction~\cite{SP:CAAJR16,CCS:CWPCR17} and evaluate the security of Compromised Credential Checking protocols ~\cite{CCS:LPASCR19}.

{\bf Estimating Properties of (Password) Distributions:} Valiant and Valiant~\cite{valiant2017estimating} proposes an approach to accurately estimate key properties of {\em any} distribution over at most $k$ distinct elements using $N=O(k/\log k)$ independent and identically distributed (i.i.d.) samples. However, we cannot directly apply the results of Valiant and Valiant~\cite{valiant2017estimating} to bound $\lambda_G$ in our password setting as the number of distinct passwords $k$ in the support of the password distribution $\mathcal{P}$ is unknown to us and we almost certainly have $k \gg N^2$ even for our largest password datasets. However, we are able to adapt the techniques of Valiant and Valiant~\cite{valiant2017estimating} when developing the linear program that we use to upper/lower bound $\lambda_G$.  Bonneau~\cite{SP:Bonneau12} collected a password frequency corpus derived from approximately $7 \times 10^7$ Yahoo! passwords and used this corpus to estimate properties of the Yahoo! password distribution. Subsampling was one of the key heuristics used to identify stable statistical estimates e.g., Bonneau found that the value $\hat{\lambda}_{{10}}$ was stable under subsampling ~\cite{SP:Bonneau12}  heuristically concluding that $\lambda_{{10}} \approx \hat{\lambda}_{{10}}$. By contrast, we provide rigorous statistical techniques to upper/lower bound $\lambda_G$ directly and precisely bound the probability of an error $\delta$. We also stress that the value $\hat{\lambda}_G$ will not be stable with respect to subsampling for many larger values of $G > {10}$ while our techniques can still be applied to obtain rigorous upper/lower bounds on $\lambda_G$.

\section{Attack Model and Notation}
\subsection{Notation} %{\bf Notation: } 
We let $\mathcal{P}$ denote an arbitrary password distribution over passwords $pwd_1,pwd_2, \ldots$ and we let $p_i$ denote the probability of sampling $pwd_i$. We use $s \leftarrow \mathcal{P}$ to denote a random sample from the distribution and $S=\{s_1,...,s_N\} \leftarrow \mathcal{P}^N$ to denote a multiset of $N$ independent and identically distributed samples from $\mathcal{P}$. We will occasionally abuse notation and also use $\mathcal{P}$ to denote the set of passwords $\{pwd_1,pwd_2,\ldots \}$ in the support of the distribution. We assume that the passwords are ordered in descending order of probability such that $p_1 \geq p_2 \geq p_3 \ldots$ and in general $p_i \geq p_{i+1}$. Note that passwords are sampled with replacement so we could have $s_i = s_j$ for $i<j$. It will be convenient to define $f^{S}_i \doteq \left| \{ j : s_j = pwd_i\}\right|$ as the number of times $pwd_i$ appears in the sample $S$ --- the superscript $S$ may be omitted when the sample set $S$ is clear from context (Note that if $i<j$ we could still have $f^S_i < f^S_{j}$ even though $p_i > p_{j}$ since the passwords $pwd_i$ are ordered by probability \textit{not} by their frequency in the sample $S$.) We will also define $F^S_{i} \doteq |\{pwd_j:f^S_j=i\}|$ as the number of distinct passwords that appear exactly $i$ times in $S$ and denote $F^S=\left( F^S_{1},F^S_{2},...,F^S_{N} \right)$ as the frequency encoding of our sample $S$. Under this notation we have $\mathbf{Unique}(S)=F^S_1$, $\mathbf{Distinct}(S)=\sum_{i\leq N} F^S_i$ and $N=\sum_{i \leq N} i \times F^S_i $. 

It will be convenient to let $L_S$ denote the list of all distinct passwords in $S$ ordered by frequency $f_i^S$ --- ties can be broken in arbitrary order e.g., lexicographic. We also define the set $T(S,G)$ which contains the first $G$ passwords in $L_S$. If $G \geq \mathbf{Distinct}(S)$ then $T(S,G)$ is simply the set of all distinct passwords in the sample $S$. Finally, we will use $\bpdf(i,N,p) \doteq {N \choose i} p^i (1-p)^{N-i}$ to denote the binomial probability density function i.e., if password $pwd$ has probability $p$ and we draw $N$ samples from our password distribution $\mathcal{P}$ then $\bpdf(i,N,p)$ denotes the probability that $pwd$ is sampled exactly $i$ times. 

\subsection{Attacker Model} %{\bf Attacker Model: } 
We consider an attacker who knows the password distribution $\mathcal{P}$ but does not have additional information about the sampled passwords $S \leftarrow \mathcal{P}^N$. In particular, for each $i$ the attacker knows  $pwd_i$ and $p_i$. For each sampled user password $s_i \in S$ the attacker is given $G$ guesses to crack the password $s_i$. For an online attacker $G$ will typically be small as an authentication server can lock the account after several consecutive incorrect login attempts. By contrast, $G$ will be much larger for an offline attacker who has stolen the (salted) cryptographic hash of the user's password can check as many passwords as s/he wants by comparing the (salted) cryptographic hash $h^i=H(u_i, s_i)$ with the hash $h_j^i =H(u_i, pwd_j)$ for each $j \leq G$. An offline attacker is limited only by the resources s/he is willing to invest cracking and by the cost of repeatedly evaluating the password hash function. 

Whether $G$ is large or small the optimal strategy for the attacker is always to check the $G$ most probable passwords $pwd_1,pwd_2,\ldots, pwd_G$ in the distribution. We use the random variable $\lambda(S,G)\doteq \sum_{i\leq G}f^S_i/N$ to denote the percentage of passwords in $S$ cracked within $G$ guesses. Observe that the expected value of $\lambda(S,G)$ is $ \mathbb{E}(\lambda(S,G)) = \sum_{i\leq G}p_i=\lambda_G $. In the next section we will show that the random variable $\lambda(S,G)$  is tightly concentrated around its mean $\lambda_G$ i.e., except with negligible probability we will have $\left| \lambda_G - \lambda(S,G) \right| \leq \epsilon$. In this sense upper/lower bounding $\lambda_G$ and $\lambda(S,G)$ can be seen as (nearly) equivalent problems.

\section{Theoretical Upper/Lower Bounds}\label{sec:theoreticalanalysis}
In this section we introduce several algorithms to generate high-confidence upper bounds and lower bounds on $\lambda_G$ given $N$ independent and identically distributed (iid) samples $S=\{s_1,\ldots s_n\}$ from our password distribution $\mathcal{P}$. An upper bound $\mathsf{UB}(S,G)$ (resp. an lower bound $\mathsf{LB}(S,G)$) derived from the sample $S$ holds with confidence $1-\delta$ if $\Pr[\lambda_G \geq \mathsf{UB}(S,G)] \leq \delta$ (resp. $\Pr[\lambda_G \leq \mathsf{LB}(S,G)] \leq \delta$) where the randomness is taken over the selection of $S \leftarrow \mathcal{P}^N$. Before presenting our results, we first introduce the well-known bounded differences inequality~\cite{mcdiarmid1989method} (also called McDiarmid's inequality), which will be useful in our proofs.
\begin{theorem}(Bounded Differences Inequality~\cite{mcdiarmid1989method})
Suppose that $(X_1,...,X_n)\in\Omega$ are independent random variables. Let $f:\Omega \rightarrow \mathbb{R}$ satisfy the bounded differences property with constants $c_1,...,c_n$, i.e., for all $i\in\{1,...,n\}$ and all $x,x'\in\Omega$ that differ only at the i-th coordinate, the output of the function $|f(x)-f(x')|\leq c_i$. Then,
\begin{align*}
&\Pr[f(X_1,...,X_n)-\mathbb{E}(f(X_1,...,X_n))\geq t] \leq \delta, \\
&\Pr[f(X_1,...,X_n)-\mathbb{E}(f(X_1,...,X_n))\leq -t] \leq \delta,
\end{align*}
where $\delta = \exp\left(\frac{-2t^2}{\sum_{i=1}^n c_i^2}\right)$.
\label{thm:BoundedDifferencesInequality}
\end{theorem}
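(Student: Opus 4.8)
The plan is to prove the inequality by the method of bounded martingale differences (the Doob martingale / Azuma--Hoeffding approach). Write $Z=f(X_1,\ldots,X_n)$; note first that the bounded differences hypothesis forces $f$ to be bounded (one can reach any point of the product space from a fixed reference point by changing one coordinate at a time, so $|f(x)-f(x_0)|\le\sum_i c_i$), hence $Z$ is integrable. Introduce the filtration $\mathcal{F}_i=\sigma(X_1,\ldots,X_i)$ and the martingale $Z_i=\mathbb{E}[Z\mid\mathcal{F}_i]$, so that $Z_0=\mathbb{E}[Z]$, $Z_n=Z$. Setting $D_i=Z_i-Z_{i-1}$ we get the telescoping identity $Z-\mathbb{E}[Z]=\sum_{i=1}^n D_i$, and the tower property gives $\mathbb{E}[D_i\mid\mathcal{F}_{i-1}]=0$.

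The key structural step is to show that, conditioned on $\mathcal{F}_{i-1}$, the increment $D_i$ almost surely lies in an interval of length at most $c_i$. Because the $X_j$ are independent, for each value $x$ one can express $\mathbb{E}[Z\mid\mathcal{F}_{i-1},X_i=x]$ as the integral of $f(X_1,\ldots,X_{i-1},x,X_{i+1},\ldots,X_n)$ against the (product) law of $(X_{i+1},\ldots,X_n)$, with $X_1,\ldots,X_{i-1}$ frozen at their observed values --- crucially this integrating measure does not depend on $x$. Comparing two values $x,x'$, the bounded differences property makes the two integrands differ pointwise by at most $c_i$, so $\bigl|\mathbb{E}[Z\mid\mathcal{F}_{i-1},X_i=x]-\mathbb{E}[Z\mid\mathcal{F}_{i-1},X_i=x']\bigr|\le c_i$. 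Hence there are $\mathcal{F}_{i-1}$-measurable $L_i\le U_i$ with $U_i-L_i\le c_i$ and $L_i\le D_i\le U_i$; combined with $\mathbb{E}[D_i\mid\mathcal{F}_{i-1}]=0$ this also yields $L_i\le 0\le U_i$.

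Next I would invoke Hoeffding's lemma in conditional form: any $W$ with $\mathbb{E}[W\mid\mathcal{G}]=0$ and $W\in[L,U]$ for $\mathcal{G}$-measurable endpoints satisfies $\mathbb{E}[e^{sW}\mid\mathcal{G}]\le\exp\bigl(s^2(U-L)^2/8\bigr)$ for all $s\in\mathbb{R}$ (proved by showing $\psi(s)=\log\mathbb{E}[e^{sW}\mid\mathcal{G}]$ has $\psi(0)=\psi'(0)=0$ and $\psi''(s)\le(U-L)^2/4$, then integrating twice). Applying this with $\mathcal{G}=\mathcal{F}_{i-1}$ gives $\mathbb{E}[e^{sD_i}\mid\mathcal{F}_{i-1}]\le\exp(s^2 c_i^2/8)$. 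Peeling off the conditional expectations one index at a time via the tower property, $\mathbb{E}\bigl[e^{s(Z-\mathbb{E}Z)}\bigr]=\mathbb{E}\bigl[e^{s\sum_i D_i}\bigr]\le\exp\bigl(\tfrac{s^2}{8}\sum_{i=1}^n c_i^2\bigr)$. A Chernoff bound $\Pr[Z-\mathbb{E}Z\ge t]\le e^{-st}\,\mathbb{E}[e^{s(Z-\mathbb{E}Z)}]$ together with the optimal choice $s=4t/\sum_i c_i^2$ produces the stated upper-tail bound $\exp(-2t^2/\sum_i c_i^2)$. The lower-tail bound follows by applying the upper-tail result to $-f$, which has the same bounded differences constants.

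The main obstacle is the second step: correctly exploiting independence to bound the oscillation of $x\mapsto\mathbb{E}[Z\mid\mathcal{F}_{i-1},X_i=x]$ by $c_i$. This is precisely where independence (not merely bounded differences of $f$) is needed --- it is what makes the conditional law of the untouched coordinates the same product measure regardless of the value of $X_i$, so that the oscillation in coordinate $i$ of the conditional expectation inherits the oscillation bound of $f$ itself. A secondary, purely technical point is the proof of Hoeffding's lemma if one wants the argument to be self-contained; everything else (telescoping, tower property, Chernoff optimization) is routine.
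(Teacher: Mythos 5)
Your proposal is correct: the Doob-martingale decomposition, the use of independence to bound the conditional oscillation of each increment by $c_i$, the conditional Hoeffding lemma, and the Chernoff optimization with $s=4t/\sum_i c_i^2$ together give exactly the stated tails, with the lower tail obtained by applying the result to $-f$. The paper does not prove this theorem---it quotes it from McDiarmid's survey---and your argument is essentially the classical proof given there (Azuma--Hoeffding for the Doob martingale), so there is nothing to reconcile beyond the routine measurability of the conditional sup/inf envelopes, which is standard.
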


As an immediate application of McDiarmid's inequality we can prove that, except with negligible probability, we have $\left| \lambda(S,G)-\lambda_G\right| \leq \epsilon$ i.e., $\lambda(S,G)$ is tightly concentrated around its mean $\mathbb{E}[\lambda(S,G)]=\lambda_G$ when the sample size $N$ is large enough --- see Theorem \ref{thm:ExpectationConcentration}. Thus, one strategy to derive a high confidence upper/lower bound for $\lambda_G$ is to derive a high confidence upper/lower bound for $\lambda(S,G)$ and we will immediately obtain a high confidence upper/lower bound for $\lambda_G$ as a corollary. The proof of Theorem~\ref{thm:ExpectationConcentration} is in Appendix~\ref{app:othermissingproofs}. Intuitively, we have $|\lambda(S,G)-\lambda(S',G)|\leq 1/N$ whenever $S=\{s_1,...,s_i,...,s_N\}$ and $S'=\{s_1,...,s'_{i},...,s_N\}$ is obtained by swapping out $s_i$ for $s'_i$. Thus, we can apply McDiarmid's inequality.

\newcommand{\expectationconcentration}{For any guessing number $G \geq 0$ and any $0 \leq \epsilon \leq 1$ we have:
\begin{align*}
    \Pr[\lambda(S,G) \leq \lambda_G + \epsilon] & \geq 1 - \exp\left(-2N\epsilon^2\right)  \mbox{~, and} \\
    \Pr[\lambda(S,G) \geq \lambda_G - \epsilon] & \geq 1 - \exp\left(-2N\epsilon^2\right) 
\end{align*}
where the randomness is taken over the sample set $S \leftarrow \mathcal{P}^N$ of size $N$.}

\begin{theorem}\label{thm:ExpectationConcentration} 
\expectationconcentration
\end{theorem}

\subsection{Empirical Distribution as an Upper Bound}\label{sec:upperbound-priorwork}
In this section we show that we can upper bound $\lambda_G$ using the empirical distribution $\hat{\lambda}_G$. In particular, we argue that for any sample $S$ we have $\hat{\lambda}_G \geq \lambda(S,G)$. As an immediate corollary we get that $\Pr[\lambda_G > \hat{\lambda}_G+\epsilon] \leq \delta$ where $\delta = \exp\left(-2N \epsilon^2 \right)$.

\begin{theorem}\label{thm:upperbound-priorwork}
For any sample set $S \leftarrow \mathcal{P}^N$ with size $N$ and any $G >0$ we have $\lambda(S,G) \leq \frac{1}{N}\sum_{i:pwd_i\in T(S,G)} f_i^S$.
\end{theorem}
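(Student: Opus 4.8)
The plan is to reinterpret both sides as (normalized) counts of how many of the $N$ samples land inside a particular size-$G$ set of passwords, and then observe that $T(S,G)$ is, by construction, the size-$G$ set that maximizes this count. Concretely, first I would recall that by definition $\lambda(S,G) = \frac1N\sum_{i\le G} f_i^S = \frac1N\bigl|\{\, j : s_j \in A_G \,\}\bigl|$, where $A_G \doteq \{pwd_1,\ldots,pwd_G\}$ is the set of the $G$ a-priori most likely passwords; note that these $G$ passwords are pairwise distinct simply because they carry distinct indices, so $A_G$ is a set of exactly $G$ distinct elements of the password universe. Similarly, the right-hand side equals $\frac1N\bigl|\{\, j : s_j \in T(S,G)\,\}\bigl|$.

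Next I would split into the two cases in the definition of $T(S,G)$. If $G \ge \mathbf{Distinct}(S)$, then $T(S,G)$ contains every distinct password that appears in $S$, so $\sum_{i:pwd_i\in T(S,G)} f_i^S = \sum_i f_i^S = N$ and the inequality $\lambda(S,G)\le 1$ is immediate. If instead $G < \mathbf{Distinct}(S)$, then $L_S$ orders the distinct passwords of $S$ by decreasing sample frequency $f_i^S$ (ties broken arbitrarily) and $T(S,G)$ is the prefix of length $G$. A standard greedy/exchange argument then shows that for \emph{every} set $A$ of $G$ distinct passwords we have $\sum_{i:pwd_i\in A} f_i^S \le \sum_{i:pwd_i\in T(S,G)} f_i^S$: starting from $A$, repeatedly replace an element of $A\setminus T(S,G)$ by an unused element of $T(S,G)\setminus A$; since every element of $T(S,G)$ has sample frequency at least as large as every password outside $T(S,G)$ (this is exactly what the ordering of $L_S$ guarantees), each such swap can only increase the total, and after at most $G$ swaps we reach $T(S,G)$.

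Finally I would instantiate this inequality with $A = A_G$, obtaining $\sum_{i\le G} f_i^S = \sum_{i:pwd_i\in A_G} f_i^S \le \sum_{i:pwd_i\in T(S,G)} f_i^S$, and divide through by $N$ to conclude $\lambda(S,G) \le \frac1N\sum_{i:pwd_i\in T(S,G)} f_i^S$, as claimed. The only point requiring any care — the ``main obstacle,'' such as it is — is handling the tie-breaking in $L_S$ and the boundary case $G \ge \mathbf{Distinct}(S)$ cleanly; once one frames $T(S,G)$ as the frequency-maximizing set of size $G$, the argument is purely combinatorial and does not use any probabilistic input (the probabilistic consequence $\Pr[\lambda_G > \hat\lambda_G + \epsilon]\le \exp(-2N\epsilon^2)$ then follows by combining this deterministic bound with Theorem~\ref{thm:ExpectationConcentration}).
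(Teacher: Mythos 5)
Your proposal is correct and follows essentially the same route as the paper's proof: the paper also observes that $\sum_{i:pwd_i\in T(S,G)} f_i^S$ equals the maximum of $\sum_{i\in A} f_i^S$ over all size-$G$ sets $A$ of passwords, hence is at least $\sum_{i\le G} f_i^S = N\lambda(S,G)$. You merely make explicit (via the exchange argument and the $G\ge \mathbf{Distinct}(S)$ case) what the paper states without elaboration.
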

\begin{proof}
We observe that $\sum_{i:pwd_i\in T(S,G)} f_i^S = \max\limits_{1\leq j_1<\cdots<j_G\leq |P|}\sum_{i\in \{j_1,...,j_G\}} f_i^S  \geq \sum_{i \leq G} f_i^S = N \lambda(S,G)$.
It follows that $\lambda(S,G) \leq \frac{1}{N}\sum_{i\in T(S,G)} f_i^S$ as claimed.
\end{proof}
 Recall that $T(S,G)$ is the set of $G$ most frequent passwords in the sample $S$. 
Applying Theorem~\ref{thm:ExpectationConcentration} to Theorem~\ref{thm:upperbound-priorwork}, we can immediately obtain the upper bound of $\lambda_G$ as below:
\begin{corollary}\label{crl:upperbound-priorwork}
For any guessing number $G\geq 0$ and $\epsilon>0$  we have
$$\Pr\left[ \lambda_G \leq \frac{1}{N}\sum\nolimits_{i\in T(S,G)} f_i + \epsilon \right] \geq 1 - \exp\left(-2N\epsilon^2\right),$$
where the randomness is taken over the selection of $S\leftarrow \mathcal{P}^N$.
\end{corollary}

\subsection{A Lower Bound for $G < N$}\label{sec:bound1}
%The lower bound in Section~\ref{sec:lowerbound-priorwork} only applies $G\geq N$. In this section, we introduce a new idea that can generate tighter lower bound for small guessing number $G<N$.
In this section, we introduce a new idea to lower bound $\lambda_G$. The key idea is to randomly partition our sample  $S$ into two datasets $D_1=\{s_1,...,s_{N-d}\}$ and $D_2=\{s_{N-d+1},...,s_N\}$ with size $N-d$ and $d$. We can then construct a dictionary $T(D_1,G)$ containing the top $G$ passwords in $D_1$ which is used to attack passwords in $D_2$. In particular, we let $h(D_1,D_2,G) = |\{N-d+1 \leq  i \leq N :  s_i \in T(D_1,G)\}|$ denote the number of samples in $D_2$ that are also in $T(D_1,G)$. Because we can view $D_2$ as $d$ independent samples from the password distribution $\mathcal{P}$ we have  $\mathbb{E}\left[h(D_1,D_2,G)/d\right] \leq \lambda_G$ since $\lambda_G$ denotes the expected fraction of passwords in $D_2$ that are cracked using the optimal dictionary instead of $T(D_1,G)$. Adding a slack term $t/d$ we can use the Bounded Differences Inequality to argue that $\Pr[\lambda_G \leq \frac{1}{d}(h(D_1,D_2,G) - t)] \leq \delta $ where $\delta=\exp(-2t^2/d)$.

We remark that $h(D_1,D_2,G)/d$ can also be viewed as the success rate of an attacker who has partial knowledge of the password distribution. In particular, suppose that the attacker has $N-d$ samples $D_1$ from $\mathcal{P}$ e.g., obtained by cracking the corresponding password hashes or by some other means such as phishing. Then the attacker can construct the dictionary $T(D_1,G)$ and use this dictionary to crack the remaining passwords in $D_2$. We can view $h(D_1,D_2,G)$ as the number passwords in $D_2$ that that the attacker would crack within $G$ guesses.  
% Given a parameter $d$, we partitionSince samples in $S$ are independently randomly sampled from the password distribution $\mathcal{P}$, $D_1$ and $D_2$ can be considered as two independent sample sets from $\mathcal{P}^{N-d}$ and $\mathcal{P}^d$ respectively. Let $h(D_1,D_2,G) = |\{N-d+1 \leq  i \leq N :  s_i \in T(D_1,G)\}|$ be the number of samples in $D_2$ that are also in $T(D_1,G)$. Recall that $T(D_1,G)$ is the set of $G$ most frequent passwords in $D_1$. Adding a slack term $t/d$ we argue that $\Pr[\lambda_G \leq \frac{1}{d}(h(D_1,D_2,G) - t)] \leq \delta $ where $\delta=\exp(-2t^2/d)$.

\newcommand{\samplinglbthm}{
For any guessing number $G \geq 1$ and any parameters $0 < d < N$ and $t \geq 0$, we have
\begin{align*}
   \Pr[\lambda_G \geq \frac{1}{d}(h(D_1,D_2,G) - t)] \geq 1 - \exp\left(-2t^2/d\right)
\end{align*}
where the randomness is taken over the samples $D_1 \leftarrow \mathcal{P}^{N-d}$ and $D_2 \leftarrow \mathcal{P}^d$. 
}
\begin{theorem}\label{thm:bound1}
\samplinglbthm
\end{theorem}

When applying Theorem~\ref{thm:bound1} we can select $d \ll N$ and set $t = \sqrt{(d/2) \ln (1/\delta) } $ to get ensure that $t/d=o(1)$ is small and our probability of error is at most $\delta$. \fullversion{The proof of Theorem~\ref{thm:bound1} is deferred to the full version \cite{fullversion} of the paper. The full version \cite{fullversion} also contains a corollary lower bounding $\lambda(S,G)$ using Theorems~\ref{thm:bound1} and \ref{thm:ExpectationConcentration}.}{The proof of Theorem~\ref{thm:bound1} is in Appendix~\ref{app:othermissingproofs}. Corollary \ref{crl:bound1} lower bounding $\lambda(S,G)$ using Theorems~\ref{thm:bound1} and \ref{thm:ExpectationConcentration} is in Appendix~\ref{app:lambdaSG}.} 

% \fullversion{--- due to space limitations all corollaries bounding $\lambda(S,G)$ are deferred to the full version}{--- see Corollary \ref{crl:bound1} in Appendix~\ref{app:lambdaSG}}. As a corollary of Theorem~\ref{thm:ExpectationConcentration} we can also derive a high confidence lower bound for $\lambda(S,G)$ 

The lower bound in Theorem~\ref{thm:bound1} is often tight for smaller values of $G \ll N$, but it will plateau at $G=\mathbf{Distinct}(S)$ since the set $T(D_1,G)$ already contains all of the passwords in $D_1$. When $d \ll N$ and $G = \mathbf{Distinct}(S)$ the lower bound will closely match the Good-Turing estimate $1-\frac{\mathbf{Distinct}(S)}{N}$ for the total probability mass of all passwords in the set $S$.

\subsection{An Extended Lower Bound Using Password Models}\label{sec:extendedLB}
The lower bound from  Section~\ref{sec:bound1} plateau's when  $G \geq \mathbf{Distinct}(S)$. Is it possible to derive tighter lower bounds for larger values of $G$? In this section we show that any password cracking model can be used to derive high confidence lower bounds on $\lambda_G$ and then show how our prior lower bound can be combined with a password cracking model $M$ to derive tighter bounds. 

Let $M(D_1, G)$ be the set of top $G$ password guesses output by an attack model $M$ trained on $D_1$.  We now follow the same approach as before and partition $S$ into two sets $D_1=\{s_1,...,s_{N-d}\}$, $D_2 = \{s_{N-d+1},...,s_{N}\}$. Let $h'_M(D_1,D_2,G)$ be the number of passwords cracked in $D_2$ by making guesses in $M(D_1, G)$. We can prove a generalized lower bound of $\lambda_G$ --- see Theorem \ref{thm:bound2} below.

\begin{theorem}\label{thm:bound2}
For any guessing number $G>0$ and any parameters $0 < d < N$, $t\geq 0$ we have:
$$ \Pr[\lambda_G\geq \frac{1}{d}(h'_M(D_1,D_2,G) - t)]\geq  1-\exp\left(-2t^2/d\right)$$
where the randomness is taken over the sample $S$ of size $N$.
\end{theorem}
\begin{proof}
Since $p_1,p_2,...,p_i,...$ are sorted in decreasing order, we have $\mathbb{E}(h'_M(D_1,D_2,G)) = d\times \sum_{i:pwd_i\in M(D_1,G)}p_i \leq d\times \sum_{i \leq G}p_i = d\times \lambda_G$. Using  Theorem \ref{thm:BoundedDifferencesInequality}, we have:
\begingroup\makeatletter\def\f@size{9.5}\check@mathfonts
\begin{align*}
& \Pr[h'_M(D_2,G) \leq d \sum_{i:pwd_i\in M(D_1,G)}p_i + t] \geq  1-\exp\left(-2t^2/d\right) \\
    & \Rightarrow \Pr[\lambda_G \geq \frac{1}{d}(h'_M(D_2,G) - t)]  \geq 1-\exp\left(-2t^2/d\right)
\end{align*}
\endgroup
%\begin{align*}
% & \Pr[h'_M(D_2,G) \leq d\times \sum_{i:pwd_i\in M(D_1,G)}p_i + t] 
%    \geq  1-\exp\left(-2t^2/d\right) \\
%    & \Rightarrow \Pr[\lambda_G \geq \frac{1}{d}(h'_M(D_2,G) - t)]  \geq 1-\exp\left(-2t^2/d\right)
%\end{align*}
where the last line follows since $\lambda_G \geq \sum_{i:pwd_i\in M(D_1,G)}p_i$.
\end{proof}
As an immediate corollary of Theorem \ref{thm:bound2} we can also lower bound $\lambda(S,G)$ \fullversion{--- see the full version \cite{fullversion}}{--- see Corollary \ref{cor:modelLbound} in Appendix~\ref{app:lambdaSG}}. As a more useful corollary given any model $M$ we can define a hybrid attack model $M^*(D_1,G)$ which, given a dataset $D_1$, first constructs a dictionary $T(D_1,G)$ containing the top $G$ passwords in $D_1$. When trying to crack a new password $pwd \leftarrow \mathcal{P}$ the model $M^*(D_1,G)$ begins by checking each of the passwords in this dictionary $T(D_1,G)$. If $G > \mathbf{Distinct}(D_1)$ and $pwd$ does not appear in the dictionary $T(D_1,G)$ then the model proceeds to generate the remaining $G'=G - \mathbf{Distinct}(D_1)$ guesses using the model $M$ --- we can optionally omit guesses which already appear in our training dataset $D_1$. Note that for $G \leq \mathbf{Distinct}(D_1)$ we have $h'_{M^*}(D_1,D_2,G) = h(D_1,D_2,G)$ where $h(D_1,D_2,G)$ counts the number of samples in $D_2$ that appear in the top $G$ samples from $D_1$. For $G \geq \mathbf{Distinct}(D_1)$ the function $h'_{M^*}(D_1,D_2,G)$ counts the number of samples in $D_2$ that either (1) appear in $D_1$ or (2) appear in $M(D_1,G')$. Thus, at minimum we always have $h'_{M^*}(D_1,D_2,G) \geq \max\{ h(D_1,D_2,G), h'_{M}(D_1,D_2,G')\}$ where $G'=G-\mathbf{Distinct}(D_1)$. Intuitively, the lower bound will be at least as good as our prior approach from Theorem \ref{thm:bound1} and at least as good as the model $M$.

\begin{corollary}\label{crl:bound2}
Let $M$ be a password cracking model and $M^*$ be the corresponding hybrid attack model. Let parameters $G, d>0$ , $t >0$ be given then
\[
\Pr[\lambda_G\geq \frac{1}{d}(h'_{M^*}(D_1,D_2,G) - t)]\geq  1-\delta 
\]
%\begin{align*}
%&\Pr[\lambda_G\geq \frac{1}{d}(h'_{M^*}(D_1,D_2,G) - t)]\geq  1-\delta \\ %%%%\exp\left(\frac{-2t^2}{d}\right) \\
%&\Pr[\lambda(S,G) \geq \frac{1}{d}(h'_{M^*}(D_1,D_2,G) - t) - \epsilon] \\
%&\geq 1 - \exp\left(-2N\epsilon^{2}\right) - \exp\left(\frac{-2t^2}{d}\right)
%\end{align*}
where $\delta = \exp\left(-2t^2/d\right)$ and the randomness is taken over the set $S$ of size $N$.
\end{corollary}

We can also view the lower bound from Corollary \ref{crl:bound2} as denoting the success rate of a hybrid attacker i.e., an attacker who has obtained a cracked/leaked passwords $D_1$ and runs the hybrid attack described above will crack $h_{M^*}'(D_1,D_2,G)$ of the remaining passwords in $D_2$ within $G$ guesses. Corollary~\ref{crl:bound2-lambdaSG} lower bounding $\lambda(S,G)$ using Corollary \ref{crl:bound2} is in \fullversion{the full version \cite{fullversion}.}{Appendix~\ref{app:lambdaSG}.}

\subsection{Upper And Lower Bounds Using Linear Programming}\label{sec:LPbounds}
%In the previous section we showed how to use a password cracking model $M$ to extend our prior sampling based lower bound on $\lambda_G$ when $G > \mathbf{Distinct}(S)$. However, if the password cracking model $M$ is not guess efficient it is still possible that the lower bound will (temporarily) plateau at $G = \mathbf{Distinct}(S)$. 

In this section we propose a different approach to generate upper and lower bounds using linear programming which is inspired by work of Valiant and Valiant~\cite{valiant2017estimating}. As we will see in our empirical analysis these LP bounds tend to be tighter when the guessing number is large though the bounds are slightly worse when the guessing number $G$ is smaller i.e., $G \ll \mathbf{Distinct}(S)$. 

Intuitively, we derive our upper (resp. lower) bounds by designing a linear program to find a distribution $\mathcal{P}'$ that maximizes (resp. minimizes) $\lambda_G'$ subject to various consistency constraints derived from our sample $S \leftarrow \mathcal{P}^n$. To simplify our exposition it is helpful to begin with a simplifying assumption that the probability distribution $\mathcal{P}$ can be represented as a histogram $h_1,\ldots, h_{\ell}$ over a finite probability mesh $X=\{x_1,\ldots,x_{\ell}\}$ i.e., there are exactly $h_i$ passwords in the support of the distribution which have probability exactly $x_i$. We will later show how this simplifying assumption can be removed. 

The values $h_i$ are our unknown variables in our linear program. We first note that any valid probability histogram must satisfy the constraints that $h_1x_1+\ldots + h_{\ell} x_{\ell} = 1$ and $0 \leq h_i$. More significantly, we adapt ideas from Good-Turing Frequency estimation to argue that the known value  $P_{i}^S \doteq \frac{(i+1) F_{i+1}^S}{N-i}$ will be close to the (unknown value) $\sum_{j=1}^{\ell} h_jx_j\cdot \bpdf(i, N, x_j)$ with high probability i.e., we can compute $P_{i}^S$ and add linear constraints on the variables $h_1,\ldots, h_{\ell}$ to ensure that $P_{i}^S \approx \sum_{j=1}^{\ell} h_jx_j\cdot \bpdf(i, N, x_j)$. We include this constraint for each value of $i \leq i'$ where $i'$ is a parameter that will be selected later. We remark that the original password distribution $\mathcal{P}$ will be consistent with each of these constraints with high probability. Thus, with high probability, minimizing (resp. maximizing) $\lambda_G'$ subject to the relevant constraints yields a lower (resp. upper) bound on $\lambda_G$. We can remove our simplifying assumption that $\mathcal{P}$ is consistent with a finite probability mesh by (1) adding a new variable $p$ which intuitively represents the probability mass of all passwords in the distribution with probability $\leq x_{\ell}$, (2) ensuring that the probability mesh is sufficiently fine-grained that every probability value $1 \geq p \geq x_{\ell}$ is close to some point on the mesh, and (3) adding slack terms to each constraint to ensure that the probability histogram for $\mathcal{P}$ is still consistent with all of the constraints even after rounding probability values to fit the mesh.   

We remark that if the dataset $S$ was not sampled independently from some (unknown) distribution $\mathcal{P}$ then it is possible that there will be {\em no} feasible solution to our linear program. Thus, as a side-benefit our linear program can allow us to identify datasets which are inconsistent with our iid assumption. As a concrete example, suppose that the dataset $S$ was sampled by picking $s_1,\ldots, s_{N/2} \leftarrow \mathcal{P}$ independently and then duplicating the last $N/2$ passwords i.e., $s_{N/2 +i} = s_i$ for $i \leq N/2$. The resulting dataset would  (likely) be correctly rejected by our linear program since it is not iid. 

% One of the first tasks will be to identify useful consistency constraints that should hold with high probability over the selection of $S$. This ensures that with high probability (whp) the original password distribution \added{(from which $S$ was sampled) appears in the set of feasible distributions}.  \added{We remark that if the dataset $S$ was not sampled independently from some unknown distribution $\mathcal{P}$ then it is possible that our linear program will have no feasible solutions. This might occur if many entries in the dataset were duplicated e.g., suppose that the first $3N/4$ passwords  $s_1,\ldots, s_{3N/4} \leftarrow \mathcal{P}$ are independently sampled, but the last $N/4$ passwords are duplicates ($s_{3N/4 +i} = s_i$ for $i \leq N/4$) instead of fresh samples. We view ability to  reject datasets which were clearly not independently sampled as a side benefit of our linear program.   }

%\subsubsection{An Ideal Linear Programming Task} 
\subsubsection{A Linear Programming Task with Idealized Settings}
We first describe our linear program in the idealized setting where we assume that our (unknown) password distribution $\mathcal{P}$ is consistent with a finite probability mesh. In particular, we will fix a finite probability mesh $X_{\ell}=\{x_1,\ldots, x_{\ell}\}$ with $x_1 > x_2 > \cdots x_{\ell}$ and assume that for all passwords $pwd_i$ we have $p_i \in X$ i.e., the mesh contains every probability value in our distribution. This allows us to view the original distribution $\mathcal{P}$ as a histogram $H=h_1,\ldots, h_\ell$ where $h_i$ denotes the number of items in the support of $\mathcal{P}$ which occur with probability $x_i$. We use $h_1, \ldots, h_\ell \geq 0$ as variables in our linear program (relaxing the natural constraint that $h_i$ is an integer). Now the linear constraint $\sum_j h_j x_j=1$ encodes the requirement that our probabilities sum to $1$.

%except with probability $2\times\exp\left(\frac{-2(N-i)^2\epsilon_{2,i}^2}{N(i+1)^2}\right)$, we will  have $\frac{(i+1)F^S_{i+1}}{N-i}-\epsilon_{2,i} - \frac{i+1}{N-i} \leq \sum_{j=1}^{\added{\ell}} h_j\times x_j\times \bpdf(i,N,x_j) \leq \frac{(i+1)F^S_{i+1}}{N-i}+\epsilon_{2,i}$ }

Given our sample $S$ of size $N$ recall that $F^S_i$ denotes the number of distinct passwords that appear exactly $i$ times in our sample $S$. Thus, the expected value of $F^S_i$ is $\sum_{j=1}^{\ell} h_j \bpdf(i,N,x_j)$ and $\sum_{j=1}^{l} h_j \times x_j \times \bpdf(i,N,x_j)$ is the expected probability mass of all items that were sampled exactly $i$ times. Adapting ideas from Good-Turing Frequency estimation we can argue that (whp) $\sum_{j=1}^{l} h_j \times x_j \times \bpdf(i,N,x_j)$ will be close to $P_i^S = \frac{(i+1)F^S_{i+1}}{N-i}$. 
    
In particular, Lemma \ref{lem:goodturing}  shows that for each frequency $i$ we will have $P_i^S-\epsilon_{2,i} - \frac{i+1}{N-i} \leq \sum_{j=1}^{\ell} h_j\times x_j\times \bpdf(i,N,x_j) \leq P_i^S+\epsilon_{2,i}$ with high probability. Thus, we will include this linear constraint for each $i \leq i'$ where $i'$ is a parameter that can be tuned. Intuitively, increasing $i'$ adds additional constraints to our linear program which reduces the feasible region and can only improve the upper/lower bound. However, increasing $i'$ can also decrease our confidence $\delta$  since we need to ensure that $\mathcal{P}$ is consistent with {\em all} of the constraints that we generate to argue that the upper/lower bounds are valid.  

%In particular, Lemma \ref{lem:goodturing} shows that, except with probability $2\times\exp\left(\frac{-2(N-i)^2\epsilon_{2,i}^2}{N(i+1)^2}\right)$, we will  have $\frac{(i+1)F^S_{i+1}}{N-i}-\epsilon_{2,i} - \frac{i+1}{N-i} \leq \sum_{j=1}^{\added{\ell}} h_j\times x_j\times \bpdf(i,N,x_j) \leq \frac{(i+1)F^S_{i+1}}{N-i}+\epsilon_{2,i}$ where $\epsilon_{2,i}>0$ is a small constant. Thus, if we ensure that $\sum_{i=1}^{i'} 2\times\exp\left(\frac{-2(N-i)^2\epsilon_{2,i}^2}{N(i+1)^2}\right) \leq \epsilon$ we can add the constraints from Lemma \ref{lem:goodturing}  for each $i \leq i'$ and ensure that, except with probability $\epsilon$, that all of these constraints will hold. In this case the original distribution $\mathcal{P}$ defines a feasible solution to the linear program. 

Intuitively, if we search for a distribution that maximizes (resp. minimizes) $\lambda_G$ we obtain an upper bound (resp. lower bound) since the real distribution $\mathcal{P}$ will be one of the feasible solutions (whp). The remaining challenge is to encode $\lambda_G$ as a linear objective function using the variables $h_1,\ldots,h_{\ell}$. If we happened to know the integers $c,idx$ such that $G= c+ \sum_{j < idx} h_j$ and $0 \leq c \leq h_{idx}$ then we easily could encode $\lambda_G = c\cdot x_{idx} + \sum_{j < idx} h_i \cdot x_i $ as a linear objective function. However, we cannot compute $c$ or $idx$ a priori since the values $h_1,\ldots, h_{\ell}$ are unknown. We deal with this challenge by introducing a separate linear program for each possible value of $idx$ and by adding a new variable $c$ along with the constraints that $0 \leq c \leq h_{idx}$ and $c=G-\sum_{j < idx} h_j$. Letting $y_{idx}^*$ denote the value of optimal solution to our LP with the parameter $idx$ we can combine these solutions to get our final upper/lower bound i.e., $y^* = \max\{y_{idx}^* : idx \leq \ell\}$ (resp. $y^* = \min\{y_{idx}^* : idx \leq \ell\}$) when computing our upper bound (resp. lower bound) where each $y_{idx}^*$ is the value we obtain when maximizing (resp. minimizing) the corresponding LP.  

%Given a histogram $H$ and a guessing number $G$ we can define $i(G,H) = \max \{ j : \sum_{i=1}^{j-1} h_i \leq G \} $ and $c(G,H) = G- \sum_{i=1}^{i(G,H)-1} h_i$. Observe that if the attacker attempts $G$ password guesses they will succeed with probability $c(G,H) x_{i(G,H)} +  \sum_{i=1}^{i(G,H)-1} x_i h_i$ i.e., for $i < i(G,H)$ they will check all $h_i$ passwords with probability $x_i$ and for $i=i(G,H)$ they can only check $c(G,H) < h_{i(G,H)}$ of the  passwords which occur with probability $x_{i(G,H)}$. We would like to find $h_1,\ldots, h_{\ell}$ to maximize (or minimize) $c(G,H) x_{i(G,H)} +  \sum_{i=1}^{i(G,H)-1} x_i h_i$. However, the optimization goal is not quite linear due to the dependence on $i(G,H)$. 

%To solve this problem we introduce another parameter $idx$ which represents a guess for the value of  $i(G,H)$ and we define a separate linear program for each possible guess $1 \leq idx \leq l$. Fixing $idx$ we introduce a new variable $c$ which intuitively corresponds to $c(G,H)$ and we add the constraints that $0 \leq c \leq h_{idx}$ and that $c = G- \sum_{i=1}^{idx-1} h_i$. Now for each fixed $idx$ the objective $c x_{idx} +  \sum_{i=1}^{idx-1} x_i h_i$ is linear and can be maximized/minimized. 

%%%%This is where LP1 originally located%%%

Our linear program $\mathtt{LP1}(G,b,X_{\ell},F^S,idx,i',\mathbf{\epsilon_2})$ is shown below (Linear Programming Task 1). The inputs include the guessing budget $G$, the probability mesh $X_{\ell} = \{x_1,\ldots, x_{\ell}\}$, the set $F^S=\{F^S_1,\ldots, F^S_N\}$, a bit $b \in \{-1,1\}$ which indicates whether we are looking for an upper or lower bound, the (guessed) value $idx$ and parameters $i'$ and $\mathbf{\epsilon_2}=\{\epsilon_{2,0},\ldots,\epsilon_{2,i'}\}$ related to the consistency constraints. 

\fbox{\begin{minipage}{8cm}
\textbf{Linear Programming Task 1: }\\
$\mathtt{LP1}(G,b,X_{\ell},F^S,idx,i',\mathbf{\epsilon_2})$ \\
\textbf{Input Parameters:} $G$, $b$, $X_{\ell}=\{x_1,\ldots,x_{\ell}\}$, $F^S=\{F^S_1,\ldots,F^S_N\}$, $idx$, $i'$, $\mathbf{\epsilon_2}=\{\epsilon_{2,0},\ldots,\epsilon_{2,i'}\}$ \\
\textbf{Variables:} $h_1,\ldots,h_{\ell},c$ \\
%\textbf{Objective:} $\min\left(b\times (\sum_{j<idx}h_j+c)\right)$ \\ 
\textbf{Objective:} $\min\left(b\times (\sum_{j<idx}h_j\times x_j +c\times x_{idx})\right)$ \\ 
\textbf{Constraints:} 
\begin{enumerate}
    %\item $(1-\epsilon_1)Q \leq \sum_{j<i}h_j\times x_j +c\times x_i \leq (1-\epsilon_1)Q$
    \item $\sum_{j<idx}h_j+c = G$
    %\item $\forall 0\leq i\leq i'$, $ \frac{(i+1)F^S_{i+1}}{N+1} - \epsilon_2 \leq \sum_{j=1}^{\added{\ell}} h_j\times x_j\times \bpdf(i,N,x_j) \leq \frac{(i+1)F^S_{i+1}}{N+1} + \epsilon_2$ 
    \item $\forall 0\leq i\leq i'$, $ \frac{(i+1)F^S_{i+1}}{N-i}-\epsilon_{2,i} - \frac{i+1}{N-i} \leq \sum_{j=1}^{\ell} h_j\times x_j\times \bpdf (i,N,x_j) \leq \frac{(i+1)F^S_{i+1}}{N-i}+\epsilon_{2,i}$ 
    \item $ \sum_{j=1}^{\ell} h_j\times x_j = 1$
    \item $0 \leq c \leq h_{idx}$
\end{enumerate}
\end{minipage}}

The following lemma indicates that constraint (2) holds with high probability over the selection of $S \leftarrow \mathcal{P}^N$ when we select the parameters $i'$ and $\mathbf{\epsilon_2}=\{\epsilon_{2,0},\ldots, \epsilon_{2,i'}\}$ properly.
\newcommand{\lemmagoodturingideal}{
For any $i\geq 0$ and $0\leq \epsilon_{2,i}\leq 1$, we have $\frac{(i+1)F^S_{i+1}}{N-i}-\epsilon_{2,i} - \frac{i+1}{N-i} \leq \sum_{j} h_j\times x_j\times \bpdf(i,N,x_j) \leq \frac{(i+1)F^S_{i+1}}{N-i}+\epsilon_{2,i}$ with probability at least $1 - 2\times\exp\left(\frac{-2(N-i)^2\epsilon_{2,i}^2}{N(i+1)^2}\right)$ where the probability is taken over the selection of our sample $S \leftarrow \mathcal{P}^N$.%\mathbb{S}^N$. 
}
\begin{lemma}\label{lem:goodturing}
\lemmagoodturingideal
\end{lemma}

The proof of Lemma~\ref{lem:goodturing} is in Appendix~\ref{app:LPmissingproofs}. Using Lemma~\ref{lem:goodturing} and conclusions we present above, we can show that the upper/lower bounds of $\lambda_G$ derived from our linear program hold with high-confidence. In particular, for any constant $\delta >0$ the parameters $\epsilon_{2,i}$ can be tuned such that, except with probability $\delta$, we have the lower bound (resp. upper bound) $\lambda_G \geq \min\limits_{1\leq idx \leq \ell} \mathtt{LP1}(X_{\ell},F^S,idx,G,1,i',\mathbf{\epsilon_2})$ (resp. $\lambda_G \leq \max\limits_{1\leq idx \leq \ell} \mathtt{LP1}(X_{\ell},F^S,idx,G,-1,i',\mathbf{\epsilon_2})$. See Theorem~\ref{thm:bound3ideal} in Appendix~\ref{app:LPmissingproofs} for a formal statement\fullversion{.}{ and see Corollary~\ref{crl:bound3ideal} in Appendix~\ref{app:lambdaSG} for a corollary upper and lower bounding $\lambda(S,G)$. }  

%\added{To prove this lemma, first we can find that $\sum_{j} h_j\times x_j\times \bpdf(i,N,x_j)$ can be bounded by $\frac{i+1}{N-i}\mathbb{E}(F^S_{i+1})$ with a small term. Then using Theorem~\ref{thm:BoundedDifferencesInequality} we can show that $F^S_{i+1}$ is highly concentrated on $\mathbb{E}(F^S_{i+1})$ when $i$ is small. The full proof can be found in Appendix~\ref{app:LPmissingproofs}.} When the sample size $N$ is large enough and $i$ is small, the bounds we prove in the lemma above hold with high probability by selecting proper value of $\epsilon_{2,i}$.

%As a corollary of Theorem \ref{thm:bound3ideal} we can also bound $\lambda(S,G)$ --- \fullversion{see the full version}{see Corollary \ref{crl:bound3ideal} in Appendix~\ref{app:lambdaSG}}. 

\subsubsection{Intermediate Step: Linear Programming with Countably Infinite Probability Mesh}\label{sec:midLP} In this section we show how to relax the assumption that the real password distribution $\mathcal{P}$ is consistent with a finite probability mesh and replace it with a slightly weaker assumption that $\mathcal{P}$ is consistent with a {\em countably infinite} mesh $X=\{x_1,x_2,...,x_{\ell},x_{\ell+1},...\}$ with $x_1>x_2>x_3 >\cdots$ and $\lim_{i \rightarrow \infty} x_i = 0$. As before we assume that for all passwords $pwd_i$ in the support of $\mathcal{P}$ the probability $p_i \in X$ of sampling this password lies in the mesh $X$. Suppose that $H = (h_1,h_2,\ldots)$ is a histogram encoding of $\mathcal{P}$ and let $p= \sum_{i \geq \ell+1} x_i h_i$ be the total probability mass of all passwords in $\mathcal{P}$ with probability smaller than $x_{\ell}$. Our key idea to ensure that our linear programs remain finite is to introduce a new variable for $p$ and eliminate the variables $h_j$ for each $j > \ell$. For example, we now have the constraints that $p + x_1 h_1+\ldots +x_{\ell}h_{\ell}=1$ and that $0\leq p \leq 1$.

%Our prior linear programming based lower bound relied on an idealized assumption that the real distribution $\mathcal{P}$ is consistent with a finite probability mesh $X_{\ell} = \{x_1,\ldots, x_{\ell}\}$ i.e., for all $i$ the probability $p_i \in X_{\ell}$ of password $pwd_i$ lies in the mesh. In this section we show how to extend the prior approach when $\mathcal{P}$ is consistent with an infinite mesh  In the next section we will show how to remove the assumption that $\mathcal{P}$ is perfectly consistent with $X$ as long as the mesh $X$ is sufficiently fine-grained.  For every integer $l$ we can define the mesh $X_{\ell} = \{x_1,\ldots, x_{\ell}\}$ which contains the top $l$ values in $X$.  Observe that

Lemma~\ref{lem:goodturing} still applies so (whp) we have $(i+1)F^S_{i+1}/(N-i)-\epsilon_{2,i} - \frac{i+1}{N-i} \leq \sum_{j=1}^{\infty} x_j h_j \bpdf(i,N, x_j) \leq (i+1)F^S_{i+1}/(N-i)+\epsilon_{2,i}$ i.e., the expected probability mass of all items that appear $i$ times in our sample $S \leftarrow \mathcal{P}^N$ ($i\geq 0$) is still $\sum_{j=1}^{\infty} x_j h_j \bpdf(i,N, x_j)$. However, we cannot add these constraints to our linear program since we do not have variables for $h_j$ when $j > \ell$. Instead, we rely on the following observations (1) The function $f(x) = \bpdf(0,N,x)$ is monotonically decreasing over the domain $[0,1]$ so we have $\bpdf(0,N,x_{\ell}) < \bpdf(0,N,x_j)$ whenever $j > \ell$. Thus, we can bound the partial sum $\sum_{j=\ell+1}^\infty x_j h_j \bpdf(0,N,x_j)$ as follows $p \geq \sum_{j=\ell+1}^\infty x_j h_j \bpdf(0,N,x_j) \geq p \cdot  \bpdf(0,N,x_{\ell})$. (2) For $i > 0$ the function $f_i(x) = \bpdf(i,N,x)$ is monotonically increasing over the domain $[0,1/N]$. Thus, for $i>0$ we have $\bpdf(i,N,x_{\ell}) > \bpdf(i,N,x_{j})$ whenever $x_j < x_{\ell} < \frac{1}{N}$. In this case we can bound the partial sum $\sum_{j=\ell+1}^\infty x_j h_j \bpdf(i,N,x_j)$ as follows  $p \cdot  \bpdf(i,N, x_{\ell}) \geq  \sum_{j=\ell+1}^\infty x_j h_j \bpdf(0,N,x_j) \geq 0$.   

We can use the above observations to update Constraint (2) in $\mathtt{LP1}$. Similarly, we can replace Constraint (3) in $\mathtt{LP1}$ with $\sum_{j=1}^{\ell} h_j x_j = 1 - p$. We call this updated linear program $\mathtt{LP1a}$%\fullversion{}{ as shown below}. \fullversion{Since this is only an intermediate step, we defer the full description of $\mathtt{LP1a}$ and associated theorems to the full version \cite{fullversion}. }{}
Since this is only an intermediate step, we defer the full description of $\mathtt{LP1a}$ and associated theorems to \fullversion{the full version \cite{fullversion}}{Appendix~\ref{app:IntermediateLPsProofs}}.
We briefly remark that for $i>0$ (resp. $i=0$) we have $\lim_{x  \rightarrow 0} \bpdf(i,N, x) = 0 $ (resp. $\lim_{x \rightarrow 0} \bpdf(0,N,x) = 1$) so the upper/lower bounds on our partial sums are essentially tight and we are not substantially loosening our constraints in $\mathtt{LP1a}$. 

\subsubsection{Final LP: Eliminating Ideal Mesh Assumptions}\label{sec:finalLP-mainbody}

In this section, we present our final linear program to bound $\lambda_G$ without making any idealized assumptions about the distribution $\mathcal{P}$.  Following~\cite{valiant2017estimating}, we fix a particular mesh $X_{\ell,q} = \{x_1,\ldots, x_{\ell} \}$ where we have  $x_{i} = q \cdot x_{i+1} = q^{\ell-i} x_{\ell}$ for each $1 \leq i < \ell$. Note that the parameter $q > 1$ determines how fine-grained our mesh values are. We will pick $x_\ell$ to be suitably small (e.g., $x_\ell = 10^{-4} N^{-1}$) and set $\ell= \left \lfloor\frac{\ln (\frac{1}{x_{\ell}})}{\ln q}\right\rfloor + 1$ so that $x_1 \approx 1$.

%$x_{\ell} = \frac{1}{10^4 N}$, $l=\lfloor\frac{\ln (\frac{1}{x_{\ell}})}{\ln q}\rfloor + 1$ and for each $1 \leq i < l$ we have $x_{i} = q \cdot x_{i+1} = q^{l-i} x_{\ell}$. Here, $q>1$ is an parameter that 

For the purposes of analysis consider a histogram encoding of the real distribution $\mathcal{P}$ i.e.,  $H^r = \{h_1^r, h_2^r \ldots \}$ where the mesh $X^r = \{ x_1^r, x_2^r, \ldots \} = \{ \Pr[pwd] : pwd \in \mathcal{P}\}$ is defined using the exact probabilities in the distribution. Of course the mesh $X^r$ and the associated histogram $H^r$ are unknown so we cannot simply evaluate the formula above. However, it is helpful to imagine rounding each point $x_i^r$ to a value $\mathtt{Round}(x_i^r) \in X_{\ell,q}$ and defining the rounded histogram $h_i = \sum_{j: \mathtt{Round}(x_j^r) = x_i}  h_j^r$ accordingly. Supposing that $G= c + \sum_{i=1}^{idx'} h_i^r$ for some $idx'$ and $c \leq h_{idx'}^r$ we also have $G = c+ \sum_{i=1}^{idx} h_i$ for some $idx$ and $c \leq h_i$. Assuming that we always round down, i.e., $\mathtt{Round}(x_i^r) \leq x_i^r$, then we have $\lambda_G = c' \cdot x_{idx'}^r + \sum_{j=1}^{idx'-1} x_i^r h_i^r \geq c \cdot \mathtt{Round}(x_{idx'}^r) + \sum_{j=1}^{idx'-1} \mathtt{Round}(x_i^r) h_i^r  \geq  c \cdot x_{idx} + \sum_{j=1}^{idx-1} h_i x_i$.

Intuitively, to obtain our lower bound we relax all of our constraints from $\mathtt{LP1a}$ to ensure that our rounded histogram is still consistent even  still hold (whp) even if $\mathcal{P}$ does not precisely fit the mesh $X_{\ell,q}$. For example, we can replace the exact constraint that $\sum_{j=1}^\ell h_j^r \times \mathtt{Round}(x_i^r) = 1-p$ with an approximate version $\frac{1-q}{q} \leq \sum_{j=1}^\ell h_j^r \times \mathtt{Round}(x_i^r) \leq 1-p$ which the rounded histogram will still satisfy. The approach to define our linear program for the upper bound is similar with the key difference that we round up instead of down i.e., $\mathtt{Round}(x_i^r) \geq x_i^r$.  

The linear program $\mathtt{LPlower}$ to lower bound  $\lambda_G$ is shown below. We add slack terms with parameters $q,\mathbf{\epsilon_3},\mathbf{\hat{x}_{\epsilon_3}}$ to ensure that the rounded histogram is consistent with all of the constraints. The linear program $\mathtt{LPupper}$ to upper bound $\lambda_G$ is similar to $\mathtt{LPlower}$. We show it in Appendix~\ref{app:LPupper}. %in Appendix~\ref{app:FinalLPsProofs}. 

\fbox{\begin{minipage}{8cm}
\textbf{Linear Programming Task 2: \\ }$\mathtt{LPlower}(G,X_{\ell},F^S,idx,i',\mathbf{\epsilon_2},\mathbf{\epsilon_3},\mathbf{\hat{x}_{\epsilon_3}})$ \\
\textbf{Input Parameters:} $G$, $X_{\ell}=\{x_1,...,x_{\ell}\}$, $F^S=\{F^S_1,...,F^S_N\}$, $idx$, $i'$, $\mathbf{\epsilon_2}=\{\epsilon_{2,0},\ldots,\epsilon_{2,i'}\}$, $\mathbf{\hat{x}_{\epsilon_3}}=\{\hat{x}_{\epsilon_{3,0}},\ldots,\hat{x}_{\epsilon_{3,i'}}\}$, $\mathbf{\epsilon_3}=\{\epsilon_{3,i}=\frac{1}{q^{i+1}}\left(\frac{1-\hat{x}_{\epsilon_{3,i}}}{1-q\hat{x}_{\epsilon_{3,i}}}\right)^{N-i}-1, 0\leq i\leq i'\}$ \\
\textbf{Variables:} $h_1,...,h_{\ell},c, p$ \\
%\textbf{Objective:} $\min\left(b\times (\sum_{j<idx}h_j+c)\right)$ \\ 
\textbf{Objective:} $\min\left(\sum_{j<idx}h_j\times x_j +c\times x_{idx}\right)$ \\ 
%$\min\left(b\times (\sum_{j<idx}h_j\times x_j +c\times x_{idx})\right)$ \\ 
\textbf{Constraints:} 
\begin{enumerate}
    \item $\sum_{j<idx}h_j+c = G$
    %\item $\forall 0\leq i\leq i'$, $ \frac{(i+1)F^S_{i+1}}{N-i}-\epsilon_2 - \frac{i+1}{N-i} \leq \sum_{j=1}^{\added{\ell}} h_j\times x_j\times \bpdf(i,N,x_j) + p\times \bpdf(i,N,xxxx) \leq \frac{(i+1)F^S_{i+1}}{N-i}+\epsilon_2$ 
    \item $\forall 0\leq i\leq i'$:
    \begin{enumerate}
        \item for $i=0$, $\frac{1}{q^{i+1}}(\frac{(i+1)F^S_{i+1}}{N-i}-\epsilon_{2,i} - \frac{i+1}{N-i} - p) \leq \sum_{j=1}^{\ell} h_j\times x_j\times \bpdf(i,N,x_j) \leq (1+\epsilon_{3,i})(\frac{(i+1)F^S_{i+1}}{N-i}+\epsilon_{2,i} - p\times \bpdf(i,N,q x_{\ell})) + \bpdf(i,N,\hat{x}_{\epsilon_{3,i}})$
        \item for $1\leq i\leq i'$, $\frac{1}{q^{i+1}}(\frac{(i+1)F^S_{i+1}}{N-i}-\epsilon_{2,i} - \frac{i+1}{N-i} - p\times \bpdf(i,N,q x_{\ell})) \leq \sum_{j=1}^{\ell} h_j\times x_j\times \bpdf(i,N,x_j) \leq (1+\epsilon_{3,i})(\frac{(i+1)F^S_{i+1}}{N-i}+\epsilon_{2,i}) + \bpdf(i,N,\hat{x}_{\epsilon_{3,i}})$
    \end{enumerate}
    \item $ \frac{1-p}{q} \leq \sum_{j=1}^{\ell} h_j\times x_j \leq 1 - p$ %$ \sum_{j=1}^{\added{\ell}} h_j\times x_j = 1 - p$
    \item $0 \leq c \leq h_{idx}$
\end{enumerate}
(\textbf{Note:} we consider $idx=1,2,...,\ell+1$. When $idx=\ell+1$, we define $h_{\ell+1}=G$ and $x_{\ell+1} = 0$.)
\end{minipage}}

%%%%%%%%%%This is where LP3 originally located%%%%%%%%%

Theorem \ref{thm:bound3final-lower} shows that the upper/lower bounds we obtain hold with high confidence --- \fullversion{due to space limitations the formal proof is deferred to the full version \cite{fullversion}.}{see Appendix~\ref{app:FinalLPsProofs} for the complete proof.}

\begin{theorem}\label{thm:bound3final-lower}
Given an unknown password distribution $\mathcal{P}$ 
%let $S$ be a sample set containing $N$ items independently randomly sampled from $\mathcal{P}$, and let $F$ be its frequency list. 
for any $G\geq 0$, integer $i'\geq 0$, $\mathbf{\epsilon_2}=\{\epsilon_{2,0},\ldots,\epsilon_{2,i'}\}\in [0,1]^{i'+1}$, $\mathbf{\hat{x}_{\epsilon_3}}=\{\hat{x}_{\epsilon_{3,0}},\ldots,\hat{x}_{\epsilon_{3,i'}}\}$, $\mathbf{\epsilon_3}=\{\epsilon_{3,i}=\frac{1}{q^{i+1}}(\frac{1-\hat{x}_{\epsilon_{3,i}}}{1-q\hat{x}_{\epsilon_{3,i}}})^{N-i}-1, 0\leq i\leq i'\}$, we have:
\begingroup\makeatletter\def\f@size{9.5}\check@mathfonts
\begin{align*}
    &\Pr\left[ \lambda_G \geq \min\limits_{1\leq idx\leq l+1}\mathtt{LPlower}(X_{l,q},F^S,idx,G,i',\mathbf{\epsilon_2},\mathbf{\epsilon_3},\mathbf{\hat{x}_{\epsilon_3}})\right] \\ &\geq 1-\delta \\
    &\Pr\left[\lambda_G \leq \max\limits_{1\leq idx\leq l+1}\mathtt{LPupper}(X_{l,q},F^S,idx,G,i',\mathbf{\epsilon_2},\mathbf{\epsilon_3}, \mathbf{\hat{x}_{\epsilon_3}}) \right] \\
    & \geq 1-\delta
\end{align*}
\endgroup
%\begingroup\makeatletter\def\f@size{8.5}\check@mathfonts
%$$\Pr\left[ \lambda_G \geq \min\limits_{1\leq idx\leq l+1}\mathtt{LPlower}(X_{l,q},F^S,idx,G,i',\mathbf{\epsilon_2},\mathbf{\epsilon_3}, \mathbf{\hat{x}_{\epsilon_3}})\right] 
%    \geq 1-\delta $$
%    $$\Pr\left[\lambda_G \leq \max\limits_{1\leq idx\leq l+1}\mathtt{LPupper}(X_{l,q},F^S,idx,G,i',\mathbf{\epsilon_2},\mathbf{\epsilon_3}, \mathbf{\hat{x}_{\epsilon_3}}) \right] 
%    \geq 1-\delta$$
%\endgroup
where $\delta =  2\times\sum_{0\leq i\leq i'}\exp\left(\frac{-2(N-i)^2\epsilon_{2,i}^2}{N(i+1)^2}\right)$ and the randomness is taken over the sample $S \leftarrow \mathcal{P}^N$ of size $N$.

\end{theorem}

As a corollary of Theorem \ref{thm:bound3final-lower} we can also lower bound $\lambda(S,G)$ --- see \fullversion{the full version \cite{fullversion}.}{Corollary \ref{crl:bound3final-lower} in Appendix~\ref{app:lambdaSG}. }

\subsection{A Lower Bound for G $\geq$ N Derived from Existing Work}\label{sec:lowerbound-priorwork}
In this section we prove a lower bound using an idea from prior work. We use this bound as a comparison in Section~\ref{sec:empiricalanalysis} showing that our bounds perform very well.

Fixing arbitrary parameters $L\geq 1$ and $j\geq 2$ Blocki et al.~\cite{SP:BloHarZho18} proposed the formula $f(S,L)\doteq \frac{1}{N} \sum_{i: f_i^S \geq j} f_i^S - \frac{N}{(j-1)! L^{j-1}}$ as a lower bound for the expected number of passwords cracked by a rational attacker\footnote{Intuitively, a rational password cracker will continue checking passwords as long as the marginal reward $p_iv$ of checking the next password $pwd_i$ exceeds the marginal guessing cost. Assuming the cost to check each password guess is at most $1$ then as $p_iv \geq 1$ the attacker will continue guessing and check the next password $pwd_i$.} when the value of a cracked password is at least $v> NL$. In Theorem \ref{thm:lowerbound-priorwork} we prove that the same formula $f(S,L)$ can be used to derive high confidence lower bounds for $\lambda_{NL}$ (fixing $G=NL$) as below:

\newcommand{\thmpriorlb}{Given a sample set $S\leftarrow \mathcal{P}^N$ with size $N$, for any $L\geq 1, t\geq 0, 0\leq\epsilon\leq 1$, any integer $j\geq 2$, the expected percentage of passwords cracked by a perfect knowledge attacker making $G=N\times L$ guesses is bounded as below:
$$\Pr[\lambda_G \geq f(S,L)-t/N - \epsilon] \geq 1- \delta$$
where $\delta =  \exp\left(-2t^2/(Nj^2)\right) + \exp\left(-2N\epsilon^2\right)$,  $f(S,L)\doteq \frac{1}{N} \sum_{i: f_i^S \geq j} f_i^S - \frac{N}{(j-1)! L^{j-1}}$ and the randomness is taken over the sample set $S\leftarrow \mathcal{P}^N$.}

\begin{theorem}\label{thm:lowerbound-priorwork}
\thmpriorlb
\end{theorem}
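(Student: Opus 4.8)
The plan is to reduce the claimed bound on $\lambda(S,G)$ to a single concentration statement and then apply the bounded differences inequality (Theorem~\ref{thm:BoundedDifferencesInequality}). Recall $N\lambda(S,G)=\sum_{i\le G}f_i^S$ counts the copies in $S$ of the $G=NL$ most probable passwords. Define the auxiliary random variable $X\doteq\sum_{i>G,\ f_i^S\ge j}f_i^S$, the number of copies in $S$ of ``low-probability'' passwords (those outside the top $G$) that nonetheless occur at least $j$ times. Then the purely deterministic inequality
\begin{equation*}
\sum_{i:f_i^S\ge j}f_i^S=\sum_{i\le G,\ f_i^S\ge j}f_i^S+X\le\sum_{i\le G}f_i^S+X=N\lambda(S,G)+X
\end{equation*}
gives $\lambda(S,G)\ge\frac1N\bigl(\sum_{i:f_i^S\ge j}f_i^S-X\bigr)$, so it suffices to prove $\Pr\bigl[X\le\frac{N}{(j-1)!\,L^{j-1}}+t\bigr]\ge 1-\exp(-2t^2/(Nj^2))$; this event implies the event in the theorem statement.

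I would bound $X$ by first controlling $\mathbb{E}[X]$ and then its fluctuations. For the mean, each $f_i^S\sim\mathrm{Bin}(N,p_i)$, and the identity $k\binom{N}{k}=N\binom{N-1}{k-1}$ yields $\mathbb{E}\bigl[f_i^S\,\mathbf{1}[f_i^S\ge j]\bigr]=Np_i\Pr[\mathrm{Bin}(N-1,p_i)\ge j-1]$; a union bound over the $(j-1)$-subsets of the $N-1$ trials gives $\Pr[\mathrm{Bin}(N-1,p_i)\ge j-1]\le\binom{N-1}{j-1}p_i^{j-1}\le(Np_i)^{j-1}/(j-1)!$, hence $\mathbb{E}\bigl[f_i^S\,\mathbf{1}[f_i^S\ge j]\bigr]\le(Np_i)^j/(j-1)!$. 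Summing over $i>G$ and using $p_i\le 1/i\le 1/(NL)$ for every $i>G$ (because the $p_k$ are non-increasing and $\sum_{k\le i}p_k\le 1$), we obtain $\mathbb{E}[X]\le\frac{N^j}{(j-1)!}\sum_{i>G}p_i^j\le\frac{N^j}{(j-1)!}(NL)^{-(j-1)}\sum_{i>G}p_i\le\frac{N}{(j-1)!\,L^{j-1}}$.

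For the fluctuations I would view $X=X(s_1,\dots,s_N)$ as a function of the iid samples and check its bounded-differences constants. Replacing one sample $s_k$ decrements some count $f_a^S$ and increments some count $f_b^S$ (or changes nothing if $a=b$); a short case analysis shows the decrement changes $X$ by $0$, $-1$, or $-j$ (the value $-j$ only when $pwd_a$ had frequency exactly $j$ and lies outside the top $G$), while the increment changes $X$ by $0$, $+1$, or $+j$ (the value $+j$ only when $pwd_b$ had frequency exactly $j-1$ and lies outside the top $G$), so $|X(S)-X(S')|\le j$ and we may take $c_k=j$ for all $k$. Theorem~\ref{thm:BoundedDifferencesInequality} then gives $\Pr[X-\mathbb{E}[X]\ge t]\le\exp(-2t^2/(Nj^2))$, which combined with $\mathbb{E}[X]\le\frac{N}{(j-1)!\,L^{j-1}}$ and the reduction above completes the proof. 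I expect the bounded-differences step to be the main obstacle: one must verify that a single sample swap cannot move $X$ by more than $j$ even though it perturbs two distinct password counts at once, and handle the threshold corner cases (a count crossing $j-1\leftrightarrow j$, or an affected password sitting inside the top $G$). The mean estimate is routine once the binomial-tail union bound is in place.
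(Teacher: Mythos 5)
Your proposal is correct and follows essentially the same route as the paper: you decompose $\sum_{i:f_i^S\ge j}f_i^S$ into the part the perfect-knowledge attacker cracks plus an ``overestimate'' term, bound that term's expectation by $\frac{N}{(j-1)!\,L^{j-1}}$, and apply the bounded differences inequality with constants $c_k=j$, exactly as in the paper's Lemma on $(j,L)$-bad overestimates. The only differences are cosmetic: you define the bad mass by the index cut $i>G$ rather than the probability threshold $p_i<\frac{1}{NL}$ (equivalent here since $i>NL$ forces $p_i\le \frac{1}{NL}$), and you re-derive the expectation bound via the binomial identity and union bound instead of citing Claim 6 of Blocki et al., which makes your write-up self-contained but not a different argument.
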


We leave the proofs of Theorem \ref{thm:lowerbound-priorwork} to \fullversion{the full version~\cite{fullversion}}{Appendix \ref{app:lb1}} as the analysis closely follows Blocki et al.~\cite{SP:BloHarZho18} and McDiarmid's inequality, and is not the main focus of this paper. 

This lower bound is derived only as a comparison to the new approaches we proposed above. Empirical analysis in Section~\ref{sec:empiricalanalysis} demonstrates our new techniques (Corollary \ref{crl:bound2} and Theorems \ref{thm:bound1} and \ref{thm:bound3final-lower} ) yield superior lower bounds. 

%While the lower bound is useful in some cases we note that Theorem \ref{thm:lowerbound-priorwork} and Corollary \ref{crl:lowerbound-priorwork} will apply when the guessing number $G=NL$ is larger than the sample size $N$ since $L\geq 1$. Furthermore, empirical analysis demonstrates this lower bound is much worse than the other lower bounds we propose.
\section{Empirical Analysis}\label{sec:empiricalanalysis}

% !TEX root = main.tex

%%%%all guessing curves
\begin{figure*}
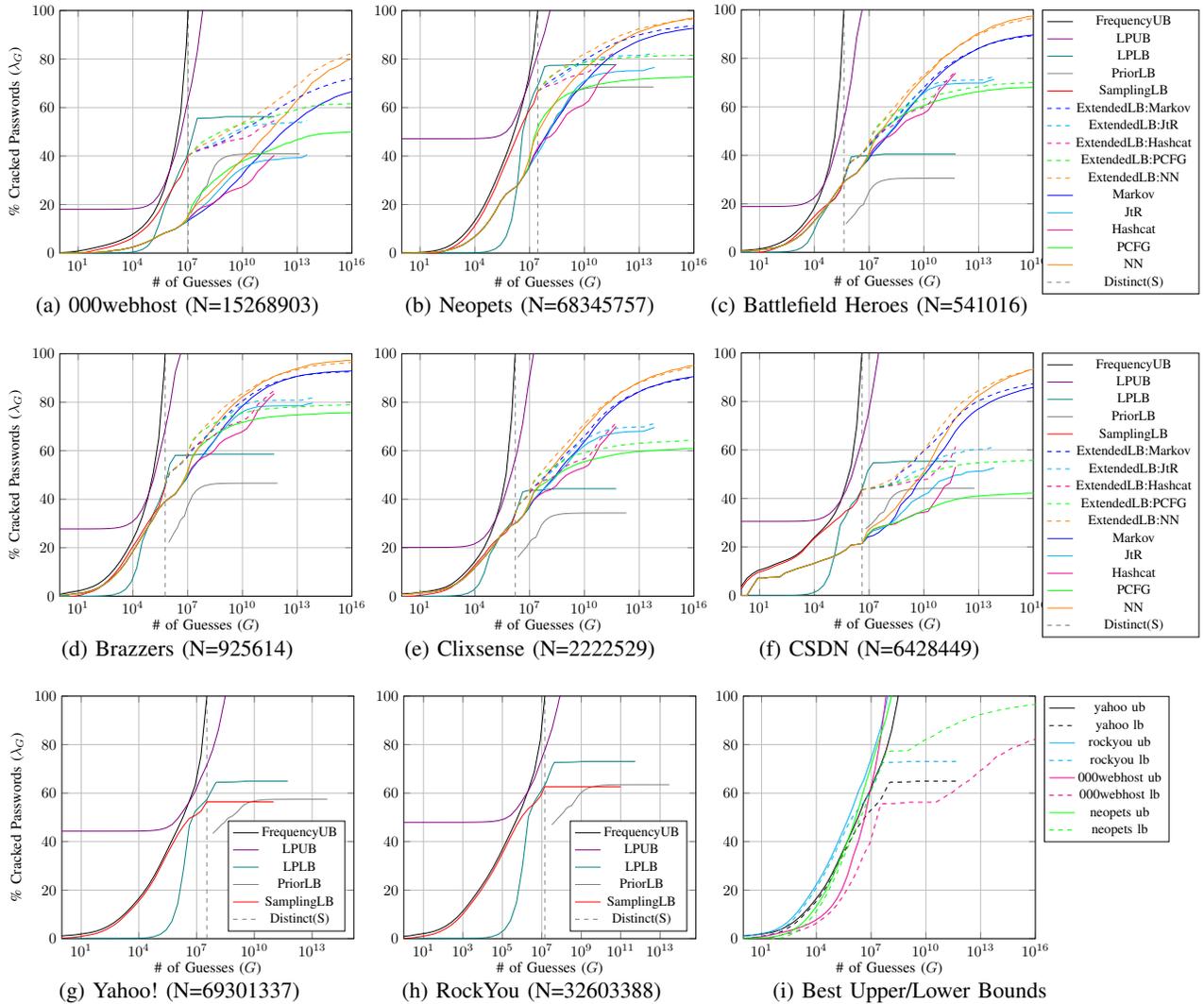
\centering
%000webhost
\input{Plots/000webhost}
\vspace{0.5em}
%\hfill
%neopets
\input{Plots/neopets}
\vspace{0.5em}
%\hfill
%\newline
%bfield
\input{Plots/bfield}
\vspace{0.5em}
%\hfill
\newline
%brazzers
\input{Plots/brazzers}
\vspace{0.5em}
%\hfill
%\newline
%clixsense
\input{Plots/clixsense}
\vspace{0.5em}
%\hfill
%csdn
\input{Plots/csdn}
\vspace{0.5em}
\newline
%Yahoo
%\begin{figure}\centering
\begin{subfigure}[b]{0.27\textwidth}
\vspace{-0.2cm}
      \begin{tikzpicture}[scale=0.6]
      \begin{axis}[
        title style={align=center},
        xlabel={\# of Guesses ($G$)},
        xmin = {1},
        xmode = log,
        log basis x={10},
        xlabel shift = -3pt,
        ylabel={\% Cracked Passwords ($\lambda_G$)},
        ymax={100},
        ymin = {0},
        %ymode = log,
        %log basis y={2},
        %ylabel shift = -3pt,
        grid=major,
        %small,
        %cycle list = {{red, mark=triangle},  {red, dashed,mark=triangle}, {blue, mark=square},{blue,dashed, mark=square},  {green, mark=diamond}, {green, dashed,mark=diamond}}
        cycle list = { {gray, dashed}, {black}, {violet}, {teal}, {gray}, {red}},
        legend style = {font=\small},%{font=\tiny},
        legend pos = south east, %north west, %outer north east,
        legend entries = {FrequencyUB, LPUB, LPLB, PriorLB, SamplingLB, Distinct(S)}
      ]
      \addlegendimage{no markers, black}
      \addlegendimage{no markers, violet}
      \addlegendimage{no markers, teal}
      \addlegendimage{no markers, gray}
      \addlegendimage{no markers, red}
      %\addlegendimage{no markers, blue}
      
      \addlegendimage{dashed, gray}
      
      %\addlegendimage{brown}

      %\addlegendimage{no markers, violet}
      %\addlegendimage{no markers, teal}

      %\addlegendimage{no markers, olive}
      
      %\addlegendimage{dashed, red}
      %\addlegendimage{dashed, blue}

       %\addlegendimage{purple}
       
      %\addlegendimage{only marks, mark=triangle*}
      %\addlegendimage{only marks, mark=square*}
      %\addlegendimage{only marks, mark=diamond*}
       % \addlegendimage{no markers, black}
        %\addlegendimage{no markers, dashed, black}
        
      %  \addlegendimage{dashed, red, mark=triangle*}
        
      %  \addlegendimage{dashed, blue, mark=square*}
        %\addlegendimage{only marks, green, mark=diamond*}
      %  \addlegendimage{dashed, green, mark=diamond*}

     %   \addlegendimage{only marks, cyan, mark=square*}
     %  \addlegendimage{only marks, yellow, mark=square*}
     %    \addlegendimage{no markers, purple, mark=square*}
      %  \addlegendimage{no markers, red, mark=square*}
         % \addlegendimage{no markers}
        % \addlegendimage{no markers, dashed}
        
    % # Distinct
    \addplot coordinates { (33895873, 0) (33895873, 100)};

    %yahoo upper bound
    %yahoo
    \addplot coordinates {(1,1.1128) (2,1.32785) (4,1.4971) (8,1.79541) (16,2.13606) (32,2.54794) (64,3.13958) (128,3.99445) (256,5.14545) (512,6.60003) (1024,8.41074) (2048,10.5658) (4096,13.039) (8192,15.7479) (16384,18.7154) (32768,22.1076) (65536,26.1203) (131072,30.6748) (262144,35.55) (524288,40.4617) (1048576,45.3936) (2097152,50.6323) (4194304,56.8075) (8388608,63.2196) (16777216,75.3242) (33554432,99.5332) (33895873,100.026) };

    %yahoo  lambda_G  myLP2  max
    \addplot coordinates { (1, 44.2954)
     (2, 44.2955)
     (4, 44.2955)
     (8, 44.2955)
     (16, 44.2955)
     (32, 44.2956)
     (64, 44.2958)
     (128, 44.2961)
     (256, 44.2968)
     (512, 44.2981)
     (1024, 44.3009)
     (2048, 44.3063)
     (4096, 44.3172)
     (8192, 44.3389)
     (16384, 44.3823)
     (32768, 44.4688)
     (65536, 44.6413)
     (131072, 44.983)
     (262144, 45.6582)
     (524288, 46.9945)
     (1.04858e+06, 49.4918)
     (2.09715e+06, 52.8483)
     (4.1943e+06, 56.4314)
     (8.38861e+06, 60.6758)
     (1.67772e+07, 66.379)
     (3.35544e+07, 71.4755)
     (6.71089e+07, 77.7174)
     (1.34218e+08, 86.0332)
     (2.68435e+08, 97.505)
     (2.88435e+08, 98.9301)
     (3.08435e+08, 100.2)
     %(5.36871e+08, 100.233)
    };

    %yahoo  lambda_G  myLP2  min
    \addplot coordinates { (1, -9.08279e-09)
     (2, -9.08279e-09)
     (4, -9.08279e-09)
     (8, 0.000109291)
     (16, 0.000218691)
     (32, 0.000437291)
     (64, 0.000874591)
     (128, 0.00174929)
     (256, 0.00349849)
     (512, 0.00699699)
     (1024, 0.013994)
     (2048, 0.027988)
     (4096, 0.0559759)
     (8192, 0.111952)
     (16384, 0.223904)
     (32768, 0.447807)
     (65536, 0.895615)
     (131072, 1.79123)
     (262144, 3.58246)
     (524288, 7.16492)
     (1.04858e+06, 14.3298)
     (2.09715e+06, 28.6597)
     (4.1943e+06, 46.6602)
     (8.38861e+06, 52.5008)
     (1.67772e+07, 54.8429)
     (3.35544e+07, 57.3192)
     (6.71089e+07, 61.7635)
     (7.71089e+07, 62.7236)
     (8.71089e+07, 63.5193)
     (9.71089e+07, 64.1023)
     (1.07109e+08, 64.4276)
     (1.17109e+08, 64.5013)
     (1.27109e+08, 64.5027)
     (1.34218e+08, 64.5037)
     (2.68435e+08, 64.5231)
     (5.36871e+08, 64.5618)
     (1.07374e+09, 64.6393)
     (2.14748e+09, 64.7942)
     (4.29497e+09, 64.9779)
     (5.49756e+11, 64.9779)
    };
    
    %lambda_G
    % delta(t) = 0.00991  delta1 = 9e-05
    %yahoo, 2 <= j <= 1000
    \addplot coordinates {(69301337,43.3599) (138602674,45.7288) (207904011,47.1949) (277205348,47.7951) (346506685,48.9201) (415808022,49.5312) (485109359,49.8997) (554410696,50.1388) (623712033,50.3028) (693013370,50.4201) (1386026740,52.4646) (2079040110,54.1312) (2772053480,54.9646) (3465066850,55.4646) (4158080220,55.7979) (4851093590,56.036) (5544106960,56.2146) (6237120330,56.3535) (6930133700,56.4646) (13860267400,56.9646) (20790401100,57.1312) (27720534800,57.2146) (34650668500,57.2646) (41580802200,57.2979) (48510935900,57.3217) (55441069600,57.3396) (62371203300,57.3535) (69301337000,57.3646) (138602674000,57.4146) (207904011000,57.4312) (277205348000,57.4396) (346506685000,57.4446) (415808022000,57.4479) (485109359000,57.4503) (554410696000,57.4521) (623712033000,57.4535) (693013370000,57.4546) (1386026740000,57.4596) (2079040110000,57.4612) (2772053480000,57.4621) (3465066850000,57.4626) (4158080220000,57.4629) (4851093590000,57.4632) (5544106960000,57.4633) (6237120330000,57.4635) (6930133700000,57.4636) (13860267400000,57.4641) (20790401100000,57.4642) (27720534800000,57.4643) (34650668500000,57.4644) (41580802200000,57.4644) (48510935900000,57.4644) (55441069600000,57.4645) (62371203300000,57.4645) };

    %lambda_Gnewlowerbound-d=25000-yahoo_plots.txt
    %1 <= G <= distinct  delta = 0.01delta1 = 9e-05  delta2 = 0.00991
    \addplot coordinates {(1,0.115353) (2,0.347353) (4,0.539353) (8,0.843353) (16,1.19535) (32,1.64735) (64,2.17935) (128,3.04735) (256,4.22735) (512,5.58735) (1024,7.45535) (2048,9.76735) (4096,12.1114) (8192,14.7434) (16384,17.6194) (32768,20.9634) (65536,24.7874) (131072,29.4634) (262144,34.1434) (524288,38.5514) (1048576,42.9114) (2097152,46.5394) (4194304,49.5874) (8388608,50.7554) (16777216,52.5554) (33554432,56.3354) (33885218,56.4114) (10^11,56.4114) };

    \end{axis}
  \end{tikzpicture}

\vspace{-.2cm}
 \caption{Yahoo! (N=69301337)} \figlab{YahooNewGuessingCurves}
\end{subfigure}
%\vspace{-.25cm}
%\end{figure}
%\hfill
%rockyou
%\begin{figure}\centering
\begin{subfigure}[b]{0.25\textwidth}
\vspace{-0.2cm}
      \begin{tikzpicture}[scale=0.6]
      \begin{axis}[
        title style={align=center},
        xlabel={\# of Guesses ($G$)},
        xmin = {1},
        xmode = log,
        log basis x={10},
        xlabel shift = -3pt,
        ymax={100},
        ymin = {0},
        %ymode = log,
        %log basis y={2},
        %ylabel shift = -3pt,
        grid=major,
        %small,
        %cycle list = {{red, mark=triangle},  {red, dashed,mark=triangle}, {blue, mark=square},{blue,dashed, mark=square},  {green, mark=diamond}, {green, dashed,mark=diamond}}
        cycle list = { {gray, dashed}, {black}, {violet}, {teal}, {gray}, {red} },% {cyan}, {magenta}, {yellow}, {teal, dashed}, {cyan, dashed}, {magenta, dashed}, {yellow, dashed}
        legend style = {font=\small},%{font=\tiny},
        legend pos = south east, %north west,  %outer north east, 
        legend entries = {FrequencyUB, LPUB, LPLB, PriorLB, SamplingLB, Distinct(S)} 
      ]
      \addlegendimage{no markers, black}
      \addlegendimage{no markers, violet}
      \addlegendimage{no markers, teal}
      \addlegendimage{no markers, gray}
      \addlegendimage{no markers, red}
      \addlegendimage{dashed, gray}
      
      %\addlegendimage{no markers, pink}
      
      %\addlegendimage{no markers, dashed, yellow}
      
      %\addlegendimage{no markers, teal}
      %\addlegendimage{no markers, cyan}
      %\addlegendimage{no markers, magenta}
      %\addlegendimage{no markers, yellow}
      
      %\addlegendimage{dashed, olive}
      
      %\addlegendimage{dashed, gray}

      %\addlegendimage{no markers, olive}
      
      %\addlegendimage{dashed, red}
      %\addlegendimage{dashed, blue}
      %\addlegendimage{dashed, orange}
      %\addlegendimage{dashed, green}
      %\addlegendimage{only marks, mark=triangle*}
      %\addlegendimage{only marks, mark=square*}
      %\addlegendimage{only marks, mark=diamond*}
       % \addlegendimage{no markers, black}
        %\addlegendimage{no markers, dashed, black}
        
      %  \addlegendimage{dashed, red, mark=triangle*}
        
      %  \addlegendimage{dashed, blue, mark=square*}
        %\addlegendimage{only marks, green, mark=diamond*}
      %  \addlegendimage{dashed, green, mark=diamond*}

     %   \addlegendimage{only marks, cyan, mark=square*}
     %  \addlegendimage{only marks, yellow, mark=square*}
     %    \addlegendimage{no markers, purple, mark=square*}
      %  \addlegendimage{no markers, red, mark=square*}
         % \addlegendimage{no markers}
        % \addlegendimage{no markers, dashed}
        
    % # Distinct
    \addplot coordinates { (14344391, 0) (14344391, 100)};

    %rockyou frequency upper bound
    %rockyou
    \addplot coordinates {(1,0.929511) (2,1.17205) (4,1.58996) (8,1.97602) (16,2.36573) (32,2.96563) (64,3.8348) (128,5.06415) (256,6.71577) (512,8.83802) (1024,11.4333) (2048,14.4886) (4096,17.8038) (8192,21.2952) (16384,25.0103) (32768,29.1282) (65536,33.7334) (131072,38.6622) (262144,43.7835) (524288,49.0365) (1048576,54.6604) (2097152,61.3613) (4194304,68.9058) (8388608,81.7704) (14344391,100.038) };

    %rockyou  lambda_G  myLP2  max
    \addplot coordinates { (1, 47.9691)
     (2, 47.9691)
     (4, 47.9691)
     (8, 47.9691)
     (16, 47.9692)
     (32, 47.9694)
     (64, 47.9697)
     (128, 47.9704)
     (256, 47.9719)
     (512, 47.9747)
     (1024, 47.9804)
     (2048, 47.9917)
     (4096, 48.0144)
     (8192, 48.0596)
     (16384, 48.15)
     (32768, 48.33)
     (65536, 48.6889)
     (131072, 49.4036)
     (262144, 50.82)
     (524288, 53.4792)
     (1.04858e+06, 57.0391)
     (2.09715e+06, 61.0338)
     (4.1943e+06, 65.844)
     (8.38861e+06, 72.5909)
     (1.67772e+07, 79.3481)
     (3.35544e+07, 86.8013)
     (6.71089e+07, 96.6657)
     (7.71089e+07, 99.0158)
     (8.71089e+07, 100.201)
     %(1.34218e+08, 100.216)
    };

    %rockyou  lambda_G  myLP2  min
    \addplot coordinates { (1, -4.55117e-08)
     (2, -4.55117e-08)
     (4, 0.000114154)
     (8, 0.000228254)
     (16, 0.000456454)
     (32, 0.000912954)
     (64, 0.00182595)
     (128, 0.00365195)
     (256, 0.00730385)
     (512, 0.0146077)
     (1024, 0.0292153)
     (2048, 0.0584307)
     (4096, 0.116861)
     (8192, 0.233722)
     (16384, 0.467445)
     (32768, 0.93489)
     (65536, 1.86978)
     (131072, 3.73956)
     (262144, 7.47913)
     (524288, 14.9437)
     (1.04858e+06, 29.8721)
     (2.09715e+06, 50.2609)
     (4.1943e+06, 56.458)
     (8.38861e+06, 60.3043)
     (1.67772e+07, 64.4719)
     (3.35544e+07, 71.0155)
     (4.35544e+07, 72.6686)
     (5.35544e+07, 72.7036)
     (6.35544e+07, 72.7067)
     (6.71089e+07, 72.7078)
     (1.34218e+08, 72.7283)
     (2.68435e+08, 72.7695)
     (5.36871e+08, 72.8518)
     (1.07374e+09, 73.0165)
     (2.14748e+09, 73.075)
     (5.49756e+11, 73.075)
    };
    
    %lambda_G
    % delta(t) = 0.00991  delta1 = 9e-05
    %rockyou, 2 <= j <= 1000
    \addplot coordinates {(32603388,46.9235) (65206776,49.6511) (97810164,51.1172) (130413552,52.2233) (163016940,53.3483) (195620328,53.9594) (228223716,54.3279) (260827104,54.567) (293430492,54.731) (326033880,54.8483) (652067760,58.4943) (978101640,60.161) (1304135520,60.9943) (1630169400,61.4943) (1956203280,61.8277) (2282237160,62.0658) (2608271040,62.2443) (2934304920,62.3832) (3260338800,62.4943) (6520677600,62.9943) (9781016400,63.161) (13041355200,63.2443) (16301694000,63.2943) (19562032800,63.3277) (22822371600,63.3515) (26082710400,63.3693) (29343049200,63.3832) (32603388000,63.3943) (65206776000,63.4443) (97810164000,63.461) (130413552000,63.4693) (163016940000,63.4743) (195620328000,63.4777) (228223716000,63.48) (260827104000,63.4818) (293430492000,63.4832) (326033880000,63.4843) (652067760000,63.4893) (978101640000,63.491) (1304135520000,63.4918) (1630169400000,63.4923) (1956203280000,63.4927) (2282237160000,63.4929) (2608271040000,63.4931) (2934304920000,63.4932) (3260338800000,63.4933) (6520677600000,63.4938) (9781016400000,63.494) (13041355200000,63.4941) (16301694000000,63.4941) (19562032800000,63.4942) (22822371600000,63.4942) (26082710400000,63.4942) (29343049200000,63.4942) };

    %lambda_Gnewlowerbound-d=25000-rockyou_plots.txt
    %1 <= G <= distinct  delta = 0.01delta1 = 9e-05  delta2 = 0.00991
    \addplot coordinates {(1,-0.0766468) (2,0.207353) (4,0.547353) (8,0.915353) (16,1.32735) (32,1.88735) (64,2.80735) (128,4.01535) (256,5.60335) (512,7.68735) (1024,10.2714) (2048,13.3794) (4096,16.5794) (8192,19.9794) (16384,23.7954) (32768,27.7274) (65536,32.2514) (131072,36.9834) (262144,41.9554) (524288,46.6634) (1048576,50.6474) (2097152,53.6474) (4194304,55.6754) (8388608,58.6274) (14335294,62.6474) (10^11, 62.6474) };

    \end{axis}
  \end{tikzpicture}

\vspace{-.2cm}
 \caption{RockYou (N=32603388)} \figlab{RockYouNewGuessingCurves}
\end{subfigure}
%\vspace{-.25cm}
%\end{figure}
%\begin{figure}\centering
\begin{subfigure}[b]{0.31\textwidth}
\vspace{-0.2cm}
      \begin{tikzpicture}[scale=0.6]
      \begin{axis}[
        title style={align=center},
        xlabel={\# of Guesses ($G$)},
        xmin = {1},
        xmax = {10^16},
        xmode = log,
        log basis x={10},
        xlabel shift = -3pt,
        ymax={100},
        ymin = {0},
        %ymode = log,
        %log basis y={2},
        %ylabel shift = -3pt,
        grid=major,
        %small,
        %cycle list = {{red, mark=triangle},  {red, dashed,mark=triangle}, {blue, mark=square},{blue,dashed, mark=square},  {green, mark=diamond}, {green, dashed,mark=diamond}}
        cycle list = { {black}, {black,dashed}, {cyan}, {cyan, dashed}, {magenta}, {magenta,dashed}, {green}, {green, dashed}},
        legend style = {font=\small},%{font=\tiny},
        legend pos = outer north east,%north west,
        legend entries = {yahoo ub, yahoo lb, rockyou ub, rockyou lb, 000webhost ub, 000webhost lb, neopets ub, neopets lb}
      ]
      \addlegendimage{no markers, black}
      \addlegendimage{dashed, black}
      %\addlegendimage{no markers, gray}
      %\addlegendimage{no markers, blue}
      \addlegendimage{no markers, cyan}
      \addlegendimage{dashed, cyan}
      %\addlegendimage{no markers, violet}
      \addlegendimage{no markers, magenta}
      \addlegendimage{dashed, magenta}
      %\addlegendimage{no markers, teal}
      \addlegendimage{no markers, green}
      \addlegendimage{dashed, green}
      %\addlegendimage{no markers, olive}
      
      %\addlegendimage{dashed, red}
      %\addlegendimage{dashed, blue}
       %\addlegendimage{purple}

    %yahoo best upper bound
    \addplot coordinates {(1,1.1128) (2,1.32785) (4,1.4971) (8,1.79541) (16,2.13606) (32,2.54794) (64,3.13958) (128,3.99445) (256,5.14545) (512,6.60003) (1024,8.41074) (2048,10.5658) (4096,13.039) (8192,15.7479) (16384,18.7154) (32768,22.1076) (65536,26.1203) (131072,30.6748) (262144,35.55) (524288,40.4617) (1048576,45.3936) (2097152,50.6323) (4.1943e+06, 56.4314)
     (8.38861e+06, 60.6758)
     (1.67772e+07, 66.379)
     (3.35544e+07, 71.4755)
     (6.71089e+07, 77.7174)
     (1.34218e+08, 86.0332)
     (2.68435e+08, 97.505)
     (2.88435e+08, 98.9301)
     (3.08435e+08, 100.2) };
    
    %yahoo best lower bound
     \addplot coordinates {(1,0.115353) (2,0.347353) (4,0.539353) (8,0.843353) (16,1.19535) (32,1.64735) (64,2.17935) (128,3.04735) (256,4.22735) (512,5.58735) (1024,7.45535) (2048,9.76735) (4096,12.1114) (8192,14.7434) (16384,17.6194) (32768,20.9634) (65536,24.7874) (131072,29.4634) (262144,34.1434) (524288,38.5514) (1048576,42.9114) (2097152,46.5394) (4194304,49.5874) (8.38861e+06, 52.5008)
     (1.67772e+07, 54.8429)
     (3.35544e+07, 57.3192)
     (6.71089e+07, 61.7635)
     (7.71089e+07, 62.7236)
     (8.71089e+07, 63.5193)
     (9.71089e+07, 64.1023)
     (1.07109e+08, 64.4276)
     (1.17109e+08, 64.5013)
     (1.27109e+08, 64.5027)
     (1.34218e+08, 64.5037)
     (2.68435e+08, 64.5231)
     (5.36871e+08, 64.5618)
     (1.07374e+09, 64.6393)
     (2.14748e+09, 64.7942)
     (4.29497e+09, 64.9779)
     (5.49756e+11, 64.9779) };

    %rockyou best upper bound
    \addplot coordinates {(1,0.929511) (2,1.17205) (4,1.58996) (8,1.97602) (16,2.36573) (32,2.96563) (64,3.8348) (128,5.06415) (256,6.71577) (512,8.83802) (1024,11.4333) (2048,14.4886) (4096,17.8038) (8192,21.2952) (16384,25.0103) (32768,29.1282) (65536,33.7334) (131072,38.6622) (262144,43.7835) (524288,49.0365) (1048576,54.6604)  (2.09715e+06, 61.0338)
     (4.1943e+06, 65.844)
     (8.38861e+06, 72.5909)
     (1.67772e+07, 79.3481)
     (3.35544e+07, 86.8013)
     (6.71089e+07, 96.6657)
     (7.71089e+07, 99.0158)
     (8.71089e+07, 100.201) };

    %rockyou best lower bound
    \addplot coordinates {(1,-0.0766468) (2,0.207353) (4,0.547353) (8,0.915353) (16,1.32735) (32,1.88735) (64,2.80735) (128,4.01535) (256,5.60335) (512,7.68735) (1024,10.2714) (2048,13.3794) (4096,16.5794) (8192,19.9794) (16384,23.7954) (32768,27.7274) (65536,32.2514) (131072,36.9834) (262144,41.9554) (524288,46.6634) (1048576,50.6474) (2097152,53.6474)  (4.1943e+06, 56.458)
     (8.38861e+06, 60.3043)
     (1.67772e+07, 64.4719)
     (3.35544e+07, 71.0155)
     (4.35544e+07, 72.6686)
     (5.35544e+07, 72.7036)
     (6.35544e+07, 72.7067)
     (6.71089e+07, 72.7078)
     (1.34218e+08, 72.7283)
     (2.68435e+08, 72.7695)
     (5.36871e+08, 72.8518)
     (1.07374e+09, 73.0165)
     (2.14748e+09, 73.075)
     (5.49756e+11, 73.075)};

    %000webhost best bound
    \addplot coordinates {(1,0.218426) (2,0.317399) (4,0.470442) (8,0.72841) (16,1.09745) (32,1.53355) (64,2.04037) (128,2.54783) (256,3.08776) (512,3.69693) (1024,4.39047) (2048,5.21538) (4096,6.19832) (8192,7.3926) (16384,8.87285) (32768,10.7385) (65536,13.1469) (131072,16.3296) (262144,20.6116) (524288,26.4025) (1048576,34.0244)
     (2.09715e+06, 40.5463)
     (4.1943e+06, 48.9989)
     (8.38861e+06, 60.0133)
     (1.67772e+07, 70.1013)
     (3.35544e+07, 82.9886)
     (6.71089e+07, 100.201) };

    %000webhost best lower bound
    \addplot coordinates {(1,-0.858182) (2,-0.810182) (4,-0.634182) (8,-0.358182) (16,0.00581763) (32,0.461818) (64,0.973818) (128,1.51382) (256,2.00182) (512,2.51782) (1024,3.19382) (2048,3.99782) (4096,4.87382) (8192,5.93382) (16384,7.47382) (32768,9.19382) (65536,11.1938) (131072,13.8378) (262144,16.9898) (524288,20.7498) (1048576,24.7098)  (2.09715e+06, 29.7143)
     (4.1943e+06, 35.0216)
     (8.38861e+06, 39.1658)
     (1.67772e+07, 47.1249)
     (3.35544e+07, 55.5526)
     (6.71089e+07, 55.642)
     (1.34218e+08, 55.686)
     (2.68435e+08, 55.7739)
     (5.36871e+08, 55.9497)
     (1.07374e+09, 56.2309)
     (2.14748e+09, 56.2309)
     (4.29497e+09, 56.2309)
     (8.58993e+09, 56.2309)
     (1.71799e+10, 56.2309)
     (3.43597e+10, 56.2309)
     (6.87195e+10,57.4434) (1.37439e+11,58.9754) (2.74878e+11,60.2594) (5.49756e+11,62.0594) (1.09951e+12,63.7874) (2.19902e+12,65.0834) (4.39805e+12,67.0314) (8.79609e+12,69.1794) (1.75922e+13,70.5514) (3.51844e+13,72.7354) (7.03687e+13,74.8914) (1.40737e+14,76.0074) (2.81475e+14,77.3474) (5.6295e+14,78.7674) (1.1259e+15,79.4714) (2.2518e+15,80.2794) (4.5036e+15,81.5314) (9.0072e+15,82.1594) (1e+16,82.2434) };

    %neopets  best upper bound
    \addplot coordinates {(1,0.0957224) (2,0.13098) (4,0.195629) (8,0.291942) (16,0.417719) (32,0.635437) (64,0.983075) (128,1.51791) (256,2.31334) (512,3.46685) (1024,5.00121) (2048,6.96575) (4096,9.4566) (8192,12.3855) (16384,15.7543) (32768,19.6828) (65536,24.2589) (131072,29.3261) (262144,34.7341) (524288,40.4954) (1048576,46.7505) (2097152,53.885) (4194304,61.8731) 
     (8.38861e+06, 69.2588)
     (1.67772e+07, 76.8607)
     (3.35544e+07, 83.8409)
     (6.71089e+07, 91.0223)
     (1.34218e+08, 100.2) };

    %neopets best lower bound
    \addplot coordinates {(1,-0.917102) (2,-0.861102) (4,-0.825102) (8,-0.717102) (16,-0.573102) (32,-0.381102) (64,-0.0611016) (128,0.370898) (256,1.0509) (512,2.0989) (1024,3.4029) (2048,5.1429) (4096,7.5389) (8192,10.2989) (16384,13.6949) (32768,17.5549) (65536,22.0629) (131072,26.7669) (262144,32.2589) (524288,37.5029) (1048576,42.8029) (2097152,48.2829) (4194304,53.5029) (8.38861e+06, 59.2737)
     (1.67772e+07, 65.7034)
     (3.35544e+07, 70.2992)
     (6.71089e+07, 76.9127)
     (1.34218e+08, 77.3942)
     (2.68435e+08, 77.4139)
     (5.36871e+08, 77.4532)
     (1.07374e+09, 77.5317)
     (2.14748e+09,79.0634) (4.29497e+09,80.4874) (8.58993e+09,81.8794) (1.71799e+10,83.0154) (3.43597e+10,84.0954) (6.87195e+10,85.6114) (1.37439e+11,86.7754) (2.74878e+11,87.6554) (5.49756e+11,88.9034) (1.09951e+12,89.8994) (2.19902e+12,90.6394) (4.39805e+12,91.5754) (8.79609e+12,92.3274) (1.75922e+13,92.8194) (3.51844e+13,93.3834) (7.03687e+13,94.0234) (1.40737e+14,94.3474) (2.81475e+14,94.8634) (5.6295e+14,95.2634) (1.1259e+15,95.6154) (2.2518e+15,95.9594) (4.5036e+15,96.3034) (9.0072e+15,96.5274) (1e+16,96.5354)};

    \end{axis}
  \end{tikzpicture}

\vspace{-.2cm}
 \caption{Best Upper/Lower Bounds} 
 \label{fig:bestguessingcurves}
\end{subfigure}
%\vspace{-.25cm}
%\end{figure}
\vspace{-0.2cm}
\caption{000webhost, Neopets, Battlefield Heroes, Brazzers, Clixsense, CSDN, Yahoo! and RockYou Guessing Curves, and Best Bounds}
\label{fig:allcurves}
\vspace{-0.4cm}
\end{figure*}

%best bounds (yahoo,rockyou,000webhost,neopets)
%\input{Plots/bestbound}

In this section we apply our techniques to upper/lower bound $\lambda_G$ to analyze several empirical password datasets. We compare these bounds to guessing curves generated by state of the art password cracking models. 

\subsection{Datasets}

We use eight empirical password datasets in our analysis: Yahoo!, RockYou, 000webhost, Neopets, Battlefield Heroes, Brazzers, Clixsense and CSDN --- for the last six datasets we use the sanitized versions prepared by Liu et al.~\cite{SP:LNGCU19}. Table~\ref{table:BasicInformation} in Appendix~\ref{app:figures} provides basic information about each dataset. $N$ represents the total size of the dataset and \textbf{\# Distinct} represents the number of distinct passwords after eliminating duplicates (i.e. $\mathbf{Distinct}(S)$). Similarly,  \textbf{\# Unique} represents the number of passwords that appear exactly once in $S$ (i.e. $\mathbf{Unique}(S)$). In our analysis we view each dataset $S$ as representing $N$ iid samples from an unknown password distribution. Good-Turing frequency estimation tells us that the total probability mass of unseen passwords is approximately $1-\sum_{pwd \in S} \Pr[pwd] \approx \frac{\mathbf{Unique}(S)}{N}$ which means for $G=\mathbf{Distinct}(S)$ we have $\lambda_G \geq \sum_{pwd \in S} \Pr[pwd] \approx \frac{N-\mathbf{Unique}(S)}{N}$. Thus, for Yahoo! (resp. CSDN) we should have  $\lambda_G \geq 0.575$ (resp. $\lambda_G \geq 0.443$). 

One of the password datasets (Yahoo!~\cite{SP:Bonneau12,NDSS:BloDatBon16}) is actually a differentially private frequency list and does not include plaintext passwords. 
For this dataset we can still apply our techniques to upper/lower bound $\lambda_G$, but we cannot compare our bounds with empirical password cracking models. While the dataset was slightly perturbed to satisfy differential privacy, Blocki et al.~\cite{NDSS:BloDatBon16} showed that the L1 distortion is minimal i.e., the additional error term is $O(1/\sqrt{N})$.

{\bf The Independent Samples Assumption: }In our analysis we assume that the dataset $S$ was sampled iid from some unknown distribution $\mathcal{P}$. As we noted previously our linear program allows us to reject datasets $S$ which are clearly inconsistent with our iid assumption. In particular, if the linear program is infeasible this indicates that there is no password distributions $\mathcal{P}$ consistent with the dataset $S$. This might occur if a large fraction of the dataset was duplicated\footnote{When $s_1,s_2 \leftarrow \mathcal{P}$ are sampled independently it is always possible that $s_1 = s_2$. By contrast, if we sample $s_1 \leftarrow \mathcal{P}$ and then simply fix $s_2=s_1$ without re-sampling then we would say that the record $s_2$ was duplicated e.g., a user registers for an account with password $s_1$ and later registers for a second account with the same password.}
For example, the description of the LinkedIn frequency corpus~\cite{bicycleAttacks} indicates that that  the dataset contains $177,500,189$ passwords, but only $164,590,819$ unique e-mail addresses \cite{linkedinrevisited}. This means that there are more passwords than unique users and it is possible that many of the entries in the LinkedIn frequency corpus are duplicates. We confirmed that the LinkedIn frequency corpus is not iid using our linear program. The linear program was infeasible indicating that we can reject the independent samples hypothesis for this dataset. By contrast, with every other dataset our linear program found feasible solutions. While we cannot absolutely guarantee that our LP rejects every dataset $S$ which is not iid these results increase our confidence that this assumption is a reasonable approximation of reality.   

{\bf Ethical Considerations. } Many of the password datasets we analyze contain stolen passwords that were subsequently leaked on the internet and the usage of this data raises important ethical considerations. We did not crack any new passwords as part of our analysis and the breached datasets are already publicly available. Thus, our usage of the data does not pose any additional risk to users.

\subsection{Password Cracking Models}

We use the empirical attack results in Liu et al.~\cite{SP:LNGCU19} to compare with our bounds for 000webhost, Neopets, Battlefield Heroes, Brazzers, Clixsense and CSDN datasets and to generate an extended lower bound using Corollary~\ref{crl:bound2}. Liu et al.~\cite{SP:LNGCU19} evaluate 10 password cracking models on 
the six datasets based on Markov model, PCFG, neural networks, Hashcat, and JtR techniques. In our analysis, we focus on the best performing ones of each password cracking technique: the neural network model (denote as \textbf{NN}) in Melicher et al.~\cite{USENIX:MUSKBCC16}, the Markov 4-gram model (denote as \textbf{Markov}) in Dell’Amico and Filippone~\cite{CCS:DelFil15}, the original PCFG model~\cite{SP:WAMG09} (denote as \textbf{PCFG}), the extended JtR~\cite{SP:LNGCU19} (denote as \textbf{JtR}), and Hashcat (denote as \textbf{Hashcat}) implemented in Liu et al.~\cite{SP:LNGCU19}. Some other models (e.g. Markov backoff model~\cite{CCS:DelFil15} and the PCFG model in Komanduri ~\cite{komanduri2016modeling}) in Liu et al.~\cite{SP:LNGCU19} may outperform \textbf{Markov} or \textbf{PCFG} in some ranges of $G$, but for every value of $G$ their performance is worse than at least one of the models we selected.

\subsection{Evaluation}

For each password dataset $S$ we generate upper/lower bounds on $\lambda_G$ using our results from Section \ref{sec:theoreticalanalysis} and compare the upper/lower bounds with the guessing curves derived from password cracking models. Our upper/lower bounds and empirical attack results for 000webhost, Neopets, Battlefield Heros, Brazzers, Clixsense, and CSDN, Yahoo!, and RockYou are shown in Figures~\ref{fig:allcurves}(a)-(h). The vertical dashed gray line in each subfigure shows $\mathbf{Distinct}(S)$ of the dataset $S$.

In these figures we use $\mathtt{FrequencyUB}(S,G)$ (resp.  $\mathtt{LPUB}(S,G)$)  to denote the upper bound obtained from Corollary~\ref{crl:upperbound-priorwork} (resp. Theorem~\ref{thm:bound3final-lower}) with password dataset $S$.  Similarly, we use  $\mathtt{SamplingLB}(S,G)$ (resp. $\mathtt{LPLB}(S,G)$) to denote the lower bound obtained by applying Theorem~\ref{thm:bound1} (resp. Theorem~\ref{thm:bound3final-lower}).  For comparison we also plot $\mathtt{PriorLB(S,G,j)}$ which denotes the lower bound from Theorem~\ref{thm:lowerbound-priorwork} based on results of Blocki et al. \cite{SP:BloHarZho18} --- specifically we set $\mathtt{PriorLB}(S,G) = \max\limits_{2\leq j\leq 1000} \mathtt{PriorLB(S,G,j)}$ where $\mathtt{PriorLB(S,G,j)}$ is the lower bound we obtain after fixing the parameter $j$ in Theorem~\ref{thm:lowerbound-priorwork}. Two of the lower bounds $\mathtt{LBUB}$ and $\mathtt{LPLB}$ require us to solve linear programs as a subroutine. We used Gurobi 9.0.1~\cite{gurobi} as our linear programming solver. 

For each of the upper/lower bounds there is an error term $\delta$ which upper bounds the probability that our bound is wrong. The error term $\delta$ will depend on our choice of parameters. For example, in Theorem~\ref{thm:lowerbound-priorwork} (resp. Theorem~\ref{thm:bound1} ) the parameters $t,\epsilon$ (resp. $t, d$) determines $\delta=\exp(-2t^2/(Nj^2)) + \exp(2N\epsilon^2)$ (resp. $\delta=\exp(-2t^2/d)$). Briefly, we always select parameters such that $\delta \leq 0.01$. For example, to generate $\mathtt{SamplingLB}(S,G)$ we set $d=2.5\times 10^4$ and $t \geq \sqrt{d \ln(1/\delta)/2}$ in Theorem~\ref{thm:bound1} i.e., we randomly partition $S$ into $D_1 \in \mathcal{P}^{N-d}$ and $D_2 \in \mathcal{P}^{d}$ and return our lower bound $(h(D_1,D_2,G)-t)/d$ where $h(D_1,D_2,G)$ counts the number of passwords in $D_2$ that are top $G$ passwords in $D_1$. See Appendix~\ref{app:parametersetting} for concrete details on how we specify all of these relevant parameters for other upper/lower bounds. 

We compare our upper/lower bounds with empirical password guessing curves derived from state of the art password cracking models. Specifically, we compare with the guessing curves generated by Liu et al.~\cite{SP:LNGCU19}. In particular, for each password dataset $S$ they first subsample $25,000$ passwords to obtain a smaller dataset $D_2$. Then for each password $pwd \in D_2$ and password cracking model $M$ they compute a guessing number $g_{M,pwd}$ for that password (often using Monte Carlo strength estimation~\cite{CCS:DelFil15}). Finally, for a guessing bound $G$ we can estimate that the model will crack $\tilde{\lambda}_{G,M} = \left|\{pwd \in D_2 : g_{M,pwd} \leq G \right|/|D_2|$ passwords. We can also apply Corollary~\ref{crl:bound2} to derive a new lower bound on $\lambda_G$ by combining the results from our model $M$ with our sampling based lower bound   $\mathtt{SamplingLB}(S,G)$ --- we use $\mathtt{ExtendedLB}(S,G,M)$ to denote the extended lower bound and highlight these lower bounds using dotted lines in Figure \ref{fig:allcurves}. The guessing curves generated by Liu et al.~\cite{SP:LNGCU19} were generated with a subsample of size $d=25000$. We used the same value of $d$ when applying the lower bounds $\mathtt{SamplingLB}(S,G)$ and $\mathtt{ExtendedLB}(S,G,M)$ as this is already sufficient to achieve error bound $\delta \leq 0.01$.
\vspace{-0.15cm}
\subsection{Discussion}\label{subsec:empirical-discussion}
\vspace{-0.1cm}
Figure~\ref{fig:allcurves} shows that when $G$ is small our upper bound $\mathtt{FrequencyUB}(S,G)$ and lower bound $\mathtt{SamplingLB}(S,G)$ are very close to each other. For example, when $G\leq 262144$ (resp. $G \leq 1.048576\times 10^6$) the difference between the upper bound $\mathtt{FrequencyUB}$ and the lower bound $\mathtt{SamplingLB}$ for Yahoo! dataset is smaller than $1.407\%$ (resp. $2.483\%$). As long as $\mathtt{FrequencyUB}(S,G)- \mathtt{SamplingLB}(S,G)$ is small the empirical distribution $\hat{\lambda}_G$ will give us an accurate approximation of the guessing curve $\lambda_G$ from the real (unknown) password distribution. However, as $G$ becomes large the gap $\mathtt{FrequencyUB}(S,G)- \mathtt{SamplingLB}(S,G)$ begins to widen e.g., for the Yahoo! dataset when $G=1.6777216\times 10^7$ we have $\mathtt{FrequencyUB}(S,G)- \mathtt{SamplingLB}(S,G) = 22.769\%$. While $\mathtt{FrequencyUB}(S,G)$ and lower bound $\mathtt{SamplingLB}(S,G)$ give the best upper/lower bounds for smaller $G$ we can see that the bounds reach plateau once $G \geq \mathbf{Distinct}(S)$ e.g., for the Yahoo! dataset $\mathtt{SamplingLB}(S,G)$ and $\mathtt{FrequencyUB}(S,G)$  remain constant for all $G \geq 3.3885218\times 10^7$. Once $G \geq \mathbf{Distinct}(S)$ we need new ideas upon the lower bound $\lambda_G \geq \frac{N- \mathbf{Unique}(S)}{N}$ or the upper bound $\lambda_G \leq 1$.  

Our linear programming bounds $\mathtt{LPUB}$ and $\mathtt{LPLB}$ push past the $G \leq \mathbf{Distinct}(S)$ barrier and allow us to obtain tighter upper and lower bounds even when  $G \geq \mathbf{Distinct}(S)$. The linear programming bounds ($\mathtt{LPUB}$ and $\mathtt{LPLB}$) are worse when $G$ is small e.g.,  for Yahoo! dataset when $G=262144$ $\mathtt{FrequencyLB}$ and $\mathtt{SamplingLB}$ tightly bound $\lambda_G$ as $34.14\% \leq \lambda_G \leq 35.55\%$ while the LP bounds are $3.58\% \leq \lambda_G \leq 45.66\%$. However, as $G$ increases we find that the linear programming approach yields significantly tighter bounds e.g., for the Yahoo! dataset when $G=6.7108864\times 10^7$ our LP bounds show that $61.76\% \leq \lambda_G \leq 77.72\%$ while our frequency and sampling based bounds $56.41 \% \leq \lambda_G \leq 100\%$ are much less tight. In such cases when $\hat{\lambda}_G>\mathtt{LPUB}(S,G)$ the empirical distribution {\em should not} be used to estimate the real guessing curve.

Similar to $\mathtt{SamplingLB}(S,G)$ our linear programming lower bound $\mathtt{LPLB}(S,G)$ also plateaus when $G$ is sufficiently large, but it plateaus at a higher value e.g., $64.97\%$ for Yahoo! dataset. We also note that both of our lower bounds $\mathtt{SamplingLB}$ and $\mathtt{LPLB}$ dramatically outperform the lower bound $\mathtt{PriorLB}$ based on prior work of Blocki et al.~\cite{SP:BloHarZho18}.

{\bf Are Password Cracking Models Guess Efficient?} Figure~\ref{fig:allcurves} demonstrates empirical cracking models are often much less guess efficient than an attacker who knows the distribution. In particular, if $\tilde{\lambda}_{G,M}$ denotes the percentage of passwords cracked by model $M$ withing $G$ guesses and $\tilde{\lambda}_{G,M} < \max\{ \mathtt{SamplingLB}(S,G), \mathtt{LPLB}(S,G)\}$ then we can be confident that a perfect knowledge attacker would crack more passwords. For example, Figure~\ref{fig:000webhostguessingcurves} (resp. Figure \ref{fig:brazzersguessingcurves}) shows that an attacker making $G=8390551$ (resp. $G=2097152$)  guesses would crack {\em at most} $14\%$ (resp. $42.14\%$) of 000webhost (resp. Brazzers) passwords using any password cracking model. By contrast, our lower bounds indicate that an attacker who knows the password distribution will crack {\em at least} $39.16\%$ (resp. $58.05\%$) of 000webhost (resp. Brazzers) passwords. These results indicate that even state of the art password cracking models can be improved substantially. One policy implication of this finding is that higher levels of key stretching may be necessary to protect hashed passwords against offline brute-force attacks. 

% NIST guidelines for password hashing requires the use of moderately expensive password hashing functions such as PBKDF2 or even memory hard functions such as scrypt~\cite{percival2009stronger}, Argon2~\cite{biryukov2016argon2} or DRSample~\cite{CCS:AlwBloHAr17}. As guessing costs increase attackers will have additional incentives to develop password cracking models that are guess efficient.  

{\bf Reducing the Uncertain Region:} The lower bound $\mathtt{ExtendedLB}$ extends our sampling based approach with empirical password cracking models. The lower bound eventually improves on $\mathtt{SamplingLB}$ and $\mathtt{LPLB}$ for sufficiently large $G$. For example, for Brazzers dataset when $G=10^{16}$ our model based lower bound using neural network attack results ($\mathtt{ExtendedLB:NN}$) shows $\lambda_G\geq 96.34\%$ while the best of our other lower bounds is only $58.62\%$.

Figure~\ref{fig:bestguessingcurves} plots the best upper/lower bounds (denoted as ub/lb in the figure) for the Yahoo!, RockYou, 000webhost, and Neopets datasets --- to avoid overcrowding we plot the best upper/lower bounds for Battlefield Heroes, Brazzers, Clixsense, and CSDN datasets in Appendix~\ref{app:figures}. In particular, we plot $\mathbb{UB}(S,G) = \min\{ \mathtt{LPUB}(S,G),$ $\mathtt{FrequencyUB}(S,G)\}$ (solid curves) and $\mathbb{LB}(S,G)=\max\{\mathtt{LPLB}(S,G),$  $\mathtt{SamplingLB}(S,G),$ $\mathtt{ExtendedLB}\mathtt{:NN}(S,G) \}$ (dotted curves). Notice that the lower bound appears to initially plateau before it starts to increase again e.g., for 000webhost dataset the lower bound plateaus at $55.55\%$ when $G=3.35544\times 10^7$, barely increases when $3.35544\times 10^7\leq G \leq 3.43597\times 10^{10}$ and then starts to significantly increase again when $G \geq 3.43597\times 10^{10}$. The initial plateau point is occurs when the lower $\mathtt{LPLB}(S,G)$ plateaus and once the empirical guessing curves ``catch up" the curve starts to increase again. Each point on the best bounds in Figure~\ref{fig:bestguessingcurves} hold with probability no less than 0.98 (the union bound of the probabilities that each bound holds).

The real (unknown) guessing curve $\lambda_G$ lies somewhere in between $\mathbb{LB}(S,G)$ and $\mathbb{UB}(S,G)$. While our work substantially tightens the gap $\mathbb{UB}(S,G)-\mathbb{LB}(S,G)$, there is still an uncertain region between the solid/dotted curves. We conjecture that improved password cracking models may be able to tighten this gap. 

{\bf The Impact of Sample Size:} We use Yahoo! dataset ($N=69301337$) as an example to analyze the impact of the sample size on the quality of our lower/upper bounds. We generate four subsampled Yahoo! datasets with sample size $N=10^4, 10^5, 10^6, 10^7$ respectively and generate the best upper/lower bounds for each subsampled dataset (as we did for the original Yahoo! dataset in Figure~\ref{fig:bestguessingcurves}). We use the same parameter settings as we use for the original Yahoo! dataset, except that for $N=10^4$ we set $d=2500$ instead of $25000$ when generating $\mathtt{SamplingLB}(S,G)$ using Theorem~\ref{thm:bound1}. We plot the best upper and lower bounds for different sample sizes in Figure~\ref{fig:samplesizeimpact}. As expected the upper/lower bounds improve as the sample size increases and when $N=10^7$ the bounds are reasonably close to those obtained from the original dataset. On the negative side when the sample size is just $N=10^4$ the upper/lower bounds diverge rapidly. Thus, for smaller password datasets such as those collected from a user study our upper/lower bounds may not be particularly useful. This may justify the continued use of password cracking models as a heuristic when analyzing smaller password datasets though we still need to be cautious when drawing conclusions as state-of-the-art password cracking models dramatically under-perform in comparison to an attacker who knows the password distribution. Developing statistical techniques to rigorous compare two password distributions with only a few samples remains an important open research challenge.

\begin{figure}\centering
%\begin{subfigure}[b]{0.3\textwidth}
%\vspace{-0.2cm}
      \begin{tikzpicture}[scale=0.6]
      \begin{axis}[
        title style={align=center},
        xlabel={\# of Guesses ($G$)},
        xmin = {1},
        xmode = log,
        log basis x={10},
        xlabel shift = -3pt,
        ylabel={\% Cracked Passwords ($\lambda_G$)},
        ymax={100},
        ymin = {0},
        %ymode = log,
        %log basis y={2},
        %ylabel shift = -3pt,
        grid=major,
        %small,
        %cycle list = {{red, mark=triangle},  {red, dashed,mark=triangle}, {blue, mark=square},{blue,dashed, mark=square},  {green, mark=diamond}, {green, dashed,mark=diamond}}
        cycle list = { {black}, {black,dashed}, {cyan}, {cyan, dashed}, {magenta}, {magenta,dashed}, {green}, {green, dashed}, {orange}, {orange,dashed} },
        legend style = {font=\small},%{font=\tiny},
        legend pos = outer north east,%north west,
        legend entries = {$N=69301337$ ub, $N=69301337$ lb, $N=10^7$ ub, $N=10^7$ lb, $N=10^6$ ub, $N=10^6$ lb, $N=10^5$ ub, $N=10^5$ lb, $N=10^4$ ub, $N=10^4$ lb }
      ]
      \addlegendimage{no markers, black}
      \addlegendimage{dashed, black}
      %\addlegendimage{no markers, gray}
      %\addlegendimage{no markers, blue}
      \addlegendimage{no markers, cyan}
      \addlegendimage{dashed, cyan}
      %\addlegendimage{no markers, violet}
      \addlegendimage{no markers, magenta}
      \addlegendimage{dashed, magenta}
      %\addlegendimage{no markers, teal}
      \addlegendimage{no markers, green}
      \addlegendimage{dashed, green}
      %\addlegendimage{no markers, olive}
      
      \addlegendimage{no markers, brown}
      \addlegendimage{dashed, brown}
      \addlegendimage{no markers, blue}
      \addlegendimage{dashed, blue}
      \addlegendimage{no markers, red}
      \addlegendimage{dashed, red}
      \addlegendimage{no markers, orange}
      \addlegendimage{dashed, orange}

    \addplot coordinates {(1,1.1128) (2,1.32785) (4,1.4971) (8,1.79541) (16,2.13606) (32,2.54794) (64,3.13958) (128,3.99445) (256,5.14545) (512,6.60003) (1024,8.41074) (2048,10.5658) (4096,13.039) (8192,15.7479) (16384,18.7154) (32768,22.1076) (65536,26.1203) (131072,30.6748) (262144,35.55) (524288,40.4617) (1048576,45.3936) (2097152,50.6323) (4.1943e+06, 56.4314)
     (8.38861e+06, 60.6758)
     (1.67772e+07, 66.379)
     (3.35544e+07, 71.4755)
     (6.71089e+07, 77.7174)
     (1.34218e+08, 86.0332)
     (2.68435e+08, 97.505)
     (2.88435e+08, 98.9301)
     (3.08435e+08, 100.2) };

    %yahoo best lower bound N=69301337
     \addplot coordinates {(1,0.115353) (2,0.347353) (4,0.539353) (8,0.843353) (16,1.19535) (32,1.64735) (64,2.17935) (128,3.04735) (256,4.22735) (512,5.58735) (1024,7.45535) (2048,9.76735) (4096,12.1114) (8192,14.7434) (16384,17.6194) (32768,20.9634) (65536,24.7874) (131072,29.4634) (262144,34.1434) (524288,38.5514) (1048576,42.9114) (2097152,46.5394) (4194304,49.5874) (8.38861e+06, 52.5008)
     (1.67772e+07, 54.8429)
     (3.35544e+07, 57.3192)
     (6.71089e+07, 61.7635)
     (7.71089e+07, 62.7236)
     (8.71089e+07, 63.5193)
     (9.71089e+07, 64.1023)
     (1.07109e+08, 64.4276)
     (1.17109e+08, 64.5013)
     (1.27109e+08, 64.5027)
     (1.34218e+08, 64.5037)
     (2.68435e+08, 64.5231)
     (5.36871e+08, 64.5618)
     (1.07374e+09, 64.6393)
     (2.14748e+09, 64.7942)
     (4.29497e+09, 64.9779)
     (5.49756e+11, 64.9779) };
     
     %yahoo N=10^7 upper bound
    %yahoo10000000
    \addplot coordinates {(1,1.15179) (2,1.36632) (4,1.53646) (8,1.83324) (16,2.17147) (32,2.58228) (64,3.17296) (128,4.02557) (256,5.17753) (512,6.63454) (1024,8.44413) (2048,10.6037) (4096,13.0899) (8192,15.8206) (16384,18.8374) (32768,22.334) (65536,26.4984) (131072,31.3334) (262144,36.6514) (524288,42.6427) (1.04858e+06, 48.0255)
 (2.09715e+06, 53.338)
 (4.1943e+06, 58.0962)
 (8.38861e+06, 64.1026)
 (1.67772e+07, 72.105)
 (3.35544e+07, 83.1578)
 (6.71089e+07, 98.6216)
 (7.71089e+07, 100.206) };

    %yahoo N=10^7 lower bound
    \addplot coordinates {(1,0.0953532) (2,0.331353) (4,0.487353) (8,0.807353) (16,1.13135) (32,1.51135) (64,2.07935) (128,2.95935) (256,4.17935) (512,5.54335) (1024,7.37135) (2048,9.47535) (4096,11.8514) (8192,14.4794) (16384,17.2594) (32768,20.6154) (65536,24.2994) (131072,28.2834) (262144,32.5314) (524288,35.9914) (1.04858e+06, 39.3887)
 (2.09715e+06, 41.6089)
 (4.1943e+06, 43.2881)
 (8.38861e+06, 46.5087)
 (1.67772e+07, 50.0055)
 (3.35544e+07, 50.0438)
 (6.71089e+07, 50.0774)
 (1.34218e+08, 50.1445)
 (2.68435e+08, 50.2787)
 (5.36871e+08, 50.5471)
 (1.07374e+09, 50.7045)
 (5.49756e+11, 50.7045) };

    %yahoo N=10^6 upper bound
    %yahoo1000000
\addplot coordinates {(1,1.29432) (2,1.51292) (4,1.68432) (8,1.97432) (16,2.31532) (32,2.74312) (64,3.34152) (128,4.19942) (256,5.36992) (512,6.83872) (1024,8.66962) (2048,10.8633) (4096,13.414) (8192,16.3776) (16384,19.9093) (32768,24.2258) (65536,30.3387) (131072, 35.3422)
 (262144, 40.2296)
 (524288, 45.832)
 (1.04858e+06, 53.3938)
 (2.09715e+06, 63.7871)
 (4.1943e+06, 78.2089)
 (8.38861e+06, 98.3266)
 (9.38861e+06, 100.205)
 };

    %yahoo N=10^6 lower bound
    \addplot coordinates {(1,0.0473532) (2,0.279353) (4,0.443353) (8,0.719353) (16,1.09935) (32,1.50335) (64,2.03935) (128,2.92335) (256,3.98735) (512,5.39535) (1024,7.13935) (2048,9.23535) (4096,11.4994) (8192,14.0274) (16384,16.3714) (32768,18.7794) (65536,21.5154) (131072, 23.3405)
 (262144, 24.4326)
 (524288, 26.6146)
 (1.04858e+06, 30.6417)
 (2.09715e+06, 33.6203)
 (4.1943e+06, 33.6413)
 (8.38861e+06, 33.6832)
 (1.67772e+07, 33.7671)
 (3.35544e+07, 33.9349)
 (6.71089e+07, 34.2704)
 (1.34218e+08, 34.5214)
 (5.49756e+11, 34.5214) };
 
    %yahoo N=10^5 upper bound
    %yahoo100000
\addplot coordinates {(1,1.74848) (2,1.98548) (4,2.17348) (8,2.46548) (16,2.86048) (32,3.29348) (64,3.93548) (128,4.83648) (256,6.06948) (512,7.70448) (1024,9.83848) (2048,12.5015) (4096,16.5975) (8192,20.7645) 
(16384, 24.0423)
 (32768, 28.7672)
 (65536, 34.9691)
 (131072, 43.5638)
 (262144, 55.63)
 (524288, 72.6451)
 (1.04858e+06, 96.514)
 (1.14858e+06, 100.201)
 };

    %yahoo N=10^5 lower bound
    \addplot coordinates {(1,0.0193532) (2,0.219353) (4,0.375353) (8,0.707353) (16,1.04735) (32,1.43535) (64,2.01135) (128,2.74735) (256,3.77135) (512,4.93935) (1024,6.51135) (2048,7.76335) (4096,8.73135) (8192,9.08735) (16384,9.60335) (32768,10.8874) (65536,13.4194) (66895,13.5514) (131072, 15.5616)
 (262144, 15.8722)
 (524288, 15.8984)
 (1.04858e+06, 15.9508)
 (2.09715e+06, 16.0557)
 (4.1943e+06, 16.2654)
 (8.38861e+06, 16.6848)
 (1.67772e+07, 17.014)
 (5.49756e+11, 17.014) };
 
    %yahoo N=10^4 upper bound
    %yahoo10000
\addplot coordinates {(1,3.21821) (2,3.38821) (4,3.58821) (8,3.88821) (16,4.32821) (32,4.89821) (64,5.72821) (128,7.00821) (256,8.88821) (512,11.4482) (1024, 13.3608)
 (2048, 17.0069)
 (4096, 22.4688)
 (8192, 29.645)
 (16384, 39.6305)
 (32768, 53.6672)
 (65536, 73.4711)
 (131072, 100.204) };

    %yahoo N=10^4 lower bound
    %yahoo10000  lambda_G  myLP2  min
\addplot coordinates { (1, 0.000245376)
 (2, 0.000495376)
 (4, 0.000995376)
 (8, 0.00198538)
 (16, 0.00396538)
 (32, 0.00792538)
 (64, 0.0158554)
 (128,0.482168) (256,0.522168) (512,0.682168) (1024,0.922168) (2048,1.20217) (4096,1.92217) (7244,3.00217) (524288, 3.00217)
 (1.04858e+06, 3.47866)
 (2.09715e+06, 3.78471)
 (5.49756e+11, 3.78471)
};
 
    \end{axis}
  \end{tikzpicture}

\vspace{-.2cm}
 \caption{The Impact of Sample Size on Yahoo! Dataset} 
 \label{fig:samplesizeimpact}
%\end{subfigure}
\vspace{-.45cm}
\end{figure}
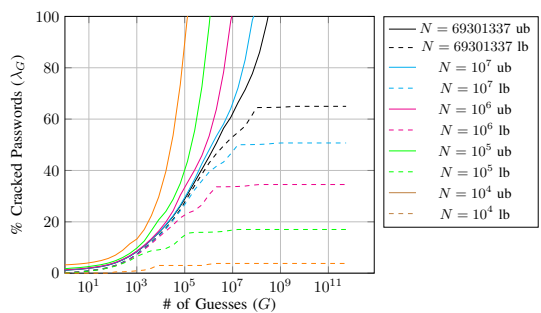

\textbf{Attacker with Partial Knowledge of the Password Distribution:} The results in Figure~\ref{fig:allcurves} indicate that a perfect-knowledge attacker will often substantially outperform state-of-the-art password models. However, the password cracking models do not require perfect knowledge of the password distribution and it is possible that our lower-bounds on $\lambda_G$ might overestimate the performance of a real-world attacker. As discussed previously we can also view the lower bound from Theorem \ref{thm:bound1} as lower bounding the performance of an attacker with partial knowledge of the distribution. Supposing that the attacker has obtained $k$ independent samples $D_1 = \{s_1,\ldots,s_k\} \leftarrow \mathcal{P}$ (e.g., $s_1,\ldots,s_k$ might represent passwords that the attacker has already cracked) from our unknown distribution $\mathcal{P}$ the attacker can build a dictionary $T(D_1,G)$ containing the top $G$ passwords from $D_1$ and use this dictionary to help crack any remaining passwords. In  Figure~\ref{fig:neopetsattackerknowledge} we evaluate the performance of this partial knowledge attacker for $k \in \{10^4, 10^5, 10^6, 10^7, N-d\}$ where $d=25,000$ denotes the size of our holdout test set $D_2$. Figure~\ref{fig:neopetsattackerknowledge} plots the results for the Neopets dataset with similar plots for 000webhost, Yahoo! and RockYou are deferred to \fullversion{the full version \cite{fullversion} of the paper}{Appendix~\ref{app:figures} Figure~\ref{fig:attackerknowledge-app}}. The figure shows that as the attacker obtains (cracks) more password samples ($k$) the performance of the hybrid (partial knowledge) attacker continues to improve and gradually approaches our theoretical lower bounds e.g., fixing the guessing number $G=524288$ a hybrid attacker building a dictionary from $10^7$ samples will crack $10.696\%$ more user passwords than an attacker with $10^6$ samples. For comparison, we also plot the guessing curve derived using the minimum guessing number (\textbf{min-guess number}) heuristic. Intuitively, we heuristically assume that each particular password will be cracked within $G$ guesses if the password was cracked by {\em any} of the five password cracking models (see Figure~\ref{fig:neopetsguessingcurves})  within $G$ guesses. The minimum guessing number was proposed as a heuristic to model the performance of a real-world attacker \cite{USENIX:USBCCKKMMS15} who may have more sophisticated dictionaries and rule-lists than publicly available models. Our experimental results indicate that our hybrid partial knowledge attacker will quickly start to outperform the minimum guessing number heuristic and to approach our lower bound from Theorem \ref{thm:bound1}.

\begin{figure}\centering
%\begin{subfigure}[htb]{0.45\textwidth}%{0.25\textwidth}
%\vspace{-0.2cm}
      \begin{tikzpicture}[scale=0.6]
      \begin{axis}[
        title style={align=center},
        xlabel={\# of Guesses ($G$)},
        xmin = {1},
        xmax = {10^9},
        xmode = log,
        log basis x={10},
        xlabel shift = -3pt,
        ylabel={\% Cracked Passwords},
        ymax={100},
        ymin = {0},
        %ymode = log,
        %log basis y={2},
        %ylabel shift = -3pt,
        grid=major,
        %small,
        cycle list = { 
        {black}, {black,dashed}, 
        {cyan}, %{cyan, dashed}, 
        {magenta}, %{magenta,dashed}, 
        {green}, %{green, dashed}, 
        {orange}, %{orange, dashed}, 
        %{olive}, {olive, dashed}, 
        %{violet}, {violet, dashed}, 
        %{yellow}, {yellow, dashed}, 
        {red}, %{red, dashed},
        {blue}, {blue, dashed},
        {teal}, {teal, dashed}
        },
        legend style = {font=\small}, %{font=\fontsize{3}{3}\selectfont},
        %{font=\tiny},
        legend pos = north west,
        legend entries = { 
        best ub, best lb, 
        k=$10^4$,
        k=$10^5$,
        k=$10^6$,
        k=$10^7$,
        k=$68320757$,
        min-guess number,
        }
      ]
      \addlegendimage{no markers, black}
      \addlegendimage{dashed, black}
      \addlegendimage{no markers, cyan}
      %\addlegendimage{dashed, cyan}
      \addlegendimage{no markers, magenta}
      %\addlegendimage{dashed, magenta}
      \addlegendimage{no markers, green}
      %\addlegendimage{dashed, green}
      \addlegendimage{no markers, orange}
      %\addlegendimage{dashed, orange}
      %\addlegendimage{no markers, olive}
      %\addlegendimage{dashed, olive}
      %\addlegendimage{no markers, violet}
      %\addlegendimage{dashed, violet}
      %\addlegendimage{no markers, yellow}
      %\addlegendimage{dashed, yellow}
      \addlegendimage{no markers, red}
      %\addlegendimage{dashed, red}
      \addlegendimage{no markers, blue}
      %\addlegendimage{dashed, blue}
      %\addlegendimage{no markers, teal}
      %\addlegendimage{dashed, teal}
    
    %neopets  best upper bound
    \addplot coordinates {(1,0.0957224) (2,0.13098) (4,0.195629) (8,0.291942) (16,0.417719) (32,0.635437) (64,0.983075) (128,1.51791) (256,2.31334) (512,3.46685) (1024,5.00121) (2048,6.96575) (4096,9.4566) (8192,12.3855) (16384,15.7543) (32768,19.6828) (65536,24.2589) (131072,29.3261) (262144,34.7341) (524288,40.4954) (1048576,46.7505) (2097152,53.885) (4194304,61.8731) 
     (8.38861e+06, 69.2588)
     (1.67772e+07, 76.8607)
     (3.35544e+07, 83.8409)
     (6.71089e+07, 91.0223)
     (1.34218e+08, 100.2) };

    %neopets best lower bound
    \addplot coordinates {(1,-0.917102) (2,-0.861102) (4,-0.825102) (8,-0.717102) (16,-0.573102) (32,-0.381102) (64,-0.0611016) (128,0.370898) (256,1.0509) (512,2.0989) (1024,3.4029) (2048,5.1429) (4096,7.5389) (8192,10.2989) (16384,13.6949) (32768,17.5549) (65536,22.0629) (131072,26.7669) (262144,32.2589) (524288,37.5029) (1048576,42.8029) (2097152,48.2829) (4194304,53.5029) (8.38861e+06, 59.2737)
     (1.67772e+07, 65.7034)
     (3.35544e+07, 70.2992)
     (6.71089e+07, 76.9127)
     (1.34218e+08, 77.3942)
     (2.68435e+08, 77.4139)
     (5.36871e+08, 77.4532)
     (1.07374e+09, 77.5317)
     (2.14748e+09,79.0634) (4.29497e+09,80.4874) (8.58993e+09,81.8794) (1.71799e+10,83.0154) (3.43597e+10,84.0954) (6.87195e+10,85.6114) (1.37439e+11,86.7754) (2.74878e+11,87.6554) (5.49756e+11,88.9034) (1.09951e+12,89.8994) (2.19902e+12,90.6394) (4.39805e+12,91.5754) (8.79609e+12,92.3274) (1.75922e+13,92.8194) (3.51844e+13,93.3834) (7.03687e+13,94.0234) (1.40737e+14,94.3474) (2.81475e+14,94.8634) (5.6295e+14,95.2634) (1.1259e+15,95.6154) (2.2518e+15,95.9594) (4.5036e+15,96.3034) (9.0072e+15,96.5274) (1e+16,96.5354)};

    %AttackerKnowledgeComp-d=25000-neopets-freq_plots.txtAttackerKnowledgeComparison
%N = 68345757  d = 25000  distinct = 27987227
%k = 10000
\addplot coordinates {(1,0.08) (2,0.096) (4,0.14) (8,0.224) (16,0.324) (32,0.404) (64,0.496) (128,0.748) (256,0.952) (512,1.04) (1024,1.136) (2048,1.432) (4096,2.096) (8192,3.164) (9814,3.648) };
%k = 100000
\addplot coordinates {(1,0.08) (2,0.116) (4,0.172) (8,0.256) (16,0.348) (32,0.576) (64,0.924) (128,1.376) (256,2.004) (512,3.016) (1024,4.112) (2048,5.504) (4096,7.184) (8192,7.78) (16384,8.32) (32768,9.432) (65536,11.668) (91251,13.356) };
%k = 1000000
\addplot coordinates {(1,0.08) (2,0.104) (4,0.172) (8,0.252) (16,0.396) (32,0.624) (64,0.956) (128,1.52) (256,2.288) (512,3.388) (1024,4.812) (2048,6.82) (4096,9.212) (8192,11.792) (16384,14.492) (32768,17.604) (65536,20.74) (131072,21.996) (262144,23.436) (524288,26.504) (774794,29.472) };
%k = 10000000
\addplot coordinates {(1,0.08) (2,0.104) (4,0.172) (8,0.252) (16,0.424) (32,0.632) (64,1) (128,1.536) (256,2.296) (512,3.44) (1024,4.956) (2048,6.888) (4096,9.328) (8192,12.172) (16384,15.516) (32768,19.3) (65536,23.752) (131072,28.676) (262144,33.416) (524288,37.2) (1048576,41.028) (2097152,43.06) (4194304,46.704) (5884500,49.876) };
%k = 68320757
\addplot coordinates {(1,0.08) (2,0.104) (4,0.172) (8,0.252) (16,0.428) (32,0.628) (64,0.996) (128,1.552) (256,2.324) (512,3.468) (1024,4.948) (2048,6.884) (4096,9.368) (8192,12.344) (16384,15.584) (32768,19.56) (65536,24.208) (131072,29.42) (262144,34.268) (524288,39.444) (1048576,45.02) (2097152,50.204) (4194304,55.216) (8388608,59.856) (16777216,63.292) (27979346,68.472) };

%MinGuessingNumber
%./minguessingnumber/neopets_minguessingnumber.txt
\addplot coordinates {(1,0) (2,0) (4,0.08) (8,0.084) (16,0.14) (32,0.224) (64,0.364) (128,0.636) (256,1.016) (512,1.592) (1024,2.332) (2048,3.232) (4096,4.66) (8192,6.26) (16384,8.256) (32768,10.624) (65536,13.616) (131072,17.18) (262144,21.336) (524288,24.352) (1.04858e+06,26.06) (2.09715e+06,27.676) (4.1943e+06,31.124) (8.38861e+06,39.008) (1.67772e+07,51.392) (3.35544e+07,58.356) (6.71089e+07,63.236) (1.34218e+08,66.64) (2.68435e+08,70.116) (5.36871e+08,73.096) (1.07374e+09,75.596) (2.14748e+09,77.692) (4.29497e+09,79.528) (8.58993e+09,81.364) (1.71799e+10,83.076) (3.43597e+10,84.556) (6.87195e+10,86.252) (1.37439e+11,87.544) (2.74878e+11,88.856) (5.49756e+11,90.064) (1.09951e+12,91.028) (2.19902e+12,91.804) (4.39805e+12,92.712) (8.79609e+12,93.452) (1.75922e+13,94.012) (3.51844e+13,94.564) (7.03687e+13,95.232) (1.40737e+14,95.716) (2.81475e+14,96.196) (5.6295e+14,96.544) (1.1259e+15,96.828) (2.2518e+15,97.16) (4.5036e+15,97.528) (9.0072e+15,97.744) (1e+16,97.768) };
    \end{axis}
  \end{tikzpicture}

\vspace{-.2cm}
 \caption{Attacker's Knowledge (Neopets)} \label{fig:neopetsattackerknowledge}
%%\vspace{-.15cm}
%\end{subfigure}
\vspace{-.45cm}
\end{figure}
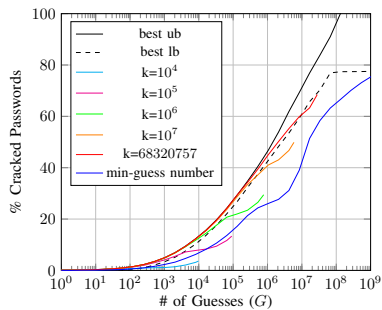

\fullversion{}{{\bf Zipf's Law in Passwords:}
Zipf's  law~\cite{malone2012investigating,wang2016implications,wang2017zipf} has been proposed as reasonable model for the password distribution $\lambda_G$. For example, CDF Zipf's law estimates that $\lambda_G \approx y G^r$ where the constant $y,r>0$ are tuned based on the sample $S$. Several papers \cite{wang2016implications,wang2017zipf,SP:BloHarZho18} find that CDF Zipf's law closely fits all known empirical password distributions. However, there is no theoretical guarantee that the estimate $y G^r$ is close to $\lambda_G$. \fullversion{we compare our upper/lower bounds with the CDF-Zipf curves using the parameters $y,r$ from \cite{wang2016implications} for the datasets RockYou, Battlefield Heroes, 000webhost, CSDN and \cite{SP:BloHarZho18} for the more recent Yahoo! dataset~\cite{SP:Bonneau12,NDSS:BloDatBon16}. The plots and parameter settings for Rockyou, Yahoo! and CSDN datasets are shown in Figure \ref{fig:zipf} and Table \ref{table:zipfparameter}. The comparison results for 000webhost and Battlefield Heroes are similar to the other three datasets. We leave their plots and parameter settings in the full version~\cite{fullversion}.
}{In Appendix~\ref{app:figures} Figure \ref{fig:zipf} we compare our upper/lower bounds with the CDF-Zipf curves using the parameters $y,r$ from \cite{wang2016implications} for the datasets RockYou, Battlefield Heroes, 000webhost, CSDN and \cite{SP:BloHarZho18} for the more recent Yahoo! dataset~\cite{SP:Bonneau12,NDSS:BloDatBon16} --- see Appendix~\ref{app:figures} Table \ref{table:zipfparameter}.} In all of the plots the CDF-Zipf plot (green) is close to our best upper bound (red). For the Battlefield, CSDN and 000webhost datasets the CDF-Zipf plot (green) lies in between our best upper bound (red) and our best lower bound (blue) indicating that the curve $yG^r$ is consistent with our statistical bounds. For the RockYou and Yahoo! datasets the CDF-Zipf plots (green) often lie above the red upper bound e.g., for the RockYou (resp. Yahoo!) dataset when $G=33554432$ (resp. $G=134217728$) we have $yG^r \geq 0.96$ (resp. $yG^r \geq 0.98$) while our upper bounds imply that $\lambda_G \leq 0.87$ (resp. $\lambda_G \leq 0.86$). In such cases we can confidently state that the CDF-Zipf curve overestimates $\lambda_G$. }

\section{Applications to Password Policies}
\label{sec:applicationtopasswordpolicies}
In this section we illustrate how our statistical techniques can be used to help guide password policies.   

\subsection{Tuning Password Hash Cost Parameters}\label{sec:application_hashcost}
We first consider the problem of tuning the cost parameter of a password hash function.  An offline attacker who has stolen the (salted) cryptographic hash of the user’s password can check as many passwords as s/he wants by computing the (salted) cryptographic hashes of likely password guesses to see if they match the stolen hash value. An offline attacker is limited only by the resources s/he is willing to invest cracking and by the cost of repeatedly evaluating the password hash function. Ideally a password hash function should be moderately expensive to compute so that it is impractical for an offline attacker to check millions or billions of password guesses. However, the function cannot be too expensive as the organization must also evaluate the hash function every time a user attempts to login.

Suppose that our organization is considering doubling the cost of the hash function. Assuming that the attacker's resources remain constant this would reduce the number of passwords that the offline attacker can check from $G$ to $G/2$ and the probability that the attacker cracks the user's password would decrease from $\lambda_G$ to $\lambda_{G/2}$. However, doubling the hash cost parameter will require the organization to invest additional computing resources to handle user authentications. Thus, the organization will only make the change if the security benefits are substantial enough. 

We can use our statistical techniques to upper/lower bound the security gain $\lambda_G - \lambda_{G/2}$ (resp. $\lambda_G - \lambda_{G/2^x}$) when increasing hashing costs by a factor of $2$ (resp. $2^x$). In particular, if $\mathbb{UB}(S,G)$ (resp. $\mathbb{LB}(S,G)$) denotes our upper/lower bounds on $\lambda_G$ then we know that 
\begingroup\makeatletter\def\f@size{8.5}\check@mathfonts
\[\mathbb{LB}(S,G) - \mathbb{UB}(S,G/2^x) \leq  \lambda_G - \lambda_{G/2^x} \leq \mathbb{UB}(S,G)- \mathbb{LB}(S,G/2^x) \ . \] \endgroup

Figure~\ref{fig:hashcostneopets} plots our upper/lower bounds on $\lambda_G - \lambda_{G/2^x}$ for $x \in \{1,2,3,4,5\}$ for the Neopets dataset --- similar results for Yahoo!, RockYou and 000webhost are deferred to \fullversion{the full version \cite{fullversion}}{Figure~\ref{fig:hashcost-app} in Appendix~\ref{app:figures}}. For example, suppose that the attacker was originally able to attempt $G=1.67 \times 10^7$ password guesses before we increase our guessing costs by $2^5$. The figure indicates that this policy change will reduce the number of cracked passwords by {\em at least} $25.2\%$ and {\em at most} $37.7\%$. If reducing the potential damage of an offline attack by $25.2\%$ (resp. $37.7\%$) is (resp. is not) worth the additional costs then the organization should (resp. should not) increase the hash cost parameter. We remark that an organization that transitions away from a password hash function like PBKDF2 or BCRYPT to a modern memory hard function like scrypt~\cite{percival2009stronger}, Argon2~\cite{biryukov2016argon2} or DRSample~\cite{CCS:AlwBloHAr17} will substantially increase hashing costs. 

We remark that upper/lower bounding $ \lambda_G - \lambda_{G/b} $ can be useful when defending against online password spraying attacks. Suppose that our organization is considering adopting a stricter version of its lockout policy where an account is locked whenever there are at least $G/b$ (as opposed to $G$) incorrect login attempts within a $30$ day window. For reference, NIST Authentication Guidelines \cite{NISTAUTHGUIDE} suggests limiting the number of incorrect login attempts to $G=100$ within a 30 day window. Adopting a stricter lockout policy comes with a usability cost so our organization would only adopt the policy if the security benefits ($ \lambda_G - \lambda_{G/b} $) are substantial enough e.g., with $G=128$ in Figure ~\ref{fig:hashcostneopets} we have $0.491\%\leq \lambda_{G}-\lambda_{G/2} \leq 0.579\%$.

%tune hash cost parameters

\begin{figure}\centering
\includegraphics[scale=0.6]{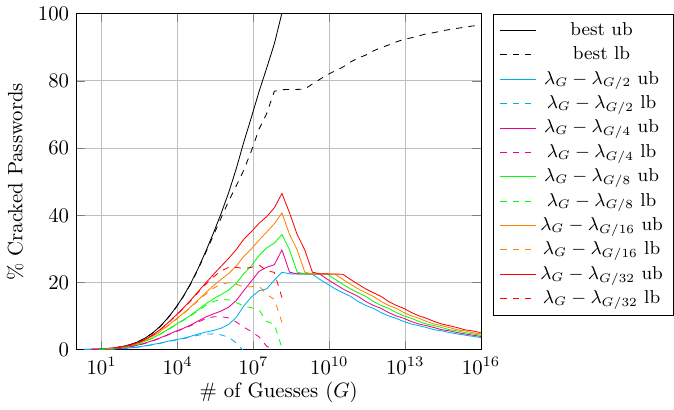}
%\begin{subfigure}[htb]{0.25\textwidth}
%\vspace{-0.2cm}

\vspace{-.2cm}
 %\caption{Neopets} 
 \caption{Tuning Password Hash Cost Parameters (Neopets)}
 \label{fig:hashcostneopets}
%\vspace{-.15cm}
%subfigure}
\vspace{-.45cm}
\end{figure}

\subsection{Comparing Password Distributions}
\label{sec:application_yahoocorpus}

The Yahoo! frequency corpus \cite{SP:Bonneau12,NDSS:BloDatBon16} allows us to compare password distributions along several dimensions: gender, account tenure and composition policy (before/after) the adoption of a six character minimum restriction. The comparison results are shown in Figure~\ref{fig:yahoocorpus} in Appendix~\ref{app:figures}.

\noindent {\bf Measuring Shifts in The Password Distributions over Time}
We generated upper and lower bounds for Yahoo! passwords with account tenure of 5-10 years and account tenure below 1 year as shown in Figure~\ref{fig:yahootenure}. We found statistically significant evidence for Bonneau’s claim~\cite{SP:Bonneau12} that there is ``a weak trend towards improvement over time, with more recent accounts having slightly stronger passwords''. For example, when our attacker makes $G=8192$ guesses, at least $16.86\%$ old account passwords (5-10 years) will be guessed while at most $14.40\%$ new account passwords (0-1 year) will be guessed (confidence: $98\%$). One caveat is that the weak trend towards stronger passwords may be partially explained by Yahoo!’s adoption of the six-character minimum requirement.

\noindent {\bf Measuring the Effect of Gender on Password Strength}
Bonneau~\cite{SP:Bonneau12} previously found that (self-reported) gender had a small/split effect on password security with ``male-chosen passwords being slightly more vulnerable to online attack and slightly stronger against offline attack.'' We are able to provide statically significant justification for these claims using our upper/lower bounds as shown in Figure~\ref{fig:yahoogender}. First, we can provide statistically significant evidence that female passwords are more resistant to an attacker making $G \leq 128$ guesses. For example, when $G=32$ we find that the attacker will crack at least $2.776\%$ (resp. at most $2.283\%$) of male (resp. female) passwords. For $512 \leq G \leq 1048576$ the conclusions are reversed e.g., when $G=131072$ we find that the attacker will crack at most $30.050\%$ (resp. at least $32.219\%$) of male (resp. female) passwords. (Note: When $G > 1048576$ the upper/lower bounds start to diverge preventing us from making definitive comparisons. Also, when $G=256$ the upper/lower bounds for male/female passwords are very close).

\noindent {\bf Measuring The Effect of Password Requirements At Registration}
We generated upper/lower bounds for Yahoo! passwords that were selected with (and without) a six character minimum restriction as shown in Figure~\ref{fig:yahooreq}. Bonneau~\cite{SP:Bonneau12} previously concluded that the change made ``almost no difference'' in security against online guessing while slightly increasing the resistance to offline attacks. By contrast, we find statistically significant evidence that an online attacker making at most $G=8$ guesses will crack more passwords picked under the six character minimum restriction. For example, when an online attacker makes $G=8$ guesses, we can say (with $98\%$ confidence) that the attacker will crack between $1.393\%$ to $1.610\%$ (resp. $1.645\%$ to $1.880\%$) of passwords picked without the restriction (resp. with the restriction). The findings are reversed for an attacker making $256 \leq G \leq 524288$ guesses yielding statistically significant evidence for Bonneau’s finding that passwords picked under the six-character minimum restriction are more resistant to offline attacks. (Note: for larger values of G the upper/lower bounds begin to diverge preventing us from making a definitive comparison).

\subsection{Password Composition Policies}
%%%%PCP comparison
\begin{figure*}
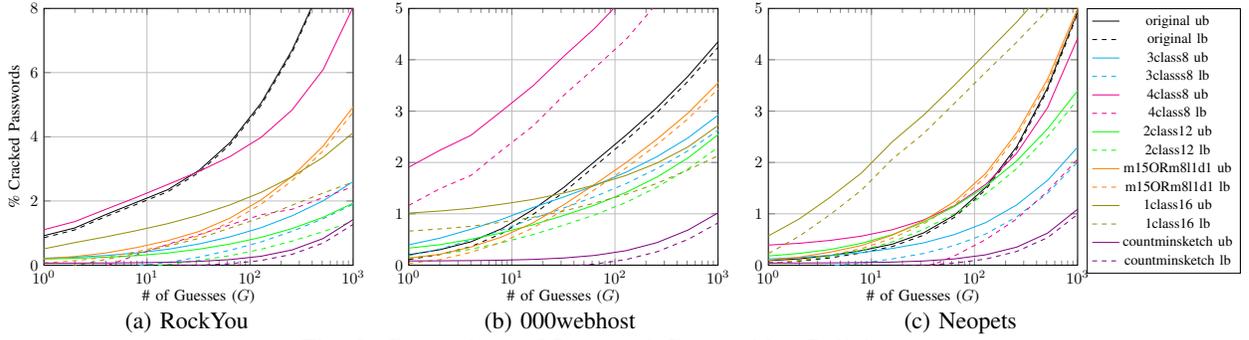
\centering
%rockyou
\input{Plots/PCP-rockyou}
%\vspace{0.5em}
%000webhost
\input{Plots/PCP-000webhost}
%\vspace{0.5em}
%\hfill
%neopets
\input{Plots/PCP-neopets}
%\vspace{0.5em}
%\hfill
%\newline
\vspace{-0.2cm}
\caption{Comparison of Password Composition Policies}
\label{fig:PCP}
\vspace{-0.4cm}
\end{figure*}

We now turn our attention to the problem of identifying secure password composition policies which yield password distributions that are more resistant to online (resp. offline) password cracking attacks. One immediate challenge we face is that, with the exception of the Yahoo! frequency corpus, none of the available password datasets record the passwords that each user would have selected in response to (additional) restrictions. One potential remedy would be to run a large user study to ask users what password they would select under a variety of restrictions i.e., obtaining samples from each of the resulting distributions for each password composition policy. However, as we previously discussed one limitation of our statistical techniques is that we need a reasonably large number of samples from each distribution before we would be able to draw meaningful comparisons e.g., see Figure \ref{fig:samplesizeimpact}. In practice, the number of participants $N$ (e.g., $N \leq 10^4$) in any user study would be heavily constrained by practical budget considerations and would be too small to allow us to draw meaningful comparisons using our statistical techniques.

We address the challenges above by following \cite{EC:BKPS13} and making a heuristic assumption about the way that users respond to password restrictions. Suppose that we are given a predicate $\mathtt{Allowed}$ describing our password composition policy i.e., $\mathtt{Allowed}(pwd) = 1$ if and only if users are allowed to select the password $pwd$. If $\mathcal{P}_1$ (resp. $\mathcal{P}_2$) denotes the probability distribution before (resp. after) adopting the composition policy then the normalized probabilities model \cite{EC:BKPS13} says that for each password $pwd$ with $\mathtt{Allowed}(pwd)=1$ we have 
\[ \Pr_{x \leftarrow \mathcal{P}_2}\left[x=pwd \right] = \Pr_{x \leftarrow \mathcal{P}_1}\left[ x=pwd ~|~\mathtt{Allowed}(x)=1 \right] \ . \]
Intuitively, we can imagine that each user utilizes rejection sampling and repeatedly samples passwords from $\mathcal{P}_1$ until s/he finds one that is consistent with the composition policy. While this heuristic assumption may not perfectly model how users respond to password restrictions, the assumption allows us to apply our statistical techniques to compare candidate password composition policies and identify the most promising candidates for further evaluation. In particular, if the password dataset $S_1$ denotes $N$ iid samples from the distribution $\mathcal{P}_1$ then we can filter  $S_1$ by removing passwords that are incompatible with the composition policy to obtain $S_2 = \{pwd\in S~:~Allowed(pwd)=1\}$. Now $S_2$ can be viewed as $N'\leq N$ iid samples from $\mathcal{P}_2$. Thus, we can apply our statistical techniques to obtain upper/lower bounds ($\mathbb{UB}(S_2,G)$ and $\mathbb{LB}(S_2,G)$) for the new distribution. In our analysis we primarily focus on smaller guessing numbers $G$ to determine whether or not the new distribution is more/less resistant to online cracking.  

We analyze a wide variety of candidate PCPs. In general, we use $x$class$y$ to denote the policy which requires the user to pick a password with {\em at least} $y$ characters total coming from {\em at least} $x$ distinct character classes (lowercase letters, uppercase letters, digits, special characters). For example, 4class8 requires passwords that are at least 8 characters long and include at least one lowercase, one uppercase, one digit and one special character. We also consider the PCP used by Github.com (at least 15 characters OR at least 8 characters including a number and a lower case letter).  Finally, we consider a proposal of Schechter et al.~\cite{schechter2010popularity} to use a count-min-sketch data structure to estimate the frequency of each user password and ban passwords whose (estimated) frequency is above a threshold $r \times N$. We instantiated this policy with the differentially private count-min-sketch as described in \cite{blocki2020dalock} which adds laplace noise to each cell in the count-sketch data-structure to preserve $\epsilon$-differential privacy. Since the count-min-sketch is trained based on sampled passwords $S$ we adopted the following approach to ensure that the comparison with other PCPs is fair. 1) We randomly partition the dataset $S$ into two equal size datasets $S_{train}$ and $S_{test}$ of size $N/2$. 2) The dataset $S_{train}$ is used to construct the noisy count-min-sketch (depth=5, width =$10^6$, 20MB of space) using the differential privacy parameter $\epsilon=0.1$ and then discarded. 3) We define the password policy $\mathtt{Allowed}(pwd) =1$ if and only the estimated frequency of $pwd$ is {\em at most} $r \times N/2$ --- we fixed $r= 10^{-5}$ in our experiments below and explore different threshold parameters $r \in \{10^{-4}, 10^{-5}, 10^{-6}\}$ in \fullversion{the full version \cite{fullversion}}{Appendix~\ref{app:figures} Figure~\ref{fig:PCPfull}}. Because the count-sketch was trained only using passwords from the dataset $S_{train}$ we can filter $S_{test}$ to obtain fresh samples that are consistent with the composition policy.   

Figure~\ref{fig:PCP} shows the upper and lower bounds of different PCPs as well as the upper/lower bounds of the original dataset (\textbf{original ub}, \textbf{original lb}) of the three datasets. To avoid clutter we omitted PCPs (e.g. $1class8$, $2class8$, $1class12$) that performed similarly to others. Since we are focusing primarily on online attacks, we enlarge the plots and only show $1\leq G\leq 1000$ in this figure. Note: to get tighter bounds with small $G$, we set the parameter $d$ in $\mathtt{SamplingLB}$ to be $N/2$ and set the error rate $\epsilon$ of $\mathtt{FrequencyUB}$ to be $0.01$. We leave the full plots with $G> 1000$ in \fullversion{the full version \cite{fullversion}}{Appendix~\ref{app:figures} Figure~\ref{fig:PCPfull}}.  

An interesting finding is that no single PCP, excluding count-min sketch, performs well across all datasets and parameter ranges. For example, $1.6\%$ of the 000webhost passwords that comply with the policy are $P@ssw0rd$ so  applying the $4class8$ rule actually increases the percentage of cracked passwords as shown in Figure \ref{fig:000webhostPCPs-zoomin}. Similarly, the $1class16$ policy performs particularly poorly on the Neopets dataset. On the positive side the count-min sketch PCP universally performed well across all three datasets and for all guessing parameters $G$. For example, when $G=128$ applying the count-min sketch PCP to the 000webhost dataset reduces the percentage of cracked passwords from at least $2.434\%$ to {\em at most} $0.283\%$. For comparison, applying the  $4class8$ (resp. $1class16$) {\em increases} (resp. {\em decreases}) the percentage of cracked passwords to  at least $4.395\%$ (resp. at least $1.379\%$). Our analysis supports the proposal of Schechter et al. \cite{schechter2010popularity} to base password composition policies on (estimated) password frequency with the caveat that our analysis in this section does depend on a heuristic assumption about how password composition policies impact the distribution. We also remark that when the guessing number $G$ is low that it is more likely that a real world attacker will closely resemble a perfect knowledge attacker. In particular, the attacker only needs to know the top $G$ passwords and we expect that each of these passwords will occur relatively frequently making them the easiest to learn. 

\subsection{Discussion: Obtaining Password Samples} \label{subsec:securesamples} 
Before applying our statistical techniques we implicitly assume that our organization has been able to obtain independent samples $S=(s_1,\ldots, s_N)$ from the unknown distribution $\mathcal{P}$ over user passwords. While our organization can always rely on a breached password dataset such as 000webhost or RockYou, it is unlikely that the password samples will be perfectly representative of the current user password distribution e.g., due to varying demographic factors, language, culture, account value and password restrictions. Ideally, an organization would obtain the samples  $S=(s_1,\ldots, s_N)$ from {\em its own} users before applying our statistical framework. While the collection of such a sample raises security and privacy concerns, we stress that we do not require  plaintext passwords in order to apply our statistical techniques. Bonneau \cite{SP:Bonneau12} already developed an efficient framework to securely collect an anomymized password frequency list and Blocki et al.~\cite{NDSS:BloDatBon16} developed an efficient differentially private algorithm that was used to publish a differentially private version of Yahoo! frequency corpus that Bonneau collected ($N \approx 70$ million users). An organization could adapt this same framework to securely and privately collect a password frequency corpus from its own users and then apply our statistical techniques to characterize the attacker's guessing curve.

%\section{Discussion and Future Work}

\section{Conclusion}
We introduced several statistical techniques to upper and lower bound $\lambda_G$ the performance of a password cracker who knows the distribution from which passwords were sampled. Our upper/lower bounds hold with high confidence and can be derived from a password dataset even when the real password distribution is unknown to us. We applied our technique to analyze several large empirical password datasets. Our analysis demonstrates that the empirical guessing curve closely matches the real guessing curve as long as $G$ is not too large, and highlights that state-of-the-art password cracking models are often far less guess efficient than a perfect knowledge attacker. For example, with the 000webhost dataset the state-of-the-art password cracking models indicate the attacker making $8338608$ guesses would crack any password with probability $14\%$ while our lower bound shows the probability is at least $39.16\%$ with high confidence ($99\%$). This shows that even the most sophisticated password cracking models still have a large room for improvement. 
We also demonstrate how to apply our theoretical bounds to determine whether or not particular
password interventions (e.g., key-stretching, imposing restrictions on the passwords users can pick) yield effective defenses.
While our results significantly narrows the uncertain region for $\lambda_G$, there are regions where our best upper/lower bounds diverge significantly. Reducing this gap will help us to better understand the distribution over user chosen passwords and is an important challenge for future research.

\section*{Acknowledgments}
This research was supported in part by the National Science Foundation under awards CNS \#1755708 and CNS \#2047272, a gift from Protocol Labs, and by a Purdue Big Ideas award. We would like to thank anonymous reviewers for constructive feedback which helped us to improve the presentation of this paper. 

%\section*{References}
\bibliographystyle{IEEEtran}
\bibliography{abbrev3,crypto,references}

% Generated by IEEEtran.bst, version: 1.14 (2015/08/26)
\begin{thebibliography}{10}
\providecommand{\url}[1]{#1}
\csname url@samestyle\endcsname
\providecommand{\newblock}{\relax}
\providecommand{\bibinfo}[2]{#2}
\providecommand{\BIBentrySTDinterwordspacing}{\spaceskip=0pt\relax}
\providecommand{\BIBentryALTinterwordstretchfactor}{4}
\providecommand{\BIBentryALTinterwordspacing}{\spaceskip=\fontdimen2\font plus
\BIBentryALTinterwordstretchfactor\fontdimen3\font minus
  \fontdimen4\font\relax}
\providecommand{\BIBforeignlanguage}[2]{{%
\expandafter\ifx\csname l@#1\endcsname\relax
\typeout{** WARNING: IEEEtran.bst: No hyphenation pattern has been}%
\typeout{** loaded for the language `#1'. Using the pattern for}%
\typeout{** the default language instead.}%
\else
\language=\csname l@#1\endcsname
\fi
#2}}
\providecommand{\BIBdecl}{\relax}
\BIBdecl

\bibitem{USENIX:MUSKBCC16}
W.~Melicher, B.~Ur, S.~M. Segreti, S.~Komanduri, L.~Bauer, N.~Christin, and
  L.~F. Cranor, ``Fast, lean, and accurate: Modeling password guessability
  using neural networks,'' in \emph{USENIX Security 2016}, T.~Holz and
  S.~Savage, Eds.\hskip 1em plus 0.5em minus 0.4em\relax {USENIX} Association,
  Aug. 2016, pp. 175--191.

\bibitem{SP:WAMG09}
M.~Weir, S.~Aggarwal, B.~de~Medeiros, and B.~Glodek, ``Password cracking using
  probabilistic context-free grammars,'' in \emph{2009 {IEEE} Symposium on
  Security and Privacy}.\hskip 1em plus 0.5em minus 0.4em\relax {IEEE} Computer
  Society Press, May 2009, pp. 391--405.

\bibitem{NDSS:VerColTho14}
R.~Veras, C.~Collins, and J.~Thorpe, ``On semantic patterns of passwords and
  their security impact,'' in \emph{NDSS~2014}.\hskip 1em plus 0.5em minus
  0.4em\relax The Internet Society, Feb. 2014.

\bibitem{CCS:NarShm05b}
A.~Narayanan and V.~Shmatikov, ``Fast dictionary attacks on passwords using
  time-space tradeoff,'' in \emph{ACM CCS 2005}, V.~Atluri, C.~Meadows, and
  A.~Juels, Eds.\hskip 1em plus 0.5em minus 0.4em\relax {ACM} Press, Nov. 2005,
  pp. 364--372.

\bibitem{NDSS:CasDurPer12}
C.~Castelluccia, M.~D{\"u}rmuth, and D.~Perito, ``Adaptive password-strength
  meters from {Markov} models,'' in \emph{NDSS~2012}.\hskip 1em plus 0.5em
  minus 0.4em\relax The Internet Society, Feb. 2012.

\bibitem{SP:MYLL14}
J.~Ma, W.~Yang, M.~Luo, and N.~Li, ``A study of probabilistic password
  models,'' in \emph{2014 {IEEE} Symposium on Security and Privacy}.\hskip 1em
  plus 0.5em minus 0.4em\relax {IEEE} Computer Society Press, May 2014, pp.
  689--704.

\bibitem{durmuth2015omen}
M.~D{\"u}rmuth, F.~Angelstorf, C.~Castelluccia, D.~Perito, and A.~Chaabane,
  ``Omen: Faster password guessing using an ordered markov enumerator,'' in
  \emph{International Symposium on Engineering Secure Software and
  Systems}.\hskip 1em plus 0.5em minus 0.4em\relax Springer, 2015, pp.
  119--132.

\bibitem{SP:LNGCU19}
E.~Liu, A.~Nakanishi, M.~Golla, D.~Cash, and B.~Ur, ``Reasoning analytically
  about password-cracking software,'' in \emph{2019 {IEEE} Symposium on
  Security and Privacy}.\hskip 1em plus 0.5em minus 0.4em\relax {IEEE} Computer
  Society Press, May 2019, pp. 380--397.

\bibitem{johntheripperlink}
``John the ripper,'' \url{https://www.openwall.com/john/}, accessed March 15,
  2021.

\bibitem{hashcatlink}
``Hashcat,'' \url{https://hashcat.net/hashcat/}, accessed March 15, 2021.

\bibitem{USENIX:USBCCKKMMS15}
B.~Ur, S.~M. Segreti, L.~Bauer, N.~Christin, L.~F. Cranor, S.~Komanduri,
  D.~Kurilova, M.~L. Mazurek, W.~Melicher, and R.~Shay, ``Measuring real-world
  accuracies and biases in modeling password guessability,'' in \emph{USENIX
  Security 2015}, J.~Jung and T.~Holz, Eds.\hskip 1em plus 0.5em minus
  0.4em\relax {USENIX} Association, Aug. 2015, pp. 463--481.

\bibitem{SP:CAAJR16}
R.~Chatterjee, A.~Athayle, D.~Akhawe, A.~Juels, and T.~Ristenpart, ``{pASSWORD}
  {tYPOS} and how to correct them securely,'' in \emph{2016 {IEEE} Symposium on
  Security and Privacy}.\hskip 1em plus 0.5em minus 0.4em\relax {IEEE} Computer
  Society Press, May 2016, pp. 799--818.

\bibitem{CCS:CWPCR17}
R.~Chatterjee, J.~Woodage, Y.~Pnueli, A.~Chowdhury, and T.~Ristenpart, ``The
  {TypTop} system: Personalized typo-tolerant password checking,'' in \emph{ACM
  CCS 2017}, B.~M. Thuraisingham, D.~Evans, T.~Malkin, and D.~Xu, Eds.\hskip
  1em plus 0.5em minus 0.4em\relax {ACM} Press, Oct.~/~Nov. 2017, pp. 329--346.

\bibitem{SP:BloHarZho18}
J.~Blocki, B.~Harsha, and S.~Zhou, ``On the economics of offline password
  cracking,'' in \emph{2018 {IEEE} Symposium on Security and Privacy}.\hskip
  1em plus 0.5em minus 0.4em\relax {IEEE} Computer Society Press, May 2018, pp.
  853--871.

\bibitem{NDSS:BloDatBon16}
J.~Blocki, A.~Datta, and J.~Bonneau, ``Differentially private password
  frequency lists,'' in \emph{NDSS~2016}.\hskip 1em plus 0.5em minus
  0.4em\relax The Internet Society, Feb. 2016.

\bibitem{SP:Bonneau12}
J.~Bonneau, ``The science of guessing: Analyzing an anonymized corpus of 70
  million passwords,'' in \emph{2012 {IEEE} Symposium on Security and
  Privacy}.\hskip 1em plus 0.5em minus 0.4em\relax {IEEE} Computer Society
  Press, May 2012, pp. 538--552.

\bibitem{CSF:BloDat16}
J.~Blocki and A.~Datta, ``{CASH}: {A} cost asymmetric secure hash algorithm for
  optimal password protection,'' in \emph{CSF 2016Computer Security Foundations
  Symposium}, M.~Hicks and B.~Köpf, Eds.\hskip 1em plus 0.5em minus
  0.4em\relax {IEEE} Computer Society Press, 2016, pp. 371--386.

\bibitem{bai2021dahash}
W.~Bai and J.~Blocki, ``Dahash: Distribution aware tuning of password hashing
  costs,'' \emph{arXiv preprint arXiv:2101.10374}, 2021.

\bibitem{SP:BHVS12}
J.~Bonneau, C.~Herley, P.~C. {van Oorschot}, and F.~Stajano, ``The quest to
  replace passwords: A framework for comparative evaluation of web
  authentication schemes,'' in \emph{2012 {IEEE} Symposium on Security and
  Privacy}.\hskip 1em plus 0.5em minus 0.4em\relax {IEEE} Computer Society
  Press, May 2012, pp. 553--567.

\bibitem{valiant2017estimating}
G.~Valiant and P.~Valiant, ``Estimating the unseen: Improved estimators for
  entropy and other properties,'' \emph{Journal of the ACM (JACM)}, vol.~64,
  no.~6, pp. 1--41, 2017.

\bibitem{wang2017zipf}
D.~Wang, H.~Cheng, P.~Wang, X.~Huang, and G.~Jian, ``Zipf’s law in
  passwords,'' \emph{IEEE Transactions on Information Forensics and Security},
  vol.~12, no.~11, pp. 2776--2791, 2017.

\bibitem{CCS:DelFil15}
M.~Dell'Amico and M.~Filippone, ``{Monte} {Carlo} strength evaluation: Fast and
  reliable password checking,'' in \emph{ACM CCS 2015}, I.~Ray, N.~Li, and
  C.~Kruegel, Eds.\hskip 1em plus 0.5em minus 0.4em\relax {ACM} Press, Oct.
  2015, pp. 158--169.

\bibitem{PHC}
J.-P.~A. et~al., ``Password hashing competition,'' 2015,
  \url{https://password-hashing.net/}.

\bibitem{biryukov2016argon2}
A.~Biryukov, D.~Dinu, and D.~Khovratovich, ``Argon2: new generation of
  memory-hard functions for password hashing and other applications,'' in
  \emph{2016 IEEE European Symposium on Security and Privacy (EuroS\&P)}.\hskip
  1em plus 0.5em minus 0.4em\relax IEEE, 2016, pp. 292--302.

\bibitem{CCS:AlwBloHAr17}
J.~Alwen, J.~Blocki, and B.~Harsha, ``Practical graphs for optimal side-channel
  resistant memory-hard functions,'' in \emph{ACM CCS 2017}, B.~M.
  Thuraisingham, D.~Evans, T.~Malkin, and D.~Xu, Eds.\hskip 1em plus 0.5em
  minus 0.4em\relax {ACM} Press, Oct.~/~Nov. 2017, pp. 1001--1017.

\bibitem{percival2009stronger}
C.~Percival, ``Stronger key derivation via sequential memory-hard functions,''
  2009.

\bibitem{EC:BKPS13}
\BIBentryALTinterwordspacing
J.~Blocki, S.~Komanduri, A.~Procaccia, and O.~Sheffet, ``Optimizing password
  composition policies,'' in \emph{Proceedings of the Fourteenth ACM Conference
  on Electronic Commerce}, ser. EC '13.\hskip 1em plus 0.5em minus 0.4em\relax
  New York, NY, USA: Association for Computing Machinery, 2013, p. 105–122.
  [Online]. Available: \url{https://doi.org/10.1145/2482540.2482552}
\BIBentrySTDinterwordspacing

\bibitem{morris1979password}
R.~Morris and K.~Thompson, ``Password security: A case history,''
  \emph{Communications of the ACM}, vol.~22, no.~11, pp. 594--597, 1979.

\bibitem{provos1999bcrypt}
N.~Provos and D.~Mazieres, ``Bcrypt algorithm,'' in \emph{USENIX}, 1999.

\bibitem{kaliski2000pkcs}
B.~Kaliski, ``Pkcs\# 5: Password-based cryptography specification version
  2.0,'' 2000.

\bibitem{CCS:JueRiv13}
A.~Juels and R.~L. Rivest, ``Honeywords: making password-cracking detectable,''
  in \emph{ACM CCS 2013}, A.-R. Sadeghi, V.~D. Gligor, and M.~Yung, Eds.\hskip
  1em plus 0.5em minus 0.4em\relax {ACM} Press, Nov. 2013, pp. 145--160.

\bibitem{USENIX:TPYRKIBPPBB19}
K.~Thomas, J.~Pullman, K.~Yeo, A.~Raghunathan, P.~G. Kelley, L.~Invernizzi,
  B.~Benko, T.~Pietraszek, S.~Patel, D.~Boneh, and E.~Bursztein, ``Protecting
  accounts from credential stuffing with password breach alerting,'' in
  \emph{USENIX Security 2019}, N.~Heninger and P.~Traynor, Eds.\hskip 1em plus
  0.5em minus 0.4em\relax {USENIX} Association, Aug. 2019, pp. 1556--1571.

\bibitem{CCS:LPASCR19}
L.~Li, B.~Pal, J.~Ali, N.~Sullivan, R.~Chatterjee, and T.~Ristenpart,
  ``Protocols for checking compromised credentials,'' in \emph{ACM CCS 2019},
  L.~Cavallaro, J.~Kinder, X.~Wang, and J.~Katz, Eds.\hskip 1em plus 0.5em
  minus 0.4em\relax {ACM} Press, Nov. 2019, pp. 1387--1403.

\bibitem{USENIX:ECSJR15}
A.~Everspaugh, R.~Chatterjee, S.~Scott, A.~Juels, and T.~Ristenpart, ``The
  pythia {PRF} service,'' in \emph{USENIX Security 2015}, J.~Jung and T.~Holz,
  Eds.\hskip 1em plus 0.5em minus 0.4em\relax {USENIX} Association, Aug. 2015,
  pp. 547--562.

\bibitem{CCS:BJRSY06}
J.~Brainard, A.~Juels, R.~L. Rivest, M.~Szydlo, and M.~Yung, ``Fourth-factor
  authentication: Somebody you know,'' in \emph{ACM CCS 2006}, A.~Juels, R.~N.
  Wright, and S.~{De Capitani di Vimercati}, Eds.\hskip 1em plus 0.5em minus
  0.4em\relax {ACM} Press, Oct.~/~Nov. 2006, pp. 168--178.

\bibitem{MFA}
M.~Mannan and P.~C. van Oorschot, ``Leveraging personal devices for stronger
  password authentication from untrusted computers,'' \emph{J. Comput. Secur.},
  vol.~19, no.~4, p. 703–750, Dec. 2011.

\bibitem{Shay2010}
\BIBentryALTinterwordspacing
R.~Shay, S.~Komanduri, P.~G. Kelley, P.~G. Leon, M.~L. Mazurek, L.~Bauer,
  N.~Christin, and L.~F. Cranor, ``Encountering stronger password requirements:
  user attitudes and behaviors,'' in \emph{Proceedings of the Sixth Symposium
  on Usable Privacy and Security}, ser. SOUPS '10.\hskip 1em plus 0.5em minus
  0.4em\relax New York, NY, USA: ACM, 2010, pp. 2:1--2:20. [Online]. Available:
  \url{http://doi.acm.org/10.1145/1837110.1837113}
\BIBentrySTDinterwordspacing

\bibitem{campbell2011impact}
J.~Campbell, W.~Ma, and D.~Kleeman, ``Impact of restrictive composition policy
  on user password choices,'' \emph{Behaviour \& Information Technology},
  vol.~30, no.~3, pp. 379--388, 2011.

\bibitem{Komanduri2011}
\BIBentryALTinterwordspacing
S.~Komanduri, R.~Shay, P.~G. Kelley, M.~L. Mazurek, L.~Bauer, N.~Christin,
  L.~F. Cranor, and S.~Egelman, ``Of passwords and people: measuring the effect
  of password-composition policies,'' in \emph{CHI}, 2011, Conference
  Proceedings, pp. 2595--2604. [Online]. Available:
  \url{http://dl.acm.org/citation.cfm?id=1979321}
\BIBentrySTDinterwordspacing

\bibitem{Adams1999}
A.~Adams and M.~A. Sasse, ``{Users} are not the enemy,'' \emph{Communications
  of the ACM}, vol.~42, no.~12, pp. 40--46, 1999.

\bibitem{USENIX:KSCHS14}
S.~Komanduri, R.~Shay, L.~F. Cranor, C.~Herley, and S.~E. Schechter,
  ``Telepathwords: Preventing weak passwords by reading users' minds,'' in
  \emph{USENIX Security 2014}, K.~Fu and J.~Jung, Eds.\hskip 1em plus 0.5em
  minus 0.4em\relax {USENIX} Association, Aug. 2014, pp. 591--606.

\bibitem{USENIX:BonSch14}
J.~Bonneau and S.~E. Schechter, ``Towards reliable storage of 56-bit secrets in
  human memory,'' in \emph{USENIX Security 2014}, K.~Fu and J.~Jung, Eds.\hskip
  1em plus 0.5em minus 0.4em\relax {USENIX} Association, Aug. 2014, pp.
  607--623.

\bibitem{yan2004password}
J.~Yan, A.~Blackwell, R.~Anderson, and A.~Grant, ``Password memorability and
  security: Empirical results,'' \emph{IEEE Security \& privacy}, vol.~2,
  no.~5, pp. 25--31, 2004.

\bibitem{CCS:YLCXP16}
W.~Yang, N.~Li, O.~Chowdhury, A.~Xiong, and R.~W. Proctor, ``An empirical study
  of mnemonic sentence-based password generation strategies,'' in \emph{ACM CCS
  2016}, E.~R. Weippl, S.~Katzenbeisser, C.~Kruegel, A.~C. Myers, and
  S.~Halevi, Eds.\hskip 1em plus 0.5em minus 0.4em\relax {ACM} Press, Oct.
  2016, pp. 1216--1229.

\bibitem{AC:BloBluDat13}
J.~Blocki, M.~Blum, and A.~Datta, ``Naturally rehearsing passwords,'' in
  \emph{ASIACRYPT~2013, Part~II}, ser. {LNCS}, K.~Sako and P.~Sarkar, Eds.,
  vol. 8270.\hskip 1em plus 0.5em minus 0.4em\relax Springer, Heidelberg, Dec.
  2013, pp. 361--380.

\bibitem{NDSS:BKCD15}
J.~Blocki, S.~Komanduri, L.~F. Cranor, and A.~Datta, ``Spaced repetition and
  mnemonics enable recall of multiple strong passwords,'' in
  \emph{NDSS~2015}.\hskip 1em plus 0.5em minus 0.4em\relax The Internet
  Society, Feb. 2015.

\bibitem{bicycleAttacks}
\BIBentryALTinterwordspacing
B.~Harsha, R.~Morton, J.~Blocki, J.~Springer, and M.~Dark, ``Bicycle attacks
  considered harmful: Quantifying the damage of widespread password length
  leakage,'' \emph{Computers \& Security}, vol. 100, p. 102068, 2021. [Online].
  Available:
  \url{https://www.sciencedirect.com/science/article/pii/S0167404820303412}
\BIBentrySTDinterwordspacing

\bibitem{blocki2020dalock}
J.~Blocki and W.~Zhang, ``Dalock: Distribution aware password throttling,''
  \emph{arXiv preprint arXiv:2005.09039}, 2020.

\bibitem{tian2019stopguessing}
Y.~Tian, C.~Herley, and S.~Schechter, ``Stopguessing: Using guessed passwords
  to thwart online guessing,'' in \emph{2019 IEEE European Symposium on
  Security and Privacy (EuroS\&P)}.\hskip 1em plus 0.5em minus 0.4em\relax
  IEEE, 2019, pp. 576--589.

\bibitem{bai2020information}
W.~Bai, J.~Blocki, and B.~Harsha, ``Information signaling: A counter-intuitive
  defenseagainst password cracking,'' \emph{arXiv preprint arXiv:2009.10060},
  2020.

\bibitem{mcdiarmid1989method}
C.~McDiarmid, ``On the method of bounded differences,'' \emph{Surveys in
  combinatorics}, vol. 141, no.~1, pp. 148--188, 1989.

\bibitem{linkedinrevisited}
R.~Redman, ``Linkedin revisited – full 2012 hash dump analysis,''
  \url{https://blog.korelogic.com/blog/2016/05/19/linkedin_passwords_2016}, May
  19, 2016.

\bibitem{komanduri2016modeling}
S.~Komanduri, ``Modeling the adversary to evaluate password strength with
  limited samples,'' \emph{Ph.D. dissertation}, 2016.

\bibitem{gurobi}
\BIBentryALTinterwordspacing
L.~Gurobi~Optimization, ``Gurobi optimizer reference manual,'' 2021. [Online].
  Available: \url{http://www.gurobi.com}
\BIBentrySTDinterwordspacing

\bibitem{malone2012investigating}
D.~Malone and K.~Maher, ``Investigating the distribution of password choices,''
  in \emph{Proceedings of the 21st international conference on World Wide Web},
  2012, pp. 301--310.

\bibitem{wang2016implications}
D.~Wang and P.~Wang, ``On the implications of zipf’s law in passwords,'' in
  \emph{European Symposium on Research in Computer Security}.\hskip 1em plus
  0.5em minus 0.4em\relax Springer, 2016, pp. 111--131.

\bibitem{NISTAUTHGUIDE}
P.~Grassi and J.~Fenton, ``Nist sp800-63b: Digital authentication guideline,''
  Technical report, NIST, Reston, VA, 2016.
  https://pages.nist.gov/800-63-3/sp800-63b.html, Tech. Rep., 2016.

\bibitem{schechter2010popularity}
S.~Schechter, C.~Herley, and M.~Mitzenmacher, ``Popularity is everything: A new
  approach to protecting passwords from statistical-guessing attacks,'' in
  \emph{5th {USENIX} Workshop on Hot Topics in Security (HotSec 10)}.\hskip 1em
  plus 0.5em minus 0.4em\relax Washington, DC: {USENIX} Association, Aug. 2010.

\bibitem{rockyoudataset}
N.~Cubrilovic, ``Rockyou hack: From bad to worse,'' \url{
  https://techcrunch.com/2009/12/14/rockyou-hack-security-myspace-facebook-passwords/},
  December 15, 2009.

\bibitem{000webhostdataset}
D.~Goodin, ``13 million plaintext passwords belonging to webhost users leaked
  online,''
  \url{https://arstechnica.com/information-technology/2015/10/13-million-plaintext-passwords-belonging-to-webhost-users-leaked-online/},
  October 28, 2015.

\bibitem{neopetsdataset}
J.~Cox, ``Another day, another hack: Tens of millions of neopets accounts,''
  \url{https://motherboard.vice.com/en_us/article/ezpvw7/neopets-hack-another-day-another-hack-tens-of-millions-of-neopets-accounts},
  May 5, 2016.

\bibitem{bfieldsdataset}
J.~Walker, ``Lulzsec over, release battlefield heroes data,''
  \url{https://www.rockpapershotgun.com/2011/06/26/
  lulzsec-over-release-battlefield-heroes-data/}, June 26, 2011.

\bibitem{brazzersdataset}
J.~Cox, ``Nearly 800,000 brazzers porn site accounts exposed in forum hack,''
  \url{https://motherboard.vice.com/en_us/article/vv7pgd/nearly-800000-
  brazzers-porn-site-accounts-exposed-in-forum-hack}, September 5, 2016.

\bibitem{clixsensedataset}
D.~Goodin, ``6.6 million plaintext passwords exposed as site gets hacked to the
  bone,'' \url{https:
  //arstechnica.com/information-technology/2016/09/plaintext-passwords-and-wealth-of-other-data-for-6-6-million-people-go-public/},
  September 13, 2016.

\bibitem{csdndataset}
X.~Yang, ``Chinese internet suffers the most serious user data leak in
  history,''
  \url{https://blogs.forcepoint.com/security-labs/chinese-internet-suffers-most-serious-user-data-leak-history},
  December 26, 2011.

\end{thebibliography}

\appendices
\section{The Linear Program \texttt{LPupper}}\label{app:LPupper}

Below we show the linear program $\mathtt{LPupper}$ that is used to generate the upper bound ($\mathtt{LPUB}(S,G)$) in Theorem~\ref{thm:bound3final-lower}.

\fbox{\begin{minipage}{8cm}
\textbf{Linear Programming Task 3: \\ }$\mathtt{LPupper}(G,X_l,F^S,idx,i',\mathbf{\epsilon_2},\mathbf{\epsilon_3},\mathbf{\hat{x}_{\epsilon_3}})$ \\
\textbf{Input Parameters:} $G$, $X_l=\{x_1,...,x_l\}$, $F^S=\{F^S_1,...,F^S_N\}$, $idx$, $i'$, $\epsilon_2=\{\epsilon_{2,0},\ldots,\epsilon_{2,i'}\}$, $\mathbf{\hat{x}_{\epsilon_3}}=\{\hat{x}_{\epsilon_{3,0}},\ldots,\hat{x}_{\epsilon_{3,i'}}\}$, $\mathbf{\epsilon_3}=\{\epsilon_{3,i}=\frac{1}{q^{i+1}}\left(\frac{1-\hat{x}_{\epsilon_{3,i}}}{1-q\hat{x}_{\epsilon_{3,i}}}\right)^{N-i}-1, 0\leq i\leq i'\}$ \\
\textbf{Variables:} $h_1,...,h_l,c, p$ \\
%\textbf{Objective:} $\min\left(b\times (\sum_{j<idx}h_j+c)\right)$ \\ 
\textbf{Objective:} $\max\left(\sum_{j<idx}h_j\times x_j +c\times x_{idx}\right)$ \\ 
%$\min\left(b\times (\sum_{j<idx}h_j\times x_j +c\times x_{idx})\right)$ \\ 
\textbf{Constraints:} 
\begin{enumerate}
    \item $\sum_{j<idx}h_j+c = G$
    \item $\forall 0\leq i\leq i'$:
    \begin{enumerate}
        \item for $i=0$, $\frac{1}{1+\epsilon_{3,i}}(\frac{(i+1)F^S_{i+1}}{N-i}-\epsilon_{2,i} - \frac{i+1}{N-i} - p - \bpdf(i,N,q\hat{x}_{\epsilon_{3,i}})) \leq \sum_{j=1}^l h_j\times x_j\times \bpdf(i,N,x_j) \leq q^{i+1}(\frac{(i+1)F^S_{i+1}}{N-i}+\epsilon_{2,i} - p\times \bpdf(i,N,x_l))$ 
        \item for $1\leq i\leq i'$, $\frac{1}{1+\epsilon_{3,i}}(\frac{(i+1)F^S_{i+1}}{N-i}-\epsilon_{2,i} - \frac{i+1}{N-i} - p\times \bpdf(i,N,x_l) - \bpdf(i,N,q\hat{x}_{\epsilon_{3,i}})) \leq \sum_{j=1}^l h_j\times x_j\times \bpdf(i,N,x_j) \leq q^{i+1}(\frac{(i+1)F^S_{i+1}}{N-i}+\epsilon_{2,i})$
    \end{enumerate}
    \item $ 1-p\leq \sum_{j=1}^l h_j\times x_j \leq q \times (1 - p)$ 
    \item $0 \leq c \leq h_{idx}$
\end{enumerate}
(\textbf{Note:} we consider $idx=1,2,...,l+1$. When $idx=l+1$, we define $h_{l+1}=G$ and $x_{l+1} = x_l$.)
\end{minipage}}
\fullversion{\section{Zipf's Law in Passwords}
Zipf's  law~\cite{malone2012investigating,wang2016implications,wang2017zipf} has been proposed as reasonable model for the password distribution $\lambda_G$. For example, CDF Zipf's law estimates that $\lambda_G \approx y G^r$ where the constant $y,r>0$ are tuned based on the sample $S$. Several papers \cite{wang2016implications,wang2017zipf,SP:BloHarZho18} find that CDF Zipf's law closely fits all known empirical password distributions. However, there is no theoretical guarantee that the estimate $y G^r$ is close to $\lambda_G$. \fullversion{we compare our upper/lower bounds with the CDF-Zipf curves using the parameters $y,r$ from \cite{wang2016implications} for the datasets RockYou, Battlefield Heroes, 000webhost, CSDN and \cite{SP:BloHarZho18} for the more recent Yahoo! dataset~\cite{SP:Bonneau12,NDSS:BloDatBon16}. The plots and parameter settings for Rockyou, Yahoo! and CSDN datasets are shown in Figure \ref{fig:zipf} and Table \ref{table:zipfparameter}. The comparison results for 000webhost and Battlefield Heroes are similar to the other three datasets. We leave their plots and parameter settings in the full version~\cite{fullversion}.
}{In Figure \ref{fig:zipf} we compare our upper/lower bounds with the CDF-Zipf curves using the parameters $y,r$ from \cite{wang2016implications} for the datasets RockYou, Battlefield Heroes, 000webhost, CSDN and \cite{SP:BloHarZho18} for the more recent Yahoo! dataset~\cite{SP:Bonneau12,NDSS:BloDatBon16} --- see Table \ref{table:zipfparameter}.} In all of the plots the CDF-Zipf plot (green) is close to our best upper bound (red). For the Battlefield, CSDN and 000webhost datasets the CDF-Zipf plot (green) lies in between our best upper bound (red) and our best lower bound (blue) indicating that the curve $yG^r$ is consistent with our statistical bounds. For the RockYou and Yahoo! datasets the CDF-Zipf plots (green) often lie above the red upper bound e.g., for the RockYou (resp. Yahoo!) dataset when $G=33554432$ (resp. $G=134217728$) we have $yG^r \geq 0.96$ (resp. $yG^r \geq 0.98$) while our upper bounds imply that $\lambda_G \leq 0.87$ (resp. $\lambda_G \leq 0.86$). In such cases we can confidently state that the CDF-Zipf curve overestimates $\lambda_G$. 

%zipf law
\input{Plots/zipf}

\begin{table}[!t]
  \caption{CDF Zipf's Law Parameters~\cite{wang2016implications,SP:BloHarZho18}}
  \label{table:zipfparameter}
  \centering
  \begin{tabular}{ccc}
    \toprule
    \textbf{Dataset ($S$)} & y & r \\
    \midrule
    Yahoo!~\cite{NDSS:BloDatBon16} & 0.03315 & 0.1811 \\ 
    RockYou~\cite{rockyoudataset}  & 0.037433 & 0.187227 \\ 
    \fullversion{}{000webhost~\cite{000webhostdataset} & 0.005858& 0.281557 \\ }
    \fullversion{}{Battlefield Heroes~\cite{bfieldsdataset} & 0.010298 & 0.294932  \\} 
    CSDN~\cite{csdndataset} & 0.058799 & 0.148573  \\    
  \bottomrule
\end{tabular}
\end{table}}{}
\section{Missing Figures And Tables}\label{app:figures}

The basic information of the eight datasets used in this paper is in Table~\ref{table:BasicInformation}.

\begin{table}[!t]
  \vspace{-0.1cm}
  \caption{Basic Information for Password Datasets}
  \vspace{-0.15cm}
  \label{table:BasicInformation}
  \centering
  \begin{tabular}{crrc}
    \toprule
    \textbf{Dataset ($S$)} & \textbf{\# Passwords ($N$)} & \textbf{\# Distinct} & \textbf{\# Unique} \\
    \midrule
    Yahoo!~\cite{NDSS:BloDatBon16} & 69301337 & 33895873 & 29452171 \\ 
    RockYou~\cite{rockyoudataset}  & 32603388 & 14344391 & 11884632 \\ 
    %LinkedIn~\cite{bicycleAttacks} & 174292189 & 57431283 & 21424510 \\ 
    000webhost~\cite{000webhostdataset} & 15268903 & 10592935 & 9006529 \\ 
    Neopets~\cite{neopetsdataset} & 68345757 & 27987227 & 21509860 \\ 
    Battlefield Heroes~\cite{bfieldsdataset} & 541016 & 416130  & 373549  \\ 
    Brazzers~\cite{brazzersdataset} & 925614 & 587934 & 491136 \\ 
    Clixsense~\cite{clixsensedataset} & 2222529 & 1628577 & 1455585 \\ 
    CSDN~\cite{csdndataset} & 6428449 & 4037749 & 3581824  \\    
  \bottomrule
\end{tabular}
%\vspace{-.70cm}
\end{table}

Figure~\ref{fig:yahoocorpus} shows the comparison results of Yahoo! frequency corpus discussed in Section~\ref{sec:application_yahoocorpus}.
%%%%yahoo corpus
\begin{figure*}\centering
%yahoo req
\begin{subfigure}[htb]{0.30\textwidth}
%\vspace{-0.2cm}
      \begin{tikzpicture}[scale=0.6]
      \begin{axis}[
        title style={align=center},
        xlabel={\# of Guesses ($G$)},
        xmin = {1},
        xmode = log,
        log basis x={10},
        xlabel shift = -3pt,
        ylabel={\% Cracked Passwords},
        ymax={100},
        ymin = {0},
        %ymode = log,
        %log basis y={2},
        %ylabel shift = -3pt,
        grid=major,
        %small,
        cycle list = { {black}, {black,dashed}, {cyan}, {cyan, dashed}, {magenta}, {magenta,dashed}, {green}, {green, dashed}, {orange}, {orange, dashed}, {olive}, {olive, dashed}},
        legend style = {font=\small}, %{font=\fontsize{3}{3}\selectfont},
        %{font=\tiny},
        legend pos = north west, %outer north east,
        legend entries = { 5-10y ub, 5-10y lb, 0-1y ub, 0-1y lb }
        %, 1-2y ub, 1-2y lb, 2-3y ub, 2-3y lb, 3-4y ub, 3-4y lb, 4-5y ub, 4-5y lb 
      ]
      \addlegendimage{no markers, black}
      \addlegendimage{dashed, black}
      \addlegendimage{no markers, cyan}
      \addlegendimage{dashed, cyan}
      %\addlegendimage{no markers, magenta}
      %\addlegendimage{dashed, magenta}
      %\addlegendimage{no markers, green}
      %\addlegendimage{dashed, green}
      %\addlegendimage{no markers, orange}
      %\addlegendimage{dashed, orange}
      %\addlegendimage{no markers, olive}
      %\addlegendimage{dashed, olive}

    %yahoo-5-10y best ub
    \addplot coordinates {(1,0.960356) (2,1.07466) (4,1.24574) (8,1.54023) (16,1.86362) (32,2.31153) (64,2.97732) (128,3.98035) (256,5.28204) (512,6.95081) (1024,8.99671) (2048,11.4163) (4096,14.1633) (8192,17.1249) (16384,20.3033) (32768,23.9129) (65536,28.0729) (131072,32.7666) (262144,37.817) (524288,42.9915) (1048576,48.4528) 
    (2.09715e+06, 54.2552)
     (4.1943e+06, 58.9121)
     (8.38861e+06, 64.2257)
     (1.67772e+07, 69.3899)
     (3.35544e+07, 76.0345)
     (6.71089e+07, 85.0845)
     (1.34218e+08, 97.6359)
     (1.54218e+08, 100.201)
     %(2.68435e+08, 100.24)
    };

    %yahoo-5-10yfrequencyub
    %\addplot coordinates {(1,0.960356) (2,1.07466) (4,1.24574) (8,1.54023) (16,1.86362) (32,2.31153) (64,2.97732) (128,3.98035) (256,5.28204) (512,6.95081) (1024,8.99671) (2048,11.4163) (4096,14.1633) (8192,17.1249) (16384,20.3033) (32768,23.9129) (65536,28.0729) (131072,32.7666) (262144,37.817) (524288,42.9915) (1048576,48.4528) (2097152,54.8945) (4194304,62.1) (8388608,76.511) (15236685,100.04) };

    %yahoo-5-10y best lb
    \addplot coordinates {(1,0.770908) (2,0.878608) (4,1.05541) (8,1.35641) (16,1.68091) (32,2.11921) (64,2.77771) (128,3.79401) (256,5.07651) (512,6.75521) (1024,8.81711) (2048,11.21) (4096,13.9461) (8192,16.8669) (16384,19.985) (32768,23.5146) (65536,27.5421) (131072,32.0191) (262144,36.6819) (524288,41.0112) (1048576,44.6497) (2097152,47.4627) 
     (4.1943e+06, 50.2425)
     (8.38861e+06, 51.7157)
     (1.67772e+07, 54.3966)
     (3.35544e+07, 58.7042)
     (6.71089e+07, 59.9971)
     (1.34218e+08, 60.0202)
     (2.68435e+08, 60.0663)
     (5.36871e+08, 60.1585)
     (1.07374e+09, 60.343)
     (2.14748e+09, 60.53) };

    %lambda_Gnewlowerbound-d=1000000-yahoo-5-10y_plots.txt
    %1 <= G <= distinct  delta = 0.01delta1 = 9e-05  delta2 = 0.00991
    %\addplot coordinates {(1,0.770908) (2,0.878608) (4,1.05541) (8,1.35641) (16,1.68091) (32,2.11921) (64,2.77771) (128,3.79401) (256,5.07651) (512,6.75521) (1024,8.81711) (2048,11.21) (4096,13.9461) (8192,16.8669) (16384,19.985) (32768,23.5146) (65536,27.5421) (131072,32.0191) (262144,36.6819) (524288,41.0112) (1048576,44.6497) (2097152,47.4627) (4194304,48.5008) (8388608,50.5621) (14776534,53.7209) };

     %yahoo-0y upper bound

     %yahoo-0y  lambda_G  myLP2  max
     
     \addplot coordinates {(1,0.935976) (2,1.26971) (4,1.50496) (8,1.75888) (16,2.07438) (32,2.38515) (64,2.79711) (128,3.40556) (256,4.27826) (512,5.40385) (1024,6.77994) (2048,8.77412) (4096,11.4657) (8192,14.3999) (16384,17.5038) (32768,20.7755) (65536,24.5145) (131072,28.9406) (262144,34.4322) 
     (524288, 39.5969)
     (1.04858e+06, 44.9943)
     (2.09715e+06, 49.9288)
     (4.1943e+06, 56.1358)
     (8.38861e+06, 64.4596)
     (1.67772e+07, 75.874)
     (3.35544e+07, 91.825)
     (4.35544e+07, 99.3798)
     (5.35544e+07, 100.217)
     %(6.71089e+07, 100.243)
     };

    %yahoo-0y frequency ub
    %\addplot coordinates {(1,0.935976) (2,1.26971) (4,1.50496) (8,1.75888) (16,2.07438) (32,2.38515) (64,2.79711) (128,3.40556) (256,4.27826) (512,5.40385) (1024,6.77994) (2048,8.77412) (4096,11.4657) (8192,14.3999) (16384,17.5038) (32768,20.7755) (65536,24.5145) (131072,28.9406) (262144,34.4322) (524288,40.7281) (1048576,50.8446) (2097152,71.0775) (3600983,100.095) };

   %yahoo-0y best lowerbound
   \addplot coordinates {(1,0.693708) (2,1.02451) (4,1.26371) (8,1.51611) (16,1.82851) (32,2.14111) (64,2.54481) (128,3.15121) (256,4.01231) (512,5.12921) (1024,6.46651) (2048,8.40431) (4096,11.085) (8192,13.9391) (16384,16.8767) (32768,19.7255) (65536,22.5843) (131072,25.5363) (262144,28.6116) 
   (524288, 30.5887)
     (1.04858e+06, 32.5729)
     (2.09715e+06, 34.1114)
     (4.1943e+06, 37.1502)
     (8.38861e+06, 41.2632)
     (1.67772e+07, 41.6906)
     (3.35544e+07, 41.723)
     (6.71089e+07, 41.7878)
     (1.34218e+08, 41.9172)
     (2.68435e+08, 42.1762)
     (5.36871e+08, 42.4665)
     };

    \end{axis}
  \end{tikzpicture}

\vspace{-.2cm}
 \caption{Account Tenure} \label{fig:yahootenure}
%\vspace{-0.15cm}
\end{subfigure}
%\vspace{0.5em}
%yahoo req
\begin{subfigure}[htb]{0.3\textwidth}%{0.45\textwidth}
%\vspace{-0.2cm}
      \begin{tikzpicture}[scale=0.6]
      \begin{axis}[
        title style={align=center},
        xlabel={\# of Guesses ($G$)},
        xmin = {1},
        xmode = log,
        log basis x={10},
        xlabel shift = -3pt,
        ylabel={\% Cracked Passwords},
        ymax={100},
        ymin = {0},
        %ymode = log,
        %log basis y={2},
        %ylabel shift = -3pt,
        grid=major,
        %small,
        cycle list = { {black}, {black,dashed}, {cyan}, {cyan, dashed}, {magenta}, {magenta,dashed}, {green}, {green, dashed}},
        legend style = {font=\small}, %{font=\fontsize{3}{3}\selectfont},
        %{font=\tiny},
        legend pos = north west, %outer north east,
        legend entries = {none ub, none lb, 6 char minimum ub, 6 char minimum lb}
      ]
      \addlegendimage{no markers, black}
      \addlegendimage{dashed, black}
      \addlegendimage{no markers, cyan}
      \addlegendimage{dashed, cyan}
      %\addlegendimage{no markers, magenta}
      %\addlegendimage{dashed, magenta}
      %\addlegendimage{no markers, green}
      %\addlegendimage{dashed, green}
      
     %yahoo-nonereq best upper bound
      \addplot coordinates {(1,1.06855) (2,1.17568) (4,1.3355) (8,1.60976) (16,1.9526) (32,2.43202) (64,3.14171) (128,4.19755) (256,5.60167) (512,7.38374) (1024,9.57744) (2048,12.1665) (4096,15.0884) (8192,18.2268) (16384,21.6069) (32768,25.5258) (65536,29.9898) (131072,34.9148) (262144,40.0981) (524288,45.4643) (1048576,51.3438) (2.09715e+06, 56.4277) (4.1943e+06, 61.7983) (8.38861e+06, 66.5364) (1.67772e+07, 72.3536) (3.35544e+07, 80.0998) (6.71089e+07, 90.794) (1.34218e+08, 100.213) };
      
      %yahoo-nonereq frequency lambda_G upper bound
    %yahoo-nonereq
    %\addplot coordinates {(1,1.06855) (2,1.17568) (4,1.3355) (8,1.60976) (16,1.9526) (32,2.43202) (64,3.14171) (128,4.19755) (256,5.60167) (512,7.38374) (1024,9.57744) (2048,12.1665) (4096,15.0884) (8192,18.2268) (16384,21.6069) (32768,25.5258) (65536,29.9898) (131072,34.9148) (262144,40.0981) (524288,45.4643) (1048576,51.3438) (2097152,57.8094) (4194304,68.072) (8388608,88.5972) (10728510,100.048) };

    %yahoo-nonereq  lambda_G  myLP2  max
    %\addplot coordinates { (1, 40.9842) (2, 40.9843) (4, 40.9843) (8, 40.9843) (16, 40.9845) (32, 40.9847) (64, 40.9853) (128, 40.9864) (256, 40.9887) (512, 40.9932) (1024, 41.0022) (2048, 41.0203) (4096, 41.0564) (8192, 41.1286) (16384, 41.2727) (32768, 41.5605) (65536, 42.1336) (131072, 43.2696) (262144, 45.4725) (524288, 48.7401) (1.04858e+06, 52.4912) (2.09715e+06, 56.4277) (4.1943e+06, 61.7983) (8.38861e+06, 66.5364) (1.67772e+07, 72.3536) (3.35544e+07, 80.0998) (6.71089e+07, 90.794) (1.34218e+08, 100.213) };

    %yahoo-nonereq best lower bound
    \addplot coordinates {(1,0.850908) (2,0.963608) (4,1.12501) (8,1.39301) (16,1.73931) (32,2.21661) (64,2.93211) (128,3.99051) (256,5.38911) (512,7.16051) (1024,9.34931) (2048,11.9435) (4096,14.8634) (8192,18.0016) (16384,21.3342) (32768,25.1382) (65536,29.4131) (131072,34.0704) (262144,38.6845) (524288,42.7746) (1048576,46.135) (2097152,47.7919) (4.1943e+06, 50.7657) (8.38861e+06, 52.6671) (1.67772e+07, 56.2644) (3.35544e+07, 59.4388) (6.71089e+07, 59.4552) (1.34218e+08, 59.4881) (2.68435e+08, 59.5538) (5.36871e+08, 59.6851) (1.07374e+09, 59.9479) (2.14748e+09, 59.9859) (4.29497e+09, 59.9859) };

    %lambda_Gnewlowerbound-d=1000000-yahoo-nonereq_plots.txt
    %1 <= G <= distinct  delta = 0.01delta1 = 9e-05  delta2 = 0.00991
    %\addplot coordinates {(1,0.850908) (2,0.963608) (4,1.12501) (8,1.39301) (16,1.73931) (32,2.21661) (64,2.93211) (128,3.99051) (256,5.38911) (512,7.16051) (1024,9.34931) (2048,11.9435) (4096,14.8634) (8192,18.0016) (16384,21.3342) (32768,25.1382) (65536,29.4131) (131072,34.0704) (262144,38.6845) (524288,42.7746) (1048576,46.135) (2097152,47.7919) (4194304,49.2687) (8388608,52.205) (10266814,53.5132) };

    %yahoo-nonereq  lambda_G  myLP2  min
    %\addplot coordinates { (1, -2.18618e-08) (2, -2.18618e-08) (4, 0.000173878) (8, 0.000347778) (16, 0.000695578) (32, 0.00139118) (64, 0.00278238) (128, 0.00556488) (256, 0.0111297) (512, 0.0222594) (1024, 0.0445188) (2048, 0.0890377) (4096, 0.178076) (8192, 0.356151) (16384, 0.712302) (32768, 1.4246) (65536, 2.8492) (131072, 5.69841) (262144, 11.3968) (524288, 22.7936) (1.04858e+06, 41.737) (2.09715e+06, 47.7917) (4.1943e+06, 50.7657) (8.38861e+06, 52.6671) (1.67772e+07, 56.2644) (3.35544e+07, 59.4388) (6.71089e+07, 59.4552) (1.34218e+08, 59.4881) (2.68435e+08, 59.5538) (5.36871e+08, 59.6851) (1.07374e+09, 59.9479) (2.14748e+09, 59.9859) (4.29497e+09, 59.9859) (8.58993e+09, 59.9859) (1.71799e+10, 59.9859) (5.49756e+11, 59.9859) };

    %yahoo-6charmin  best upper bound
    \addplot coordinates {(1,1.20171) (2,1.42078) (4,1.58828) (8,1.8797) (16,2.19691) (32,2.60122) (64,3.18202) (128,4.00213) (256,5.12775) (512,6.54083) (1024,8.31144) (2048,10.4245) (4096,12.8903) (8192,15.5856) (16384,18.6236) (32768,22.2381) (65536,26.5104) (131072,31.3486) (262144,36.5253) (524288,42.0775)  (1.04858e+06, 48.0369) (2.09715e+06, 52.9919) (4.1943e+06, 58.0863) (8.38861e+06, 63.4835) (1.67772e+07, 70.5496) (3.35544e+07, 80.2073) (6.71089e+07, 93.6606) 
    (8.71089e+07, 100.025)
    %(1.34218e+08, 100.235) 
    };
    
    %yahoo-6charmin  frequency upper bound
    %yahoo-6charmin
    %\addplot coordinates {(1,1.20171) (2,1.42078) (4,1.58828) (8,1.8797) (16,2.19691) (32,2.60122) (64,3.18202) (128,4.00213) (256,5.12775) (512,6.54083) (1024,8.31144) (2048,10.4245) (4096,12.8903) (8192,15.5856) (16384,18.6236) (32768,22.2381) (65536,26.5104) (131072,31.3486) (262144,36.5253) (524288,42.0775) (1048576,48.5977) (2097152,56.4626) (4194304,72.1924) (7909580,100.059) };
    
    %yahoo-6charmin  lambda_G  myLP2  max
    %\addplot coordinates { (1, 34.2924) (2, 34.2924) (4, 34.2924) (8, 34.2925) (16, 34.2928) (32, 34.2932) (64, 34.2941) (128, 34.2958) (256, 34.2992) (512, 34.3061) (1024, 34.32) (2048, 34.3476) (4096, 34.4028) (8192, 34.5131) (16384, 34.7335) (32768, 35.1728) (65536, 36.0455) (131072, 37.7662) (262144, 40.6771) (524288, 44.3017) (1.04858e+06, 48.0369) (2.09715e+06, 52.9919) (4.1943e+06, 58.0863) (8.38861e+06, 63.4835) (1.67772e+07, 70.5496) (3.35544e+07, 80.2073) (6.71089e+07, 93.6606) (1.34218e+08, 100.235) };

    %yahoo-6charmin best lower bound
    \addplot coordinates {(1,0.982808) (2,1.19791) (4,1.36311) (8,1.64591) (16,1.95821) (32,2.36121) (64,2.93201) (128,3.75811) (256,4.88411) (512,6.29401) (1024,8.05171) (2048,10.1677) (4096,12.624) (8192,15.2804) (16384,18.2258) (32768,21.7162) (65536,25.7222) (131072,30.1033) (262144,34.3584) (524288,37.8512) (1048576,40.3666) (2.09715e+06, 43.24) (4.1943e+06, 44.5585) (8.38861e+06, 47.1905) (1.67772e+07, 51.3871) (3.35544e+07, 52.5011) (6.71089e+07, 52.5262) (1.34218e+08, 52.5766) (2.68435e+08, 52.6772) (5.36871e+08, 52.8786) (1.07374e+09, 53.136) (2.14748e+09, 53.136) (4.29497e+09, 53.136) };

    %lambda_Gnewlowerbound-d=1000000-yahoo-6charmin_plots.txt
    %1 <= G <= distinct  delta = 0.01delta1 = 9e-05  delta2 = 0.00991
    %\addplot coordinates {(1,0.982808) (2,1.19791) (4,1.36311) (8,1.64591) (16,1.95821) (32,2.36121) (64,2.93201) (128,3.75811) (256,4.88411) (512,6.29401) (1024,8.05171) (2048,10.1677) (4096,12.624) (8192,15.2804) (16384,18.2258) (32768,21.7162) (65536,25.7222) (131072,30.1033) (262144,34.3584) (524288,37.8512) (1048576,40.3666) (2097152,41.3765) (4194304,43.457) (7379213,46.5519) };

    %yahoo-6charmin  lambda_G  myLP2  min
    %\addplot coordinates { (1, 3.48094e-09) (2, 0.000125803) (4, 0.000251503) (8, 0.000503103) (16, 0.0010062) (32, 0.0020123) (64, 0.0040246) (128, 0.0080493) (256, 0.0160985) (512, 0.032197) (1024, 0.0643941) (2048, 0.128788) (4096, 0.257576) (8192, 0.515152) (16384, 1.03031) (32768, 2.06061) (65536, 4.12122) (131072, 8.24244) (262144, 16.4849) (524288, 32.9464) (1.04858e+06, 38.9696) (2.09715e+06, 43.24) (4.1943e+06, 44.5585) (8.38861e+06, 47.1905) (1.67772e+07, 51.3871) (3.35544e+07, 52.5011) (6.71089e+07, 52.5262) (1.34218e+08, 52.5766) (2.68435e+08, 52.6772) (5.36871e+08, 52.8786) (1.07374e+09, 53.136) (2.14748e+09, 53.136) (4.29497e+09, 53.136) (8.58993e+09, 53.136) (1.71799e+10, 53.136) (3.43597e+10, 53.136) (6.87195e+10, 53.136) (1.37439e+11, 53.136) (2.74878e+11, 53.136) (5.49756e+11, 53.136) };    
    
    \end{axis}
  \end{tikzpicture}
\vspace{-.2cm}
 \caption{Password Requirements At Registration} \label{fig:yahooreq}
%\vspace{-.15cm}
\end{subfigure}
%\vspace{0.5em}
%\hfill
%yahoo gender
\begin{subfigure}[htb]{0.3\textwidth}
%\vspace{-0.2cm}
      \begin{tikzpicture}[scale=0.6]
      \begin{axis}[
        title style={align=center},
        xlabel={\# of Guesses ($G$)},
        xmin = {1},
        xmode = log,
        log basis x={10},
        xlabel shift = -3pt,
        ylabel={\% Cracked Passwords},
        ymax={100},
        ymin = {0},
        %ymode = log,
        %log basis y={2},
        %ylabel shift = -3pt,
        grid=major,
        %small,
        cycle list = { {black}, {black,dashed}, {cyan}, {cyan, dashed}, {magenta}, {magenta,dashed}, {green}, {green, dashed}},
        legend style = {font=\small}, %{font=\fontsize{3}{3}\selectfont},
        %{font=\tiny},
        legend pos = north west, %outer north east,
        legend entries = {female ub, female lb, male ub, male lb}
      ]
      \addlegendimage{no markers, black}
      \addlegendimage{dashed, black}
      \addlegendimage{no markers, cyan}
      \addlegendimage{dashed, cyan}
      %\addlegendimage{no markers, magenta}
      %\addlegendimage{dashed, magenta}
      %\addlegendimage{no markers, green}
      %\addlegendimage{dashed, green}
      
      %yahoo-female best upper bound
      \addplot coordinates {(1,0.896962) (2,1.05908) (4,1.2408) (8,1.49424) (16,1.81866) (32,2.28213) (64,2.96363) (128,3.9945) (256,5.33269) (512,7.0052) (1024,9.05464) (2048,11.4158) (4096,14.0415) (8192,16.9146) (16384,20.0788) (32768,23.722) (65536,28.0029) (131072,32.8467) (262144,38.0222) (524288,43.3395) (1048576,48.8387) (2.09715e+06, 54.7016) (4.1943e+06, 59.2246) (8.38861e+06, 64.7467) (1.67772e+07, 69.9519) (3.35544e+07, 76.577) (6.71089e+07, 85.5687) (1.34218e+08, 98.0337)
      (1.54218e+08, 100.202)
      %(2.68435e+08, 100.239) 
      };
    %yahoo-female frequency
    %\addplot coordinates {(1,0.896962) (2,1.05908) (4,1.2408) (8,1.49424) (16,1.81866) (32,2.28213) (64,2.96363) (128,3.9945) (256,5.33269) (512,7.0052) (1024,9.05464) (2048,11.4158) (4096,14.0415) (8192,16.9146) (16384,20.0788) (32768,23.722) (65536,28.0029) (131072,32.8467) (262144,38.0222) (524288,43.3395) (1048576,48.8387) (2097152,55.3319) (4194304,62.1975) (8388608,75.9288) (15753280,100.039) };

%yahoo female best lowerbounds
\addplot coordinates {(1,0.711408) (2,0.872708) (4,1.06291) (8,1.32061) (16,1.64781) (32,2.10761) (64,2.79941) (128,3.82981) (256,5.16451) (512,6.84031) (1024,8.87211) (2048,11.2571) (4096,13.8917) (8192,16.7482) (16384,19.8802) (32768,23.4373) (65536,27.5913) (131072,32.2193) (262144,36.9941) (524288,41.5056) (1048576,45.2746) (2097152,48.1767)  (4.1943e+06, 50.682)
 (8.38861e+06, 52.3257)
 (1.67772e+07, 55.017)
 (3.35544e+07, 59.4992)
 (6.71089e+07, 61.2041)
 (1.34218e+08, 61.226)
 (2.68435e+08, 61.27)
 (5.36871e+08, 61.3579)
 (1.07374e+09, 61.5336)
 (2.14748e+09, 61.7214) };

    %lambda_Gnewlowerbound-d=1000000-yahoo-female_plots.txt
    %1 <= G <= distinct  delta = 0.01delta1 = 9e-05  delta2 = 0.00991
    %\addplot coordinates {(1,0.711408) (2,0.872708) (4,1.06291) (8,1.32061) (16,1.64781) (32,2.10761) (64,2.79941) (128,3.82981) (256,5.16451) (512,6.84031) (1024,8.87211) (2048,11.2571) (4096,13.8917) (8192,16.7482) (16384,19.8802) (32768,23.4373) (65536,27.5913) (131072,32.2193) (262144,36.9941) (524288,41.5056) (1048576,45.2746) (2097152,48.1767) (4194304,49.1947) (8388608,51.2149) (15301438,54.5535) };

    %yahoo-male best upper bound
    \addplot coordinates {(1,1.30442) (2,1.56192) (4,1.76338) (8,2.09232) (16,2.48907) (32,2.95068) (64,3.52595) (128,4.30686) (256,5.35404) (512,6.69782) (1024,8.38167) (2048,10.4399) (4096,12.8333) (8192,15.4842) (16384,18.3854) (32768,21.6833) (65536,25.6022) (131072,30.0498) (262144,34.817) (524288,39.7489) (1048576,44.8917) (2097152,50.83) 
    (4.1943e+06, 55.4863)
 (8.38861e+06, 61.1937)
 (1.67772e+07, 66.7338)
 (3.35544e+07, 73.0044)
 (6.71089e+07, 81.3248)
 (1.34218e+08, 92.7084)
 (1.54218e+08, 95.4756)
 (1.74218e+08, 98.0643)
 (1.94218e+08, 100.2)
 %(2.68435e+08, 100.22) 
 };
    
    %yahoo-male frequencyub
    %\addplot coordinates {(1,1.30442) (2,1.56192) (4,1.76338) (8,2.09232) (16,2.48907) (32,2.95068) (64,3.52595) (128,4.30686) (256,5.35404) (512,6.69782) (1024,8.38167) (2048,10.4399) (4096,12.8333) (8192,15.4842) (16384,18.3854) (32768,21.6833) (65536,25.6022) (131072,30.0498) (262144,34.817) (524288,39.7489) (1048576,44.8917) (2097152,50.83) (4194304,57.4877) (8388608,68.3468) (16777216,90.0652) (20627909,100.035) };

    %yahoo male best lowerbound
    \addplot coordinates {(1,1.12351) (2,1.37641) (4,1.57911) (8,1.90851) (16,2.30681) (32,2.77641) (64,3.34591) (128,4.13321) (256,5.19171) (512,6.53801) (1024,8.22301) (2048,10.2794) (4096,12.6511) (8192,15.2985) (16384,18.1775) (32768,21.3749) (65536,25.1783) (131072,29.4928) (262144,33.9305) (524288,38.2237) (1048576,42.0357) (2097152,45.0785) 
    (4.1943e+06, 47.0195)
     (8.38861e+06, 49.9273)
     (1.67772e+07, 52.102)
     (3.35544e+07, 56.2448)
     (6.71089e+07, 60.3747)
     (1.34218e+08, 60.3965)
     (2.68435e+08, 60.4313)
     (5.36871e+08, 60.5008)
     (1.07374e+09, 60.6398)
     (2.14748e+09, 60.9127)
     };

    %lambda_Gnewlowerbound-d=1000000-yahoo-male_plots.txt
    %1 <= G <= distinct  delta = 0.01delta1 = 9e-05  delta2 = 0.00991
    %\addplot coordinates {(1,1.12351) (2,1.37641) (4,1.57911) (8,1.90851) (16,2.30681) (32,2.77641) (64,3.34591) (128,4.13321) (256,5.19171) (512,6.53801) (1024,8.22301) (2048,10.2794) (4096,12.6511) (8192,15.2985) (16384,18.1775) (32768,21.3749) (65536,25.1783) (131072,29.4928) (262144,33.9305) (524288,38.2237) (1048576,42.0357) (2097152,45.0785) (4194304,46.7547) (8388608,48.3692) (16777216,51.6244) (20159687,52.9316) };

    \end{axis}
  \end{tikzpicture}

\vspace{-.2cm}
 \caption{Gender (self-reported)} \label{fig:yahoogender}
%\vspace{-.15cm}
%\vspace{0.3cm}
\end{subfigure}
%\vspace{0.5em}
%\hfill
%\newline
\vspace{-0.2cm}
\caption{Analysis on Yahoo! Corpus}
\label{fig:yahoocorpus}
%\vspace{-0.4cm}
\end{figure*}

%linkedin
%\input{Plots/linkedin}

We have shown the best upper/lower bounds for Yahoo!, RockYou, 000webhost, and Neopets in Figure~\ref{fig:bestguessingcurves} in the main body of the paper. Here we plot the best upper/lower bounds for  Battlefield Heroes, Brazzers, Clixsense, and CSDN datasets in Figure~\ref{fig:app-bestguessingcurves}.
%remaining best bounds
\begin{figure}\centering
%\begin{subfigure}[b]{0.45\textwidth}
\vspace{-0.2cm}
      \begin{tikzpicture}[scale=0.6]
      \begin{axis}[
        title style={align=center},
        xlabel={\# of Guesses ($G$)},
        xmin = {1},
        xmax = {10^16},
        xmode = log,
        log basis x={10},
        xlabel shift = -3pt,
        ylabel={\% Cracked Passwords ($\lambda_G$)},
        ymax={100},
        ymin = {0},
        %ymode = log,
        %log basis y={2},
        %ylabel shift = -3pt,
        grid=major,
        %small,
        %cycle list = {{red, mark=triangle},  {red, dashed,mark=triangle}, {blue, mark=square},{blue,dashed, mark=square},  {green, mark=diamond}, {green, dashed,mark=diamond}}
        cycle list = { {black}, {black,dashed}, {cyan}, {cyan, dashed}, {magenta}, {magenta,dashed}, {green}, {green, dashed}},
        legend style = {font=\small},%{font=\tiny},
        legend pos = outer north east,%south east,%outer north east,%north west,
        legend entries = {BattlefieldHero ub, BattlefieldHero lb, Brazzer ub, Brazzer lb, Clixsense ub, Clixsense lb, CSDN ub, CSDN lb}
      ]
      \addlegendimage{no markers, black}
      \addlegendimage{dashed, black}
      %\addlegendimage{no markers, gray}
      %\addlegendimage{no markers, blue}
      \addlegendimage{no markers, cyan}
      \addlegendimage{dashed, cyan}
      %\addlegendimage{no markers, violet}
      \addlegendimage{no markers, magenta}
      \addlegendimage{dashed, magenta}
      %\addlegendimage{no markers, teal}
      \addlegendimage{no markers, green}
      \addlegendimage{dashed, green}
      
     %bfield best upper bound
    \addplot coordinates {(1,0.771778) (2,0.903012) (4,1.09506) (8,1.3298) (16,1.70151) (32,2.22294) (64,2.94362) (128,3.88167) (256,5.06721) (512,6.58584) (1024,8.51129) (2048,10.8785) (4096,13.7274) (8192,17.1678) (16384,21.5632) (32768,27.62) 
    %(32768, 27.7765)
     (65536, 32.9858)
     (131072, 40.7454)
     (262144, 48.7366)
     (524288, 58.1881)
     (1.04858e+06, 70.9568)
     (2.09715e+06, 88.7054)
     (4.1943e+06, 100.223) };

    %bfield best lower bound
    \addplot coordinates {(1,-0.727288) (2,-0.583288) (4,-0.367288) (8,-0.179288) (16,0.160712) (32,0.708712) (64,1.46071) (128,2.36071) (256,3.52071) (512,4.97671) (1024,6.98471) (2048,9.16471) (4096,11.7087) (8192,14.1607) (16384,16.5407) (32768,18.5967) (65536,20.3327) (131072, 22.1847) (262144,25.6807) (524288, 32.0409)
     (1.04858e+06, 39.3487)
     (2.09715e+06, 39.6827)
     (4.1943e+06,40.8274) (8.38861e+06,44.1634) (1.67772e+07,49.6914) (3.35544e+07,52.7874) (6.71089e+07,54.9234) (1.34218e+08,57.6074) (2.68435e+08,60.1234) (5.36871e+08,62.9234) (1.07374e+09,65.7274) (2.14748e+09,68.2234) (4.29497e+09,70.4714) (8.58993e+09,72.7194) (1.71799e+10,74.5954) (3.43597e+10,76.5034) (6.87195e+10,79.1114) (1.37439e+11,80.9474) (2.74878e+11,82.2394) (5.49756e+11,84.0674) (1.09951e+12,85.6754) (2.19902e+12,86.7634) (4.39805e+12,88.3594) (8.79609e+12,89.6514) (1.75922e+13,90.4634) (3.51844e+13,91.6474) (7.03687e+13,92.6874) (1.40737e+14,93.2234) (2.81475e+14,93.9874) (5.6295e+14,94.7034) (1.1259e+15,95.0954) (2.2518e+15,95.5554) (4.5036e+15,96.0914) (9.0072e+15,96.3554) (1e+16,96.4114) 
     };

    %brazzers best upperbound
    %brazzers
    \addplot coordinates {(1,0.80286) (2,1.26493) (4,1.72203) (8,2.06278) (16,2.52474) (32,3.18701) (64,3.98453) (128,5.14074) (256,6.75901) (512,8.82759) (1024,11.3939) (2048,14.4578) (4096,18.0367) (8192,22.0835) (16384,26.7884) (32768,32.706) (65536,40.4089) (131072, 47.8269)
     (262144, 57.2776)
     (524288, 67.038)
     (1.04858e+06, 77.886)
     (2.09715e+06, 92.3372)
     (4.1943e+06, 100.215) };
    
    %brazzers  best lower bound
    \addplot coordinates {(1,-0.695791) (2,-0.695791) (4,-0.223791) (8,0.148209) (16,0.624209) (32,1.39621) (64,2.08021) (128,3.20421) (256,4.96821) (512,6.92821) (1024,9.33621) (2048,12.2402) (4096,15.7882) (8192,19.4642) (16384,22.8962) (32768,26.5442) (65536,30.1962) (131072, 33.9423)
     (262144, 38.8023)
     (524288, 44.6133)
     (1.04858e+06, 54.4467)
     (2.09715e+06, 58.0515)
     (4.1943e+06, 58.0741)
     (8.38861e+06, 58.1194)
     (1.67772e+07,63.7674) (3.35544e+07,66.5754) (6.71089e+07,68.8314) (1.34218e+08,71.3394) (2.68435e+08,73.6554) (5.36871e+08,76.0434) (1.07374e+09,78.1074) (2.14748e+09,80.0394) (4.29497e+09,81.6754) (8.58993e+09,83.3034) (1.71799e+10,84.5794) (3.43597e+10,85.8154) (6.87195e+10,87.4634) (1.37439e+11,88.7634) (2.74878e+11,89.5194) (5.49756e+11,90.6754) (1.09951e+12,91.4114) (2.19902e+12,91.9914) (4.39805e+12,92.6634) (8.79609e+12,93.2594) (1.75922e+13,93.7274) (3.51844e+13,94.3634) (7.03687e+13,94.9074) (1.40737e+14,95.1274) (2.81475e+14,95.4114) (5.6295e+14,95.6914) (1.1259e+15,95.7994) (2.2518e+15,95.9914) (4.5036e+15,96.2274) (9.0072e+15,96.3354) (1e+16,96.3434) 
     };

     %clixsense best upper bound
    \addplot coordinates {(1,0.949166) (2,1.09738) (4,1.28014) (8,1.53422) (16,1.84247) (32,2.14141) (64,2.53623) (128,3.07044) (256,3.79345) (512,4.94164) (1024,6.58414) (2048,8.57746) (4096,10.8924) (8192,13.5206) (16384,16.5411) (32768,20.1487) (65536,24.8842) (131072,30.8802) (262144, 36.2333)
     (524288, 43.3275)
     (1.04858e+06, 51.0357)
     (2.09715e+06, 59.6372)
     (4.1943e+06, 71.1096)
     (8.38861e+06, 86.7998)
     (1.67772e+07, 100.216) };

    %clixsense best lower bound
    \addplot coordinates {(1,-0.299282) (2,-0.0752823) (4,0.0767177) (8,0.312718) (16,0.628718) (32,0.916718) (64,1.32872) (128,1.79672) (256,2.57272) (512,3.70072) (1024,5.35272) (2048,7.26072) (4096,9.48472) (8192,11.9487) (16384,14.6727) (32768,17.2167) (65536,20.3127) (131072,22.7487) (262144, 25.1386)
     (524288, 27.7971)
     (1.04858e+06, 30.6247)
     (2.09715e+06, 36.1184)
     (4.1943e+06, 43.0037)
     (8.38861e+06, 43.5821)
     (1.67772e+07,47.3154) (3.35544e+07,50.1674) (6.71089e+07,52.8394) (1.34218e+08,55.5474) (2.68435e+08,58.0434) (5.36871e+08,60.5994) (1.07374e+09,63.0554) (2.14748e+09,65.4514) (4.29497e+09,68.0994) (8.58993e+09,70.6114) (1.71799e+10,72.7914) (3.43597e+10,74.9514) (6.87195e+10,77.4754) (1.37439e+11,79.4434) (2.74878e+11,80.8994) (5.49756e+11,82.8434) (1.09951e+12,84.3434) (2.19902e+12,85.3514) (4.39805e+12,86.6594) (8.79609e+12,87.8034) (1.75922e+13,88.6314) (3.51844e+13,89.6154) (7.03687e+13,90.5594) (1.40737e+14,91.0834) (2.81475e+14,91.7674) (5.6295e+14,92.5394) (1.1259e+15,92.9154) (2.2518e+15,93.3714) (4.5036e+15,94.0794) (9.0072e+15,94.3594) (1e+16,94.3914) };

    %csdn best upperbound
    \addplot coordinates {(1,3.74093) (2,7.05043) (4,8.95445) (8,10.3196) (16,10.9835) (32,11.7707) (64,12.775) (128,13.6396) (256,14.4731) (512,15.4077) (1024,16.6743) (2048,18.415) (4096,20.7587) (8192,23.3233) (16384,25.5672) (32768,27.8101) (65536,30.4126) (131072,33.8073) (262144,38.3379) (524288, 43.0939)
     (1.04858e+06, 48.549)
     (2.09715e+06, 56.4268)
     (4.1943e+06, 64.1956)
     (8.38861e+06, 73.5898)
     (1.67772e+07, 86.1397)
     (3.35544e+07, 100.206) };

    %csdn  best lower bound
    \addplot coordinates {(1,2.65608) (2,6.04408) (4,8.09608) (8,9.48808) (16,10.1681) (32,10.9881) (64,12.0441) (128,12.9361) (256,13.7961) (512,14.7841) (1024,16.1041) (2048,17.8361) (4096,20.4641) (8192,22.9521) (16384,25.0201) (32768,26.9241) (65536,28.7481) (131072,30.7841) (262144,32.5081) (524288,34.7921) (1.04858e+06, 37.3967)
     (2.09715e+06, 39.8581)
     (4.1943e+06, 44.1232)
     (8.38861e+06, 51.3936)
     (1.67772e+07, 54.7118)
     (3.35544e+07, 54.7379)
     (6.71089e+07, 54.7901)
     (1.34218e+08, 54.8945)
     (2.68435e+08, 55.1032)
     (5.36871e+08, 55.3261)
     (1.07374e+09, 55.3261)
     (2.14748e+09,55.5234) (4.29497e+09,57.3994) (8.58993e+09,59.4634) (1.71799e+10,62.2474) (3.43597e+10,65.5074) (6.87195e+10,68.9954) (1.37439e+11,71.4354) (2.74878e+11,74.5394) (5.49756e+11,77.1994) (1.09951e+12,79.2714) (2.19902e+12,81.0154) (4.39805e+12,82.6234) (8.79609e+12,84.2834) (1.75922e+13,85.3834) (3.51844e+13,86.5834) (7.03687e+13,87.9154) (1.40737e+14,88.8634) (2.81475e+14,89.7954) (5.6295e+14,90.7394) (1.1259e+15,91.2994) (2.2518e+15,91.9594) (4.5036e+15,92.7274) (9.0072e+15,93.2034) (1e+16,93.2554) };

    \end{axis}
  \end{tikzpicture}
\vspace{-.2cm}
 \caption{Best Upper/Lower Bounds for Battlefield Heroes, Brazzers, Clixsense, and CSDN} 
 \label{fig:app-bestguessingcurves}
%\end{subfigure}
%\vspace{-.25cm}
\end{figure}

\fullversion{}{Figure~\ref{fig:attackerknowledge-app} shows the performance of the partial knowledge attacker for 000webhost, Yahoo! and RockYou datasets similar to Figure~\ref{fig:neopetsattackerknowledge}.
%attacker's knowledge
%%%%attacker's knowledge
\begin{figure*}\centering
%neopets
%\input{Plots/attackerknowledge-neopets}
%\vspace{0.5em}
%000wehost
\begin{subfigure}[htb]{0.29\textwidth}
%\vspace{-0.2cm}
      \begin{tikzpicture}[scale=0.6]
      \begin{axis}[
        title style={align=center},
        xlabel={\# of Guesses ($G$)},
        xmin = {1},
        xmax = {10^9},
        xmode = log,
        log basis x={10},
        xlabel shift = -3pt,
        ylabel={\% Cracked Passwords ($\lambda_G$)},
        ymax={100},
        ymin = {0},
        %ymode = log,
        %log basis y={2},
        %ylabel shift = -3pt,
        grid=major,
        %small,
        cycle list = { 
        {black}, {black,dashed}, 
        {cyan}, %{cyan, dashed}, 
        {magenta}, %{magenta,dashed}, 
        {green}, %{green, dashed}, 
        {orange}, %{orange, dashed}, 
        %{olive}, {olive, dashed}, 
        %{violet}, {violet, dashed}, 
        %{yellow}, {yellow, dashed}, 
        {red}, %{red, dashed},
        {blue}, {blue, dashed},
        {teal}, {teal, dashed}
        },
        legend style = {font=\small}, %{font=\fontsize{3}{3}\selectfont},
        %{font=\tiny},
        legend pos = north west,
        legend entries = { 
        best ub, best lb, 
        k=$10^4$,
        k=$10^5$,
        k=$10^6$,
        k=$10^7$,
        k=$15243903$,
        MinGuessingNumber,
        }
      ]
      \addlegendimage{no markers, black}
      \addlegendimage{dashed, black}
      \addlegendimage{no markers, cyan}
      %\addlegendimage{dashed, cyan}
      \addlegendimage{no markers, magenta}
      %\addlegendimage{dashed, magenta}
      \addlegendimage{no markers, green}
      %\addlegendimage{dashed, green}
      \addlegendimage{no markers, orange}
      %\addlegendimage{dashed, orange}
      %\addlegendimage{no markers, olive}
      %\addlegendimage{dashed, olive}
      %\addlegendimage{no markers, violet}
      %\addlegendimage{dashed, violet}
      %\addlegendimage{no markers, yellow}
      %\addlegendimage{dashed, yellow}
      \addlegendimage{no markers, red}
      %\addlegendimage{dashed, red}
      \addlegendimage{no markers, blue}
      %\addlegendimage{dashed, blue}
      %\addlegendimage{no markers, teal}
      %\addlegendimage{dashed, teal}
    
     %000webhost best bound
    \addplot coordinates {(1,0.218426) (2,0.317399) (4,0.470442) (8,0.72841) (16,1.09745) (32,1.53355) (64,2.04037) (128,2.54783) (256,3.08776) (512,3.69693) (1024,4.39047) (2048,5.21538) (4096,6.19832) (8192,7.3926) (16384,8.87285) (32768,10.7385) (65536,13.1469) (131072,16.3296) (262144,20.6116) (524288,26.4025) (1048576,34.0244)
     (2.09715e+06, 40.5463)
     (4.1943e+06, 48.9989)
     (8.38861e+06, 60.0133)
     (1.67772e+07, 70.1013)
     (3.35544e+07, 82.9886)
     (6.71089e+07, 100.201) };

    %000webhost best lower bound
    \addplot coordinates {(1,-0.858182) (2,-0.810182) (4,-0.634182) (8,-0.358182) (16,0.00581763) (32,0.461818) (64,0.973818) (128,1.51382) (256,2.00182) (512,2.51782) (1024,3.19382) (2048,3.99782) (4096,4.87382) (8192,5.93382) (16384,7.47382) (32768,9.19382) (65536,11.1938) (131072,13.8378) (262144,16.9898) (524288,20.7498) (1048576,24.7098)  (2.09715e+06, 29.7143)
     (4.1943e+06, 35.0216)
     (8.38861e+06, 39.1658)
     (1.67772e+07, 47.1249)
     (3.35544e+07, 55.5526)
     (6.71089e+07, 55.642)
     (1.34218e+08, 55.686)
     (2.68435e+08, 55.7739)
     (5.36871e+08, 55.9497)
     (1.07374e+09, 56.2309)
     (2.14748e+09, 56.2309)
     (4.29497e+09, 56.2309)
     (8.58993e+09, 56.2309)
     (1.71799e+10, 56.2309)
     (3.43597e+10, 56.2309)
     (6.87195e+10,57.4434) (1.37439e+11,58.9754) (2.74878e+11,60.2594) (5.49756e+11,62.0594) (1.09951e+12,63.7874) (2.19902e+12,65.0834) (4.39805e+12,67.0314) (8.79609e+12,69.1794) (1.75922e+13,70.5514) (3.51844e+13,72.7354) (7.03687e+13,74.8914) (1.40737e+14,76.0074) (2.81475e+14,77.3474) (5.6295e+14,78.7674) (1.1259e+15,79.4714) (2.2518e+15,80.2794) (4.5036e+15,81.5314) (9.0072e+15,82.1594) (1e+16,82.2434) };
    
    %AttackerKnowledgeComp-d=25000-000webhost-freq_plots.txtAttackerKnowledgeComparison
%N = 15268903  d = 25000  distinct = 10592935
%k = 10000
\addplot coordinates {(1,0.112) (2,0.172) (4,0.3) (8,0.532) (16,0.828) (32,1.184) (64,1.416) (128,1.668) (256,1.688) (512,1.752) (1024,1.796) (2048,1.884) (4096,2.112) (8192,2.52) (9826,2.66) };
%k = 100000
\addplot coordinates {(1,0.112) (2,0.172) (4,0.288) (8,0.564) (16,0.908) (32,1.38) (64,1.928) (128,2.38) (256,2.82) (512,3.348) (1024,3.676) (2048,4.02) (4096,4.092) (8192,4.184) (16384,4.444) (32768,4.884) (65536,5.688) (95470,6.392) };
%k = 1000000
\addplot coordinates {(1,0.112) (2,0.172) (4,0.296) (8,0.584) (16,0.972) (32,1.388) (64,1.904) (128,2.448) (256,3.056) (512,3.624) (1024,4.256) (2048,5.092) (4096,5.92) (8192,6.776) (16384,7.716) (32768,8.748) (65536,9.428) (131072,9.976) (262144,10.964) (524288,12.596) (891409,15.256) };
%k = 10000000
\addplot coordinates {(1,0.112) (2,0.172) (4,0.296) (8,0.584) (16,0.956) (32,1.384) (64,1.888) (128,2.48) (256,3.004) (512,3.652) (1024,4.384) (2048,5.212) (4096,6.172) (8192,7.404) (16384,8.768) (32768,10.424) (65536,12.352) (131072,14.884) (262144,17.648) (524288,20.908) (1048576,24.716) (2097152,26.7) (4194304,30.48) (7357744,36.3) };
%k = 15243903
\addplot coordinates {(1,0.112) (2,0.172) (4,0.296) (8,0.584) (16,0.956) (32,1.388) (64,1.884) (128,2.484) (256,3.008) (512,3.68) (1024,4.372) (2048,5.228) (4096,6.236) (8192,7.42) (16384,8.76) (32768,10.444) (65536,12.476) (131072,15.16) (262144,18.296) (524288,21.9) (1048576,25.64) (2097152,29.692) (4194304,32.524) (8388608,38.26) (10578265,41.304) };

%MinGuessingNumber
%./minguessingnumber/000webhost_minguessingnumber.txt
\addplot coordinates {(1,0) (2,0) (4,0) (8,0) (16,0.16) (32,0.188) (64,0.188) (128,0.32) (256,0.436) (512,0.788) (1024,1.112) (2048,1.348) (4096,1.82) (8192,2.32) (16384,2.98) (32768,3.788) (65536,4.704) (131072,5.784) (262144,7.12) (524288,8.224) (1.04858e+06,9.036) (2.09715e+06,9.916) (4.1943e+06,11.404) (8.38861e+06,14.712) (1.67772e+07,20.928) (3.35544e+07,24.86) (6.71089e+07,27.608) (1.34218e+08,30.084) (2.68435e+08,32.64) (5.36871e+08,34.86) (1.07374e+09,37.388) (2.14748e+09,39.94) (4.29497e+09,42.156) (8.58993e+09,44.352) (1.71799e+10,46.416) (3.43597e+10,48.228) (6.87195e+10,50.524) (1.37439e+11,52.768) (2.74878e+11,54.86) (5.49756e+11,57.276) (1.09951e+12,59.812) (2.19902e+12,62.084) (4.39805e+12,64.292) (8.79609e+12,66.924) (1.75922e+13,68.848) (3.51844e+13,71.236) (7.03687e+13,73.26) (1.40737e+14,74.624) (2.81475e+14,76.008) (5.6295e+14,77.424) (1.1259e+15,78.384) (2.2518e+15,79.428) (4.5036e+15,80.712) (9.0072e+15,81.5) (1e+16,81.62) };
    \end{axis}
  \end{tikzpicture}
\vspace{-.2cm}
 \caption{000webhost} \label{fig:000webhostattackerknowledge}
%\vspace{-.15cm}
\end{subfigure}
%\vspace{0.5em}
%\hfill
%\newline
%yahoo
\begin{subfigure}[htb]{0.29\textwidth}
%\vspace{-0.2cm}
      \begin{tikzpicture}[scale=0.6]
      \begin{axis}[
        title style={align=center},
        xlabel={\# of Guesses ($G$)},
        xmin = {1},
        xmax = {10^9},
        xmode = log,
        log basis x={10},
        xlabel shift = -3pt,
        ylabel={\% Cracked Passwords ($\lambda_G$)},
        ymax={100},
        ymin = {0},
        %ymode = log,
        %log basis y={2},
        %ylabel shift = -3pt,
        grid=major,
        %small,
        cycle list = { 
        {black}, {black,dashed}, 
        {cyan}, %{cyan, dashed}, 
        {magenta}, %{magenta,dashed}, 
        {green}, %{green, dashed}, 
        {orange}, %{orange, dashed}, 
        %{olive}, {olive, dashed}, 
        %{violet}, {violet, dashed}, 
        %{yellow}, {yellow, dashed}, 
        {red}, {red, dashed},
        {blue}, {blue, dashed},
        {teal}, {teal, dashed}
        },
        legend style = {font=\small}, %{font=\fontsize{3}{3}\selectfont},
        %{font=\tiny},
        legend pos = north west,
        legend entries = { 
        best ub, best lb, 
        k=$10^4$,
        k=$10^5$,
        k=$10^6$,
        k=$10^7$,
        k=$69276337$,
        }
      ]
      \addlegendimage{no markers, black}
      \addlegendimage{dashed, black}
      \addlegendimage{no markers, cyan}
      %\addlegendimage{dashed, cyan}
      \addlegendimage{no markers, magenta}
      %\addlegendimage{dashed, magenta}
      \addlegendimage{no markers, green}
      %\addlegendimage{dashed, green}
      \addlegendimage{no markers, orange}
      %\addlegendimage{dashed, orange}
      %\addlegendimage{no markers, olive}
      %\addlegendimage{dashed, olive}
      %\addlegendimage{no markers, violet}
      %\addlegendimage{dashed, violet}
      %\addlegendimage{no markers, yellow}
      %\addlegendimage{dashed, yellow}
      \addlegendimage{no markers, red}
      %\addlegendimage{dashed, red}
      %\addlegendimage{no markers, blue}
      %\addlegendimage{dashed, blue}
      %\addlegendimage{no markers, teal}
      %\addlegendimage{dashed, teal}
    
    %yahoo best upper bound
    \addplot coordinates {(1,1.1128) (2,1.32785) (4,1.4971) (8,1.79541) (16,2.13606) (32,2.54794) (64,3.13958) (128,3.99445) (256,5.14545) (512,6.60003) (1024,8.41074) (2048,10.5658) (4096,13.039) (8192,15.7479) (16384,18.7154) (32768,22.1076) (65536,26.1203) (131072,30.6748) (262144,35.55) (524288,40.4617) (1048576,45.3936) (2097152,50.6323) (4.1943e+06, 56.4314)
     (8.38861e+06, 60.6758)
     (1.67772e+07, 66.379)
     (3.35544e+07, 71.4755)
     (6.71089e+07, 77.7174)
     (1.34218e+08, 86.0332)
     (2.68435e+08, 97.505)
     (2.88435e+08, 98.9301)
     (3.08435e+08, 100.2) };
    
    %yahoo best lower bound
     \addplot coordinates {(1,0.115353) (2,0.347353) (4,0.539353) (8,0.843353) (16,1.19535) (32,1.64735) (64,2.17935) (128,3.04735) (256,4.22735) (512,5.58735) (1024,7.45535) (2048,9.76735) (4096,12.1114) (8192,14.7434) (16384,17.6194) (32768,20.9634) (65536,24.7874) (131072,29.4634) (262144,34.1434) (524288,38.5514) (1048576,42.9114) (2097152,46.5394) (4194304,49.5874) (8.38861e+06, 52.5008)
     (1.67772e+07, 54.8429)
     (3.35544e+07, 57.3192)
     (6.71089e+07, 61.7635)
     (7.71089e+07, 62.7236)
     (8.71089e+07, 63.5193)
     (9.71089e+07, 64.1023)
     (1.07109e+08, 64.4276)
     (1.17109e+08, 64.5013)
     (1.27109e+08, 64.5027)
     (1.34218e+08, 64.5037)
     (2.68435e+08, 64.5231)
     (5.36871e+08, 64.5618)
     (1.07374e+09, 64.6393)
     (2.14748e+09, 64.7942)
     (4.29497e+09, 64.9779)
     (5.49756e+11, 64.9779) };
    
    %AttackerKnowledgeComp-d=25000-yahoo_plots.txtAttackerKnowledgeComparison
%N = 69301337  d = 25000  distinct = 33895873
%k = 10000
\addplot coordinates {(1,1.076) (2,1.268) (4,1.424) (8,1.708) (16,1.972) (32,2.28) (64,2.68) (128,3.14) (256,3.72) (512,3.8) (1024,4.028) (2048,4.428) (4096,5.036) (8192,6.408) (9508,6.864) };
%k = 100000
\addplot coordinates {(1,1.076) (2,1.268) (4,1.404) (8,1.704) (16,2.048) (32,2.516) (64,3.06) (128,3.932) (256,4.972) (512,6.364) (1024,7.844) (2048,9.4) (4096,11.188) (8192,11.424) (16384,11.92) (32768,12.924) (65536,14.808) (88072,16.1) };
%k = 1000000
\addplot coordinates {(1,1.076) (2,1.268) (4,1.44) (8,1.704) (16,2.06) (32,2.496) (64,3.08) (128,3.956) (256,5.18) (512,6.624) (1024,8.428) (2048,10.64) (4096,12.868) (8192,15.44) (16384,17.84) (32768,20.164) (65536,22.948) (131072,23.552) (262144,24.792) (524288,27.348) (763247,29.608) };
%k = 10000000
\addplot coordinates {(1,1.076) (2,1.268) (4,1.436) (8,1.704) (16,2.06) (32,2.484) (64,3.116) (128,3.908) (256,5.156) (512,6.684) (1024,8.532) (2048,10.72) (4096,13.18) (8192,15.808) (16384,18.604) (32768,21.788) (65536,25.764) (131072,29.884) (262144,34.248) (524288,37.784) (1048576,39.364) (2097152,40.464) (4194304,42.808) (6135480,45.116) };
%k = 69276337
\addplot coordinates {(1,1.076) (2,1.268) (4,1.4) (8,1.704) (16,2.06) (32,2.488) (64,3.084) (128,3.92) (256,5.16) (512,6.692) (1024,8.584) (2048,10.76) (4096,13.168) (8192,15.78) (16384,18.8) (32768,22.116) (65536,26.052) (131072,30.364) (262144,35.312) (524288,39.836) (1048576,43.952) (2097152,47.712) (4194304,50.596) (8388608,51.824) (16777216,53.724) (33554432,57.452) (33885254,57.524) };

    \end{axis}
  \end{tikzpicture}

\vspace{-.2cm}
 \caption{Yahoo} \label{fig:yahooattackerknowledge}
%\vspace{-.15cm}
\end{subfigure}
%\vspace{0.5em}
%\hfill
%\
%rockyou
\begin{subfigure}[htb]{0.33\textwidth}
%\vspace{-0.2cm}
      \begin{tikzpicture}[scale=0.6]
      \begin{axis}[
        title style={align=center},
        xlabel={\# of Guesses ($G$)},
        xmin = {1},
        xmax = {10^9},
        xmode = log,
        log basis x={10},
        xlabel shift = -3pt,
        ylabel={\% Cracked Passwords ($\lambda_G$)},
        ymax={100},
        ymin = {0},
        %ymode = log,
        %log basis y={2},
        %ylabel shift = -3pt,
        grid=major,
        %small,
        cycle list = { 
        {black}, {black,dashed}, 
        {cyan}, %{cyan, dashed}, 
        {magenta}, %{magenta,dashed}, 
        {green}, %{green, dashed}, 
        {orange}, %{orange, dashed}, 
        %{olive}, {olive, dashed}, 
        %{violet}, {violet, dashed}, 
        %{yellow}, {yellow, dashed}, 
        {red}, {red, dashed},
        {blue}, {blue, dashed},
        {teal}, {teal, dashed}
        },
        legend style = {font=\small}, %{font=\fontsize{3}{3}\selectfont},
        %{font=\tiny},
        legend pos = north west,
        legend entries = { 
        best ub, best lb, 
        k=$10^4$,
        k=$10^5$,
        k=$10^6$,
        k=$10^7$,
        k=$32578388$,
        }
      ]
      \addlegendimage{no markers, black}
      \addlegendimage{dashed, black}
      \addlegendimage{no markers, cyan}
      %\addlegendimage{dashed, cyan}
      \addlegendimage{no markers, magenta}
      %\addlegendimage{dashed, magenta}
      \addlegendimage{no markers, green}
      %\addlegendimage{dashed, green}
      \addlegendimage{no markers, orange}
      %\addlegendimage{dashed, orange}
      %\addlegendimage{no markers, olive}
      %\addlegendimage{dashed, olive}
      %\addlegendimage{no markers, violet}
      %\addlegendimage{dashed, violet}
      %\addlegendimage{no markers, yellow}
      %\addlegendimage{dashed, yellow}
      \addlegendimage{no markers, red}
      %\addlegendimage{dashed, red}
      %\addlegendimage{no markers, blue}
      %\addlegendimage{dashed, blue}
      %\addlegendimage{no markers, teal}
      %\addlegendimage{dashed, teal}
    
    %rockyou best upper bound
    \addplot coordinates {(1,0.929511) (2,1.17205) (4,1.58996) (8,1.97602) (16,2.36573) (32,2.96563) (64,3.8348) (128,5.06415) (256,6.71577) (512,8.83802) (1024,11.4333) (2048,14.4886) (4096,17.8038) (8192,21.2952) (16384,25.0103) (32768,29.1282) (65536,33.7334) (131072,38.6622) (262144,43.7835) (524288,49.0365) (1048576,54.6604)  (2.09715e+06, 61.0338)
     (4.1943e+06, 65.844)
     (8.38861e+06, 72.5909)
     (1.67772e+07, 79.3481)
     (3.35544e+07, 86.8013)
     (6.71089e+07, 96.6657)
     (7.71089e+07, 99.0158)
     (8.71089e+07, 100.201) };

    %rockyou best lower bound
    \addplot coordinates {(1,-0.0766468) (2,0.207353) (4,0.547353) (8,0.915353) (16,1.32735) (32,1.88735) (64,2.80735) (128,4.01535) (256,5.60335) (512,7.68735) (1024,10.2714) (2048,13.3794) (4096,16.5794) (8192,19.9794) (16384,23.7954) (32768,27.7274) (65536,32.2514) (131072,36.9834) (262144,41.9554) (524288,46.6634) (1048576,50.6474) (2097152,53.6474)  (4.1943e+06, 56.458)
     (8.38861e+06, 60.3043)
     (1.67772e+07, 64.4719)
     (3.35544e+07, 71.0155)
     (4.35544e+07, 72.6686)
     (5.35544e+07, 72.7036)
     (6.35544e+07, 72.7067)
     (6.71089e+07, 72.7078)
     (1.34218e+08, 72.7283)
     (2.68435e+08, 72.7695)
     (5.36871e+08, 72.8518)
     (1.07374e+09, 73.0165)
     (2.14748e+09, 73.075)};
   
    %AttackerKnowledgeComp-d=25000-rockyou_plots.txtAttackerKnowledgeComparison
%N = 32603388  d = 25000  distinct = 14344391
%k = 10000
\addplot coordinates {(1,0.844) (2,1.088) (4,1.484) (8,1.892) (16,2.204) (32,2.62) (64,3.352) (128,4.3) (256,5.152) (512,6.008) (1024,6.22) (2048,6.776) (4096,7.672) (8192,9.728) (9261,10.24) };
%k = 100000
\addplot coordinates {(1,0.844) (2,1.108) (4,1.588) (8,1.94) (16,2.392) (32,3.056) (64,3.86) (128,5.132) (256,6.432) (512,8.42) (1024,10.644) (2048,13.176) (4096,15.336) (8192,16.74) (16384,17.248) (32768,18.628) (65536,21.128) (82579,22.492) };
%k = 1000000
\addplot coordinates {(1,0.844) (2,1.108) (4,1.588) (8,1.992) (16,2.396) (32,3.052) (64,4.008) (128,5.116) (256,6.636) (512,8.72) (1024,11.252) (2048,14.16) (4096,17.588) (8192,20.8) (16384,24.116) (32768,27.348) (65536,30.556) (131072,31.576) (262144,32.996) (524288,36.1) (688035,37.9) };
%k = 10000000
\addplot coordinates {(1,0.844) (2,1.108) (4,1.588) (8,1.992) (16,2.456) (32,3.068) (64,4) (128,5.148) (256,6.624) (512,8.696) (1024,11.26) (2048,14.336) (4096,17.632) (8192,21.116) (16384,24.96) (32768,28.86) (65536,33.232) (131072,37.848) (262144,42.08) (524288,45.548) (1048576,47.964) (2097152,49.624) (4194304,52.848) (5264363,54.74) };
%k = 32578388
\addplot coordinates {(1,0.844) (2,1.108) (4,1.588) (8,1.992) (16,2.452) (32,3.084) (64,4.028) (128,5.152) (256,6.628) (512,8.724) (1024,11.272) (2048,14.304) (4096,17.652) (8192,21.22) (16384,24.98) (32768,28.952) (65536,33.408) (131072,38.264) (262144,43.096) (524288,47.668) (1048576,51.524) (2097152,54.528) (4194304,56.776) (8388608,59.712) (14335347,63.82) };

    \end{axis}
  \end{tikzpicture}

\vspace{-.2cm}
 \caption{RockYou} \label{fig:rockyouattackerknowledge}
%\vspace{-.15cm}
\end{subfigure}
%\vspace{-0.2cm}
\caption{Attacker's Knowledge (000webhost, Yahoo!, RockYou)}
\label{fig:attackerknowledge-app}
%\vspace{-0.4cm}
\end{figure*}
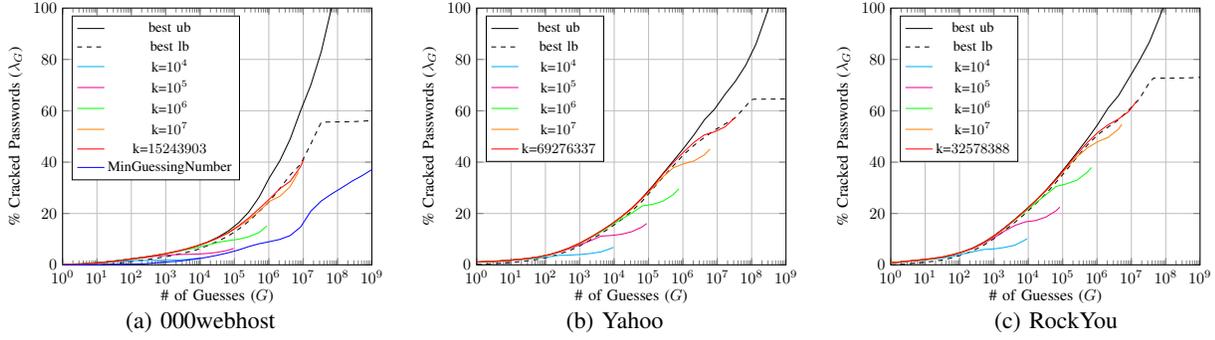
}

\fullversion{}{
Table~\ref{table:zipfparameter} and Figure~\ref{fig:zipf} show the Zipf's Law parameters and the compraison results.

\input{Plots/zipf}
}

\fullversion{}{
Figure~\ref{fig:hashcost-app} shows the upper and lower bounds of tuning password hash cost parameters of 000webhost, Yahoo!, and RockYou datasets, similar to Figure~\ref{fig:hashcostneopets} in Section~\ref{sec:application_hashcost}.

%%%%attacker's knowledge
\begin{figure*}
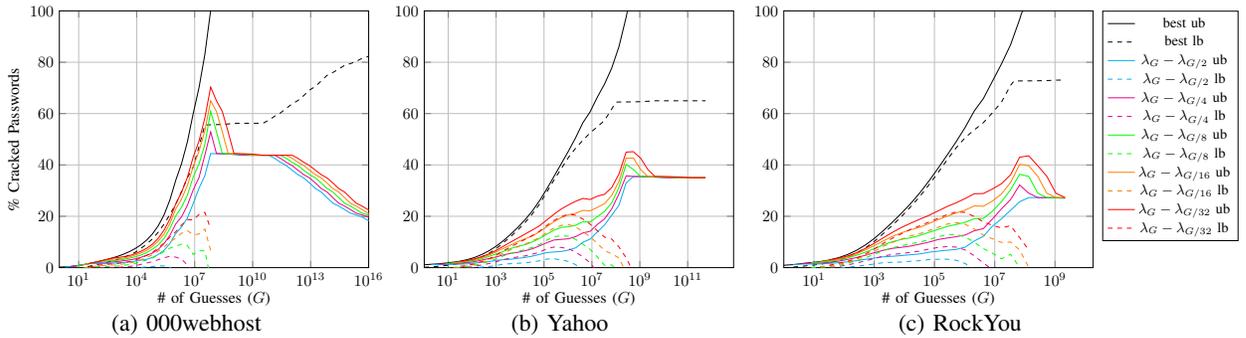
\centering
%neopets
%\input{Plots/attackerknowledge-neopets}
%\vspace{0.5em}
%000wehost
\input{Plots/hashcost-000webhost}
%\vspace{0.5em}
%\hfill
%\newline
%yahoo
\input{Plots/hashcost-yahoo}
%\vspace{0.5em}
%\hfill
%\
%rockyou
\begin{subfigure}[htb]{0.31\textwidth}
%\vspace{-0.2cm}
      \begin{tikzpicture}[scale=0.6]
      \begin{axis}[
        title style={align=center},
        xlabel={\# of Guesses ($G$)},
        xmin = {1},
        xmode = log,
        log basis x={10},
        xlabel shift = -3pt,
        ymax={100},
        ymin = {0},
        %ymode = log,
        %log basis y={2},
        %ylabel shift = -3pt,
        grid=major,
        %small,
        cycle list = { 
        {black}, {black,dashed}, 
        {cyan}, {cyan, dashed}, 
        {magenta}, {magenta,dashed}, 
        {green}, {green, dashed}, 
        {orange}, {orange, dashed}, 
        %{olive}, {olive, dashed}, 
        %{violet}, {violet, dashed}, 
        %{yellow}, {yellow, dashed}, 
        {red}, {red, dashed},
        %{blue}, {blue, dashed},
        %{teal}, {teal, dashed}
        },
        legend style = {font=\small}, %{font=\fontsize{3}{3}\selectfont},
        %{font=\tiny},
        legend pos = outer north east,
        legend entries = { 
        best ub, best lb, 
        $\lambda_G-\lambda_{G/2}$ ub, $\lambda_G-\lambda_{G/2}$ lb,
        $\lambda_G-\lambda_{G/4}$ ub, 
        $\lambda_G-\lambda_{G/4}$ lb,
        $\lambda_G-\lambda_{G/8}$ ub,
        $\lambda_G-\lambda_{G/8}$ lb,
        $\lambda_G-\lambda_{G/16}$ ub,
        $\lambda_G-\lambda_{G/16}$ lb,
        $\lambda_G-\lambda_{G/32}$ ub,
        $\lambda_G-\lambda_{G/32}$ lb,
        }
      ]
      \addlegendimage{no markers, black}
      \addlegendimage{dashed, black}
      \addlegendimage{no markers, cyan}
      \addlegendimage{dashed, cyan}
      \addlegendimage{no markers, magenta}
      \addlegendimage{dashed, magenta}
      \addlegendimage{no markers, green}
      \addlegendimage{dashed, green}
      \addlegendimage{no markers, orange}
      \addlegendimage{dashed, orange}
      %\addlegendimage{no markers, olive}
      %\addlegendimage{dashed, olive}
      %\addlegendimage{no markers, violet}
      %\addlegendimage{dashed, violet}
      %\addlegendimage{no markers, yellow}
      %\addlegendimage{dashed, yellow}
      \addlegendimage{no markers, red}
      \addlegendimage{dashed, red}
      %\addlegendimage{no markers, blue}
      %\addlegendimage{dashed, blue}
      %\addlegendimage{no markers, teal}
      %\addlegendimage{dashed, teal}
    
    %rockyou best upper bound
    \addplot coordinates {(1,0.929511) (2,1.17205) (4,1.58996) (8,1.97602) (16,2.36573) (32,2.96563) (64,3.8348) (128,5.06415) (256,6.71577) (512,8.83802) (1024,11.4333) (2048,14.4886) (4096,17.8038) (8192,21.2952) (16384,25.0103) (32768,29.1282) (65536,33.7334) (131072,38.6622) (262144,43.7835) (524288,49.0365) (1048576,54.6604)  (2.09715e+06, 61.0338)
     (4.1943e+06, 65.844)
     (8.38861e+06, 72.5909)
     (1.67772e+07, 79.3481)
     (3.35544e+07, 86.8013)
     (6.71089e+07, 96.6657)
     (7.71089e+07, 99.0158)
     (8.71089e+07, 100.201) };

    %rockyou best lower bound
    \addplot coordinates {(1,-0.0766468) (2,0.207353) (4,0.547353) (8,0.915353) (16,1.32735) (32,1.88735) (64,2.80735) (128,4.01535) (256,5.60335) (512,7.68735) (1024,10.2714) (2048,13.3794) (4096,16.5794) (8192,19.9794) (16384,23.7954) (32768,27.7274) (65536,32.2514) (131072,36.9834) (262144,41.9554) (524288,46.6634) (1048576,50.6474) (2097152,53.6474)  (4.1943e+06, 56.458)
     (8.38861e+06, 60.3043)
     (1.67772e+07, 64.4719)
     (3.35544e+07, 71.0155)
     (4.35544e+07, 72.6686)
     (5.35544e+07, 72.7036)
     (6.35544e+07, 72.7067)
     (6.71089e+07, 72.7078)
     (1.34218e+08, 72.7283)
     (2.68435e+08, 72.7695)
     (5.36871e+08, 72.8518)
     (1.07374e+09, 73.0165)
     (2.14748e+09, 73.075)};
   
    %upperbound of lambda_G-lambda_{G/b}, b = 2
\addplot coordinates { (2,1.17205) (4,1.38261) (8,1.42867) (16,1.45038) (32,1.63828) (64,1.94745) (128,2.2568) (256,2.70042) (512,3.23467) (1024,3.74595) (2048,4.2172) (4096,4.4244) (8192,4.7158) (16384,5.0309) (32768,5.3328) (65536,6.006) (131072,6.4108) (262144,6.8001) (524288,7.0811) (1.04858e+06,7.997) (2.09715e+06,10.3864) (4.1943e+06,12.1966) (8.38861e+06,16.1329) (1.67772e+07,19.0438) (3.35544e+07,22.3294) (6.71089e+07,25.6502) (1.34218e+08,27.2922) (2.68435e+08,27.2717) (5.36871e+08,27.2305) (1.07374e+09,27.1482) (2.14748e+09,26.9835)};
%lowerbound of lambda_G-lambda_{G/b}, b = 2
\addplot coordinates { (2,-0.722158) (4,-0.624697) (8,-0.674607) (16,-0.64867) (32,-0.47838) (64,-0.15828) (128,0.18055) (256,0.5392) (512,0.97158) (1024,1.43338) (2048,1.9461) (4096,2.0908) (8192,2.1756) (16384,2.5002) (32768,2.7171) (65536,3.1232) (131072,3.25) (262144,3.2932) (524288,2.8799) (1.04858e+06,1.6109) (2.09715e+06,-1.013) (4.1943e+06,-4.5758) (8.38861e+06,-5.5397) (1.67772e+07,-8.119) (3.35544e+07,-8.3326) (6.71089e+07,-14.0935) (1.34218e+08,-23.9374)};
%upperbound of lambda_G-lambda_{G/b}, b = 4
\addplot coordinates { (4,1.58996) (8,1.76867) (16,1.81838) (32,2.05028) (64,2.50745) (128,3.1768) (256,3.90842) (512,4.82267) (1024,5.82995) (2048,6.80125) (4096,7.5324) (8192,7.9158) (16384,8.4309) (32768,9.1488) (65536,9.938) (131072,10.9348) (262144,11.5321) (524288,12.0531) (1.04858e+06,12.705) (2.09715e+06,14.3704) (4.1943e+06,15.1966) (8.38861e+06,18.9435) (1.67772e+07,22.8901) (3.35544e+07,26.497) (6.71089e+07,32.1938) (1.34218e+08,28.9845) (2.68435e+08,27.2922) (5.36871e+08,27.2717) (1.07374e+09,27.2305) (2.14748e+09,27.1482)};
%lowerbound of lambda_G-lambda_{G/b}, b = 4
\addplot coordinates { (4,-0.382158) (8,-0.256697) (16,-0.26261) (32,-0.08867) (64,0.44162) (128,1.04972) (256,1.76855) (512,2.6232) (1024,3.55563) (2048,4.54138) (4096,5.1461) (8192,5.4908) (16384,5.9916) (32768,6.4322) (65536,7.2411) (131072,7.8552) (262144,8.222) (524288,8.0012) (1.04858e+06,6.8639) (2.09715e+06,4.6109) (4.1943e+06,1.7976) (8.38861e+06,-0.7295) (1.67772e+07,-1.3721) (3.35544e+07,-1.5754) (6.71089e+07,-6.6403) (1.34218e+08,-14.073)};
%upperbound of lambda_G-lambda_{G/b}, b = 8
\addplot coordinates { (8,1.97602) (16,2.15838) (32,2.41828) (64,2.91945) (128,3.7368) (256,4.82842) (512,6.03067) (1024,7.41795) (2048,8.88525) (4096,10.1164) (8192,11.0238) (16384,11.6309) (32768,12.5488) (65536,13.754) (131072,14.8668) (262144,16.0561) (524288,16.7851) (1.04858e+06,17.677) (2.09715e+06,19.0784) (4.1943e+06,19.1806) (8.38861e+06,21.9435) (1.67772e+07,25.7007) (3.35544e+07,30.3433) (6.71089e+07,36.3614) (1.34218e+08,35.5281) (2.68435e+08,28.9845) (5.36871e+08,27.2922) (1.07374e+09,27.2717) (2.14748e+09,27.2305)};
%lowerbound of lambda_G-lambda_{G/b}, b = 8
\addplot coordinates { (8,-0.014158) (16,0.1553) (32,0.29739) (64,0.83133) (128,1.64962) (256,2.63772) (512,3.85255) (1024,5.20725) (2048,6.66363) (4096,7.74138) (8192,8.5461) (16384,9.3068) (32768,9.9236) (65536,10.9562) (131072,11.9731) (262144,12.8272) (524288,12.93) (1.04858e+06,11.9852) (2.09715e+06,9.8639) (4.1943e+06,7.4215) (8.38861e+06,5.6439) (1.67772e+07,3.4381) (3.35544e+07,5.1715) (6.71089e+07,0.1169) (1.34218e+08,-6.6198)};
%upperbound of lambda_G-lambda_{G/b}, b = 16
\addplot coordinates { (16,2.36573) (32,2.75828) (64,3.28745) (128,4.1488) (256,5.38842) (512,6.95067) (1024,8.62595) (2048,10.4733) (4096,12.2005) (8192,13.6079) (16384,14.7389) (32768,15.7488) (65536,17.154) (131072,18.6828) (262144,19.9881) (524288,21.3091) (1.04858e+06,22.409) (2.09715e+06,24.0504) (4.1943e+06,23.8886) (8.38861e+06,25.9275) (1.67772e+07,28.7007) (3.35544e+07,33.1539) (6.71089e+07,40.2077) (1.34218e+08,39.6957) (2.68435e+08,35.5281) (5.36871e+08,28.9845) (1.07374e+09,27.2922) (2.14748e+09,27.2717)};
%lowerbound of lambda_G-lambda_{G/b}, b = 16
\addplot coordinates { (16,0.397839) (32,0.7153) (64,1.21739) (128,2.03933) (256,3.23762) (512,4.72172) (1024,6.4366) (2048,8.31525) (4096,9.86363) (8192,11.1414) (16384,12.3621) (32768,13.2388) (65536,14.4476) (131072,15.6882) (262144,16.9451) (524288,17.5352) (1.04858e+06,16.914) (2.09715e+06,14.9852) (4.1943e+06,12.6745) (8.38861e+06,11.2678) (1.67772e+07,9.8115) (3.35544e+07,9.9817) (6.71089e+07,6.8638) (1.34218e+08,0.1374)};
%upperbound of lambda_G-lambda_{G/b}, b = 32
\addplot coordinates { (32,2.96563) (64,3.62745) (128,4.5168) (256,5.80042) (512,7.51067) (1024,9.54595) (2048,11.6813) (4096,13.7884) (8192,15.6919) (16384,17.3229) (32768,18.8568) (65536,20.354) (131072,22.0828) (262144,23.8041) (524288,25.2411) (1.04858e+06,26.933) (2.09715e+06,28.7824) (4.1943e+06,28.8606) (8.38861e+06,30.6355) (1.67772e+07,32.6847) (3.35544e+07,36.1539) (6.71089e+07,43.0183) (1.34218e+08,43.542) (2.68435e+08,39.6957) (5.36871e+08,35.5281) (1.07374e+09,28.9845) (2.14748e+09,27.2922)};
%lowerbound of lambda_G-lambda_{G/b}, b = 32
\addplot coordinates { (32,0.957839) (64,1.6353) (128,2.42539) (256,3.62733) (512,5.32162) (1024,7.30577) (2048,9.5446) (4096,11.5152) (8192,13.2636) (16384,14.9574) (32768,16.2941) (65536,17.7628) (131072,19.1796) (262144,20.6602) (524288,21.6531) (1.04858e+06,21.5192) (2.09715e+06,19.914) (4.1943e+06,17.7958) (8.38861e+06,16.5208) (1.67772e+07,15.4354) (3.35544e+07,16.3551) (6.71089e+07,11.674) (1.34218e+08,6.8843)};

    \end{axis}
  \end{tikzpicture}

\vspace{-.2cm}
 \caption{RockYou} \label{fig:hashcostrocyou}
%\vspace{-.15cm}
\end{subfigure}
%\vspace{-0.2cm}
\caption{Tuning Password Hash Cost Parameters (000webhost, Yahoo!, RockYou)}
\label{fig:hashcost-app}
%\vspace{-0.4cm}
\end{figure*}
}

\fullversion{}{
Figure~\ref{fig:PCPfull} shows the full plots of the upper/lower bounds of different PCPs including count-min sketch (CMK) with different threshold parameters $r\in \{10^4, 10^5, 10^6\}$.

%%%%PCP comparison full
\begin{figure*}\centering
%rockyou
\input{Plots/PCPfull-rockyou}
%\vspace{0.5em}
%000webhost
\input{Plots/PCPfull-000webhost}
%\vspace{0.5em}
%\hfill
%\newline
%neopets
\input{Plots/PCPfull-neopets}
%\vspace{0.5em}
%\hfill
%\newline
%\vspace{-0.2cm}
\caption{Comparison of Password Composition Policies (full plots)}
\label{fig:PCPfull}
%\vspace{-0.4cm}
\end{figure*}
}
\section{Parameter Setting}\label{app:parametersetting}

To generate high-confidence theoretical bounds of $\lambda_G$ and $\lambda(S,G)$, we assign the parameters values to guarantee every bound in Section~\ref{sec:theoreticalanalysis} holds with probability {\em at least} $0.99$.

Recall that the parameter $\epsilon$ in  Corollary~\ref{crl:upperbound-priorwork}, Theorem~\ref{thm:bound1}, Corollary~\ref{crl:bound2}, Theorem~\ref{thm:lowerbound-priorwork}\fullversion{}{, and $\epsilon_1$ in Corollary~\ref{crl:bound3final-lower}} bound the difference between $\lambda_G$ and $\lambda(S,G)$. We fix this parameter as $\epsilon=\epsilon_1=\left( \frac{\ln(\delta_1)}{-2N}\right)^{\frac{1}{2}}$ where we set $\delta_1 = 0.00009$ for all bounds of $\lambda_G$ and $\lambda(S,G)$. Then the upper bound in Corollary~\ref{crl:upperbound-priorwork} ($\mathtt{FrequencyUB}(S,G)$) holds with probability at least $1-\delta_1\geq 0.99$.

We denote $\delta_{2,j} = \exp\left(\frac{-2t^2}{Nj^2}\right)$ in Theorem~\ref{thm:lowerbound-priorwork}. We set $\delta_{2,j} = 0.01 - \delta_1$  (i.e. $t = \left(\frac{Nj^2\ln(\delta_{2,j})}{-2}\right)^{\frac{1}{2}}$) such that the lower bound in Theorem~\ref{thm:lowerbound-priorwork}  ($\mathtt{PriorLB(S,G,j)}$) holds with probability at least $1-\delta_1-\delta_{2,j}\geq 0.99$ for any $j\geq 2$.

We denote $\delta_{3} = \exp\left(\frac{-2t^2}{d}\right)$ in both Theorem~\ref{thm:bound1} and  Corollary~\ref{crl:bound2}. We set $d=25000$ and $\delta_3 = 0.01 - \delta_1$ (i.e. $t = \left( \frac{d\ln(\delta_3)}{-2}\right)^{\frac{1}{2}}$) such that the lower bound in Corollary~\ref{crl:bound2} ($\mathtt{ExtendedLB}(S,G)$) extends the lower bound in Theorem~\ref{thm:bound1} ($\mathtt{SamplingLB}(S,G)$), and both of them hold with probability at least $1-\delta_3\geq 0.99$.

We denote $\delta_{4,i} = \exp\left(\frac{-2(N-i)^2\epsilon_{2,i}^2}{N(i+1)^2}\right)$ in Theorem~\ref{thm:bound3final-lower}. we set $q=1.002$, $i'=4$, $\{\delta_{4,0}, \delta_{4,1}, \delta_{4,2}, \delta_{4,3}, \delta_{4,4}\} = \{0.00009, 0.000165, 0.00175,$ $0.00175, 0.0012\}$, and $\{\hat{x}_{\epsilon_{3,0}}, \hat{x}_{\epsilon_{3,1}}, \hat{x}_{\epsilon_{3,2}}, \hat{x}_{\epsilon_{3,3}}, \hat{x}_{\epsilon_{3,4}}\} = \{7.0/N, 11.0/N,$ $14.0/N, 16.3/N, 18.5/N\}$. Note that $\epsilon_{2,i} =\left(\frac{N(i+1)^2\ln(\delta_{4,i})}{-2(N-i)^2}\right)^{\frac{1}{2}}$ and $\epsilon_{3,i} = \frac{1}{q^{i+1}}\left(\frac{1-\hat{x}_{\epsilon_{3,i}}}{1-q\hat{x}_{\epsilon_{3,i}}}\right)^{N-i}-1$ for any $0\leq i\leq i'$. Both of the upper and lower bounds in Theorem~\ref{thm:bound3final-lower} ($\mathtt{LPUB}(S,G)$ and $\mathtt{LPLB}(S,G)$) hold with probabilities at least $1-\sum_{i=0}^{i'}\delta_{4,i}\geq 0.99$.

\fullversion{}{\section{Comparing LP1 with LPlower and LPupper}
In Section~\ref{sec:LPbounds} we first presented a linear program $\mathtt{LP1}$ to upper/lower bound $\lambda_G$ under the idealized assumption that the real probability distribution is consistent with a fixed probability mesh. We later showed how to eliminate this idealized assumption by adding a small slack terms in our constraints. We compare the upper/lower bounds derived using $\mathtt{LP1}$ (i.e., under idealized assumptions) with the upper/lower bounds generated using $\mathtt{LPlower}$ and $\mathtt{LPupper}$ respectively using the same parameter settings for all datasets. We find that the bounds are very close indicating that the small slack terms we added to eliminate the idealized assumption does not negatively impact the quality of the upper/lower bounds. The comparison results are shown in Figure~\ref{fig:LPcomp}.% in Appendix~\ref{app:figures}.

\input{Plots/LPcomp}}
\fullversion{}{\section{Bounding $\lambda(S,G)$}\label{app:lambdaSG}
By Theorem \ref{thm:ExpectationConcentration} the value $\lambda(S,G)$ is tightly concentrated around its mean $\mathbb{E}[\lambda(S,G)]=\lambda_G$. Thus, any high confidence upper/lower bound on $\lambda_G$ immediately yields a high confidence upper/lower bound on $\lambda(S,G)$ and vice versa. We formally state each corollary below.

Corollary \ref{crl:bound1} is a corollary of Theorem \ref{thm:bound1}.

\begin{corollary}\label{crl:bound1}
For any $G \geq 1$ and any parameters $0<d<N$, $0\leq\epsilon \leq 1$, $t\geq 0$, we have:
\begin{align*}
    &\Pr[\lambda(S,G)\geq \frac{1}{d}(h(D_1,D_2,G) - t) - \epsilon] \\
    \geq & 1 - \exp\left(-2N\epsilon^2\right) - \exp\left(\frac{-2t^2}{d}\right)
\end{align*}
where the randomness is taken over the samples $D_1 \leftarrow \mathcal{P}^{N-d}$ and $D_2 \leftarrow \mathcal{P}^d$. 
\end{corollary}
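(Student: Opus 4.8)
\textbf{Proof proposal for Corollary~\ref{crl:bound1}.}

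The plan is to combine the high-confidence lower bound on $\lambda_G$ from Theorem~\ref{thm:bound1} with the concentration of $\lambda(S,G)$ around its mean $\lambda_G$ from Theorem~\ref{thm:ExpectationConcentration}, via a union bound. First I would invoke Theorem~\ref{thm:ExpectationConcentration} (the second inequality) to get that $\Pr[\lambda(S,G) \geq \lambda_G - \epsilon] \geq 1 - \exp(-2N\epsilon^2)$; equivalently, the bad event $E_1 = \{\lambda(S,G) < \lambda_G - \epsilon\}$ has probability at most $\exp(-2N\epsilon^2)$. Second, I would invoke Theorem~\ref{thm:bound1} to get that the bad event $E_2 = \{\lambda_G < \tfrac{1}{d}(h(D_1,D_2,G) - t)\}$ has probability at most $\exp(-2t^2/d)$. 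Note that both statements concern the same underlying sample $S \leftarrow \mathcal{P}^N$: in Theorem~\ref{thm:ExpectationConcentration} the randomness is over $S$, and in Theorem~\ref{thm:bound1} it is over the induced partition $D_1 \leftarrow \mathcal{P}^{N-d}$, $D_2 \leftarrow \mathcal{P}^d$, which is just a relabeling of $S$ (as noted in the remark after Theorem~\ref{thm:bound1}, one may shuffle $S$ first). So both events live on the same probability space and a union bound applies.

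The key step is then the deterministic implication: on the complement $\overline{E_1} \cap \overline{E_2}$, we have simultaneously $\lambda(S,G) \geq \lambda_G - \epsilon$ and $\lambda_G \geq \tfrac{1}{d}(h(D_1,D_2,G) - t)$, and chaining these gives $\lambda(S,G) \geq \tfrac{1}{d}(h(D_1,D_2,G) - t) - \epsilon$, which is exactly the desired event. Therefore
\begin{align*}
\Pr\left[\lambda(S,G) \geq \tfrac{1}{d}(h(D_1,D_2,G) - t) - \epsilon\right] &\geq \Pr[\overline{E_1} \cap \overline{E_2}] \\
&\geq 1 - \Pr[E_1] - \Pr[E_2] \\
&\geq 1 - \exp\left(-2N\epsilon^2\right) - \exp\left(\tfrac{-2t^2}{d}\right),
\end{align*}
as claimed. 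This is the same two-line template already used to derive Corollary~\ref{crl:lowerbound-priorwork} from Theorem~\ref{thm:lowerbound-priorwork} and Theorem~\ref{thm:ExpectationConcentration}, so it can be written essentially verbatim in that style.

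I do not anticipate any real obstacle here; the only point requiring a sentence of care is the observation that the randomness in Theorems~\ref{thm:ExpectationConcentration} and~\ref{thm:bound1} is over the same sample space, so that the union bound is legitimate — a partition of $S$ into $D_1, D_2$ does not introduce fresh randomness, and conditioning/marginalizing appropriately (or simply noting both probabilities are computed with respect to $S \leftarrow \mathcal{P}^N$) handles it. Everything else is a routine union bound plus a deterministic inequality chain.
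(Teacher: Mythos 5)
Your proposal is correct and matches the paper's own proof: Appendix D derives Corollary~\ref{crl:bound1} exactly this way, intersecting the event $\{\lambda(S,G) \geq \lambda_G - \epsilon\}$ from Theorem~\ref{thm:ExpectationConcentration} with the event $\{\lambda_G \geq \frac{1}{d}(h(D_1,D_2,G)-t)\}$ from Theorem~\ref{thm:bound1} and applying a union bound. No gaps; your remark that both events are defined over the same sample $S \leftarrow \mathcal{P}^N$ is the only point of care and you handled it.
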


\begin{proof}
Using Theorem~\ref{thm:ExpectationConcentration} we have $\Pr[\lambda(S,G) \geq \lambda_G - \epsilon] \geq 1 - \exp\left(-2N\epsilon^2\right)$. Then using Theorem~\ref{thm:bound1}, for any $0\leq \epsilon\leq 1$ and any $t\geq 0$ we have:
\begin{align*}
    &\Pr[\lambda(S,G) \geq \frac{1}{d}(h(D_2,G) - t) - \epsilon] \\
\geq & \Pr[\lambda(S,G) \geq \lambda_G - \epsilon \bigwedge \lambda_G \geq \frac{1}{d}(h(D_2,G) - t)] \\
\geq & \Pr[\lambda(S,G) \geq \lambda_G - \epsilon] - \Pr[\lambda_G < \frac{1}{d}(h(D_2,G) - t)] \\
\geq & 1 - \exp\left(-2N\epsilon^2\right) - \exp\left(\frac{-2t^2}{d}\right).
\end{align*}

Therefore, for any guessing number $G>0$ and any parameter $d,\epsilon_1,t_2$, we can lower bound the percentage of cracked passwords as below:
\begin{align*}
   \lambda(S,G) \geq \frac{1}{d}(h(D_2,G) - t) - \epsilon 
\end{align*}
where the left inequality holds with probability at least $ 1 - \exp\left(-2N\epsilon^2\right) - \exp\left(\frac{-2t^2}{d}\right)$.
\end{proof}

Corollary \ref{crl:generalmodelLBound} follows from Theorem \ref{thm:bound2} by applying Theorem~\ref{thm:ExpectationConcentration}. 

\begin{corollary}\label{crl:generalmodelLBound}
For any guessing number $G>0$ and any parameters $0<d<N$, $0\leq \epsilon\leq 1$, $t\geq 0$, we have
\begin{align*}
    &\Pr[\lambda(S,G) \geq \frac{1}{d}(h'_M(D_1,D_2,G) - t) - \epsilon] \\
    \geq & 1 - \exp\left(-2N\epsilon^{2}\right) - \exp\left(\frac{-2t^2}{d}\right)
\end{align*}
where the randomness is taken over the sample $S \leftarrow \mathcal{P}^N$ of size $N$.
\label{cor:modelLbound}
\end{corollary}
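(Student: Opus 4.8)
The final statement to prove is Corollary~\ref{crl:generalmodelLBound} (equivalently Corollary~\ref{cor:modelLbound}), which lower bounds $\lambda(S,G)$ --- the \emph{sample} guessing curve --- using the model-based quantity $h'_M(D_1,D_2,G)$. The plan is to combine Theorem~\ref{thm:bound2}, which already gives a high-confidence lower bound on the \emph{expected} guessing curve $\lambda_G$ in terms of $h'_M(D_1,D_2,G)$, with Theorem~\ref{thm:ExpectationConcentration}, which says $\lambda(S,G)$ is tightly concentrated around $\lambda_G$. This is the same two-step ``transfer'' pattern already used to derive Corollary~\ref{crl:lowerbound-priorwork} from Theorem~\ref{thm:lowerbound-priorwork} and Corollary~\ref{crl:bound1} from Theorem~\ref{thm:bound1}, so the proof should be short and essentially mechanical.

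Concretely, first I would invoke Theorem~\ref{thm:ExpectationConcentration} to get $\Pr[\lambda(S,G) \geq \lambda_G - \epsilon] \geq 1 - \exp(-2N\epsilon^2)$. Second, I would invoke Theorem~\ref{thm:bound2} to get $\Pr[\lambda_G \geq \frac{1}{d}(h'_M(D_1,D_2,G) - t)] \geq 1 - \exp(-2t^2/d)$. Third, I would intersect these two events: on the intersection we have $\lambda(S,G) \geq \lambda_G - \epsilon \geq \frac{1}{d}(h'_M(D_1,D_2,G) - t) - \epsilon$, so the desired event contains the intersection of the two high-probability events. Finally, a union bound on the complementary (failure) events gives
\[
\Pr\!\left[\lambda(S,G) \geq \tfrac{1}{d}(h'_M(D_1,D_2,G) - t) - \epsilon\right] \geq 1 - \exp\!\left(-2N\epsilon^2\right) - \exp\!\left(\tfrac{-2t^2}{d}\right),
\]
which is exactly the claimed bound. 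The inequality manipulation can be written in the same style as the displayed chain in the proof of Corollary~\ref{crl:bound1}: bound the target probability below by $\Pr[A \wedge B]$, then by $\Pr[A] - \Pr[\neg B]$, then plug in the two tail bounds.

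One small point to be careful about is the source of randomness. In Theorem~\ref{thm:bound2} the randomness is over $S \leftarrow \mathcal{P}^N$, and the partition of $S$ into $D_1$ (first $N-d$ samples) and $D_2$ (last $d$ samples) is a deterministic function of $S$ (or we may first randomly shuffle $S$, as noted after Theorem~\ref{thm:bound1}); meanwhile Theorem~\ref{thm:ExpectationConcentration} is also stated over $S \leftarrow \mathcal{P}^N$. So both events live on the same probability space over the single sample $S$, and the union bound is legitimate --- there is no independence subtlety to worry about, only that we are taking a union bound over two (possibly dependent) bad events, which is always valid. I do not anticipate any real obstacle here; the only ``hard part'' is simply making sure the event inclusion $\{\lambda(S,G) \geq \lambda_G - \epsilon\} \cap \{\lambda_G \geq \frac{1}{d}(h'_M - t)\} \subseteq \{\lambda(S,G) \geq \frac{1}{d}(h'_M - t) - \epsilon\}$ is stated cleanly, after which the probability bookkeeping is immediate. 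This is essentially a verbatim adaptation of the existing proof of Corollary~\ref{crl:bound1} with $h(D_1,D_2,G)$ replaced by $h'_M(D_1,D_2,G)$ and Theorem~\ref{thm:bound1} replaced by Theorem~\ref{thm:bound2}.
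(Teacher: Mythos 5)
Your proposal is correct and follows essentially the same route as the paper: the paper also combines the concentration bound $\Pr[\lambda(S,G) \geq \lambda_G - \epsilon] \geq 1 - \exp\left(-2N\epsilon^2\right)$ (via the bounded differences inequality, i.e.\ Theorem~\ref{thm:ExpectationConcentration}) with Theorem~\ref{thm:bound2} and a union bound over the two failure events. Your extra care about both events living on the same probability space over $S \leftarrow \mathcal{P}^N$ is a fine clarification but not a departure from the paper's argument.
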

\begin{proof}
Using Theorem~\ref{thm:BoundedDifferencesInequality}, for any $0\leq \epsilon\leq 1$ we have $\Pr[\lambda(S,G) \geq \sum_{i\leq G}p_i - \epsilon] \geq 1 - \exp\left(-2N\epsilon^{2}\right)$. Then using Theorem~\ref{thm:bound2}, for any $0\leq \epsilon\leq 1,t\geq 0$ we have:
\begin{align*}
    &\Pr[\lambda(S,G) \geq \frac{1}{d}(h'_M(D_1,D_2,G) - t) - \epsilon] \\
    \geq & 1 - \exp\left(-2N\epsilon^{2}\right) - \exp\left(\frac{-2t^2}{d}\right).
\end{align*}
\end{proof}

Corollary \ref{crl:bound2-lambdaSG} follows from Corollary \ref{crl:bound2} by applying Theorem~\ref{thm:ExpectationConcentration}. 

\begin{corollary}\label{crl:bound2-lambdaSG}
Let $M$ be a password cracking model and $M^*$ be the corresponding hybrid attack model. Let parameters $G, d>0$ , $t >0$ and $\epsilon > 0$ be given then
\begin{align*}
&\Pr[\lambda(S,G) \geq \frac{1}{d}(h'_{M^*}(D_1,D_2,G) - t) - \epsilon] \\
&\geq 1 - \exp\left(-2N\epsilon^{2}\right) - \exp\left(\frac{-2t^2}{d}\right)
\end{align*}
where the randomness is taken over the set $S$ of size $N$.
\end{corollary}

Corollary \ref{crl:bound3ideal} follows from Theorem \ref{thm:bound3ideal} by applying Theorem~\ref{thm:ExpectationConcentration}. 
\begin{corollary}\label{crl:bound3ideal}
Given a probability distribution $\mathcal{P}$ which is consistent with a finite mesh $X_l = \{x_1,\ldots, x_l\}$ for any $G\geq 0$, any $i'\geq 0$, any $1 \leq \epsilon_1 > 0$ and any $\mathbf{\epsilon_2}=\{\epsilon_{2,0},\cdots,\epsilon_{2,i'}\}\in [0,1]^{i'+1}$, we have:
\begingroup\makeatletter\def\f@size{9}\check@mathfonts
\begin{align*}
    &\Pr\left[ \lambda(S,G) \geq \min\limits_{1\leq idx \leq l} \mathtt{LP1}(X_l,F^S,idx,G,1,i',\mathbf{\epsilon_2}) \right] \geq 1-\delta \\ 
    &\Pr\left[\lambda(S,G) \leq \max\limits_{1\leq idx \leq l} |\mathtt{LP1}(X_l,F^S,idx,G,-1,i',\mathbf{\epsilon_2})|\right] \geq 1-\delta
\end{align*}
\endgroup
where $\delta= \exp\left(-2N\epsilon_1^2\right) + 2\times\sum_{0\leq i\leq i'}\exp\left(\frac{-2(N-i)^2\epsilon_{2,i}^2}{N(i+1)^2}\right)$. The randomness is taken over the sample $S \leftarrow \mathcal{P}^N$ of size $N$.
\end{corollary}

Corollary \ref{crl:bound3mid} follows from Theorem \ref{thm:bound3mid}  by applying Theorem~\ref{thm:ExpectationConcentration}.

\begin{corollary}\label{crl:bound3mid}
Suppose that the probability distribution $\mathcal{P}$ is $l$-partially consistent with a mesh $X=\{x_1,x_2, ..., \}$ then for any $G\geq 0$, any $i'\geq 0$, any $0\leq \epsilon_1\leq 1$, any $\mathbf{\epsilon_2}=\{\epsilon_{2,0},\cdots,\epsilon_{2,i'}\}\in [0,1]^{i'+1}$ we have:
\begingroup\makeatletter\def\f@size{9}\check@mathfonts
\begin{align*}
 &\Pr\left[ \lambda(S,G) \geq \min\limits_{1\leq idx \leq l} \mathtt{LP1a}(X_l,F^S,idx,G,1,i',\mathbf{\epsilon_2}) \right] \geq 1- \delta \\
 &\Pr\left[\lambda(S,G) \leq \max\limits_{1\leq idx \leq l} |\mathtt{LP1a}(X_l,F^S,idx,G,-1,i',\mathbf{\epsilon_2})|   \right] \geq 1-\delta
\end{align*}
\endgroup
where $\delta= \exp\left(-2N\epsilon_1^2\right) + 2\times\sum_{0\leq i\leq i'}\exp\left(\frac{-2(N-i)^2\epsilon_{2,i}^2}{N(i+1)^2}\right)$. The randomness is taken over the sample $S \leftarrow \mathcal{P}^N$ of size $N$.
\end{corollary}

Corollary \ref{crl:bound3final-lower} follows from Theorem \ref{thm:bound3final-lower} by applying Theorem~\ref{thm:ExpectationConcentration}.

\begin{corollary}\label{crl:bound3final-lower}
Given an unknown password distribution $\mathcal{P}$, a sample set $S\leftarrow \mathcal{P}^N$,
%let $S$ be a sample set containing $N$ items independently randomly sampled from $\mathcal{P}$, and let $F$ be its frequency list. 
for any $G\geq 0$, integer $i'\geq 0$, $\mathbf{\epsilon_2}=\{\epsilon_{2,0},\cdots,\epsilon_{2,i'}\}\in [0,1]^{i'+1}$, $\mathbf{\hat{x}_{\epsilon_3}}=\{\hat{x}_{\epsilon_{3,0}},\cdots,\hat{x}_{\epsilon_{3,i'}}\}$, $\mathbf{\epsilon_3}=\{\epsilon_{3,i}=\frac{1}{q^{i+1}}(\frac{1-\hat{x}_{\epsilon_{3,i}}}{1-q\hat{x}_{\epsilon_{3,i}}})^{N-i}-1, 0\leq i\leq i'\}$, we can bound $\lambda(S,G)$ using a set of fine-grained meshes $X_{l,q}$ we generate as below:
\begingroup\makeatletter\def\f@size{8.5}\check@mathfonts
\begin{align*}
   &\Pr[\lambda(S,G) \geq \min\limits_{1\leq idx\leq l+1}\mathtt{LPlower}(X_{l,q},F^S,idx,G,i',\mathbf{\epsilon_2},\mathbf{\epsilon_3},\mathbf{\hat{x}_{\epsilon_3}}) - \epsilon_1] \\
   &\geq 1-\delta \\
   &\Pr[\lambda(S,G) \leq \max\limits_{1\leq idx\leq l+1}\mathtt{LPupper}(X_{l,q},F^S,idx,G,i',\mathbf{\epsilon_2},\mathbf{\epsilon_3},\mathbf{\hat{x}_{\epsilon_3}}) - \epsilon_1] \\
   &\geq 1-\delta  
\end{align*}
\endgroup
where each inequality holds with probability at least $\delta = \exp\left(-2N\epsilon_1^2\right) + 2\times\sum_{0\leq i\leq i'}\exp\left(\frac{-2(N-i)^2\epsilon_{2,i}^2}{N(i+1)^2}\right)$. The randomness is taken over the sample $S$ of size $N$.
\end{corollary}}
\section{Missing Proofs}

\fullversion{}{\subsection{Missing Proofs in Section~\ref{sec:lowerbound-priorwork}}\label{app:lb1}

Here we present the missing proofs of Theorem~\ref{thm:lowerbound-priorwork} in Section~\ref{sec:lowerbound-priorwork}.

Consider an arbitrary sample set $S=\{s_1,...,s_N\}\leftarrow \mathcal{P}^N$ with size $N$. 
%For each $1\leq k \leq N$, let $y_k\in\{0,1\}$ be an indicator such that $y_k=1$ if and only if $s_k$ is cracked after an attacker makes $G=NL$ guesses ($L>1$ is a parameter we can choose). 
%Let $p_i$ be the real distribution of $pwd_i$ ordered by the real probability distribution (i.e., $p_1\geq p_2\geq p_3,...$). Let $f^S_i$ be the frequency of $pwd_i$ in the sample set $S$. 
Same to Blocki et al~\cite{SP:BloHarZho18}, We define $B_i$ to be an indicator random variable such that $B_i=1$ if and only if $f^S_i\geq j$ while $p_i<\frac{1}{NL}$; otherwise, $B_i=0$. We call a sample $s_i$ is $(j,L)-bad~overestimate$ if $B_i = 1$. To analyzing the lower bound of the number of cracked user passwords in the sample set $S$, we first bound the number of samples in $S$ that are $(j,L)-bad~overestimate$.
\begin{lemma}\label{lem:jLbad}
For any $t\geq 0$, we have:
$$Pr[\sum_{i} f^S_i \times B_i \geq \frac{N}{(j-1)!L^{j-1}} + t ] \leq \exp\left(\frac{-2t^2}{N j^2}\right).$$
where the randomness is taken over the sample set $S$ with size $N$.
\end{lemma}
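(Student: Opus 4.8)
The plan is to prove the lemma in two steps: bound the expectation of the statistic $g(S) \doteq \sum_i f_i^S B_i$ (the number of $(j,L)$-bad overestimate samples), and then invoke the Bounded Differences Inequality (Theorem~\ref{thm:BoundedDifferencesInequality}) to show $g(S)$ concentrates around its mean.

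For the expectation bound I would follow Blocki et al.~\cite{SP:BloHarZho18}. Fix a password $pwd_i$; recall $f_i^S \sim \mathrm{Bin}(N,p_i)$ and that $B_i = 0$ deterministically whenever $p_i \geq \tfrac{1}{NL}$, so only passwords with $p_i < \tfrac{1}{NL}$ contribute. For such $i$, using $k\binom{N}{k} = N\binom{N-1}{k-1}$ gives $\mathbb{E}[f_i^S B_i] = \mathbb{E}\big[f_i^S \cdot \mathbf{1}[f_i^S \geq j]\big] = N p_i \cdot \Pr[\mathrm{Bin}(N-1,p_i) \geq j-1]$. I would then bound $\Pr[\mathrm{Bin}(N-1,p_i) \geq j-1] \leq \binom{N-1}{j-1} p_i^{j-1} \leq \tfrac{(Np_i)^{j-1}}{(j-1)!}$ by a union bound over $(j-1)$-subsets of positions, which yields $\mathbb{E}[f_i^S B_i] \leq \tfrac{(Np_i)^{j}}{(j-1)!}$. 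Summing over all $i$ with $p_i < \tfrac{1}{NL}$ and using $p_i^{j} \leq p_i \cdot \big(\tfrac{1}{NL}\big)^{j-1}$ together with $\sum_i p_i \leq 1$ gives $\mathbb{E}[g(S)] \leq \tfrac{N^{j}}{(j-1)!}\cdot\tfrac{1}{(NL)^{j-1}} = \tfrac{N}{(j-1)!\,L^{j-1}}$.

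For the concentration step I would view $g$ as a function $g(s_1,\dots,s_N)$ of the $N$ i.i.d.\ samples and check that it satisfies the bounded differences property with constants $c_k = j$. Replacing a single sample $s_k$ decrements $f^S$ for one password and increments it for another; each such unit change in a count alters the corresponding term $f^S B$ by an amount in $\{0,\pm 1,\pm j\}$, where the jump of size $j$ occurs precisely when a count crosses the threshold $j$ for a password of probability $< \tfrac{1}{NL}$, and the decrement and increment contributions have opposite signs, so the net change lies in $[-j,j]$. Hence $\sum_{k=1}^N c_k^2 = N j^2$, and Theorem~\ref{thm:BoundedDifferencesInequality} gives $\Pr[g(S) \geq \mathbb{E}[g(S)] + t] \leq \exp\!\big(\tfrac{-2t^2}{N j^2}\big)$. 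Since $\mathbb{E}[g(S)] \leq \tfrac{N}{(j-1)!\,L^{j-1}}$, the event in the lemma is contained in $\{g(S) \geq \mathbb{E}[g(S)] + t\}$, which completes the proof.

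The main obstacle I anticipate is the bookkeeping in the bounded differences step: the summand $f_i^S B_i$ is a discontinuous step function of the count $f_i^S$, so one must verify carefully that no single-sample swap can produce a jump larger than $j$ — in particular in the corner case where both the swapped-out and swapped-in passwords have probability below $\tfrac{1}{NL}$ and sit exactly at the threshold count. Once the constant $c_k = j$ is pinned down, the remainder is a direct application of results already established in the excerpt, and Theorem~\ref{thm:lowerbound-priorwork} then follows since $\lambda(S,G) = \lambda(S,NL) \geq \tfrac{1}{N}\big(\sum_{i:f_i^S\geq j} f_i^S - \sum_i f_i^S B_i\big)$ and the guaranteed contribution of the non-bad overestimates is at least $\tfrac{1}{N}\big(\sum_{i:f_i^S\geq j} f_i^S - \tfrac{N}{(j-1)!L^{j-1}} - t\big)$ with the stated probability.
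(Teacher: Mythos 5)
Your proposal is correct and follows essentially the same route as the paper: view $g(S)=\sum_i f_i^S B_i$ as a function of the $N$ i.i.d.\ samples, note that a single-sample swap changes it by at most $j$, bound $\mathbb{E}[g(S)] \leq \frac{N}{(j-1)!\,L^{j-1}}$, and apply the Bounded Differences Inequality. The only difference is that you re-derive the expectation bound from scratch, whereas the paper simply cites Claim 6 of Blocki et al.~\cite{SP:BloHarZho18}, and you spell out the justification for the constant $c_k=j$ that the paper merely asserts.
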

\begin{proof}
We let $g(s_1,s_2,...,s_N)$ be a function that outputs the number of $(j,L)-bad~overestimate$ samples, i.e. $\sum_{i} f^S_i \times B_i$. Then for any two different input $s_1,...,s_i,...,s_N$ and $s_1,...,s'_i,...,s_N$ which only differ on the $i-th$ input, the difference between two outputs of $g$ will be no greater than $j$, i.e., $\sup\limits_{s_1,...,s_i,...,s_N,s'_i}|g(x_1,...,s_i,...,x_N) - g(s_1,...,s'_i,...,s_N)| \leq j$. Claim 6 in Blocki et al~\cite{SP:BloHarZho18} proves that $\mathbb{E}(\sum_{i} f^S_i \times B_i)\leq \frac{N}{(j-1)!L^{j-1}}$.

Using Theorem~\ref{thm:BoundedDifferencesInequality}, we have:
$\Pr[\sum_{i} f^S_i \times B_i \geq \frac{N}{(j-1)!L^{j-1}} + t]
\leq  \Pr[g(s_1,...,s_N) \geq \mathbb{E}(g(s_1,...,s_N)) + t] \leq  \exp\left(\frac{-2t^2}{N j^2}\right)$.
\end{proof}

\begin{lemma}\label{lem:lowerbound-priorwork}
For any $L\geq 1$, $t_1 \geq 0$ and any integers $N \geq 1$ and $j\geq 2$ setting $G=NL$ we have:
\begin{align*}
&\Pr\left[\lambda(S,G) \geq f(S,L) -t/N\right]
\geq  1 - \exp\left(\frac{-2t^2}{Nj^2}\right) 
\end{align*}
where $f(S,L)\doteq \frac{1}{N} \sum_{i: f_i^S \geq j} f_i^S - \frac{N}{(j-1)! L^{j-1}}$ and the randomness is taken over the sample set $S\leftarrow \mathcal{P}^N$.
\end{lemma}
\begin{proof}
%\proofof{Theorem~\ref{thm:lowerbound-priorwork}}
We consider a perfect knowledge attacker who makes $G=N\times L$ guesses for any $L\geq 1$. The $G$ guesses contains the $G $ passwords with the top $G$ highest probabilities. Since there are at most $NL$ passwords with probability no less than $\frac{1}{NL}$, the perfect knowledge attacker making $G=NL$ guesses must crack all of them. 
Note that for any password $pwd_i$ satisfying $f^S_i\geq j$, we have either $p_i\geq \frac{1}{NL}$ or $B_i=1$. Therefore, at least $\sum_{i:f^S_i\geq j}f^S_i - \sum_{i} f^S_i \times B_i$ will be cracked by a perfect knowledge attacker making $G=NL$ guesses, i.e. $N\lambda(S,G)\geq \sum_{i:f^S_i\geq j}f^S_i - \sum_{i} f^S_i \times B_i$. Using Lemma~\ref{lem:jLbad} we have:
\begin{align*}
    &Pr[\lambda(S,G) \geq \frac{1}{N}(\sum_{i:f^S_i\geq j}f^S_i - \frac{N}{(j-1)!L^{j-1}} - t)] \\
    \geq & Pr[\sum_{i:f^S_i\geq j}f^S_i - \sum_{i} f^S_i B_i \geq \sum_{i:f^S_i\geq j}f^S_i - \frac{N}{(j-1)!L^{j-1}} - t] \\
    =& \Pr[\sum_{i} f^S_i \times B_i \leq \frac{N}{(j-1)!L^{j-1}} + t] \geq 1 - \exp\left(\frac{-2t^2}{N j^2}\right).
\end{align*}
\end{proof}

\begin{remindertheorem}{Corollary~\ref{thm:lowerbound-priorwork}}
\thmpriorlb
\end{remindertheorem}
\begin{proofof}{Theorem~\ref{thm:lowerbound-priorwork}}
%We then apply McDiarmid's inequality to argue that the lower bound holds with high probability. As an immediate corollary of Theorem \ref{thm:lowerbound-priorwork} we can show that, except with negligible probability, we have $\lambda_{NL} \geq f(S,L) - t/N - \epsilon$ for any constant $\epsilon>0$.
Using Lemma~\ref{lem:lowerbound-priorwork}, we can apply McDiarmid's inequality to argue that the lower bound holds with high probability. We can show that, except with negligible probability, we have $\lambda_{NL} \geq f(S,L) - t/N - \epsilon$ for any constant $\epsilon>0$.
Using Theorem~\ref{thm:ExpectationConcentration} we have:
\begin{align*}
    &\Pr[\lambda_G \geq \frac{1}{N}(\sum_{i:f^S_i\geq j}f^S_i - \frac{N}{(j-1)!L^{j-1}} - t)-\epsilon] \\
    \geq & \Pr[\lambda_G \geq \lambda(S,G) - \epsilon \bigwedge \lambda(S,G) \\
    \geq & \frac{1}{N}(\sum_{i:f^S_i\geq j}f^S_i - \frac{N}{(j-1)!L^{j-1}} - t)] \\
    \geq & \Pr[\lambda_G \geq \lambda(S,G) - \epsilon] \\
    & - \Pr[\lambda(S,G) \geq \frac{1}{N}(\sum_{i:f^S_i\geq j}f^S_i - \frac{N}{(j-1)!L^{j-1}} - t)] \\
    \geq & 1 - \exp\left(-2N\epsilon^2\right) - \exp\left(\frac{-2t^2}{N j^2}\right).
\end{align*} %$\hfill \square$
\end{proofof}
}

\subsection{Missing Proofs of LP Bounds in Section~\ref{sec:LPbounds}}\label{app:LPmissingproofs}

\begin{reminderlemma}{Lemma \ref{lem:goodturing}}
\lemmagoodturingideal
\end{reminderlemma}

\begin{proofof}{Lemma \ref{lem:goodturing}}
We first prove that $\sum_{j}h_j\times x_j\times \bpdf(i,N,x_j)$ can be bounded using $\frac{i+1}{N-i}\mathbb{E}(F^S_{i+1})$ as below:
\begingroup\makeatletter\def\f@size{9.5}\check@mathfonts
\begin{align*}
    &\sum_{j} h_j x_j \bpdf(i,N,x_j) \\
    =& \sum_{j} h_j x_j \frac{N!}{(N-i)!i!}x_j^i(1-x_j)^{N-i} \\
    \fullversion{}{=& \sum_{j} h_j \frac{i+1}{N-i} (1-x_j) \frac{N!}{(N-i-1)!(i+1)!}x_j^{i+1}(1-x_j)^{N-i-1} \\}
    =& \frac{i+1}{N-i}\sum_{j} h_j (1-x_j) \bpdf(i+1,N,x_j) \\
    =& \frac{i+1}{N-i}\mathbb{E}(F^S_{i+1}) - \frac{i+1}{N-i}\sum_{j} h_j x_j \bpdf(i+1,N,x_j) 
    \vspace{-0.15cm}
\end{align*}
\endgroup
Since $0 \leq \frac{i+1}{N-i}\sum_{j} h_j\times x_j\times \bpdf(i+1,N,x_j) \leq \frac{i+1}{N-i}$, we have $\frac{i+1}{N-i}\mathbb{E}(F^S_{i+1}) - \frac{i+1}{N-i} \leq \sum_{j} h_j\times x_j\times \bpdf(i,N,x_j) \leq \frac{i+1}{N-i}\mathbb{E}(F^S_{i+1})$.

Next we will prove that when $i$ is small, $F^S_{i+1}$ is highly concentrated on $\mathbb{E}(F^S_{i+1})$. We define $Y_1,...,Y_N$ to be $N$ independent password sample random variables, and consider $F^S_{i+1} = h(Y_1,...,Y_N)$ to be a function that outputs the number of distinct passwords that appear exact $i+1$ times among all $N$ samples in the sample set $S$. Then we have $\sup\limits_{y_1,...,y_i,...,y_N,y'_i}|h(y_1,...,y_i,...,y_N) - h(y_1,...,y'_i,...,y_N)| \leq 1$. Using Theorem~\ref{thm:BoundedDifferencesInequality}, for any $t_1 \geq 0$ we have:
\begin{align*}
    &\Pr[h(Y_1,...,Y_N) \geq \mathbb{E}(F^S_{i+1}) - t_1] \geq 1-\exp\left(-2t_1^2/N\right) \\
    &\Pr[h(Y_1,...,Y_N) \leq \mathbb{E}(F^S_{i+1}) + t_1] \geq 1-\exp\left(-2t_1^2/N\right)
\end{align*}

Denote $t_1 = \frac{N-i}{i+1}\epsilon_{2,i}$. Since $h(Y_1,...,Y_N) = F^S_{i+1}$ we have $\frac{(i+1)F^S_{i+1}}{N-i}-\epsilon_{2,i} - \frac{i+1}{N-i} \leq \sum_{j} h_j\times x_j\times \bpdf(i,N,x_j) \leq \frac{(i+1)F^S_{i+1}}{N-i}+\epsilon_{2,i}$ with probability at least $1 - 2\times\exp\left(\frac{-2(N-i)^2\epsilon_{2,i}^2}{N(i+1)^2}\right)$.
\end{proofof}

\fullversion{}{
\newcommand{\lemmagoodturingp}{Fix any $i\geq 0$ and $0\leq \epsilon_{2,i}\leq 1$ and assume that $x_{\ell}<\frac{1}{N}$ and that $\mathcal{P}$ is $l$-partially consistent with our mesh $X$. If $i=0$ let $W_0$ be an indicator random variable which is $1$ if and only if 
$\frac{(i+1)F^S_{i+1}}{N-i} -\epsilon_{2,i} - \frac{i+1}{N-i} - p \leq \sum_{j=1}^{\ell} h_jx_j \cdot \bpdf(i,N,x_j)$ and $ \sum_{j=1}^{\ell} h_jx_j \cdot \bpdf(i,N,x_j) \leq \frac{(i+1)F^S_{i+1}}{N-i} +  \epsilon_{2,i} - p\times \bpdf(i,N,x_{\ell})$. 
Similarly, if $i>0$ let $W_i$ be an indicator random variable which is $1$ if and only if
$\frac{(i+1)F^S_{i+1}}{N-i}-\epsilon_{2,i} - \frac{i+1}{N-i} - p\times \bpdf(i,N,x_{\ell}) \leq \sum_{j=1}^{\ell} h_j\times x_j\times \bpdf(i,N,x_j) $ and $ \sum_{j=1}^{\ell} h_j\times x_j\times \bpdf(i,N,x_j) \leq  \epsilon_{2,i}$.
Then the constraints hold with probability \[ \Pr[W_i = 1] \geq 1- 2\times\exp\left(\frac{-2(N-i)^2\epsilon_{2,i}^2}{N(i+1)^2}\right) \] where the randomness is taken over the selection of $S \leftarrow \mathcal{P}^N$.
}

\begin{reminderlemma}{Lemma \ref{lem:goodturing_p}}
\lemmagoodturingp
\end{reminderlemma}
\begin{proofof}{Lemma \ref{lem:goodturing_p}}
Note that for any $x<x_l< \frac{1}{N}$ we have $\bpdf(0,N,x_l) \leq \bpdf(0,N,x) \leq 1$ for $i=0$ and  $0 \leq \bpdf(i,N,x) \leq \bpdf(i,N,x_l)$ for any $i\geq 1$. Then when $i=0$ we have the following bounds on  $\sum_{j=1}^l h_j\times x_j\times \bpdf(i,N,x_j)$:

\begin{align*}
    &\sum_{j=1}^l h_j x_j \bpdf(i,N,x_j) \\
    =& \sum_{j\geq 1} h_j x_j \bpdf(i,N,x_j) - \sum_{j > l} h_j x_j \bpdf(i,N,x_j) \\
    \leq &\sum_{j\geq 1} h_j x_j \bpdf(i,N,x_j) - \sum_{j > l} h_j x_j \bpdf(i,N,x_l) \\
    =& \sum_{j\geq 1} h_j x_j \bpdf(i,N,x_j) - p \bpdf(i,N,x_l) \ , 
\end{align*}
and similarly, for our lower bound we have
\begin{align*}
    &\sum_{j=1}^l h_j x_j \bpdf(i,N,x_j) \\
    =& \sum_{j\geq 1} h_j x_j \bpdf(i,N,x_j) - \sum_{j > l} h_j x_j \bpdf(i,N,x_j) \\
    \geq &\sum_{j\geq 1} h_j x_j \bpdf(i,N,x_j) - \sum_{j > l} h_j x_j \\
    =& \sum_{j\geq 1} h_j x_j \bpdf(i,N,x_j) - p; \ .
\end{align*}
Similarly, when $i>0$ we can derive the following upper/lower bounds

\begin{align*}
    &\sum_{j=1}^l h_j x_j \bpdf(i,N,x_j) \\
    =& \sum_{j\geq 1} h_j x_j \bpdf(i,N,x_j) - \sum_{j > l} h_j x_j \bpdf(i,N,x_j) \\
    \leq &\sum_{j\geq 1} h_j x_j \bpdf(i,N,x_j) \ , 
\end{align*}
and
\begin{align*}
    &\sum_{j=1}^l h_j x_j \bpdf(i,N,x_j) \\
    =& \sum_{j\geq 1} h_j x_j \bpdf(i,N,x_j) - \sum_{j > l} h_j x_j \bpdf(i,N,x_j) \\
    \geq &\sum_{j\geq 1} h_j x_j \bpdf(i,N,x_j) - \sum_{j > l} h_j x_j \bpdf(i,N,x_l) \\
    =& \sum_{j\geq 1} h_j x_j \bpdf(i,N,x_j) - p \bpdf(i,N,x_l) \ .
\end{align*}

Applying Lemma~\ref{lem:goodturing}, for any $i\geq 0$ and $0\leq \epsilon_{2,i}\leq 1$, we have $\frac{(i+1)F_{i+1}}{N-i}-\epsilon_{2,i} - \frac{i+1}{N-i} \leq \sum_{j\geq 1} h_j\times x_j\times \bpdf(i,N,x_j) \leq \frac{(i+1)F_{i+1}}{N-i}+\epsilon_{2,i}$ with probability at least $1 - 2\times\exp\left(\frac{-2(N-i)^2\epsilon_{2,i}^2}{N(i+1)^2}\right)$. Therefore, we have:
\begin{align*}
&\text{if }i=0, \frac{(i+1)F_{i+1}}{N-i}-\epsilon_{2,i} - \frac{i+1}{N-i} - p \\
&\leq \sum_{j=1}^l h_j\times x_j\times \bpdf(i,N,x_j) \\
&\leq \frac{(i+1)F_{i+1}}{N-i}+\epsilon_{2,i} - p\times \bpdf(i,N,x_l); \\
&\text{if }i\geq 1, \frac{(i+1)F_{i+1}}{N-i}-\epsilon_{2,i} - \frac{i+1}{N-i} - p\times \bpdf(i,N,x_l) \\
&\leq \sum_{j=1}^l h_j\times x_j\times \bpdf(i,N,x_j) \leq \frac{(i+1)F_{i+1}}{N-i}+\epsilon_{2,i}.
\end{align*}
where for any $i\geq 0$ and $0\leq \epsilon_{2,i}\leq 1$ the inequality holds with probability at least $1 - 2\times\exp\left(\frac{-2(N-i)^2\epsilon_{2,i}^2}{N(i+1)^2}\right)$.
\end{proofof}
}

\begin{theorem}\label{thm:bound3ideal}
Given a probability distribution $\mathcal{P}$ which is consistent with a finite mesh $X_l = \{x_1,\ldots, x_l\}$ for any $G\geq 0$, any $i'\geq 0$, any $\mathbf{\epsilon_2}=\{\epsilon_{2,0},\ldots,\epsilon_{2,i'}\}\in [0,1]^{i'+1}$, we have:
\begin{align*}
    &\Pr\left[ \lambda_G \geq \min\limits_{1\leq idx \leq l} \mathtt{LP1}(X_l,F^S,idx,G,1,i',\mathbf{\epsilon_2}) \right] \geq 1-\delta \\ 
    &\Pr\left[\lambda_G \leq \max\limits_{1\leq idx \leq l} |\mathtt{LP1}(X_l,F^S,idx,G,-1,i',\mathbf{\epsilon_2})|\right] \geq 1-\delta
\end{align*}
where $\delta = 2\times\sum_{0\leq i\leq i'}\exp\left(\frac{-2(N-i)^2\epsilon_{2,i}^2}{N(i+1)^2}\right)$ and the randomness is taken over the sample $S \leftarrow \mathcal{P}^N$ of size $N$.
\end{theorem}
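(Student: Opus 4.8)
The plan is to combine Lemma~\ref{lem:goodturing} with a union bound and then verify that the true distribution $\mathcal{P}$ (encoded as a histogram over the mesh $X_l$) is a feasible point of $\mathtt{LP1}$ for at least one choice of $idx$. First I would fix the histogram $H^* = (h_1^*,\ldots,h_l^*)$ where $h_j^*$ counts the support elements of $\mathcal{P}$ with probability exactly $x_j$; this is well-defined precisely because of the idealized assumption that $p_i \in X_l$ for all $i$. I would then observe that $H^*$ automatically satisfies constraint~(3), $\sum_j h_j^* x_j = 1$, since $\mathcal{P}$ is a probability distribution. For constraint~(1) and~(4), given the guessing number $G$ I would let $idx^* = i(G,H^*)$ in the notation of the text and set $c^* = c(G,H^*) = G - \sum_{j<idx^*} h_j^*$; by definition of $i(G,H^*)$ we have $\sum_{j<idx^*} h_j^* \leq G$ and $\sum_{j\leq idx^*} h_j^* > G$, which gives exactly $0 \leq c^* \leq h_{idx^*}^*$ and $\sum_{j<idx^*} h_j^* + c^* = G$. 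So for the choice $idx = idx^*$ the point $(H^*, c^*)$ satisfies constraints (1), (3), (4) deterministically.

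The probabilistic part is constraint~(2). By Lemma~\ref{lem:goodturing}, for each fixed $i \in \{0,\ldots,i'\}$ the pair of inequalities in constraint~(2) holds for $H^*$ with probability at least $1 - 2\exp\!\left(\frac{-2(N-i)^2\epsilon_{2,i}^2}{N(i+1)^2}\right)$ over the draw of $S \leftarrow \mathcal{P}^N$. Taking a union bound over all $i \leq i'$, the probability that \emph{some} constraint in~(2) fails is at most $\delta = 2\sum_{0\leq i\leq i'}\exp\!\left(\frac{-2(N-i)^2\epsilon_{2,i}^2}{N(i+1)^2}\right)$. Hence with probability at least $1-\delta$, all constraints of $\mathtt{LP1}(G,b,X_l,F^S,idx^*,i',\mathbf{\epsilon_2})$ are satisfied by $(H^*,c^*)$ simultaneously for both $b=1$ and $b=-1$ (constraint~(2) does not depend on $b$, and (1),(3),(4) hold deterministically).

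Conditioned on that event, the argument finishes by relating the objective value at $(H^*,c^*)$ to $\lambda_G$. The key identity is that $\lambda_G = \sum_{i\leq G} p_i = \sum_{j<idx^*} h_j^* x_j + c^* x_{idx^*}$, exactly the attacker's success probability after $G$ guesses as derived in the text preceding the LP. For the lower bound ($b=1$): since $(H^*,c^*)$ is feasible for $idx = idx^*$, the minimum of the objective $b\times(\sum_{j<idx}h_j x_j + c x_{idx})$ over that LP is at most $\lambda_G$, and taking $\min$ over all $idx$ only decreases this further, so $\min_{idx}\mathtt{LP1}(\ldots,1,\ldots) \leq \lambda_G$. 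For the upper bound ($b=-1$): feasibility of $(H^*,c^*)$ for $idx=idx^*$ means the LP-optimum (which minimizes $-(\cdot)$, i.e.\ has value $-\lambda_G$ or smaller) has absolute value at least $\lambda_G$ for that $idx$, and taking $\max$ over $idx$ preserves this, so $\lambda_G \leq \max_{idx}|\mathtt{LP1}(\ldots,-1,\ldots)|$. The main obstacle — or rather the point requiring care — is the bookkeeping around $i(G,H^*)$: one must confirm that the definition $i(G,H) = \max\{j : \sum_{i<j} h_i \leq G\}$ together with the feasibility region $0\le c\le h_{idx}$, $\sum_{j<idx}h_j + c = G$ really does make the true $(H^*, c^*)$ feasible for $idx = i(G,H^*)$ and that the objective at this point equals $\lambda_G$ rather than over/under-counting the partially-guessed probability level $x_{idx^*}$; everything else is a union bound plus an LP-feasibility observation. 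A minor additional point is handling edge cases where $G$ exceeds the total support size or lands exactly on a histogram boundary, which are dealt with by the same definitions without incident.
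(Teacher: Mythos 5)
Your proposal is correct and matches the argument the paper intends (the paper leaves Theorem~\ref{thm:bound3ideal} as an immediate consequence of Lemma~\ref{lem:goodturing} and the preceding discussion): you verify that the true histogram with $idx^*=i(G,H^*)$, $c^*=c(G,H^*)$ is feasible, apply Lemma~\ref{lem:goodturing} with a union bound over $0\leq i\leq i'$ to get the failure probability $\delta$, and compare the LP optima to the objective value $\lambda_G$ at this feasible point. The only soft spot, the case $G\geq\sum_j h_j^*$ where $c^*\leq h_{idx^*}^*$ can fail, is glossed over in the paper as well (it is only handled later via the $idx=l+1$ option in $\mathtt{LP1a}$), so your treatment is faithful to the paper's proof.
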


\subsection{Other Missing Proofs}\label{app:othermissingproofs}
\begin{remindertheorem}{Theorem~\ref{thm:ExpectationConcentration}}
\expectationconcentration
\end{remindertheorem}

\begin{proofof}{Theorem~\ref{thm:ExpectationConcentration}}
Recall that the samples $s_1,s_2,...,s_N$ in $S$ are $N$ independent random variables sampled from the real password distribution $\mathcal{P}$. For any two sample set $S=\{s_1,...,s_i,...,s_N\}$ and $S'=\{s_1,...,s'_{i},...,s_N\}$ that only differs on one sample $s_i$ and $s'_{i}$, we have $N\times|\lambda(S,G)-\lambda(S',G)|\leq 1$. Therefore, using Theorem~\ref{thm:BoundedDifferencesInequality}, for any parameter $0\leq \epsilon \leq 1$ we have:
\begin{align*}
    &\Pr[\lambda(S,G) \geq \lambda_G + \epsilon] \leq \exp\left(-2N\epsilon^2\right) \\ &\Rightarrow \Pr[\lambda(S,G) \leq \lambda_G + \epsilon] \geq 1 - \exp\left(-2N\epsilon^2\right)\\
    &\Pr[\lambda(S,G) \leq \lambda_G - \epsilon] \leq \exp\left(-2N\epsilon^2\right) \\ &\Rightarrow \Pr[\lambda(S,G) \geq \lambda_G - \epsilon] \geq 1 - \exp\left(-2N\epsilon^2\right)
\end{align*}
\end{proofof}

\fullversion{}{
\begin{remindertheorem}{Theorem~\ref{thm:bound1}}
\samplinglbthm
\end{remindertheorem}

\begin{proofof}{Theorem~\ref{thm:bound1}}
Fixing any $D_1$ we have $\mathbb{E}_{D_2}[h(D_1,D_2,G)] =  $ $d\times \sum_{i\in T(D_1,G)}p_i$ $\leq d\times \sum_{i\leq G}p_i = d\lambda_G$ where the expectation is taken over the selection of $D_2$. Given an arbitrary dataset $D = \{s_1,\ldots, s_d\} \in \mathcal{P}^d$, an item $s \in \mathcal{P}$ and an index $j \leq d$, we define $D^{j,s} = \{s_1, \ldots, s_{j-1}, s, s_{j+1}, \ldots s_d \}$ to be the new dataset obtained by swapping item $s_j$ for $s$. Observe that for all $D_1 \in \mathcal{P}^{N-d}$ we have $ \sup_{D_2 \in \mathcal{P}^d, s \in \mathcal{P}, j \leq d}|h(D_1,D_2,G) - h(D_1, D_2^{j,s},G)| \leq 1.$ Thus, we can apply the Bounded Differences Inequality (Theorem \ref{thm:BoundedDifferencesInequality}) to get:

\begin{align*}
&\Pr[h(D_1,D_2,G) \leq d\times \sum_{i\in T(D_1,G)}p_i + t] \geq 1 - \exp\left(-2t^2/d\right) \\
&\Rightarrow \Pr[\lambda_G \geq \frac{1}{d}(h(D_1,D_2,G) - t)]  \geq 1 - \exp\left(-2t^2/d\right)
\end{align*}

The last line follows since $\lambda_G =  \sum_{i\leq G}p_i \geq \sum_{i \in T(D_1,G)} p_i$. 
\end{proofof}
}
\fullversion{}{\section{Missing Proofs of Intermediate LP in Section~\ref{sec:midLP}}\label{app:IntermediateLPsProofs}

We can use the observations described in Section~\ref{sec:midLP} to update Constraint (2) in $\mathtt{LP1}$. Similarly, we can replace Constraint (3) in $\mathtt{LP1}$ with $\sum_{j=1}^{\ell} h_j x_j = 1 - p$. We call this updated linear program $\mathtt{LP1a}$as shown below. 

%%%%%This is where LP1a originally located%%%%%%%%%
\fbox{\begin{minipage}{8cm}
\textbf{Linear Programming Task 1a: }\\ $\mathtt{LP1a}(G,b,X_{\ell},F^S,idx,i',\mathbf{\epsilon_2})$ \\
\textbf{Input Parameters:} $G$, $b$, $X_{\ell}=\{x_1,...,x_{\ell}\}$, $F^S=\{F^S_1,...,F^S_N\}$, $idx$, $i'$, $\mathbf{\epsilon_2}=\{\epsilon_{2,0},\ldots,\epsilon_{2,i'}\}$ \\
\textbf{Variables:} $h_1,...,h_{\ell},c,p$ \\
%\textbf{Objective:} $\min\left(b\times (\sum_{j<idx}h_j+c)\right)$ \\ 
\textbf{Objective:} $\min\left(b\times (\sum_{j<idx}h_j\times x_j +c\times x_{idx})\right)$ \\ 
\textbf{Constraints:} 
\begin{enumerate}
    \item $\sum_{j<idx}h_j+c = G$
    \item $\forall 0\leq i\leq i'$:
    \begin{enumerate}
        \item for $i=0$, $\frac{(i+1)F^S_{i+1}}{N-i}-\epsilon_{2,i} - \frac{i+1}{N-i} - p \leq \sum_{j=1}^{\ell} h_j\times x_j\times \bpdf(i,N,x_j) \leq \frac{(i+1)F^S_{i+1}}{N-i}+\epsilon_{2,i} - p\times \bpdf(i,N,x_{\ell}) $
        \item for $1\leq i\leq i'$, $\frac{(i+1)F^S_{i+1}}{N-i}-\epsilon_{2,i} - \frac{i+1}{N-i} - p\times \bpdf(i,N,x_{\ell}) \leq \sum_{j=1}^{\ell} h_j\times x_j\times \bpdf(i,N,x_j) \leq \frac{(i+1)F^S_{i+1}}{N-i}+\epsilon_{2,i} $
    \end{enumerate}
    \item $ \sum_{j=1}^{\ell} h_j\times x_j = 1 - p$
    \item $0 \leq c \leq h_{idx}$
\end{enumerate}
(\textbf{Note:} we consider $idx=1,2,...,\ell+1$. When $idx=\ell+1$, we define $h_{\ell+1}=G$; $x_{\ell+1} = 0$ for $b=1$; $x_{\ell+1} = x_{\ell}$ for $b=-1$.)
\end{minipage}}

Lemma \ref{lem:goodturing_p} shows that with high probability over the selection of $S$ each of the constraints in $\mathtt{LP1a}$ will be {\em consistent} with the distribution $\mathcal{P}$ provided that $x_{\ell} < 1/N$ and $\mathcal{P}$ is $l$-partially consistent with the mesh $X$. We say that $\mathcal{P}$ is $l$-partially consistent with the mesh $X$ if for all passwords $pwd_i$ in the support of $\mathcal{P}$ with corresponding probability $p_i$ we either have (1) $p_i \in X_{\ell}$ or (2) $p_i < x_{\ell}$. Intuitively, Lemma \ref{lem:goodturing_p} follows by upper/lower bounding $\sum_{j=l+1}^{\infty} x_j h_j \bpdf(i,N, x_j)$ and then applying our prior bounds from Lemma~\ref{lem:goodturing}. The full proof can be found in Appendix~\ref{app:othermissingproofs}.  
%\pnote{add to the lemma: Let $X=\{x_1,x_2,...\}$, $\forall P\in D(X)$ s.t. $x_{\ell}<1/N$, $S<-- P^N$}

%moved the definition of \lemmagoodturingp to MissingProofs.tex
%\newcommand{\lemmagoodturingp}{Fix any $i\geq 0$ and $0\leq \epsilon_{2,i}\leq 1$ and assume that $x_{\ell}<\frac{1}{N}$ and that $\mathcal{P}$ is $l$-partially consistent with our mesh $X$. If $i=0$ let $W_0$ be an indicator random variable which is $1$ if and only if $\frac{(i+1)F^S_{i+1}}{N-i} -\epsilon_{2,i} - \frac{i+1}{N-i} - p \leq \sum_{j=1}^{\ell} h_jx_j \cdot \bpdf(i,N,x_j)$ and $ \sum_{j=1}^{\ell} h_jx_j \cdot \bpdf(i,N,x_j) \leq \frac{(i+1)F^S_{i+1}}{N-i} +  \epsilon_{2,i} - p\times \bpdf(i,N,x_{\ell})$. Similarly, if $i>0$ let $W_i$ be an indicator random variable which is $1$ if and only if $\frac{(i+1)F^S_{i+1}}{N-i}-\epsilon_{2,i} - \frac{i+1}{N-i} - p\times \bpdf(i,N,x_{\ell}) \leq \sum_{j=1}^{\ell} h_j\times x_j\times \bpdf(i,N,x_j) $ and $ \sum_{j=1}^{\ell} h_j\times x_j\times \bpdf(i,N,x_j) \leq  \epsilon_{2,i}$. Then the constraints hold with probability \[ \Pr[W_i = 1] \geq 1- 2\times\exp\left(\frac{-2(N-i)^2\epsilon_{2,i}^2}{N(i+1)^2}\right) \] where the randomness is taken over the selection of $S \leftarrow \mathcal{P}^N$. }

\begin{lemma}\label{lem:goodturing_p}
\lemmagoodturingp
\end{lemma}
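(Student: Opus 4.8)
The plan is to reduce the statement to Lemma~\ref{lem:goodturing} by isolating the contribution of the ``light'' tail of the mesh (indices $j>l$) and bounding it separately and deterministically. First I would observe that Lemma~\ref{lem:goodturing}, although phrased for a finite mesh, holds verbatim for an arbitrary countable mesh: its proof only uses the algebraic identity relating $\bpdf(i,N,\cdot)$ to $\bpdf(i+1,N,\cdot)$, the normalization $\sum_j h_j x_j=1$, and McDiarmid's inequality (Theorem~\ref{thm:BoundedDifferencesInequality}) applied to $F^S_{i+1}$, none of which needs finiteness. Hence, taking $\hat X$ to be the mesh consisting of every probability value that actually occurs in $\mathcal{P}$ (so $\mathcal{P}$ is trivially consistent with $\hat X$) and $\hat H$ the corresponding histogram, I get that, except with probability $2\exp\!\left(\frac{-2(N-i)^2\epsilon_{2,i}^2}{N(i+1)^2}\right)$, the quantity $G_i\doteq\sum_{x\in\hat X}\hat h_x\,x\,\bpdf(i,N,x)$ (the expected probability mass of items sampled exactly $i$ times) satisfies $\frac{(i+1)F^S_{i+1}}{N-i}-\epsilon_{2,i}-\frac{i+1}{N-i}\le G_i\le\frac{(i+1)F^S_{i+1}}{N-i}+\epsilon_{2,i}$.

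Next I would split $G_i=\sum_{j=1}^{l}h_jx_j\,\bpdf(i,N,x_j)+R_i$ with $R_i\doteq\sum_{j>l}h_jx_j\,\bpdf(i,N,x_j)$. Here $l$-partial consistency of $\mathcal{P}$ with $X$ is exactly what guarantees that every probability value of $\mathcal{P}$ is either one of $x_1,\dots,x_l$ or lies strictly below $x_l$, so the first sum genuinely ranges over $h_1,\dots,h_l$ while all remaining mass, totalling $p=\sum_{j>l}h_jx_j$, sits below $x_l$. The core of the argument is a monotonicity bound on $R_i$. Writing $\bpdf(i,N,x)=\binom{N}{i}x^i(1-x)^{N-i}$, this function is increasing in $x$ on $[0,i/N]$; since $x_l<1/N\le i/N$ for every $i\ge1$, each light value $x<x_l$ obeys $0\le\bpdf(i,N,x)\le\bpdf(i,N,x_l)$, giving $0\le R_i\le p\cdot\bpdf(i,N,x_l)$ for $i\ge1$. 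For $i=0$, by contrast, $\bpdf(0,N,x)=(1-x)^N$ is \emph{decreasing}, so $\bpdf(0,N,x_l)\le\bpdf(0,N,x)\le1$ on $[0,x_l]$ and therefore $p\cdot\bpdf(0,N,x_l)\le R_0\le p$. Plugging these two-sided tail bounds into $\sum_{j=1}^{l}h_jx_j\,\bpdf(i,N,x_j)=G_i-R_i$ and combining with the displayed bounds on $G_i$ produces precisely the pair of inequalities that define the event $\{W_i=1\}$ in each of the cases $i=0$ and $i>0$, with the failure probability inherited unchanged from Lemma~\ref{lem:goodturing}.

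The only genuinely delicate part is bookkeeping the direction of each perturbation through the case split. For $i\ge1$ the tail $R_i$ is nonnegative and at most $p\cdot\bpdf(i,N,x_l)$, so subtracting it loosens only the lower bound (by at most $p\cdot\bpdf(i,N,x_l)$) while leaving the upper side unaffected; for $i=0$ the tail is comparatively large, between $p\cdot\bpdf(0,N,x_l)$ and $p$, so it must be accounted for on both sides, degrading the lower bound by $p$ and the upper bound by $p\cdot\bpdf(0,N,x_l)$, which is why the $i=0$ case looks asymmetric relative to $i\ge1$. I would also verify that the two hypotheses enter exactly where expected: $x_l<1/N$ places $x_l$ on the increasing branch of $\bpdf(i,N,\cdot)$ for all $i\ge1$ (and is not needed for $i=0$), and $l$-partial consistency is what lets the finite head of the histogram capture all mass at or above $x_l$. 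No additional concentration is required beyond the single invocation of Lemma~\ref{lem:goodturing} per value of $i$, since every tail manipulation is deterministic given $\mathcal{P}$.
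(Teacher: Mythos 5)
Your proposal is correct and follows essentially the same route as the paper: apply the Good--Turing concentration bound (Lemma~\ref{lem:goodturing}, extended to the full sum over all probability values of $\mathcal{P}$) and then deterministically bound the tail $\sum_{j>l}h_jx_j\,\bpdf(i,N,x_j)$ between $p\cdot\bpdf(0,N,x_l)$ and $p$ for $i=0$, and between $0$ and $p\cdot\bpdf(i,N,x_l)$ for $i\geq 1$, using the monotonicity of $\bpdf(i,N,\cdot)$ below $x_l<1/N$. Your explicit justification that Lemma~\ref{lem:goodturing} carries over to a countably infinite mesh, and your observation that the upper bound for $i\geq 1$ should read $\frac{(i+1)F^S_{i+1}}{N-i}+\epsilon_{2,i}$ (as in the LP constraints and the paper's own derivation, rather than the bare $\epsilon_{2,i}$ appearing in the lemma statement), are both accurate.
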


Similar to Theorem~\ref{thm:bound3ideal}, using Lemma~\ref{lem:goodturing_p}, we can derive the high-confidence upper and lower bounds of $\lambda_G$ as below: 

\begin{theorem}\label{thm:bound3mid}
Suppose that the probability distribution $\mathcal{P}$ is $l$-partially consistent with a mesh $X=\{x_1,x_2, ..., \}$ then for any $G\geq 0$, any $i'\geq 0$, any $\mathbf{\epsilon_2}=\{\epsilon_{2,0},\ldots,\epsilon_{2,i'}\}\in [0,1]^{i'+1}$, we have 

\begin{align*}
 &\Pr\left[ \lambda_G \geq \min\limits_{1\leq idx \leq l} \mathtt{LP1a}(X_{\ell},F^S,idx,G,1,i',\mathbf{\epsilon_2}) \right] \geq 1- \delta \\
 &\Pr\left[\lambda_G \leq \max\limits_{1\leq idx \leq l} |\mathtt{LP1a}(X_{\ell},F^S,idx,G,-1,i',\mathbf{\epsilon_2})|   \right] \geq 1-\delta
\end{align*}

where each inequality holds with probability at least $\delta=1 - 2\times\sum_{0\leq i\leq i'}\exp\left(\frac{-2(N-i)^2\epsilon_{2,i}^2}{N(i+1)^2}\right)$. The randomness is taken over the sample $S \leftarrow \mathcal{P}^N$ of size $N$.
\end{theorem}

As an immediate corollary of Theorem \ref{thm:bound3mid} and Theorem~\ref{thm:ExpectationConcentration} we can also bound $\lambda(S,G)$ --- \fullversion{see the full version}{see Corollary \ref{crl:bound3mid} in Appendix~\ref{app:lambdaSG}}.
}
\fullversion{}{\section{Missing Proofs of Final LP in Section~\ref{sec:finalLP-mainbody}}\label{app:FinalLPsProofs}

In this section, we provide a complete description of the linear programming approach we describe in Section~\ref{sec:finalLP-mainbody} and a complete proof of Theorem~\ref{thm:bound3final-lower}.

Recall that we fix a particular mesh $X_{l,q} = \{x_1,\ldots, x_l \}$ where where $x_l = \frac{1}{10^4 N}$, $l=\lfloor\frac{\ln (\frac{1}{x_l})}{\ln q}\rfloor + 1$ and for each $1 \leq i < l$ we have $x_{i} = q \cdot x_{i+1} = q^{l-i} x_l$. Here, $q>1$ is an parameter that determines how fine-grained our mesh values are. We also consider an ideal mesh $X^r = \{ x_1^r, x_2^r, \ldots \} = \{ \Pr[pwd] : pwd \in \mathcal{P}\}$ for the real distribution $\mathcal{P}$ along with a corresponding histogram $H^r = \{h_1^r, h_2^r \ldots \}$ where $h_i^r \doteq \left| \{ pwd \in \mathcal{P}: \Pr[pwd] = x_i^r\} \right|$. Given the number of guesses $G > 0$ we have $\lambda_G = X_{i(G,H)}c(G,H)+ \sum_{j=1}^{i(G,H)-1} x_i^r h_i^r $ where we define $i(G,H) = \max \{ j : \sum_{i=1}^{j-1} h_i \leq G \} $ and $c(G,H) = G- \sum_{i=1}^{i(G,H)-1} h_i$ as before.

The LP to generate lower bound is not exactly the same as the LP to generate upper bound. We discuss them separately in the following sections. We present $\mathtt{LPlower}$ for lower bounds and prove the lower bound in Theorem~\ref{thm:bound3final-lower} in Appendix~\ref{sec:lplb-app}, and present $\mathtt{LPupper}$ for upper bounds and prove the upper bound in Theorem~\ref{thm:bound3final-lower} in Appendix~\ref{sec:lpub-app}.

\subsection{Lower Bound}\label{sec:lplb-app}

Let $x^r_k = \min\limits_{i\geq 1}\{x^r_i:x^r_i\geq x_l\}$ be the smallest probability in $P$ no less than $x_l$. We will use $X_l$ to estimate $X^r_k=\{x^r_1,...,x^r_k\}$, and let $p=\sum_{i>k}x^r_i\times h^r_i$ be the remaining probability. Specifically, for any $1\leq i\leq k$, we map $x^r_i\in X^r_k$ to its closest value $x_{b_i}$ where $b_i=\arg\max\limits_{1\leq j\leq l}\{x_{j}:x_{j} \leq x^r_i\}$ in $X_l$ that are equal to or smaller than $x^r_i$. Then we have $x_{b_i} \leq x^r_i\leq q\times x_{b_i}$. Therefore, for any $1\leq j\leq l$, the histogram $h_j$ of $x_j$ is the sum of all values in $H^r$ with the corresponding mesh values being mapped to $x_j$, i.e., $h_j = \sum_{i:b_i=j}h^r_i$. Recall that for any $G\geq 0$, $G = \sum_{i=1}^{i(G,H^r)-1}h^r_i + c(G,H^r)$. We can use $H$ instead of $H^r$ to represent $G$ by defining $i(G,H) = b_{i(G,H^r)}$ and $c(G,H) = \sum_{j:b_j=i(G,H)}h^r_j + c(G,H^r)$ such that $G= c(G,H) + \sum_{i=1}^{i(G,H)-1}h_i$. Then we have 
\begin{align*}
    \lambda_G &= c(G,H^r)x_{i(G,H^r)} + \sum_{i=1}^{i(G,H^r)-1}x_ih^r_i \\
    &\geq c(G,H^r)x_{b_i(G,H^r)} + \sum_{i=1}^{i(G,H^r)-1}x_{b_i}h^r_i \\
    &= c(G,H)x_{i(G,H)} + \sum_{i=1}^{i(G,H)-1}x_ih_i
\end{align*}

Therefore, given a set of fine-grained mesh values $X_l$ and a guessing number $G$, as long as we can find a lower bound of $c(G,H)x_{i(G,H)} + \sum_{i=1}^{i(G,H)-1}x_ih_i$ using linear programming, it will also be a lower bound of $\lambda_G$.

The linear programming task for bounding $c(G,H)x_{i(G,H)} + \sum_{i=1}^{i(G,H)-1}x_ih_i$ in this section is similar to $\mathtt{LP1a}$ we described in Section~\ref{sec:midLP}, except that Constraint (2) and (3) need to be modified considering the difference between our fine-grained mesh $X_l$ and the real probabilities $X^r_k$.
For Constraint (3), previously we have $\sum_{j=1}^k h^r_j\times x^r_j = 1-p$. After mapping $x^r_j$ to $x_{b_j}$ with $x_{b_j} \leq x^r_j\leq q\times x_{b_j}$ for each $1\leq j\leq k$, we have:
\begin{align*}
    &\sum_{j=1}^l h_j\times x_j = \sum_{j=1}^k h^r_j\times x_{b_j} \leq \sum_{j=1}^k h^r_j\times x^r_{j} = 1-p \\
    &\sum_{j=1}^l h_j\times x_j = \sum_{j=1}^k h^r_j\times x_{b_j} \geq \sum_{j=1}^k h^r_j\times (\frac{1}{q}\times x^r_{j}) = \frac{1-p}{q}
\end{align*}

Constraint (2) that bounds $\sum_{j=1}^l h_j\times x_j\times \bpdf(i,N,x_j)$ also need to be changed considering the difference between $\sum_{j=1}^l h_j\times x_j\times \bpdf(i,N,x_j)$ and $\sum_{j=1}^k h^r_j\times x^r_j\times \bpdf(i,N,x^r_j)$. We will present the change and the proof in Lemma~\ref{lem:goodturing_LPlower}later in this section.

We use the final linear program $\mathtt{LPlower}$ to lower bound $\lambda_G$. This linear program $\mathtt{LPlower}$ can be considered as an extension from $\mathtt{LP1}$ and $\mathtt{LP1a}$ removing all ideal settings. 

$\mathtt{LPlower}$ takes $h_1,...,h_l,c,p$ as variables, and minimize $\sum_{j<idx}(h_j\times x_j +c\times x_{idx})$ with the constraint $\sum_{j<idx}h_j+c = G$. Given an arbitrary guessing number $G$, fixing other input parameters, we can lower bound $\lambda_G$ as  $\min\limits_{1\leq idx\leq l+1}\mathtt{LPlower}(X_l,F^S,idx,G,i',\mathbf{\epsilon_2},\mathbf{\epsilon_3},\mathbf{\hat{x}_{\epsilon_3}})$ by running $\mathtt{LPlower}(X_l,F^S,idx,G,i',\mathbf{\epsilon_2},\mathbf{\epsilon_3},\mathbf{\hat{x}_{\epsilon_3}})$ for $l+1$ times with $idx\in\{1,2,...,l+1\}$.

The following lemma proves Constraint (2) in $\mathtt{LPlower}$ holds with high probability:

\begin{lemma}\label{lem:goodturing_LPlower}
Given $F^S=\{F_1,...,F_N\}$ from a sample set $S\leftarrow \mathcal{P}^N$, fix any $q>1$, $i\geq 0$, $0\leq \epsilon_{2,i} \leq 1$, $\frac{i+1}{N+1}\leq \hat{x}_{\epsilon_{3,i}}\leq 1$, $\epsilon_{3,i}=\frac{1}{q^{i+1}}\left(\frac{1-\hat{x}_{\epsilon_{3,i}}}{1-q\hat{x}_{\epsilon_{3,i}}}\right)^{N-i}-1 \in(0,1)$, and let $X_l=X_l^q$. If $i=0$ let $W_0$ be an indicator random variable which is $1$ if and only if $\frac{1}{q^{i+1}} \left(\frac{(i+1)F_{i+1}}{N-i}-\epsilon_{2,i} - \frac{i+1}{N-i} - p\right)  \leq \sum_{j=1}^l h_j x_j \bpdf(i,N,x_j)$ and $\sum_{j=1}^l h_j x_j \bpdf(i,N,x_j) \leq (1+\epsilon_{3,i}) \left(\frac{(i+1)F_{i+1}}{N-i} + \epsilon_{2,i} - p\times \bpdf(i,N,qx_l)\right) + \bpdf(i,N,\hat{x}_{\epsilon_{3,i}})$. Similarly, if $i>0$ let $W_i$ be an indicator random variable which is $1$ if and only if $\frac{1}{q^{i+1}}\left(\frac{(i+1)F_{i+1}}{N-i}-\epsilon_{2,i} - \frac{i+1}{N-i} - p\times \bpdf(i,N,qx_l)\right)  \leq \sum_{j=1}^l h_j x_j \bpdf(i,N,x_j)$ and $\sum_{j=1}^l h_j x_j \bpdf(i,N,x_j)  \leq (1+\epsilon_{3,i}) \left(\frac{(i+1)F_{i+1}}{N-i}+\epsilon_{2,i}\right) + \bpdf(i,N,\hat{x}_{\epsilon_{3,i}})$. Then the constraints hold with probability \[ \Pr[W_i = 1] \geq 1 - 2\times\exp\left(\frac{-2(N-i)^2\epsilon_{2,i}^2}{N(i+1)^2}\right) \] where the randomness is taken over the selection of $S \leftarrow \mathcal{P}^N$.
%Given $F^S=\{F_1,...,F_N\}$ from a sample set $S\leftarrow \mathcal{P}^N$, for any $q>1$, $i\geq 0$, $0\leq \epsilon_{2,i} \leq 1$, $\frac{i+1}{N+1}\leq \hat{x}_{\epsilon_{3,i}}\leq 1$, and $\epsilon_{3,i}=\frac{1}{q^{i+1}}\left(\frac{1-\hat{x}_{\epsilon_{3,i}}}{1-q\hat{x}_{\epsilon_{3,i}}}\right)^{N-i}-1 \in(0,1)$, let $X_l=X_l^q$, then we have:
%\begin{align*}
%    &if~i=0, \\
%    &\frac{1}{q^{i+1}}\left(\frac{(i+1)F_{i+1}}{N-i}-\epsilon_{2,i} - \frac{i+1}{N-i} - p\right) \\
%    &\leq \sum_{j=1}^l h_j\times x_j\times \bpdf(i,N,x_j) \\
%    &\leq (1+\epsilon_{3,i})\left(\frac{(i+1)F_{i+1}}{N-i}+\epsilon_{2,i} - p\times \bpdf(i,N,qx_l)\right) \\
%    &~~~ + \bpdf(i,N,\hat{x}_{\epsilon_{3,i}}) \\
%    &if~i\geq 1, \\
%    &\frac{1}{q^{i+1}}\left(\frac{(i+1)F_{i+1}}{N-i}-\epsilon_{2,i} - \frac{i+1}{N-i} - p\times \bpdf(i,N,qx_l)\right) \\
%    &\leq \sum_{j=1}^l h_j\times x_j\times \bpdf(i,N,x_j) \\
%    &\leq (1+\epsilon_{3,i})\left(\frac{(i+1)F_{i+1}}{N-i}+\epsilon_{2,i}\right) + \bpdf(i,N,\hat{x}_{\epsilon_{3,i}})
%\end{align*}
%where for any $i$ the two inequalities hold with probability at least $1 - 2\times\exp\left(\frac{-2(N-i)^2\epsilon_{2,i}^2}{N(i+1)^2}\right)$.
\end{lemma}
\begin{proof}
Recall that for any $x_l<\frac{1}{N}$, any individual $i\geq 0$, and any $0\leq \epsilon_{2,i}\leq 1$, the following constraint holds with probability at least $1 - 2\times\exp\left(\frac{-2(N-i)^2\epsilon_{2,i}^2}{N(i+1)^2}\right)$, Lemma~\ref{lem:goodturing_p} proves the following constraint holds with probability at least $1 - 2\times\exp\left(\frac{-2(N-i)^2\epsilon_{2,i}^2}{N(i+1)^2}\right)$:
\begin{align*}
&\text{if }i=0, \\
&\frac{(i+1)F_{i+1}}{N-i}-\epsilon_{2,i} - \frac{i+1}{N-i} - p \leq \sum_{j=1}^k h^r_j x^r_j \bpdf(i,N,x^r_j) \\
&\leq \frac{(i+1)F_{i+1}}{N-i}+\epsilon_{2,i} - p\times \bpdf(i,N,x^r_k); \\
&\text{if }i\geq 1, \\
&\frac{(i+1)F_{i+1}}{N-i}-\epsilon_{2,i} - \frac{i+1}{N-i} - p\times \bpdf(i,N,x_l) \\
&\leq \sum_{j=1}^l h_j x_j \bpdf(i,N,x_j) \leq \frac{(i+1)F_{i+1}}{N-i}+\epsilon_{2,i}.
\end{align*}

Note that $\sum_{j=1}^l h_j\times x_j\times\bpdf(i,N,x_j) = \sum_{j=1}^k h^r_j\times x_{b_j}\times\bpdf(i,N,x_{b_j})$. To prove that the difference between $\sum_{j=1}^k  h^{r}_j\times x^{r}_j\times \bpdf(i,N,x^{r}_j)$ and $\sum_{j=1}^l h_j\times x_j\times\bpdf(i,N,x_j)$ is small, we look into $\frac{h^{r}_j\times x^{r}_j\times \bpdf(i,N,x^{r}_j)}{h^r_j\times x_{b_j}\times\bpdf(i,N,x_{b_j})}$ for each $1\leq j\leq k$.

One the one side, since $x_{b_j}\leq x^r_j \leq q x_{b_j}$, we bound it as $\frac{h^{r}_j\times x^{r}_j\times \bpdf(i,N,x^{r}_j)}{h^r_j\times x_{b_j}\times\bpdf(i,N,x_{b_j})} = \frac{ (x^{r}_j)^{i+1}(1-x^r_j)^{N-i}}{x_{b_j}^{i+1}(1-x_{b_j})^{N-i}}\leq q^{i+1}$. Therefore, with probability at least $1 - \exp\left(\frac{-2(N-i)^2\epsilon_2^2}{N(i+1)^2}\right)$, we have 
\begingroup\makeatletter\def\f@size{9.5}\check@mathfonts
\begin{align*}
    &\sum_{j=1}^l h_j\times x_j\times\bpdf(i,N,x_j) \\
    =& \sum_{j:x^r_j\geq x_l}h^r_j\times x_{b_j}\times\bpdf(i,N,x_{b_j}) \\
    \geq &\frac{1}{q^{i+1}}\sum_{j:x^r_j\geq x_l} h^{r}_j\times x^{r}_j\times \bpdf(i,N,x^{r}_j) \\
    \geq &\begin{cases}
\frac{1}{q^{i+1}}\left(\frac{(i+1)F_{i+1}}{N-i}-\epsilon_2 - \frac{i+1}{N-i} - p\right) & i=0\\
\frac{1}{q^{i+1}}\left(\frac{(i+1)F_{i+1}}{N-i}-\epsilon_2 - \frac{i+1}{N-i} - p\times \bpdf(i,N,x^r_k)\right) & i>0
\end{cases} \\
\geq &\begin{cases}
\frac{1}{q^{i+1}}\left(\frac{(i+1)F_{i+1}}{N-i}-\epsilon_2 - \frac{i+1}{N-i} - p\right) & i=0\\
\frac{1}{q^{i+1}}\left(\frac{(i+1)F_{i+1}}{N-i}-\epsilon_2 - \frac{i+1}{N-i} - p\times \bpdf(i,N,qx_l)\right) & i>0
\end{cases}
\end{align*}
\endgroup

The left side of the inequality in this lemma is proved. On the other side, we consider two cases.
Define a function $f(x) = x^{i+1}(1-x)^{N-i}$ for any $i\geq 0$. Since
\begin{align*}
    f'(x) &= (i+1)x^i(1-x)^{N-i} - (N-i)x^{i+1}(1-x)^{N-i-1} \\
    &= x^i(1-x)^{N-i-1}((i+1)(1-x) - (N-i)x) \\
    &= x^i(1-x)^{N-i-1}(-(N+1)x + (i+1))
\end{align*}
$f(x)$ monotonically increases when $0\leq x\leq \frac{i+1}{N+1}$, and monotonically decreases when $\frac{i+1}{N+1} \leq x \leq 1$.
Therefore, for $x^r_{j}\leq \frac{i+1}{N+1}$, we have:
\begin{align*}
    \frac{h^{r}_j\times x^{r}_j\times \bpdf(i,N,x^{r}_j)}{h^r_j\times x_{b_j}\times\bpdf(i,N,x_{b_j})} = \frac{ (x^{r}_j)^{i+1}(1-x^r_j)^{N-i}}{x_{b_j}^{i+1}(1-x_{b_j})^{N-i}} \geq 1;
\end{align*}
for $x^r_{j} > \frac{i+1}{N+1}$. we have:
\begin{align*}
    \frac{h^{r}_j\times x^{r}_j\times \bpdf(i,N,x^{r}_j)}{h^r_j\times x_{b_j}\times\bpdf(i,N,x_{b_j})} &= \frac{ (x^{r}_j)^{i+1}(1-x^r_j)^{N-i}}{x_{b_j}^{i+1}(1-x_{b_j})^{N-i}} \\ 
    &\geq q^{i+1} \left(\frac{1-qx_{b_j}}{1-x_{b_j}}\right)^{N-i}.
\end{align*}

Picking a parameter $\hat{x}_{\epsilon_{3,i}}\geq \frac{i+1}{N+1}$ we can find the corresponding $\epsilon_{3,i} = \frac{1}{q^{i+1}}\left(\frac{1-\hat{x}_{\epsilon_{3,i}}}{1-q\hat{x}_{\epsilon_{3,i}}}\right)^{N-i}-1 \in(0,1)$ such that for any $x^r_j\leq qx_{b_j}\leq q\hat{x}_{\epsilon_{3,i}}$ we have $\frac{h^{r}_j\times x^{r}_j\times \bpdf(i,N,x^{r}_j)}{h^r_j\times x_{b_j}\times\bpdf(i,N,x_{b_j})} \geq q^{i+1} \left(\frac{1-qx_{b_j}}{1-x_{b_j}}\right)^{N-i} = \frac{1}{1+\epsilon_{3,i}}$.Therefore, we can get the right side of the inequality as below:
\begingroup\makeatletter\def\f@size{9.5}\check@mathfonts
\begin{align*}
    &\sum_{j=1}^l h_j\times x_j\times\bpdf(i,N,x_j) \\
    =& \sum_{j:q\hat{x}_{\epsilon_{3,i}}\geq x^r_j\geq x_l}h^r_j\times x_{b_j}\times\bpdf(i,N,x_{b_j}) \\
    &~~~+ \sum_{j: x^r_j > q\hat{x}_{\epsilon_{3,i}}}h^r_j\times x_{b_j}\times\bpdf(i,N,x_{b_j})\\
    \leq &(1+\epsilon_{3,i})\sum_{j:x^r_j\geq x_l} h^{r}_j\times x^{r}_j\times \bpdf(i,N,x^{r}_j) + \bpdf(i,N,\hat{x}_{\epsilon_{3,i}}) \\
    \leq &\begin{cases}
    (1+\epsilon_{3,i})\left(\frac{(i+1)F_{i+1}}{N-i}+\epsilon_{2,i} - p\times \bpdf(i,N,x^r_k)\right) & \\
    ~~~~~~~~~~~~ + \bpdf(i,N,\hat{x}_{\epsilon_{3,i}}) & i=0\\
    (1+\epsilon_{3,i})\left(\frac{(i+1)F_{i+1}}{N-i}+\epsilon_{2,i}\right) + \bpdf(i,N,\hat{x}_{\epsilon_{3,i}}) & i>0
    \end{cases} \\
    \leq &\begin{cases}
    (1+\epsilon_{3,i})\left(\frac{(i+1)F_{i+1}}{N-i}+\epsilon_{2,i} - p\times \bpdf(i,N,qx_l)\right) & \\
    ~~~~~~~~~~~~  + \bpdf(i,N,\hat{x}_{\epsilon_{3,i}}) & i=0\\
    (1+\epsilon_{3,i})\left(\frac{(i+1)F_{i+1}}{N-i}+\epsilon_{2,i}\right) + \bpdf(i,N,\hat{x}_{\epsilon_{3,i}}) & i>0
    \end{cases}
\end{align*}
\endgroup

The inequality holds with probability at least $1 - \exp\left(\frac{-2(N-i)^2\epsilon_2^2}{N(i+1)^2}\right)$.
\end{proof}

Therefore, the lower bound in Theorem~\ref{thm:bound3final-lower} is proved, i.e. $\lambda_G\geq \min_{1\leq idx \leq l+1}\mathtt{LPlower}(X_l,F^S,idx,G,i',\mathbf{\epsilon_{2}}, \mathbf{\epsilon_{3}}, \mathbf{\hat{x}_{\epsilon_{3}}})$ with probability at least $1-2\times \sum_{0\leq i\leq i'}\exp\left(\frac{-2(N-i)^2\epsilon_{2,i}^2}{N(i+1)^2}\right)$, where the randomness is taken over the sample set $S\leftarrow \mathcal{P}^N$ of size $N$. 

\subsection{Upper Bound}\label{sec:lpub-app}
Let $x^r_k = \min\limits_{i\geq 1}\{x^r_i:x^r_i\geq x_l/q\}$ be the smallest probability in $P$ no less than $x_l/q$. Note that for upper bounds the definition of $x^r_k$ is slightly different from the $X^r_k$ for lower bounds in Section~\ref{sec:lplb-app} above. We will use $X_l$ to estimate $X^r_k=\{x^r_1,...,x^r_k\}$, and let $p=\sum_{i>k}x^r_i\times h^r_i$ be the remaining probability. Specifically, for any $1\leq i\leq k$, we map $x^r_i\in X^r_k$ to its closest value $x_{a_i}$ where $a_i=\arg\min\limits_{1\leq j\leq l}\{x_{j}:x_{j} \geq x^r_i\}$ in $X_l$ that are equal to or greater than $x^r_i$. Then we have $\frac{x_{a_i}}{q} \leq x^r_i\leq x_{a_i}$. Therefore, for any $1\leq j\leq l$, the histogram $h_j$ of $x_j$ is the sum of all values in $H^r$ with the corresponding mesh values being mapped to $x_j$, i.e., $h_j = \sum_{i:a_i=j}h^r_i$. Recall that for any $G\geq 0$, $G = \sum_{i=1}^{i(G,H^r)-1}h^r_i + c(G,H^r)$. We can use $H$ instead of $H^r$ to represent $G$ by defining $i(G,H) = a_{i(G,H^r)}$ and $c(G,H) = \sum_{j:a_j=i(G,H)}h^r_j + c(G,H^r)$ such that $G= c(G,H) + \sum_{i=1}^{i(G,H)-1}h_i$. Then we have 

\begin{align*}
    \lambda_G &= c(G,H^r)x_{i(G,H^r)} + \sum_{i=1}^{i(G,H^r)-1}x_ih^r_i \\
    &\leq c(G,H^r)x_{a_i(G,H^r)} + \sum_{i=1}^{i(G,H^r)-1}x_{a_i}h^r_i \\
    &= c(G,H)x_{i(G,H)} + \sum_{i=1}^{i(G,H)-1}x_ih_i
\end{align*}

Therefore, given a set of fine-grained mesh values $X_l$ and a guessing number $G$, as long as we can find a upper bound of $c(G,H)x_{i(G,H)} + \sum_{i=1}^{i(G,H)-1}x_ih_i$ using linear programming, it will also be a upper bound of $\lambda_G$.

The linear programming task for bounding $c(G,H)x_{i(G,H)} + \sum_{i=1}^{i(G,H)-1}x_ih_i$ in this section is similar to $\mathtt{LP1a}$ we described in Section~\ref{sec:midLP}, except that Constraint (2) and (3) need to be modified considering the difference between our fine-grained mesh $X_l$ and the real probabilities $X^r_k$. For Constraint (3), previously we have $\sum_{j=1}^k h^r_j\times x^r_j = 1-p$. After mapping $x^r_j$ to $x_{a_j}$ with $\frac{x_{a_i}}{q} \leq x^r_i\leq x_{a_i}$ for each $1\leq j\leq k$, we have:
\begin{align*}
    &\sum_{j=1}^l h_j\times x_j = \sum_{j=1}^k h^r_j\times x_{a_j} \geq \sum_{j=1}^k h^r_j\times x^r_{j} = 1-p \\
    &\sum_{j=1}^l h_j\times x_j = \sum_{j=1}^k h^r_j\times x_{a_j} \leq \sum_{j=1}^k h^r_j\times (q\times x^r_{j}) = q(1-p)
\end{align*}

Constraint (2) that bounds $\sum_{j=1}^l h_j\times x_j\times \bpdf(i,N,x_j)$ also need to be changed considering the difference between $\sum_{j=1}^l h_j\times x_j\times \bpdf(i,N,x_j)$ and $\sum_{j=1}^k h^r_j\times x^r_j\times \bpdf(i,N,x^r_j)$. We will present the change and the proof in Lemma~\ref{lem:goodturing_LPupper}later in this section.

We use the final linear program $\mathtt{LPupper}$ (see Appendix~\ref{app:LPupper}) to upper bound $\lambda_G$. This linear program $\mathtt{LPupper}$ can be considered as an extension from $\mathtt{LP1}$ and $\mathtt{LP1a}$ removing the two ideal settings. 

$\mathtt{LPupper}$ takes $h_1,...,h_l,c,p$ as variables, and maximize $\sum_{j<idx}(h_j\times x_j +c\times x_{idx})$ with the constraint $\sum_{j<idx}h_j+c = G$. Given an arbitrary guessing number $G$, fixing other input parameters, we can upper bound $\lambda_G$ as  $\max\limits_{1\leq idx\leq l+1}\mathtt{LPupper}(X_l,F^S,idx,G,i',\mathbf{\epsilon_2},\mathbf{\epsilon_3},\mathbf{\hat{x}_{\epsilon_3}})$ by running $\mathtt{LPupper}(X_l,F^S,idx,G,i',\mathbf{\epsilon_2},\mathbf{\epsilon_3},\mathbf{\hat{x}_{\epsilon_3}})$ for $l+1$ times with $idx\in\{1,2,...,l+1\}$.

The following lemma proves Constraint (2) in $\mathtt{LPupper}$ holds with high probability:

\begin{lemma}\label{lem:goodturing_LPupper}
Given $F^S=\{F_1,...,F_N\}$ from a sample set $S\leftarrow \mathcal{P}^N$, fix any $q>1$, $i\geq 0$, $0\leq \epsilon_{2,i} \leq 1$, $\frac{i+1}{N+1}\leq \hat{x}_{\epsilon_{3,i}}\leq 1$, $\epsilon_{3,i}=\frac{1}{q^{i+1}}\left(\frac{1-\hat{x}_{\epsilon_{3,i}}}{1-q\hat{x}_{\epsilon_{3,i}}}\right)^{N-i}-1 \in(0,1)$, and let $X_l=X_l^q$. If $i=0$ let $W_0$ be an indicator random variable which is $1$ if and only if $\frac{1}{1+\epsilon_{3,i}} (\frac{(i+1)F_{i+1}}{N-i} - \epsilon_{2,i} - \frac{i+1}{N-i} - p - \bpdf(i,N,q\hat{x}_{\epsilon_{3,i}})) \leq \sum_{j=1}^l h_j x_j \bpdf(i,N,x_j)$ and $\sum_{j=1}^l h_j x_j \bpdf(i,N,x_j) \leq q^{i+1} (\frac{(i+1)F_{i+1}}{N-i} + \epsilon_{2,i} - p\times \bpdf(i,N,x_l))$. Similarly, if $i>0$ let $W_i$ be an indicator random variable which is $1$ if and only if $\frac{1}{1+\epsilon_{3,i}} (\frac{(i+1)F_{i+1}}{N-i}-\epsilon_{2,i} - \frac{i+1}{N-i} - p \times \bpdf(i,N,x_l) - \bpdf(i,N,q\hat{x}_{\epsilon_{3,i}})) \leq \sum_{j=1}^l h_j x_j \bpdf(i,N,x_j)$ and $\sum_{j=1}^l h_j x_j \bpdf(i,N,x_j) \leq q^{i+1} (\frac{(i+1)F_{i+1}}{N-i} + \epsilon_{2,i})$.

Then the constraints hold with probability \[ \Pr[W_i = 1] \geq 1 - 2\times\exp\left(\frac{-2(N-i)^2\epsilon_{2,i}^2}{N(i+1)^2}\right) \] where the randomness is taken over the selection of $S \leftarrow \mathcal{P}^N$.

%Given $F^S=\{F_1,...,F_N\}$ from a sample set $S\leftarrow \mathcal{P}^N$, for any $q>1$, $i\geq 0$, $0\leq \epsilon_{2,i} \leq 1$, $\frac{i+1}{N+1}\leq \hat{x}_{\epsilon_{3,i}}\leq 1$, and $\epsilon_{3,i}=\frac{1}{q^{i+1}}\left(\frac{1-\hat{x}_{\epsilon_{3,i}}}{1-q\hat{x}_{\epsilon_{3,i}}}\right)^{N-i}-1 \in(0,1)$, let $X_l=X_l^q$, then we have:

%\begin{align*}
%    &if~i=0, \\
%    &\frac{1}{1+\epsilon_{3,i}}(\frac{(i+1)F_{i+1}}{N-i}-\epsilon_{2,i} - \frac{i+1}{N-i} - p - \bpdf(i,N,q\hat{x}_{\epsilon_{3,i}})) \\ 
%    &\leq \sum_{j=1}^l h_j\times x_j\times \bpdf(i,N,x_j) \leq q^{i+1}(\frac{(i+1)F_{i+1}}{N-i}+\epsilon_{2,i} - p\times \bpdf(i,N,x_l))
%\end{align*}

%\begin{align*}
%    &if~i\geq 1, \\
%    &\frac{1}{1+\epsilon_{3,i}}(\frac{(i+1)F_{i+1}}{N-i}-\epsilon_{2,i} - \frac{i+1}{N-i} \\
%    &~~~~~~~~ - p\times \bpdf(i,N,x_l) - \bpdf(i,N,q\hat{x}_{\epsilon_{3,i}})) \\
%    &\leq \sum_{j=1}^l h_j\times x_j\times \bpdf(i,N,x_j) \leq q^{i+1}(\frac{(i+1)F_{i+1}}{N-i}+\epsilon_{2,i})
%\end{align*}

%where for any $i$ the two inequalities hold with probability at least $1 - 2\times\exp\left(\frac{-2(N-i)^2\epsilon_{2,i}^2}{N(i+1)^2}\right)$.
\end{lemma}

\begin{proof}
Recall that for any $x_l<\frac{1}{N}$, any individual $i\geq 0$, and any $0\leq \epsilon_{2,i}\leq 1$, the following constraint holds with probability at least $1 - 2\times\exp\left(\frac{-2(N-i)^2\epsilon_{2,i}^2}{N(i+1)^2}\right)$, Lemma~\ref{lem:goodturing_p} proves the following constraint holds with probability at least $1 - 2\times\exp\left(\frac{-2(N-i)^2\epsilon_{2,i}^2}{N(i+1)^2}\right)$:
\begin{align*}
&\text{if }i=0, \\
&\frac{(i+1)F_{i+1}}{N-i}-\epsilon_{2,i} - \frac{i+1}{N-i} - p \leq \sum_{j=1}^k h^r_j x^r_j \bpdf(i,N,x^r_j) \\
&\leq \frac{(i+1)F_{i+1}}{N-i}+\epsilon_{2,i} - p\times \bpdf(i,N,x^r_k); \\
&\text{if }i\geq 1, \\
&\frac{(i+1)F_{i+1}}{N-i}-\epsilon_{2,i} - \frac{i+1}{N-i} - p\times \bpdf(i,N,x_l) \\
&\leq \sum_{j=1}^l h_j x_j \bpdf(i,N,x_j) \leq \frac{(i+1)F_{i+1}}{N-i}+\epsilon_{2,i}.
\end{align*}

Note that $\sum_{j=1}^l h_j\times x_j\times\bpdf(i,N,x_j) = \sum_{j=1}^k h^r_j\times x_{a_j}\times\bpdf(i,N,x_{a_j})$. To prove that the difference between $\sum_{j=1}^k  h^{r}_j\times x^{r}_j\times \bpdf(i,N,x^{r}_j)$ and $\sum_{j=1}^l h_j\times x_j\times\bpdf(i,N,x_j)$ is small, we look into $\frac{h^{r}_j\times x^{r}_j\times \bpdf(i,N,x^{r}_j)}{h^r_j\times x_{a_j}\times\bpdf(i,N,x_{a_j})}$ for each $1\leq j\leq k$.

One the one side, since $\frac{x_{a_j}}{q}\leq x^r_j \leq \times x_{a_j}$, we have $\frac{h^{r}_j\times x^{r}_j\times \bpdf(i,N,x^{r}_j)}{h^r_j\times x_{a_j}\times\bpdf(i,N,x_{a_j})} = \frac{ (x^{r}_j)^{i+1}(1-x^r_j)^{N-i}}{x_{a_j}^{i+1}(1-x_{a_j})^{N-i}}\geq \frac{1}{q^{i+1}}$. Therefore, with probability at least $1 - \exp\left(\frac{-2(N-i)^2\epsilon_{2,i}^2}{N(i+1)^2}\right)$, we have 
\begin{align*}
    &\sum_{j=1}^l h_j\times x_j\times\bpdf(i,N,x_j) \\
    =& \sum_{j:x^r_j\geq \frac{x_l}{q}}h^r_j\times x_{a_j}\times\bpdf(i,N,x_{a_j}) \\
    \leq &q^{i+1}\sum_{j:x^r_j\geq \frac{x_l}{q}} h^{r}_j\times x^{r}_j\times \bpdf(i,N,x^{r}_j) \\
    \leq &\begin{cases}
    q^{i+1}\left(\frac{(i+1)F_{i+1}}{N-i}+\epsilon_2 - p\times \bpdf(i,N,x^r_k)\right) & i=0\\
    q^{i+1}\left(\frac{(i+1)F_{i+1}}{N-i}+\epsilon_2\right) & i>0
    \end{cases} \\
    \leq &\begin{cases}
    q^{i+1}\left(\frac{(i+1)F_{i+1}}{N-i}+\epsilon_2 - p\times \bpdf(i,N,x_l)\right) & i=0\\
    q^{i+1}\left(\frac{(i+1)F_{i+1}}{N-i}+\epsilon_2\right) & i>0
    \end{cases}
\end{align*}

The right side of the inequality in this lemma is proved. On the other side, we consider two cases.
Recall that for any $i\geq 0$ $f(x) = x^{i+1}(1-x)^{N-i}$ monotonically increases when $0\leq x\leq \frac{i+1}{N+1}$, and monotonically decreases when $\frac{i+1}{N+1} \leq x \leq 1$.
Therefore, for $x^r_{j}\leq \frac{i+1}{N+1}$, we have:
\begin{align*}
    \frac{h^{r}_j\times x^{r}_j\times \bpdf(i,N,x^{r}_j)}{h^r_j\times x_{a_j}\times\bpdf(i,N,x_{a_j})} = \frac{ (x^{r}_j)^{i+1}(1-x^r_j)^{N-i}}{x_{a_j}^{i+1}(1-x_{a_j})^{N-i}} \leq 1;
\end{align*}
for $x^r_{j} > \frac{i+1}{N+1}$. we have:
\begin{align*}
    \frac{h^{r}_j\times x^{r}_j\times \bpdf(i,N,x^{r}_j)}{h^r_j\times x_{a_j}\times\bpdf(i,N,x_{a_j})} &= \frac{ (x^{r}_j)^{i+1}(1-x^r_j)^{N-i}}{x_{a_j}^{i+1}(1-x_{a_j})^{N-i}} \\
    & \leq \frac{1}{q^{i+1}}\left(\frac{1-\frac{x_{a_j}}{q}}{1-x_{a_j}}\right)^{N-i}.
\end{align*}

Picking a parameter $\hat{x}_{\epsilon_{3,i}}\geq \frac{i+1}{N+1}$ we can find the corresponding $\epsilon_{3,i} = \frac{1}{q^{i+1}}\left(\frac{1-\hat{x}_{\epsilon_{3,i}}}{1-q\hat{x}_{\epsilon_{3,i}}}\right)^{N-i}-1 \in(0,1)$ such that for any $x_{a_j}\leq q\hat{x}_{\epsilon_{3,i}}$ we have $\frac{h^{r}_j\times x^{r}_j\times \bpdf(i,N,x^{r}_j)}{h^r_j\times x_{a_j}\times\bpdf(i,N,x_{a_j})} \leq \frac{1}{q^{i+1}} \left(\frac{1-\frac{x_{a_j}}{q}}{1-x_{a_j}}\right)^{N-i} = (1+\epsilon_{3,i})$.Therefore, we can get the right side of the inequality as below:
\begin{align*}
    &\sum_{j=1}^l h_j\times x_j\times\bpdf(i,N,x_j) \\
    =& \sum_{j:x^r_j\geq \frac{x_l}{q}}h^r_j\times x_{a_j}\times\bpdf(i,N,x_{a_j}) \\
    \geq & \sum_{j:q\hat{x}_{\epsilon_{3,i}} \geq x^r_j\geq \frac{x_l}{q}}h^r_j\times x_{a_j}\times\bpdf(i,N,x_{a_j}) \\
    \geq & \frac{1}{1+\epsilon_{3,i}}\left(\sum_{j:q\hat{x}_{\epsilon_{3,i}} \geq x^r_j\geq \frac{x_l}{q}} h^{r}_j\times x^{r}_j\times \bpdf(i,N,x^{r}_j)\right) \\
    \geq &\frac{1}{1+\epsilon_{3,i}}\left(\sum_{j:x^r_j\geq \frac{x_l}{q}} h^{r}_j x^{r}_j \bpdf(i,N,x^{r}_j) - \bpdf(i,N,q\hat{x}_{\epsilon_{3,i}}) \right)  \\
    \geq &\begin{cases}
    \frac{1}{1+\epsilon_{3,i}}(\frac{(i+1)F_{i+1}}{N-i}-\epsilon_{2,i} - \frac{i+1}{N-i} - p & \\
    ~~~~ - \bpdf(i,N,q\hat{x}_{\epsilon_{3,i}})) & i=0\\
    \frac{1}{1+\epsilon_{3,i}}(\frac{(i+1)F_{i+1}}{N-i}-\epsilon_{2,i} - \frac{i+1}{N-i} & \\
    ~~~~ - p\times \bpdf(i,N,x^r_k) - \bpdf(i,N,q\hat{x}_{\epsilon_{3,i}})) & i>0
    \end{cases} \\
    \geq &\begin{cases}
    \frac{1}{1+\epsilon_{3,i}}(\frac{(i+1)F_{i+1}}{N-i}-\epsilon_{2,i} - \frac{i+1}{N-i} - p & \\
    ~~~~ - \bpdf(i,N,q\hat{x}_{\epsilon_{3,i}})) & i=0\\
    \frac{1}{1+\epsilon_{3,i}}(\frac{(i+1)F_{i+1}}{N-i}-\epsilon_{2,i} - \frac{i+1}{N-i} & \\
    ~~~~ - p\times \bpdf(i,N,x_l) - \bpdf(i,N,q\hat{x}_{\epsilon_{3,i}})) & i>0
    \end{cases}
\end{align*}

The inequality holds with probability at least $1 - \exp\left(\frac{-2(N-i)^2\epsilon_2^2}{N(i+1)^2}\right)$.
\end{proof}

Therefore, the upper bound in Theorem~\ref{thm:bound3final-lower} is proved, i.e. $\lambda_G\leq \max\limits_{1\leq idx \leq l+1}\mathtt{LPupper}(X_l,F^S,idx,G,i',\mathbf{\epsilon_{2}}, \mathbf{\epsilon_{3}}, \mathbf{\hat{x}_{\epsilon_{3}}})$ with probability at least $1-2\times \sum_{0\leq i\leq i'}\exp\left(\frac{-2(N-i)^2\epsilon_{2,i}^2}{N(i+1)^2}\right)$, where the randomness is taken over the sample set $S$ of size $N$.

%\fullversion{\section{The Linear Program \texttt{LPupper}}\label{app:LPupper}}{\subsection{The Linear Program \texttt{LPupper}}\label{app:LPupper}}

}
\end{document}